\numberwithin{equation}{section}
\theoremstyle{plain}
\newtheorem{theorem}{Theorem}[section]
\newtheorem{proposition}[theorem]{Proposition}
\newtheorem{lemma}[theorem]{Lemma}
\newtheorem{corollary}[theorem]{Corollary}
\theoremstyle{definition}
\newtheorem{definition}[theorem]{Definition}
\newtheorem{example}[theorem]{Example}\theoremstyle{remark}
\newtheorem{remark}[theorem]{Remark}
\colorlet{mycolor}{RedViolet}
\colorlet{mycolor2}{RedViolet}
\pgfplotsset{compat=1.18}
\newcounter{Counter}
\theoremstyle{plain}
\newcommand{\Id}{\operatorname{Id}}
\newcommand{\tr}{\operatorname{tr}}
\newcommand{\Ric}{\mathrm{Ric}}
\newcommand{\R}{\operatorname{R}}
\DeclareFontFamily{U}{MnSymbolC}{}
\DeclareSymbolFont{MnSyC}{U}{MnSymbolC}{m}{n}
\DeclareFontShape{U}{MnSymbolC}{m}{n}{
	<-6>  MnSymbolC5
	<6-7>  MnSymbolC6
	<7-8>  MnSymbolC7
	<8-9>  MnSymbolC8
	<9-10> MnSymbolC9
	<10-12> MnSymbolC10
	<12->   MnSymbolC12}{}
\DeclareMathSymbol{\intprod}{\mathbin}{MnSyC}{'270}
\begin{document}
\title[Causal character of imaginary Killing spinors]{Causal character of imaginary Killing spinors and spinorial slicings}
\author{Sven Hirsch}
\address{Columbia University, 2990 Broadway, New York NY 10027, USA}
\email{sven.hirsch@columbia.edu }
\author{Yiyue Zhang}
\address{Beijing Institute of Mathematical Sciences and Applications, Beijing, 101408, China}
\email{zhangyiyue@bimsa.cn}

\maketitle

\begin{abstract}
We characterize spin initial data sets that saturate the BPS bound in the asymptotically AdS setting. 
This includes both gravitational waves and rotating black holes in higher dimensions, and we establish a sharp dimension threshold in each case.
A key ingredient in our argument is a theorem providing a general criterion for when an imaginary Killing spinor of mixed causal type can be replaced by one that is strictly timelike or null.
Moreover, in analogy with the minimal surface method, we demonstrate that spinors can be used to construct a codimension-$2$ slicing.
\end{abstract}

\tableofcontents

\section{Introduction}

A central theme in mathematical relativity is the study of conserved quantities—energy, linear momentum, angular momentum, and center of mass—and the geometric structures that encode them. The geometry of mass-minimizing initial data sets with vanishing cosmological constant is well understood, both in the asymptotically flat \cite{SchoenYau79PMT, Witten1981, SY1981, EichmairHuangLeeSchoen11, HKK, BeigChrusciel, Huang-Lee:2024, HuangLee20, HirschHuang25, BHKKZ, HZ23, ChruscielMaerten05, HuangLee23EqualityII} and asymptotically hyperboloidal settings \cite{Wang2001, CJL, HirschJangZhang24, ACG2008, HJM2020, Sakovich2021, CG2021, CD2019,Lundberg2023, CWY2016, Zhang2004}. In sharp contrast, the case of a negative cosmological constant (asymptotically Anti-de Sitter) remains largely unexplored despite its growing importance due to the AdS/CFT correspondence \cite{Maldacena97}. 
The BPS bound, established in the pioneering work \cite{Maerten2006, ChruscielMaertenTod,WXZ2015}, which replaces the well-known positive mass theorem (PMT), is significantly more intricate:

    \begin{theorem}\label{Thm Intro PMT}
        Let $(M^n,g,k)$ be an asymptotically AdS spin initial data set satisfying the dominant energy condition $$\mu\ge |J|.$$ 
   Then for each imaginary Killing spinor $\psi^\infty\in\overline{\mathcal S}(\mathbb H^n)$ we have the mass formula
\begin{align}\label{mass formula intro}
     \mathcal H(\psi^\infty):= \mathcal EN+\langle   \mathcal P,Y\rangle+\langle   \mathcal C,X\rangle+\langle \omega,  \mathcal  A\rangle\ge0.
\end{align}
    \end{theorem}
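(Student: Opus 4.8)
The plan is to run a Witten-type spinorial argument adapted to the anti-de Sitter asymptotics, with hyperbolic space $\mathbb H^n$ as the model at infinity. On $\mathbb H^n$ the relevant spinors satisfy the \emph{imaginary} Killing equation $\nabla_X\psi=\pm\tfrac{i}{2}X\cdot\psi$, and $\psi^\infty$ is one such spinor. First, on the initial data set $(M^n,g,k)$ one forms the Dirac--Witten operator $\mathcal D$ associated with $(g,k)$ — the hypersurface Dirac operator twisted by the second fundamental form term — and shifts it by a Clifford term so that imaginary Killing spinors on the model lie in its kernel. One then solves the boundary value problem $\mathcal D\psi=0$ on $M$ with $\psi-\psi^\infty$ in a suitable weighted Sobolev space, i.e.\ with $\psi$ asymptotic to $\psi^\infty$ at the rate prescribed by the asymptotically AdS falloff. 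Existence and uniqueness follow from a Fredholm argument: the indicial analysis of $\mathcal D$ on the hyperbolic end shows it is an isomorphism between the appropriate weighted spaces modulo the model Killing spinors, so a solution with the prescribed leading behaviour exists.

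Next I would derive the integrated Lichnerowicz--Weitzenböck identity for the modified (Killing) connection $\widehat\nabla$. Integrating over the region $\Omega_R$ bounded by a large coordinate sphere $S_R$ (and, if there is an interior boundary, imposing the standard boundary condition there) one obtains, schematically,
\begin{align*}
\int_{S_R}\big\langle(\widehat\nabla_\nu+\nu\cdot\mathcal D)\psi,\psi\big\rangle
=\int_{\Omega_R}\Big(|\widehat\nabla\psi|^2+\big\langle\mathcal R\psi,\psi\big\rangle-|\mathcal D\psi|^2\Big)+(\text{inner boundary term}),
\end{align*}
where $\mathcal R$ is the zeroth-order operator built from the Hamiltonian and momentum constraints; the dominant energy condition $\mu\ge|J|$ is precisely what makes $\mathcal R\ge0$ pointwise, and the inner boundary term has a sign under the standard condition. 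Since $\mathcal D\psi=0$, the right-hand side is nonnegative.

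Then I would compute $\lim_{R\to\infty}$ of the boundary integral on the left. Expanding $\psi$ against $\psi^\infty$ and the geometry against the hyperbolic background, and discarding terms with vanishing average over $S_R$, the limit is a fixed linear functional of the asymptotic charges $(\mathcal E,\mathcal P,\mathcal C,\omega)$ whose coefficients are determined by the asymptotic Killing field — the lapse $N$, shift $Y$, boost $X$, rotation $\mathcal A$ — generated by $\psi^\infty$ through its Dirac currents. Matching constants identifies this limit with $\mathcal H(\psi^\infty)=\mathcal EN+\langle\mathcal P,Y\rangle+\langle\mathcal C,X\rangle+\langle\omega,\mathcal A\rangle$; combined with the previous step this gives $\mathcal H(\psi^\infty)\ge0$. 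This reproduces the argument of \cite{Maerten2006, ChruscielMaertenTod, WXZ2015}, and the hypotheses are arranged so that it applies directly.

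The main obstacle is the interface between the analytic and the algebraic steps: one must choose weighted spaces in which $\mathcal D$ is simultaneously Fredholm \emph{and} the boundary integral converges, which is delicate in the AdS setting because the components of an imaginary Killing spinor on $\mathbb H^n$ grow or decay like $e^{\pm r/2}$, so a careless weight either destroys the solution or lets the boundary term diverge. A further subtlety, central to this paper, is that $\psi^\infty$ may be of \emph{mixed} causal type — its associated Dirac current is timelike in some asymptotic directions and null in others — so $\mathcal H$ degenerates along the null directions and one must verify that the boundary expansion still yields a finite, well-defined pairing. The remaining work, namely the explicit Weitzenböck computation and the term-by-term expansion of the boundary integral, is lengthy but routine.
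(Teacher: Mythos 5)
Your outline is correct and follows essentially the same route as the paper's proof (which, in Theorem \ref{pmt maerten}, simply cites Maerten \cite{Maerten2006} and Chru\'sciel--Maerten--Tod \cite{ChruscielMaertenTod}, records the pointwise divergence identity for the modified connection, and discusses the weighted-space existence theory and regularity bootstrap for the Dirac--Witten equation). One small miscalibration: the mixed-causal-type issue you flag at the end is not an obstacle to the \emph{inequality}---the boundary term in Theorem \ref{pmt maerten} converges under the stated decay hypotheses regardless of the causal character of $\psi^\infty$---rather, it is the central difficulty in the \emph{rigidity} analysis (Theorems \ref{Thm Intro rigidity} and \ref{Thm Intro cannot change causal type}), where one must show that a mass-minimizing spinor of mixed type can be traded for one that is strictly null or strictly timelike.
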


  Here $\mu$ and $J$ are the energy and momentum densities; $\mathcal E$ is the energy; $\mathcal P$ is the momentum vector; $\mathcal C$ is the center of mass; and $\mathcal A$ is the angular momentum two-form of $(M^n,g,k)$. 
  Moreover, the quantities $N,X,Y,\omega$ are determined from the spinor $\psi^\infty$ and are defined in Section \ref{S:preliminaries}.
  This is in sharp contrast to the usual PMT, which only encodes the energy and the linear momentum but not the other two charges \cite{HuangSchoenWang2011}.
  We state a precise version of Theorem \ref{Thm Intro PMT} in Theorem \ref{pmt maerten} and Corollary \ref{cor pmt maerten}.
  We also point out that Theorem \ref{Thm Intro PMT} is completely open in the non-spin setting.

\medskip

Geometric inequalities in the setting of a vanishing cosmological constant are remarkably rigid:
The case of equality in the asymptotically flat PMT is realized by pp-wave spacetimes\footnote{Planar waves with parallel rays which model gravitational waves.} \cite{HZ24, HirschHuang25}, the asymptotically hyperboloidal PMT by Minkowski space \cite{HJM2020, HirschJangZhang24}, the Riemannian Penrose inequality by the Schwarzschild metrics \cite{Bray01, HuiskenIlmanen97}, and the Riemannian charged PMT by the extremal Reissner-Nordstr\"om metric \cite{BHKKZ}.
In contrast, introducing a negative cosmological constant leads to a far richer landscape.
In particular, \eqref{mass formula intro} is saturated by initial data sets contained in AdS space, Siklos wave spacetimes (which model both gravitational and, in some cases, electromagnetic waves),  and ultraspinning black holes.

\medskip

In this paper we fully characterize mass-minimizing initial data sets: 
\begin{theorem}\label{Thm Intro rigidity}
    In the same setting as Theorem \ref{Thm Intro PMT}, suppose additionally that the total charges vanish, i.e., $  \mathcal H(\psi^\infty)=0$.
    Then $(M^n,g,k)$ isometrically embeds into a spacetime $(\overline M,\overline g)$ with second fundamental form equal to $k$.
    Moreover, $(\overline M^{n+1},\overline g)$ admits an imaginary Killing spinor $\overline \psi$ such that one of the following two scenarios occurs: 
    \begin{enumerate}
        \item $\overline \psi$ is null. In this case the spacetime is a Siklos wave. Moreover, if $n=3$ or $n=4$, or if the AdS decay rate\footnote{See Definition \ref{Def:AH}.} $q$ satisfies $q>n-2$, the Siklos wave must be trivial, i.e. $(\overline M^{n+1},\overline g)$ is the AdS spacetime. 
        \item $\overline \psi$ is timelike. In this case the spacetime is stationary and vacuum. Moreover, if $n=3$ the spacetime must be trivial, i.e. $(\overline M^{n+1},\overline g)$ is the AdS spacetime.
    \end{enumerate}
\end{theorem}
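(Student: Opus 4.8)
The plan is to push the Witten-type argument underlying Theorem~\ref{Thm Intro PMT} into the equality case. Recall that $\mathcal H(\psi^\infty)$ arises as the boundary term in a Gibbons--Hull--Warner / Reula--Tod integration by parts applied to a spinor $\psi$ on $M$ which solves the AdS-adapted Dirac--Witten equation and is asymptotic to $\psi^\infty$; the bulk term is the sum of a nonnegative Bochner term $|\widehat\nabla\psi|^2$, where (schematically) $\widehat\nabla_X = \nabla_X + \tfrac12\,k(X)\cdot e_0\cdot + \tfrac{i}{2}\,X\cdot$ is the imaginary Killing connection, and a dominant-energy term $\mu|\psi|^2 - \langle J\cdot e_0\cdot\psi,\psi\rangle \ge 0$. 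Since $\mathcal H(\psi^\infty)=0$, both terms vanish identically on $M$: the spinor $\psi$ is $\widehat\nabla$-parallel, i.e.\ a generalized imaginary Killing spinor of the initial data set, and the dominant-energy term vanishes pointwise. One first has to check that such a $\psi$ exists and is not identically zero, which follows from the Fredholm theory already used to establish the inequality.

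Next I would run a Killing development adapted to spinors: from the Dirac currents of $\psi$ one reads off a lapse $N$ and a shift $X$ and builds a Lorentzian metric $\overline g$ on a neighbourhood of $\{0\}\times M\subset\mathbb R\times M$ such that $(M,g,k)$ embeds with second fundamental form $k$, and $\psi$ prolongs to a genuine imaginary Killing spinor $\overline\psi$ on $(\overline M^{n+1},\overline g)$, solving $\overline\nabla_X\overline\psi = \tfrac{i}{2}\,X\cdot\overline\psi$; in particular $\overline g$ is Einstein with $\overline{\Ric} = -n\,\overline g$. The Dirac current $V_{\overline\psi}$ of $\overline\psi$ is then a causal Killing vector field, but a priori of mixed causal character (timelike on part of $\overline M$, null elsewhere). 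At this point I invoke the causal-character criterion established in this paper: under the hypotheses at hand it allows $\overline\psi$ to be replaced, if necessary, by an imaginary Killing spinor whose Dirac current is \emph{either} everywhere timelike \emph{or} everywhere null. This produces the dichotomy in the statement.

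Finally I would treat the two cases. If $V_{\overline\psi}$ is null, the presence of a null Killing field, together with $\overline{\Ric}=-n\,\overline g$ and the parallelism properties of $\overline\psi$, forces $\overline g$ into the Siklos (AdS-wave, Kundt-type) normal form $\overline g = z^{-2}\bigl(2\,du\,dv + H\,du^2 + dz^2 + \textstyle\sum_i dx_i^2\bigr)$ with profile $H=H(u,z,x)$ satisfying a linear second-order equation; the asymptotically AdS hypothesis (with decay rate $q$, see Definition~\ref{Def:AH}) translates into decay for $H$, and comparing this decay with the indicial exponents of the profile equation shows that when $n\in\{3,4\}$, or when $q>n-2$, the only admissible solution is $H\equiv 0$, so $(\overline M,\overline g)$ is AdS. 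If instead $V_{\overline\psi}$ is timelike, $(\overline M,\overline g)$ is stationary; the vanishing dominant-energy term combined with the Einstein condition coming from $\overline\psi$ gives that the spacetime is vacuum with $\Lambda=-\tfrac{n(n-1)}{2}$; and for $n=3$ the overdetermined spinor system --- the additional Killing vectors built from the bilinears of $\overline\psi$ --- together with the AdS asymptotics pins the spacetime down to AdS$_4$.

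I expect the main obstacles to be, first, verifying the hypotheses of the causal-character replacement theorem in the present geometric situation, in particular controlling the locus where $V_{\overline\psi}$ degenerates to null and checking that the replacement is compatible with the asymptotics and with the embedding of $(M,g,k)$; and second, the sharp dimensional bookkeeping in the null case, where one must identify the homogeneity/indicial exponents of the Siklos profile equation precisely enough to see that an AdS fall-off forces a trivial profile exactly in the stated range of $n$ and $q$, and no further. The timelike rigidity for $n=3$ should be comparatively soft, exploiting the abundance of spinor bilinears in four dimensions.
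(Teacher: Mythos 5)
Your overall plan matches the paper's architecture at the level of headings -- vanishing bulk term gives a $\widehat\nabla$-parallel spinor, then dichotomize into timelike/null via the replacement theorem, then Killing-develop or produce AdS--Brinkmann coordinates and analyze decay -- but several steps are either in the wrong order or are glossed over in a way that hides where the real work lies.

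First, you propose to build the Killing development \emph{first} and replace the spinor afterwards. The paper does the reverse: Theorem~\ref{Thm Intro cannot change causal type} is stated and proved entirely at the level of the initial data set $(M,g,k)$, and only after a strictly null or strictly timelike $\phi$ is produced does the spacetime construction begin. This matters because the two constructions are genuinely different: the Killing development of Section~6 is adapted to the timelike case (the prolongation ODE $\partial_\tau\overline\psi = \varphi(\overline\psi)$ with $\varphi$ from Theorem~\ref{Thm Intro hidden symmetry} is trivial when $\psi$ is null), while the null case requires the spinorial $2$-slicing of Theorem~\ref{Thm Intro minimal slicing} and the construction of global AdS--Brinkmann coordinates in Section~8. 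The latter is explicitly flagged in the paper as ``significantly more technical than its pp-wave counterpart'' -- in particular, since $X$ is only hypersurface-orthogonal rather than closed, the coordinate $u_1$ must be extracted by following the integral curves of $N^{-1}Y$ to infinity (Lemma~\ref{u-y}), not by solving an elliptic equation. Your sentence ``a null Killing field and the parallelism of $\overline\psi$ force $\overline g$ into the Siklos normal form'' skips over the heart of the null case.

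Second, your explanation of the $n=3$ timelike rigidity (``the overdetermined spinor system, the additional Killing vectors built from the bilinears of $\overline\psi$, together with the AdS asymptotics pins the spacetime down'') is not what happens. In dimension $n=3$ every two-form is simple, and by combining Theorem~\ref{E=P}, Theorem~\ref{thm: examples null} and the boost of Proposition~\ref{Prop boost}, the paper shows that a type~I null mass-minimizing spinor always exists in $n=3$, so scenario~(2) of the theorem simply never occurs there; the $n=3$ statement in case~(2) is vacuous, and the $n=3$ triviality is delivered entirely through case~(1) ($q>n/2>n-2$ forces $L\equiv 1$).

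Third, your ``indicial exponents'' heuristic for the decay threshold is in the right spirit but the actual mechanism is a Green's-function comparison on $\Sigma_1\cong\mathbb H^{n-1}$: the Green's function of $\Delta_{\Sigma_1}-(n-1)$ decays like $\hat t^{\,1-n}$ (Lemma~\ref{Green}), and if the source $h\le 0$ is not identically zero then $\mathcal L\le -C\hat t^{\,1-n}$, contradicting the decay $\mathcal L = O(\hat t^{-q}u_n^{-1})$ precisely when $q>n-2$. You should make sure you see why this gives exactly the threshold $n-2$ and why $q>n/2$ covers $n\le 4$ automatically.

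The conceptual idea -- replace the spinor to fix its causal type, then develop -- is right, and the timelike vacuum claim via the Einstein condition from the imaginary Killing spinor is correct, but the order reversal, the false $n=3$ timelike mechanism, and the compressed treatment of the global null normal form are gaps you would need to close.
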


In particular, if $n=3$, there are no non-trivial spacetimes $(\overline M^{4},\overline g)$.
The condition that $(\overline M^{n+1},\overline g)$ is stationary and vacuum in the timelike case should be compared to the Bartnik Stationary Vacuum Conjecture which was resolved recently \cite{HirschHuang25}.

\medskip

Here Siklos waves $(\overline M^{n+1},\overline g)$ \cite{Siklos85, Calvaruso2022Siklos} are a general class of wave-like solutions to the Einstein equations with negative cosmological constant which in particular model gravitational waves.
The negative cosmological constant causes the propagation direction of the waves to rotate \cite{Podolsky1998SiklosAdSWaves}.
They generalize the Kaigorodov spacetime \cite{Kaigorodov1962}, Ozsváth’s homogeneous solution to the Einstein-Maxwell equations \cite{Ozsvath65} (i.e. an electromagnetic wave traveling in AdS), and Defrise's pure radiation solution \cite{Defrise69}.
A Siklos wave admits global AdS-Brinkmann coordinates\footnote{Here we use Einstein summation, $\alpha,\beta=2,...,n$.}
\begin{align*}
    \overline g=\frac{1}{u^{2}_n}\Big( 2du_1dt + Ldu_1^{2}+\delta_{\alpha\beta}du_{\alpha}du_{\beta}\Big)
\end{align*}
for some function $L$.
In particular, they are conformal to the better-known pp-wave spacetimes with vanishing cosmological constant.
In general, Siklos waves have non-vanishing center of mass $\mathcal C$ and angular momentum $\mathcal A$.
Moreover, in contrast to their pp-wave counterpart, they do not have vanishing mass (i.e., $\mathcal E\ne|\mathcal P|$ for their energy and momentum).
A detailed overview of Siklos waves will be given in Section \ref{SS: Siklos}, where we establish several important geometric theorems and examine explicit examples.

\medskip

Previously, Maerten showed in \cite[Theorem 1.4]{Maerten2006} that $(M^n,g,k)$ is contained in the AdS spacetime in case $|\mathcal E|=|\mathcal P|=|\mathcal C|=|\mathcal A|=0$.
The same conclusion has been drawn by Chrusciel, Maerten and Tod \cite[Theorems 3.1 \& 3.9]{ChruscielMaertenTod} in case either $\mathcal E=0$, 
or $n=3$ and $\mathcal E=|\mathcal P|$,
or $n=3$ and $(\ast \mathcal A)\times\mathcal C=0$. 
Moreover, Chen, Hung, Wang, and Yau \cite[Theorem 1.5]{chen2017rest} established  a related rigidity statement in dimension $n=3$ when the rest mass is vanishing. 

\medskip

The next result shows that both dimensional bounds in Theorem \ref{Thm Intro rigidity} are optimal:

\begin{theorem}\label{Thm Intro examples}
There exist asymptotically AdS spin initial data sets $(M^n,g,k)$ satisfying the dominant energy condition and saturating formula \ref{mass formula intro} with the following properties:
\begin{enumerate}
    \item For $n\ge5$ there exist initial data sets contained in non-trivial Siklos wave spacetimes admitting a null imaginary Killing spinor.
    \item For $n=4$ there exist initial data sets contained in non-trivial ultraspinning black hole spacetimes admitting a timelike imaginary Killing spinor.
\end{enumerate}
\end{theorem}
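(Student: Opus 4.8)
The approach I would take is entirely constructive. In each case the plan is to: write down an explicit spacetime $(\overline M^{n+1},\overline g)$; exhibit on it an imaginary Killing spinor $\overline\psi$ whose Dirac current has the prescribed causal type; choose a spacelike Cauchy slice and let $(M^n,g,k)$ be the induced initial data set; and conclude that \eqref{mass formula intro} is saturated. For the last step I would use a soft argument rather than a direct charge computation: restricting $\overline\psi$ to $M$ produces a solution of the hypersurface imaginary-Killing (Dirac--Witten) equation on $(M^n,g,k)$ with some asymptotic profile $\psi^\infty\in\overline{\mathcal S}(\mathbb H^n)$; since the boundary value problem underlying Theorem \ref{Thm Intro PMT} has a unique solution with that profile, the spinor restricted from $\overline M$ \emph{is} the Witten spinor, and as it already solves the Killing equation the Witten-type integral identity makes its bulk integrand, hence the boundary term $\mathcal H(\psi^\infty)$, vanish. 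The dominant energy condition costs nothing: all models are vacuum with negative cosmological constant, so $\mu=|J|=0$ (for the pure-radiation Siklos variants one instead has $\mu\ge|J|$ by the null energy condition).

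For part (1) I would start from the AdS--Brinkmann form
\begin{align*}
\overline g=\frac1{u_n^{2}}\Big(2\,du_1\,dt+L\,du_1^{2}+\delta_{\alpha\beta}\,du_\alpha\,du_\beta\Big),
\end{align*}
with $L=L(u_1,u_2,\dots,u_n)$ a solution of the reduced vacuum Einstein equation recorded in Section \ref{SS: Siklos}. I would check that such a background always carries a \emph{null} imaginary Killing spinor by solving the Killing equation directly in these coordinates; its Dirac current is the covariantly null field $\partial_t$, so $\overline\psi$ is null and cannot be timelike. The substantive point is the choice of $L$: one picks a solution of the reduced equation that genuinely depends on the wave coordinate $u_1$ (so the wave is not pure gauge) and whose transverse falloff is compatible with Definition \ref{Def:AH}. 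Matching the admissible decay exponents of the reduced equation against the window $q\le n-2$ left open by case (1) of Theorem \ref{Thm Intro rigidity}, one finds that such an $L$ with finite charges exists precisely when $n\ge5$ (equivalently, when the transverse hyperbolic factor $\mathbb H^{n-1}$ has dimension at least $4$); for $n=3,4$ any admissible $L$ is pure gauge. A spacelike slice of the resulting spacetime gives the desired initial data set.

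For part (2) I would take $n=4$, so $\overline M$ is five-dimensional, and use the ultraspinning (BPS) rotating AdS black holes --- the supersymmetric Kerr--AdS solutions in five dimensions. These are stationary, vacuum, asymptotically AdS, and admit a Killing spinor whose squared Dirac current is the stationary Killing field, timelike throughout the black hole exterior; hence the induced $\overline\psi$ is timelike there. A constant-time Cauchy slice of the exterior region gives an asymptotically AdS initial data set $(M^4,g,k)$ satisfying the DEC, and by construction of the supersymmetric locus its charges $\mathcal E,\mathcal P,\mathcal C,\mathcal A$ obey exactly the linear relation that makes $\mathcal H(\psi^\infty)=0$ for the boundary spinor induced by $\overline\psi$ --- a fact I would also confirm by a direct charge computation as a cross-check. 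Since the angular momentum is nonzero the spacetime is not AdS, which case (2) of Theorem \ref{Thm Intro rigidity} permits only because $n=4>3$.

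The hard part will be the asymptotic bookkeeping. I must: (i) check that the explicit slices above have induced data in the weighted spaces of Definition \ref{Def:AH} with the advertised decay rate $q$; (ii) pin down the limiting spinor $\psi^\infty\in\overline{\mathcal S}(\mathbb H^n)$ of the restricted $\overline\psi$ precisely enough that the mass functional in \eqref{mass formula intro} is the one being saturated; and (iii) rule out a hidden diffeomorphism that would trivialise the Siklos profile $L$ or degenerate the ultraspinning black hole. Steps (i) and (iii) are the delicate ones, since in the Siklos case the relevant decay sits at the borderline $q=n-2$, exactly at the edge of the asymptotically-AdS framework; the remaining verifications reduce to explicit, if lengthy, computations with known metrics and their Killing spinors.
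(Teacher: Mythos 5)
Your overall architecture matches the paper's: for $n\ge5$ produce a non-trivial Siklos wave admitting a null imaginary Killing spinor (Proposition~\ref{prop siklos admit spinors}), take a slice, and show the decay rate lands in the window $(\tfrac n2,n-2]$ which is non-empty precisely for $n\ge5$; for $n=4$ take the ultraspinning Kerr--AdS limit. The soft ``uniqueness of the Witten spinor plus vanishing of the bulk integrand'' argument for saturation is sound: restricting the imaginary Killing spinor to the slice solves the hypersurface Killing equation, and for a null spinor $N=|X|$ combined with $J=-\mu X/|X|$ (Proposition~\ref{Prop: stress energy siklos wave} and Lemma~\ref{lemma mu J X N}) kills the $\mu,J$ terms as well. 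That said, three points in your sketch are genuinely off or under-specified.

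First, the claim that one should ``pick a solution of the reduced equation that genuinely depends on the wave coordinate $u_1$ (so the wave is not pure gauge)'' has the logic reversed. As the paper records, a wave profile $L=L(u_1)$ depending \emph{only} on $u_1$ gives exactly AdS after the change of variable $v=2t+\int L$. Non-triviality therefore must come from transverse dependence together with $\mu\not\equiv0$, i.e.\ strict inequality somewhere in the Siklos condition $(\Delta_{\mathbb H^{n-1}}-2u_n\partial_n)L<0$. The $u_1$-dependence in the paper's construction (the cutoff $\eta(u_1)$) is there only to control global decay, not to create non-triviality.

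Second, ``pick an $L$ with the right falloff'' is precisely the step the paper has to work for. The construction is $\mathcal L=G\ast h$ with $h=\eta(u_1)\bar h(\bar u)$ compactly supported and nonpositive, $G$ the Green's function of $\Delta_{\Sigma_1}-(n-1)$ on $\mathbb H^{n-1}$, and the crux is the sharp estimate $G\ast h=\mathcal O_3(\hat t^{1-n})$ (Lemma~\ref{Green} and Lemma~\ref{G*h}), yielding $L-1=O_3(t^{2-n})$ and hence decay exactly $q=n-2$. Without that estimate you cannot verify that the induced slice lies in the weighted spaces of Definition~\ref{Def:AH}, which is the entire content of the dimension threshold.

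Third, and more substantively, your treatment of part~(2) tacitly assumes a regular ``black hole exterior.'' The vacuum ultraspinning Kerr--AdS solution has a \emph{naked} singularity at $y=0$: no horizon, and the constant-$t$ slice is not a complete Riemannian manifold, so it is not literally an initial data set in the sense of Definition~\ref{Def:AH}. The paper is explicit that the BPS bound of Theorem~\ref{Thm Intro PMT} cannot be applied directly to this data and that one must appeal to an (expected, not proved) non-smooth extension. Your proposal, as written, would falsely present this example as falling inside the smooth framework; you would need to flag the singularity and either excise a neighborhood of $y=0$ and argue about the boundary contribution, or invoke a low-regularity positive mass theorem.
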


Ultraspinning black holes are a certain type of extremal Kerr-AdS spacetimes and will be introduced in Section \ref{S:ultraspinning}.
Such supersymmetric black holes are of great interest in theoretical physics  \cite{kostelecky1996solitonic, gutowski2004supersymmetric, cvetivc2005supersymmetric} and typically exhibit naked singularities when there is no Maxwell field present.

\medskip

The properties of $(\overline M^{n+1},\overline g)$  depend strongly on the causal character of the underlying imaginary Killing spinor.
The causal type can be determined from the global charges:

\begin{theorem}\label{Thm Intro determining causal type}
In the same setting as Theorem \ref{Thm Intro rigidity}, we have the following:
    \begin{enumerate}[label=(\alph*)]
    \item If $\mathcal P=0$ and $  \mathcal A$ is a simple two-form, case (1) of Theorem \ref{Thm Intro rigidity} occurs.
        \item If $\mathcal P=0$, $\mathcal C=0$ and $\mathcal A $ is non-simple, case (2) of Theorem \ref{Thm Intro rigidity} occurs.
    \end{enumerate}
\end{theorem}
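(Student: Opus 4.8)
The plan is to read off the causal type of the imaginary Killing spinor $\overline\psi$ produced by Theorem \ref{Thm Intro rigidity} from its asymptotics, and then to express those asymptotics purely through the charges. To $\overline\psi$ one attaches its Dirac current $\overline V$, a causal Killing vector field on $(\overline M^{n+1},\overline g)$: by construction $\overline\psi$ is null exactly when $\overline V$ is null everywhere, and timelike exactly when $\overline V$ is timelike on an open set. So it is enough to decide the causal character of $\overline V$.

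First I would show that near infinity the Killing field $\overline V$ agrees, to leading order, with the Killing field $\overline V^{\infty}$ of the AdS background that is the Dirac current of $\psi^{\infty}$. This is forced by the Killing spinor equation together with the convergence of $\overline\psi|_{M}$ to $\psi^{\infty}$ at infinity, the subleading terms being governed by the mass aspect --- the same expansion that underlies the mass formula \eqref{mass formula intro}. Hence the causal character of $\overline V$ in the asymptotic end is determined by $\psi^{\infty}$. That it cannot change in the interior is a separate point: if $\overline V^{\infty}$ is timelike on an open asymptotic set then so is $\overline V$, placing us in case (2); if instead $\overline V^{\infty}$ is null throughout the end, then $|\overline V|^{2}$, which satisfies a second order equation dictated by the Killing spinor identities, vanishes identically --- here one invokes the explicit structure of Siklos waves from Section \ref{SS: Siklos} --- so $\overline\psi$ is globally null and we are in case (1).

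It then remains to read the causal type of $\overline V^{\infty}$ off from the charges. I would use the identification of the asymptotic charges with an element $Q$ of $\mathfrak{so}(n,2)^{*}\cong\Lambda^{2}\mathbb R^{n,2}$, in which $\mathcal E$ is the component in the timelike $2$-plane, $\mathcal P$ and $\mathcal C$ are the two families of boost components, and $\mathcal A\in\Lambda^{2}\mathbb R^{n}$ is the spatial rotation block; the data $(N,Y,X,\omega)$ of Section \ref{S:preliminaries} likewise assemble into the squaring $\Omega(\psi^{\infty})\in\Lambda^{2}\mathbb R^{n,2}$ of $\psi^{\infty}$, so that $\mathcal H(\psi^{\infty})=\langle Q,\Omega(\psi^{\infty})\rangle$ and $\overline V^{\infty}$ is the Killing field of AdS$_{n+1}$ generated by $\Omega(\psi^{\infty})$. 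The causal character of $\overline V^{\infty}$ is governed by the $SO(n,2)$-orbit of $\Omega(\psi^{\infty})$: everywhere null precisely when $\Omega(\psi^{\infty})$ is a decomposable bivector spanning a degenerate plane, timelike somewhere otherwise. Finally I would combine $\mathcal H(\psi^{\infty})=0$ with the positivity $\mathcal H\ge 0$ of Theorem \ref{Thm Intro PMT}: since $\psi'\mapsto\langle Q,\Omega(\psi')\rangle$ is then a positive semidefinite quadratic form vanishing at $\psi^{\infty}$, the bivector $\Omega(\psi^{\infty})$ must lie on the degeneracy locus of this pairing, and a case analysis on the block structure of $Q$ yields the dichotomy: when $\mathcal P=0$ and $\mathcal A$ is simple this locus meets only the null orbits, giving (a); when in addition $\mathcal C=0$ but $\mathcal A$ is non-simple the null orbits cannot annihilate $Q$, so $\overline V^{\infty}$ is timelike, giving (b). The same conclusions cross-check against the charge computations of Section \ref{SS: Siklos}, where the angular momentum two-form of a Siklos wave is always simple, and of Section \ref{S:ultraspinning}, where that of an ultraspinning black hole with two or more rotation parameters is non-simple.

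The main obstacle is this final step: identifying exactly which bivectors arise as squares $\Omega(\psi)$ of imaginary Killing spinors --- the squaring map is far from onto $\Lambda^{2}\mathbb R^{n,2}$ --- and then checking, uniformly in $n$, that under the hypotheses on $\mathcal P$, $\mathcal C$ and the rank of $\mathcal A$ the degeneracy locus of $\langle Q,\cdot\rangle$ lies inside, respectively outside, the null orbit. A secondary technical point is justifying that the causal type of $\overline V$ is genuinely inherited from the asymptotic region: in case (2) this is automatic since timelikeness is an open condition, but in case (1) one really must propagate $|\overline V|^{2}\equiv 0$ inward, which is where the geometry of Siklos waves from Section \ref{SS: Siklos} is used.
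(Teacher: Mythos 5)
Your framing — encode the charges as an element $Q$ of $\mathfrak{so}(n,2)^*\cong\Lambda^2\mathbb R^{n,2}$, encode the asymptotic spinor as a bivector square $\Omega(\psi^\infty)$, and decide the dichotomy by asking whether the kernel of the positive semidefinite form $\psi'\mapsto\langle Q,\Omega(\psi')\rangle$ meets the null locus — is a sensible abstract reformulation of what the paper does concretely. But as it stands the proposal has two genuine gaps, and you misjudge which of the two is the harder one.

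The step you call ``a secondary technical point'' is not secondary. You suggest that in case (2) ``timelikeness is an open condition'' settles the matter and that in case (1) one can ``invoke the explicit structure of Siklos waves.'' Neither works. A spinor that is timelike near infinity may well become null in the interior — Proposition \ref{Prop: existence spinor of mixed type} shows this happens already on hyperbolic space itself — so openness of timelikeness gives nothing. And your case (1) argument is circular: one cannot propagate nullity inward ``using the structure of Siklos waves'' because the spacetime is only known to be a Siklos wave \emph{after} the spinor is known to be null. The actual mechanism is Theorem \ref{Thm Intro cannot change causal type}, proved in Section \ref{S:causality}: one replaces $\psi^\infty$ by a different mass-minimizing $\phi^\infty$ whose associated spinor $\phi$ is strictly null or strictly timelike, and this rests on the conservation law for $N^2-|X|^2-|Y|^2+\tfrac12|\omega|^2$ (Corollary \ref{Cor: constancy}), the monotonicity of $N\omega-X\wedge Y$ (Proposition \ref{Prop: zeta = 0}), and the decay estimates of Section \ref{S:decay estimates} that transport these identities between $\psi$ and $\psi^\infty$. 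None of that is in your sketch.

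The step you call ``the main obstacle'' — the case analysis deciding when the degeneracy locus of $\langle Q,\cdot\rangle$ meets the null orbit — is also left unproved, and it is precisely the content of Theorem \ref{thm: examples null} and Proposition \ref{Prop: example non-null}. The paper handles it concretely by packaging the charges into the Hermitian endomorphism
\[
\mathcal Q=\begin{pmatrix}\mathbf i\mathcal P&\mathcal C-\mathbf i\mathcal A\\ -\mathcal C-\mathbf i\mathcal A&-\mathbf i\mathcal P\end{pmatrix}
\]
of $\overline{\mathcal S}(\mathbb R^n)$, computing $\mathcal Q^2$, diagonalizing the resulting commutator terms, reading off the minimal eigenvalue, and in case (a) explicitly exhibiting a type I null spinor $u$ achieving it, while in case (b) invoking \cite[Lemma 3.5]{ChruscielMaertenTod} to produce a minimizer with $X(u)=Y(u)=0$, which cannot be null. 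If you want to carry out your orbit-theoretic version you would in effect be redoing this computation in different notation: the squaring map $u\mapsto\Omega(u)$ is far from surjective onto $\Lambda^2\mathbb R^{n,2}$, and characterizing its image and intersecting it with the degeneracy locus of $Q$ is exactly the eigenvalue analysis of $\mathcal Q$. Until both of these steps are supplied, the proposal is an outline rather than a proof.
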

We point out that every two-form in dimension $n = 3$ is simple. Furthermore, by applying a boost (cf. Proposition \ref{Prop boost}), we may set $\mathcal{P}=0$.

\medskip

The proof of Theorem \ref{Thm Intro rigidity} relies on several deep results which are of independent interest.
The first result shows that we can always construct an imaginary Killing spinor which is strictly null or strictly timelike.

\begin{theorem}\label{Thm Intro cannot change causal type}
     Let $(M,g,k)$ be an asymptotically AdS spin initial data set satisfying the dominant energy condition and suppose that there exists a spinor $ \psi^\infty\in \overline {\mathcal S}(\mathbb H^n)$ such that the mass functional vanishes, i.e.
        \begin{align*}
           \mathcal H(\psi^\infty)=0.
    \end{align*}
    Then there exists another spinor $\phi^\infty\in \overline {\mathcal S}(\mathbb H^n)$ such that the spinor $\phi$ which asymptotes to $\phi^\infty$ and solves the PDE $$\slashed D\phi=\frac12\tr_g(k)e_0\phi+\frac 12n\mathbf{i}\phi$$is either strictly null or strictly timelike.
    Moreover, we also have $\mathcal H(\phi^\infty)=0$.
\end{theorem}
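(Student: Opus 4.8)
The plan is to reduce the statement to a question about the finite-dimensional space of Killing spinors, settle it by a compactness argument on the conformally compactified slice, and isolate the genuinely null case as the hard point. First I would record two cheap reductions. Since each of $N,X,Y,\omega$ is a spinor bilinear in $\psi^\infty$, the functional $\mathcal H$ is a nonnegative Hermitian form on $\overline{\mathcal S}(\mathbb H^n)$; hence $\ker\mathcal H$ is a complex subspace and $\mathcal H$ vanishes identically on it, so the ``moreover'' clause $\mathcal H(\phi^\infty)=0$ is automatic once $\phi^\infty\in\ker\mathcal H$. Next, the rigidity half of the positive mass argument (the proof of Theorem~\ref{pmt maerten}) shows that for any $\phi^\infty\in\ker\mathcal H$ the associated Dirac--Witten solution $\phi$ in fact solves the full restricted imaginary Killing system and saturates the dominant energy condition pointwise; equivalently $\phi$ extends to an imaginary Killing spinor on the Killing development $(\overline M,\overline g)$. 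Write $\mathcal K$ for this space of Killing spinors on $M$. Parallel sections are determined by their value at a point, so evaluation at any $p$ embeds $\mathcal K$ in the spinor fibre $\Sigma_p$; in particular $\mathcal K\cong\ker\mathcal H$ is a complex vector space of some dimension $m\ge 1$. It therefore remains to choose $\phi\in\mathcal K\setminus\{0\}$ whose Dirac current $V^\phi$ is null everywhere, or timelike everywhere, on $M$.

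The second ingredient is the pointwise causal algebra, which is really the heart of the ``causal character'' theory. For $\phi\in\mathcal K$ set $f_\phi:=-\overline g(V^\phi,V^\phi)$; the causal-character result (using the pointwise saturation of the dominant energy condition) gives $f_\phi\ge 0$, with $f_\phi(p)>0$ exactly where $V^\phi$ is timelike. At a point $p$ with $f_\phi(p)=0$, writing $\nu$ for the normalized spatial part of $V^\phi(p)$, one has $\phi(p)\in\ker(c(\nu)-1)$, where $c(\nu)$ is the Hermitian involution given by Clifford multiplication by $\nu$ composed with $e_0$ (it satisfies $c(\nu)^2=\mathrm{Id}$ because $\{e_ie_0,e_je_0\}=2\delta_{ij}$). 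Conversely a short computation shows that on the complex subspace $\ker(c(\nu)-1)$ the current has the \emph{constant} direction $\nu$, so every nonzero element of it is null with null direction $\nu$, and distinct such subspaces meet only at $0$ as $\nu$ ranges over the unit sphere. Thus, for fixed $p$, the spinors in $\mathcal K$ null at $p$ form the union of the mutually transverse complex subspaces $\ker(c(\nu)-1)\cap\mathcal K$, and a spinor null everywhere determines a globally defined null direction field on $\overline M$.

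Now I would globalize. Pass to the conformal compactification $\overline M$, which is compact with a single AdS end (Definition~\ref{Def:AH}), and consider the normalized causal function $\hat f(p,[\phi]):=f_\phi(p)/|\phi(p)|^4\in[0,1]$. Since imaginary Killing spinors have the expected leading asymptotics, $\hat f$ extends continuously to $\overline M\times\mathbb P(\mathcal K)$, its boundary values recording the causal character of $\phi^\infty$ on $\mathbb H^n$. Then $g([\phi]):=\min_{p\in\overline M}\hat f(p,[\phi])$ is continuous on the compact space $\mathbb P(\mathcal K)$ and attains a maximum $g_\ast$. If $g_\ast>0$, any maximizer $\phi$ has $V^\phi$ timelike on all of $\overline M$, hence on $M$, and we are done; if $\hat f\equiv 0$, then every $\phi\in\mathcal K$ is already null everywhere and any nonzero choice works. (An alternative to the compactness step would be to establish a differential inequality $\Delta_g f_\phi\le C f_\phi$ on $M$ from the restricted Killing equation and the dominant energy condition, which by the strong maximum principle would force each $f_\phi$ to be positive everywhere or identically zero; I expect this is available but requires a somewhat delicate Bochner-type computation.)

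The hard part will be the remaining case: $g_\ast=0$ but $\hat f\not\equiv 0$, i.e.\ every Killing spinor is null somewhere and yet none is timelike everywhere, where one must manufacture a spinor null \emph{everywhere}. By the pointwise description, the closed sets $Z_p=\{[\phi]:\phi\text{ null at }p\}$ are unions of mutually transverse projective subspaces $\mathbb P(\ker(c(\nu)-1)\cap\mathcal K)$, and the hypothesis reads $\bigcup_{p\in\overline M}Z_p=\mathbb P(\mathcal K)$. One must show this covering forces a common member: a $\phi$ lying in $\ker(c(\nu(p))-1)$ for a consistent choice of null direction $\nu(p)$ at every $p$ — equivalently, that the holonomy of $(\overline M,\overline g)$ reduces to the Siklos/pp-wave type, so that \emph{every} imaginary Killing spinor is null and, once more, any choice works. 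Extracting this reduction from the covering property — presumably via a transversality/dimension count on the family $\{\ker(c(\nu)-1)\cap\mathcal K\}$ over $\overline M\times S^{n-1}$, together with a unique-continuation or connectedness argument identifying the null locus of a candidate spinor as open and closed in $\overline M$ — is the step I expect to require the most work; the supporting ingredients (the rigidity input of Theorem~\ref{pmt maerten} and the continuity of $\hat f$ up to the conformal boundary) are comparatively routine.
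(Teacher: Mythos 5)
Your proposal takes a genuinely different route from the paper's proof, and the gap lies precisely where you flag it. Your Steps~1--3 are sound in outline: the rigidity half of Theorem~\ref{pmt maerten} does show that $\phi^\infty\in\ker\mathcal H$ produces a solution of the full restricted imaginary Killing system (so $\mathcal K\cong\ker\mathcal H$ is a finite-dimensional complex vector space); the pointwise algebra of the involutions $c(\nu)$ is correct, including the transversality of the null subspaces $\ker(c(\nu)-1)$; and a maximin over $\overline M\times\mathbb P(\mathcal K)$ is plausible given decay estimates of the kind the paper proves in Section~\ref{S:decay estimates}. But the residual case --- $g_\ast=0$ yet $\hat f\not\equiv 0$ --- is exactly where the theorem's content lives, and your proposed resolution is self-contradictory: you aim to show that ``the holonomy reduces to the Siklos/pp-wave type, so that \emph{every} imaginary Killing spinor is null,'' but the hypothesis $\hat f\not\equiv 0$ says precisely that some member of $\mathcal K$ \emph{is} timelike at some point. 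The maximin over $g$ cannot single out an everywhere-null spinor, because in this case the maximum $g_\ast=0$ is attained by any spinor with a single null point, including ones of mixed type.

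The paper handles this case by an explicit algebraic construction rather than by variational or topological means, and this is the step that is not retrievable from your framework. The key inputs are: (i) the quantity $N^2-|X|^2-|Y|^2+\tfrac12|\omega|^2$ is constant (Corollary~\ref{Cor: constancy}), (ii) the two-form $N\omega-X\wedge Y$ satisfies a linear first-order ODE and hence vanishes identically or nowhere (Proposition~\ref{Prop: zeta = 0}), and (iii) the condition $N=|X|=|Y|$, once it holds at a point or at infinity, holds everywhere (Proposition~\ref{Prop: |X| = N=Y}). These three facts make ``null \emph{and} type~I'' propagate, whereas ``null'' alone does not. Given a $\psi$ null at one point, the paper then transfers the resulting Fierz identities to the constant spinor $u$ at infinity (Lemma~\ref{lemma n>=4}, which is where $n\geq 4$ enters --- a dimensional restriction your proposal never sees), and if $N(u)=|X(u)|=|Y(u)|$ fails, it replaces $u$ by $v=\tfrac1{\sqrt2}(u+\widetilde u)$ with $\widetilde u=\mathbf i e_0 e_1 e_2 u$ (Propositions~\ref{tilde u mass minimizing} and~\ref{Prop properties v}), which forces $N(v)=|X(v)|=|Y(v)|\neq 0$ and is still mass-minimizing. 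The corresponding $\phi=\tfrac1{\sqrt2}(\psi+\widetilde\psi)$ is then null everywhere, while the original $\psi$ may well remain of mixed type. Your covering/transversality picture would need to reproduce this change of spinor within $\mathcal K$, and the ``alternative'' Bochner inequality $\Delta_g f_\phi\le C f_\phi$ you float does not appear in the paper (the propagation mechanism is a system of first-order ODEs, not a second-order elliptic inequality, and the unknown is not $f_\phi$ alone but the coupled triple $(N,X,Y,\omega)$). As written, the hard case is not closed, and the missing construction is the substance of the theorem.
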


This is a striking contrast to the fact that even hyperbolic space itself (i.e. the $t=0$ slice of AdS) admits spinors of mixed causal type, see Proposition \ref{Prop: existence spinor of mixed type}.
The proof of Theorem \ref{Thm Intro cannot change causal type} is highly non-trivial and relies on a new monotonicity formula for the two-form
$$N(\psi)\omega(\psi)-X(\psi)\wedge Y(\psi)$$
and a conservation law for the quantity
$$N(\psi)^2+\frac12|\omega(\psi)|^2-|X(\psi)|^2-|Y(\psi)|^2.$$
Here $N(\psi),Y(\psi),X(\psi),\omega(\psi)$ will be defined in Section \ref{S:preliminaries}.

\medskip

Like Killing vector fields, imaginary Killing spinors describe infinitesimal symmetries.
The next result shows that there is a hidden symmetry at the spinorial level:

\begin{theorem}\label{Thm Intro hidden symmetry}
        Suppose that a spinor $\psi\in \overline{\mathcal S}(M) $ solves $\nabla_i\psi=-\frac12k_{ij}e_je_0\psi-\frac12 \mathbf{i}e_i\psi$.
    Then $\varphi=\varphi(\psi)$, defined by 
    \begin{align*}
        \varphi=e_0(N(\psi)-\mathbf{i}Y(\psi)+e_0X(\psi)-\frac12 \mathbf{i}e_0\omega(\psi))\psi
    \end{align*}
    also solves
    \begin{align*}
        \nabla_i\varphi=-\frac12k_{ij}e_je_0\varphi-\frac12 \mathbf{i}e_i\varphi.
    \end{align*}
\end{theorem}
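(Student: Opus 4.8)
The plan is to verify directly that the algebraic symbol $\Phi := N(\psi)-\mathbf{i}Y(\psi)+e_0X(\psi)-\tfrac12\mathbf{i}e_0\omega(\psi)$, acting on $\psi$ and then composed with Clifford multiplication by $e_0$, produces a new solution of the imaginary Killing equation. The key conceptual point is that $\Phi$ should be thought of as the "Killing-vector-like" object built from the spinor bilinears $N,Y,X,\omega$: these bilinears satisfy first-order differential identities whenever $\psi$ solves the imaginary Killing equation, and the statement is essentially that $\varphi = e_0\Phi\psi$ is parallel-transported correctly by the modified connection $\widetilde\nabla_i = \nabla_i + \tfrac12 k_{ij}e_je_0 + \tfrac12\mathbf{i}e_i$. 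So the strategy is: first record the derivative identities for $N(\psi)$, $Y(\psi)$, $X(\psi)$, $\omega(\psi)$ that follow from $\nabla_i\psi = -\tfrac12 k_{ij}e_je_0\psi - \tfrac12\mathbf{i}e_i\psi$ (these are presumably already assembled in Section \ref{S:preliminaries} as part of the derivation of the monotonicity and conservation laws mentioned after Theorem \ref{Thm Intro cannot change causal type}); then compute $\widetilde\nabla_i\varphi$ using the Leibniz rule and show everything cancels.

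Concretely, I would first expand
\[
\nabla_i\varphi = e_0(\nabla_i\Phi)\psi + e_0\Phi(\nabla_i\psi),
\]
where $\nabla_i\Phi$ means Clifford multiplication by the one-form/scalar pieces differentiated componentwise. The second term is handled by substituting the imaginary Killing equation for $\nabla_i\psi$ and commuting $e_0\Phi$ past $k_{ij}e_je_0$ and $\mathbf{i}e_i$; this generates the "$-\tfrac12 k_{ij}e_je_0\varphi - \tfrac12\mathbf{i}e_i\varphi$" that we want, plus commutator terms coming from the fact that $\Phi$ does not commute with $e_i$ or $e_j$. The first term is where the derivative identities for the bilinears enter: $\nabla_i N$, $\nabla_i Y$, etc., each get replaced by explicit expressions quadratic in $\psi$ and linear in $k$ and the metric, contracted with Clifford elements. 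The heart of the proof is the bookkeeping showing that the commutator terms from the second piece exactly match the bilinear-derivative terms from the first piece. I expect this to reduce to a handful of Clifford algebra identities — in particular, relations of the form $e_0 e_i e_j + e_0 e_j e_i = -2\delta_{ij}e_0$, the behavior of $e_0$ (which squares to $+1$ or $-1$ depending on the convention) under the bilinear pairing, and the fact that $X,Y$ are spatial one-forms while $\omega$ is a spatial two-form, so that $e_0$ commutes or anticommutes with them in a controlled way.

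The main obstacle, and the step I would budget the most care for, is getting the derivative identities for the bilinears exactly right, including all the terms involving $k_{ij}$ (not just its trace) and the curvature-free "algebraic" terms coming from the $-\tfrac12\mathbf{i}e_i\psi$ part of the equation. A secondary subtlety is the interplay between the real and imaginary parts: $N$ is real, $Y$ and $X$ are built with an $\mathbf{i}$ or not, and $\omega$ appears with a factor $\tfrac12\mathbf{i}e_0$, so sign and reality conventions (e.g. whether $\langle e_i\psi,\psi\rangle$ is real or purely imaginary) must be pinned down from Section \ref{S:preliminaries} and tracked consistently. Once those conventions are fixed, the verification is a finite — if lengthy — computation, and no global or analytic input is needed: this is a purely pointwise, algebraic-plus-first-order-differential identity. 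It is worth noting structurally that the appearance of $e_0\Phi$ rather than $\Phi$ alone is forced precisely so that $\varphi$ lies in the correct spinor bundle $\overline{\mathcal S}(M)$ and transforms homogeneously under $\widetilde\nabla$; I would remark on this parallel with how a Killing vector acts on Killing spinors via the spinorial Lie derivative, which gives a useful sanity check on the form of $\Phi$.
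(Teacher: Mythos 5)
Your proposal is correct and takes essentially the same route as the paper: expand $\nabla_i\bigl((N-\mathbf{i}Y+e_0X-\tfrac12\mathbf{i}e_0\omega)\psi\bigr)$ by the Leibniz rule, substitute the bilinear derivative identities of Theorem~\ref{Thm: N,X,Y,omega}, and verify the cancellation term by term. One small clarification on your closing sanity check: the prefactor $e_0$ is not required for $\varphi$ to lie in $\overline{\mathcal S}(M)$ (the spinor $\Phi\psi$ already does), but rather to flip a sign, since the Leibniz computation yields $\nabla_i(\Phi\psi)=+\tfrac12(k_{ij}e_je_0+\mathbf{i}e_i)\Phi\psi$, and left-multiplication by $e_0$, which anticommutes with both $e_je_0$ and $\mathbf{i}e_i$, converts this into the imaginary Killing equation.
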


The spinor $\varphi $ allows us to show that the Killing development of $(M,g,k)$ saturating \eqref{mass formula intro} admits an imaginary Killing spinor.
More precisely,  to extend $\psi$ from the initial data set to the Lorentzian spacetime, we need to solve the equation $\partial_\tau\psi(\tau)=\varphi(\psi(\tau))$.  

\medskip

Lorentzian manifolds admitting Killing (or parallel) spinors play a central role in theoretical physics where they correspond to (purely gravitational) supersymmetry \cite{Figueroa-OFarrill}.
Classifying spacetimes admitting parallel or Killing spinors is an active field of research, see for instance \cite{tod1983all, SUSY11, 5SUSY, SUSY6} and the survey \cite{SUSYsurvey}.

\medskip

Besides spinors, the other central technique to study scalar-curvature-related questions is the use of minimal surfaces \cite{SchoenYau79} or more generally minimal slicings \cite{BrendleHirschJohne24}.
It has been of significant interest to understand precisely how both methods relate to each other.
We will show that spinors can be used to construct a certain type of minimal 2-slicings:

\begin{theorem}\label{Thm Intro minimal slicing}
   Suppose that a null type I spinor $\psi\in\mathcal{\overline{S}}(M)$ solves $$\nabla_i\psi=-\frac12k_{ij}e_je_0\psi-\frac12\mathbf{i}e_i\psi.$$ 
   Then there exists a foliation $\Sigma^1_t$ of hypersurfaces, and each $\Sigma_t^1$ is foliated by hypersurfaces $\Sigma^2_s$ which are flat.
\end{theorem}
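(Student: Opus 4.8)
The plan is to extract the slicing directly from the pointwise geometry of the null type I spinor $\psi$. First I would unpack what "null type I" means: the associated causal vector field built from $\psi$ — call it $V = V(\psi)$, the Dirac current with components $\langle e_\mu \psi,\psi\rangle$ — is null, and the "type I" condition should pin down the algebraic type of the auxiliary tensors $X(\psi), Y(\psi), \omega(\psi)$ so that one genuinely lands in case (1) of Theorem \ref{Thm Intro rigidity}. Using the Killing-type equation $\nabla_i\psi = -\tfrac12 k_{ij}e_j e_0\psi - \tfrac12\mathbf{i}e_i\psi$, I would derive first-order ODEs for the scalar $N(\psi)$ and the one-form $Y(\psi)$ (restricted to $M$), exactly as in the proof of Theorem \ref{Thm Intro cannot change causal type} where the conservation law for $N^2 + \tfrac12|\omega|^2 - |X|^2 - |Y|^2$ and the monotonicity of $N\omega - X\wedge Y$ were established. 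In the null case these algebraic relations degenerate, and the level sets of $N(\psi)$ (or equivalently of a suitable component of the null current) should be the hypersurfaces $\Sigma^1_t$: the spinorial equation forces $\nabla N$ to be proportional to the null direction $V$, so $\{N = t\}$ is a well-defined foliation away from the zero set, with $V$ null and tangent to each leaf.

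Next I would produce the sub-foliation $\Sigma^2_s \subset \Sigma^1_t$. Within a leaf $\Sigma^1_t$, the remaining spinorial data singles out a second function whose level sets give $\Sigma^2_s$; the natural candidate is (a component of) the one-form $Y(\psi)$, or the affine parameter along the null geodesic generators of $V$. The key computation is the second fundamental form of $\Sigma^2_s$ inside $\Sigma^1_t$: I would express it in terms of $\nabla V$ and $k$, and use the Killing equation together with the null type I constraints to show all the tangential derivatives of the relevant frame vanish, i.e. the induced metric on $\Sigma^2_s$ is flat. This mirrors the pp-wave / Siklos structure from Theorem \ref{Thm Intro rigidity}(1), where after the conformal change the transverse slices carry the flat metric $\delta_{\alpha\beta}du_\alpha du_\beta$; here the flatness of $\Sigma^2_s$ is the initial-data shadow of that statement. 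Concretely, the frame $e_2,\dots,e_n$ adapted to $\Sigma^2_s$ should be parallel along $\Sigma^2_s$ for the induced connection, which gives flatness immediately.

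The main obstacle I anticipate is showing that the two candidate functions are functionally independent and that their common level sets are genuinely codimension two (nondegeneracy of the slicing), together with controlling the zero sets of $N(\psi)$ and $V(\psi)$ where the foliation could break down. The algebraic identities among $N, X, Y, \omega$ in the null type I case must be strong enough to guarantee that $dN$ and $dY$ (or the substitute for the second function) are nonvanishing and transverse on an open dense set, and ideally everywhere under the type I hypothesis. A secondary technical point is verifying that "flat" is meant in the intrinsic sense for $\Sigma^2_s$ and checking this is compatible with the asymptotically AdS boundary behavior — but since $k$ and the spinor are tied by the Killing equation, the curvature of $\Sigma^2_s$ should be expressible purely in terms of spinor bilinears that vanish by the type I condition, so this reduces to a bookkeeping computation in the spinor calculus of Section \ref{S:preliminaries}.
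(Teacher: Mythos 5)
The proposal has two substantive errors that would derail the argument, so it does not match the paper's route and would not compile into a correct proof.

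\textbf{First}, you claim that the Killing equation ``forces $\nabla N$ to be proportional to the null direction $V$,'' and hence that the leaves $\Sigma^1_t$ can be taken to be level sets of $N$. In fact Theorem~\ref{Thm: N,X,Y,omega} gives $\nabla_i N = -k_{ij}X_j - Y_i$, which is not proportional to $X$: even in the null type I case (where $X\perp Y$ and $N=|X|=|Y|$), the term $-k(X,\cdot)$ can have arbitrary direction. You also have the incidence backwards: in the paper's slicing, $X$ is \emph{normal} to each $\Sigma^1_\alpha$, not tangent. The correct observation, which is what the paper uses (Corollary~\ref{Cor: existence u1}), is that $\nabla_i X_j = -k_{ij}N + \omega_{ij}$ gives $dX = 2\omega$, and the null hypothesis forces $N\omega = X\wedge Y$ (Proposition~\ref{Prop N=X implies Nw=XY}, propagated globally by Proposition~\ref{Prop: zeta = 0}), so $dX\wedge X = 0$; Frobenius then integrates the distribution $X^\perp$. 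Similarly the inner foliation $\Sigma^2_s$ comes from integrating $Y^\perp$ inside $\Sigma^1_\alpha$, again by a Frobenius argument rather than by level sets of a scalar.

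\textbf{Second}, the flatness of $\Sigma^2_s$ is asserted rather than proved. You say the adapted frame $e_2,\dots,e_n$ ``should be parallel along $\Sigma^2_s$ for the induced connection, which gives flatness immediately,'' but this is the conclusion, not a consequence of anything you have computed. The umbilicity computation $h^{\Sigma^2}_{AB} = -g^{\Sigma^2}_{AB}$ together with one restricted parallel spinor (Lemma~\ref{Lemma psi1 identity on Sigma1}, Proposition~\ref{sigma2}) only gives Ricci flatness of $\Sigma^2_s$. To upgrade to full flatness the paper produces $2^{\lfloor n/2\rfloor - 1}$ linearly independent parallel spinors on $\Sigma^2_s$, by varying the constant spinor $\widehat u$ at infinity over the joint $+1$-eigenspace of $\epsilon_1\epsilon_0$ and $\mathbf{i}\epsilon_2$ and then invoking Lemma~\ref{Lemma different spinors} to show the resulting solutions $\widehat\psi$ generate the \emph{same} foliation. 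That nontrivial uniqueness step (which in turn requires the improved decay estimates of Section~\ref{S:decay estimates}) is what makes the argument close, and nothing in your sketch supplies a substitute for it. Your ``bookkeeping computation in the spinor calculus'' is precisely the hard part.
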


This is a crucial ingredient when constructing global AdS-Brinkmann coordinates for Siklos waves in the proof of Theorem \ref{Thm Intro rigidity}.
This construction turns out to be significantly more technical than its pp-wave counterpart \cite{HZ24}.

\medskip

\noindent \textbf{Acknowledgements:}  SH was supported in part by the National Science Foundation under Grant No. DMS-1926686, and by the IAS School of Mathematics. YZ was partially supported by NSFC grant NO.12501070 and the startup fund from BIMSA. The authors would like to thank Piotr Chru\'sciel and Hari Kunduri for helpful discussions and interest in this article.

\section{Preliminaries}\label{S:preliminaries}

\subsection{Hyperbolic and AdS space}

We begin by recalling three common coordinate charts for the standard hyperbolic space.

\medskip

  The hyperboloid in $\mathbb{R}^{1,n}$ is given by
    \begin{align*}
        \mathbb{H}^n=\{(t,x)\in\mathbb{R}^{1,n}\,|\, x_1^2+\cdots x_n^2-t^2=-1,t>0\}.
    \end{align*}
    We call $(x_1,\cdots,x_n)$ the hyperboloidal coordinates. In this chart, the hyperbolic metric $b$ is written as
    \begin{align*}
        b=\frac{dr^2}{1+r^2}+r^2g_{\mathbb{S}^{n-1}}    
    \end{align*}
    where $r=|x|$ and $g_{\mathbb{S}^{n-1}}$ is the standard unit sphere metric.

    \medskip 
    
    In the upper half-space model $\mathbb H^n=\{(y_1,\dots,y_n)\in\mathbb R^n : y_n>0\}$ the hyperbolic metric $b$ takes the form
        \begin{align*}
            b=\frac{1}{y_n^2}(dy_1^2+\cdots+dy_n^2).
        \end{align*}

        \medskip

In the Poincar\'e ball model, $\mathbb H^n=B_1^n=\{(z_1,\dots,z_n)\in\mathbb R^n:|z|<1\}$ the hyperbolic metric $b$ is written as
        \begin{align*}
            b=\left(\frac{2}{1-|z|^2}\right)^2(dz_1^2+\cdots+dz_n^2).
        \end{align*}
        
\medskip

We also define
\begin{align*}
    &r=|x|,\quad t=\sqrt{1+r^2} \text{ in the hyperboloidal coordinates,}\\
    &\rho=\sqrt{y_1^2+\cdots+ y_{n-1}^2}\text{ in the upper half-space coordinates,}\\
    &{\varrho}=|z|\text{ in the ball coordinates.}
\end{align*}

\medskip

We have the following transformation formulas: 
\begin{align*}
     y_i=&\frac{x_i}{t+x_n}\text{ for }i=1,2,\ldots,n-1, && y_n=\frac{1}{t+x_n},\\
    x=&\frac{2}{1-{\varrho}^2}z, &&r=\frac{2{\varrho}}{1-{\varrho}^2},\\
     t=&\frac{1+{\varrho}^2}{1-{\varrho}^2}, &&t+x_n=\frac{1}{y_n}.
\end{align*}

\medskip

The Anti-de Sitter (AdS) space metric in upper half-space coordinates is given by
     \begin{align*}
           \overline g=\frac{1}{y_n^2}(-dt^2 +dy_1^2+\cdots+dy_n^2)
        \end{align*}
and solves the Einstein vacuum equation with cosmological constant $\Lambda=-\frac12n(n-1)$
\begin{align*}
    \overline R_{ij}-\frac12 \overline R \overline g_{ij}+\Lambda \overline g_{ij}=0.
\end{align*}
The minus sign in the metric has several geometric consequences.
For instance, in hyperbolic space $\mathbb H^n$ geodesics satisfy the equation $\rho''(t)=\rho(t)$.
This implies that geodesics which start at the same point will move away from each other at an exponential pace. 
On the other hand, timelike geodesics in AdS satisfy the ODE $\rho''(t)=-\rho(t)$ which leads to periodic solutions. 
Hence two geodesics which start at the same point will intersect again after a finite time.

\subsection{Asymptotically AdS initial data sets}

\begin{definition}[Weighted function spaces] \label{Def:weighted} 

 For $ a \in (0, 1)$, $s=0, 1, 2, \dots$, and $q\in \mathbb{R}$, the weighted H\"older space $C^{s,a}_{-q} (\mathbb{H}^n\setminus B_2)$ is  the collection of $C^{s,a}_{\mathrm{loc}}(\mathbb{H}^n\setminus B_2)$-functions $f$ that satisfy
\begin{align*}
		\| f \|_{C^{s,a}_{-q}(\mathbb{H}^n\setminus B_2)} := \sum_{|I|=0, 1,\dots, k}\sup_{x\in \mathbb{H}^n\setminus B_2} r^q |\nabla^I f(x)| +\sup_{x\in \mathbb{H}^n\setminus B_2} r^{q} [\nabla^s f]_{a; B_1(x)} <\infty,
\end{align*}
where the covariant derivatives and norms are taken with respect to the reference metric $b$, $B_1(x)\subset \mathbb{H}^n$ is a geodesic unit ball centered at $x$, and  
\[
	[ \nabla^s f]_{a; B_1(x)} := \sup_{1\le i_1, \dots, i_s\le n} \sup_{y\neq z\in B_1(x)} \frac{|e_{i_1} \cdots e_{i_s}(f) (y) - e_{i_1}\dots e_{i_s}(f) (z)|}{ d(y, z)^a}
\] 
with respect to a fixed local orthonormal frame $\{ e_1,\dots, e_n\}$ on $(\mathbb{H}^n\setminus \{0\}, b)$.
\end{definition}

Occasionally, we also write $f=O_{s,\alpha}(r^{-q})$ to indicate $f\in C^{s,\alpha}_{-q}$. 

\begin{definition}[Asymptotically AdS initial data sets] \label{Def:AH}
Let $(M^n,g)$ be a connected complete Riemannian manifold without boundary, and let $k$ be a symmetric $2$-tensor on $M^n$. 
We say $(M^n, g, k)$ is a $C^{s,a}_{-q}$-asymptotically hyperboloidal initial data set of decay rate $q\in (\tfrac {n}2,n]$ if it is satisfying the following conditions:  
\begin{enumerate}
    \item There exists a compact set $\mathcal{C}\subseteq M$ such that  $M\setminus \mathcal{C}$ decomposes as a disjoint union $\cup_{\ell=1}^{\ell_0}M_{end}^{\ell}$ where each end $M_{end}^\ell$ 
  is diffeomorphic to the complement of a ball $\mathbb{H}^n \setminus B$ via a diffeomorphism $\phi^\ell$.
    \item On each end, let
    \begin{equation*}
        e^\ell:=\phi_\ast^\ell g- b \quad\text{and}\quad \eta^\ell:=\phi_\ast^\ell k.
    \end{equation*}
    Then we have for each $\ell$
    \begin{equation*}
        (e^\ell,\eta^\ell)\in C^{s,a}_{-q}(M^n)\times C^{s-1,a}_{-q}(M^n).
    \end{equation*}
    \item The energy density $\mu$ and the momentum density $J$ given by
    \begin{equation*}
            \mu=\frac{1}{2}\left( R+|\operatorname{tr}_g(k)|^2-|k|_g^2+n(n-1)\right),\qquad
            J=\operatorname{div}\left(k-(\operatorname{tr}_gk)g\right)
    \end{equation*}
   satisfying
    \begin{equation*}
           \mu\bar{r}\in L^1(M^n),\quad |J|_g\,\bar{r}\in L^1(M^n)
    \end{equation*}
    where $\bar{r}=r\circ\phi^\ell$ on each end.
\end{enumerate}
\end{definition}

We emphasize that $\mu$ has an additional $\frac12n(n-1)$ term in its definition coming from the cosmological constant and the Einstein equations compared to the asymptotically flat setting.

\subsection{Null and imaginary Killing spinors}

We denote with $\mathcal S$ the spinor bundle of $M^n$, with $\nabla$ the induced connection, and with $\slashed D=e_i\nabla_i$ the Dirac operator. In addition, we denote by $\mathring{\nabla}$ the connection with respect to $b$ on the spinor bundle of $\mathbb{H}^n$.

\medskip

   We write $\overline{\mathcal S}=\mathcal S\oplus \mathcal S$ for the spacetime spinor bundle. Given $\phi=(\phi_1,\phi_2)\in \overline{\mathcal S}$, Minkowski space $\mathbb R^{1,n}=\operatorname{span}\{e_0,e_1,\dots,e_n\}$ acts on $\overline{\mathcal S}$ via Clifford multiplication 
   \begin{align*}
        e_l(\psi_1,\psi_2)=(e_l\psi_1,-e_l\psi_2),\qquad e_0(\psi_1,\psi_2)=(\psi_2,\psi_1).
    \end{align*}
    The corresponding connection and Dirac operator will still be denoted with $\nabla, \slashed D$.

\begin{definition}
Given an arbitrary spinor $\phi\in\overline{\mathcal S}$, we set 
$$N(\phi)=|\phi|^2,\qquad X(\phi)=\langle e_ie_0\phi,\phi\rangle e_i.$$
$(N(\phi),X(\phi))\in\mathbb R^{1,n}$ is often referred to as the Dirac current of $\phi$.
We also define the vector field $Y$ and two-form $\omega$ by
$$Y(\phi)=\langle \mathbf{i}e_i\phi,\phi\rangle e_i,\qquad \omega_{ij}(\phi)=\operatorname{Im}\langle e_ie_je_0\phi,\phi\rangle.$$
\end{definition}

When no confusion arises, we write $N$, $X$, $Y$ and $\omega$ for $N(\psi)$, $X(\psi)$, $Y(\psi)$ and $\omega(\psi)$.
Note that $N,X,Y,\omega$ are real-valued. Moreover, $N\ge|X|$ and $N\ge|Y|$, i.e. $(N,X)\in\mathbb R^{1,n}$ and $(N,Y)\in\mathbb R^{1,n}$ are both causal vectors.

\begin{definition}
    We say a spinor $\psi\in\overline{\mathcal S}$ is null if $|X(\psi)|=N(\psi)$, or equivalently, if the Dirac current $(N,X)\in \mathbb R^{1,n}$ is a null vector.
    If $\psi$ is not null, we say $\psi$ is timelike.
\end{definition}

\begin{definition}
   We say a spinor $\psi\in\overline{\mathcal S}$ is type I, if $N(\psi)=|Y(\psi)|$, or equivalently, if $(N,Y)\in\mathbb R^{1,n}$ is a null vector.
    If $\psi$ is not type I, we say $\psi$ is type II.
\end{definition}

Type I and type II spinors naturally arise in the study of imaginary Killing spinors \cite{Baum1989}.

\begin{proposition}\label{Prop N=X implies Nw=XY}
Let $(M^n,g,k)$ be an initial data set and let $\phi\in\overline{\mathcal S}(M)$.
    Suppose that $N=|X|$ at a point $p\in  M$.
    Then $$N\omega=X\wedge Y,\qquad \text{and}\qquad X\perp Y$$ at $p$. 
    The same conclusion holds if the assumption $N=|X|$ is replaced by $N=|Y|$. Moreover, in both cases, $$N^2-|X|^2-|Y|^2+\frac{1}{2}|\omega|^2=0$$ at $p$. 
\end{proposition}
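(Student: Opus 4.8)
\emph{Proof strategy.} The statement is pointwise and purely algebraic, so the plan is to work entirely in the fiber $\overline{\mathcal S}_p$: at $p$ all of $N,X,Y,\omega$ depend only on $\phi(p)$, and $\overline{\mathcal S}_p$ is a module over the Clifford algebra of $\mathbb R^{1,n}$ on which, for the Hermitian metric, $e_0$ acts self-adjointly and $e_1,\dots,e_n$ skew-adjointly, with $e_0^2=\Id$ and $e_j^2=-\Id$. Since the three claimed relations --- $N\omega-X\wedge Y=0$, $\langle X,Y\rangle=0$, and $N^2-|X|^2-|Y|^2+\tfrac12|\omega|^2=0$ --- are frame-independent tensorial equations, I would verify them in a single conveniently adapted orthonormal basis of $T_pM$. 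If the relevant causal vector is zero (say $X=0$ in the first case), then $N=|X|=0$, hence $\phi(p)=0$ and all quantities vanish; so one may assume it is nonzero.

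The first step is to extract an algebraic eigenrelation from the equality case of the Cauchy--Schwarz inequality underlying $N\ge|X|$. Assuming $N=|X|$, choose the basis with $e_1=X/|X|$, so that $X=Ne_1$ and $X_1=\langle e_1e_0\phi,\phi\rangle=N=|\phi|^2$. Since $e_1e_0$ is a self-adjoint unitary involution, equality in $|\langle e_1e_0\phi,\phi\rangle|\le|\phi|^2$ forces $e_1e_0\phi=\phi$, i.e. $e_0\phi=-e_1\phi$; equivalently $\phi$ lies in the $(+1)$-eigenspace $E_+$ of $e_1e_0$, which is orthogonal to the $(-1)$-eigenspace $E_-$, and one checks $e_0\phi\in E_-$. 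In the case $N=|Y|$ one instead takes $e_1=Y/|Y|$ and the same argument gives $\mathbf i e_1\phi=\phi$, i.e. $e_1\phi=-\mathbf i\phi$, so that $\phi$ lies in one of the two (orthogonal) $\pm\mathbf i$-eigenspaces of $e_1$.

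Next I would feed this eigenrelation into the definitions of $Y$ and $\omega$, using only the Clifford relations among $e_0,\dots,e_n$ and the adjointness rules. Because $X=Ne_1$, we have $\langle X,Y\rangle=NY_1=N\langle\mathbf i e_1\phi,\phi\rangle=-N\mathbf i\langle e_0\phi,\phi\rangle$, which vanishes since $e_0\phi\in E_-\perp E_+\ni\phi$; this gives $X\perp Y$. For $N\omega=X\wedge Y$ it then suffices to compare the components $\omega_{1j}$ and $\omega_{ij}$ (with distinct $i,j\ge 2$) against $(X\wedge Y)_{1j}=NY_j$ and $(X\wedge Y)_{ij}=0$: using $e_0\phi=-e_1\phi$ and $e_1e_je_1=e_j$ one finds $\omega_{1j}=-\operatorname{Im}\langle e_j\phi,\phi\rangle=\langle\mathbf i e_j\phi,\phi\rangle=Y_j$, while for $i,j\ge2$ the element $e_ie_je_1$ is self-adjoint, so $\langle e_ie_je_0\phi,\phi\rangle=-\langle e_ie_je_1\phi,\phi\rangle$ is real and $\omega_{ij}=0$. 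The case $N=|Y|$ is entirely parallel: the eigenspace splitting of $\overline{\mathcal S}_p$ under $e_1$ is interchanged by $e_0$ and by each $e_j$ ($j\ge 2$) and preserved by $e_1$, and an orthogonality argument forces $X_1=0$ and $\omega_{ij}=0$ for $i,j\ge2$, while a one-line computation gives $\omega_{1j}=-X_j=(X\wedge Y)_{1j}/N$.

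Finally, the quadratic identity follows formally: combining $N\omega=X\wedge Y$ with Lagrange's identity $|X\wedge Y|^2=2\bigl(|X|^2|Y|^2-\langle X,Y\rangle^2\bigr)$ and the facts $\langle X,Y\rangle=0$ together with $|X|=N$ (resp.\ $|Y|=N$) yields $\tfrac12|\omega|^2=|Y|^2$ (resp.\ $=|X|^2$), so that $N^2-|X|^2-|Y|^2+\tfrac12|\omega|^2=0$ in both cases. The main obstacle I anticipate is the first step together with the Clifford sign bookkeeping in the third: one must verify carefully that equality in Cauchy--Schwarz yields the \emph{sharp} eigenrelation $e_1e_0\phi=\phi$ (with the correct phase, not merely proportionality up to a unit scalar), and then keep track of signs so that the components of $\omega$ come out exactly as those of $N^{-1}X\wedge Y$. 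The remaining arguments are routine linear algebra.
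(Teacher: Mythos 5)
Your proof is correct and follows essentially the same route as the paper's: choose $e_1$ aligned with $X$ (resp.\ $Y$), read the equality $N=|X|$ (resp.\ $N=|Y|$) as the eigenrelation $e_1e_0\phi=\phi$ (resp.\ $\mathbf i e_1\phi=\phi$), and compute the components of $Y$ and $\omega$ from the Clifford relations. The only cosmetic differences are that you phrase the vanishing of $\langle e_0\phi,\phi\rangle$ via eigenspace orthogonality where the paper observes a real quantity equals an imaginary one, and you close the quadratic identity with Lagrange's identity where the paper factors $(N^2-|X|^2)(N^2-|Y|^2)$; both are equivalent.
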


\begin{proof} 
Choose coordinates such that $X=|X|e_1$ at $p$. 
This implies $e_1e_0\phi=\phi$ since $N=|X|$. 
We have
\begin{align*}
    Y_1=\langle \mathbf{i}e_1\phi,\phi\rangle =-\langle \mathbf{i}e_0e_1e_0\phi,\phi\rangle= -\mathbf{i}\langle e_0\phi,\phi\rangle. 
\end{align*}
Note that $Y_1$ is real while $-\mathbf{i}\langle e_0\phi,\phi\rangle$ is imaginary. 
Hence, both terms vanish.
Moreover, we have for $e_\alpha,e_\beta\perp e_1$
\begin{align*}
    \omega_{1\alpha}=\operatorname{Im}\langle e_0e_1e_\alpha \phi,\phi\rangle=
    \operatorname{Im}\langle -e_\alpha \phi,\phi\rangle=Y_\alpha
\end{align*}
and
\begin{align*}
    \omega_{\alpha\beta}=\operatorname{Im}\langle e_\alpha e_\beta e_0\phi,\phi\rangle=-\operatorname{Im}\langle e_\alpha e_\beta e_1\phi,\phi\rangle=0.
\end{align*}
Therefore, $N\omega=X\wedge Y$. 

\medskip

Similarly, if $|Y|=N$, we have $Y=e_2|Y|$ and $\mathbf{i}e_2\phi=\phi$ after choosing appropriate coordinates.
Moreover,
\begin{align*}
    X_2= \langle -e_0e_2\phi,\phi\rangle =\mathbf{i} \langle e_0\phi,\phi\rangle
\end{align*}
which implies that $X\perp Y$.
Finally, we have for $e_\alpha,e_\beta\perp e_2$
\begin{align*}
    \omega_{2\alpha}=-\operatorname{Im}\langle e_0e_\alpha e_2\phi,\phi\rangle=\operatorname{Im}\langle \mathbf{i}e_0e_\alpha\phi,\phi\rangle=- X_\alpha,
\end{align*}
and
\begin{align*}
    \omega_{\alpha\beta}=\operatorname{Im}\langle e_0e_\alpha e_\beta\phi,\phi\rangle=-\operatorname{Re}\langle e_0e_\alpha e_\beta e_2\phi,\phi\rangle=0.
\end{align*}
Therefore, we also obtain $N\omega=X\wedge Y$. 

\medskip

Furthermore, we have in both cases, $\frac{1}{2}|\omega|^2=N^{-2}|X|^2|Y|^2$. 
Therefore,
\begin{equation*}
    N^2-|X|^2-|Y|^2+\frac{1}{2}|\omega|^2=N^{-2}(N^2-|X|^2)(N^2-|Y|^2)=0
\end{equation*}
which finishes the proof. 
\end{proof}

\begin{lemma}
    Given any two perpendicular unit vectors $e_1, e_2 \in \mathbb{R}^n$, there exist 
    $2^{\lfloor \frac{n-2}{2} \rfloor}$ linearly independent parallel spinors 
    $u \in \overline{\mathcal{S}}(\mathbb{R}^n)$ such that $X(u)=e_1$ and $Y(u)=e_2$.
\end{lemma}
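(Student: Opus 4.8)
The key observation is that on flat $(\mathbb R^n,\delta)$ the spin connection is trivial, so a \emph{parallel} spinor in $\overline{\mathcal S}(\mathbb R^n)=\mathcal S\oplus\mathcal S$ is simply a constant section, i.e.\ a vector in the fibre $V:=\mathcal S_p\oplus\mathcal S_p$, a complex space of dimension $2^{\lfloor n/2\rfloor+1}$. The argument will therefore be purely linear-algebraic: I will exhibit a complex subspace $W\subseteq V$ of dimension $2^{\lfloor(n-2)/2\rfloor}$ such that \emph{every} unit vector $u\in W$ satisfies $X(u)=e_1$ and $Y(u)=e_2$; any orthonormal basis of $W$ then provides the required $2^{\lfloor(n-2)/2\rfloor}$ linearly independent parallel spinors.

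The candidate is $W:=\{\,u\in V: e_1e_0\,u=u\ \text{and}\ \mathbf{i}e_2\,u=u\,\}$. With the conventions of the excerpt — $e_0^2=1$, $e_j^2=-1$ for $j\ge 1$, all pairwise anticommuting, and with $e_0$ Hermitian and $e_j$ skew-Hermitian for the direct-sum Hermitian product — the endomorphisms $e_1e_0$ and $\mathbf{i}e_2$ are commuting self-adjoint involutions, so $W$ is their joint $(+1)$-eigenspace. First I will verify the current identities on $W$: completing $e_1,e_2$ to an orthonormal basis $e_1,\dots,e_n$ of $\mathbb R^n$, the components $X_1(u)=\langle e_1e_0u,u\rangle=|u|^2$ and $Y_2(u)=\langle\mathbf{i}e_2u,u\rangle=|u|^2$ are immediate from $u\in W$; for $j\ge 2$ I rewrite $X_j(u)=\langle e_je_0u,e_1e_0u\rangle=-\langle e_1e_ju,u\rangle$, and since $e_1e_j$ is skew-Hermitian whereas $X_j$ is real this forces $X_j(u)=0$, and an entirely analogous manipulation (now using $\mathbf{i}e_2u=u$) yields $Y_1(u)=0$ and $Y_j(u)=0$ for $j\ge 3$. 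Hence $X(u)=|u|^2e_1$ and $Y(u)=|u|^2e_2$ on $W$, so every unit vector of $W$ has the prescribed Dirac current $e_1$ and $Y$-vector $e_2$. (Proposition~\ref{Prop N=X implies Nw=XY} then also pins down $\omega(u)$, though that is not needed here.)

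It remains to compute $\dim_{\mathbb C}W$, which I will do in block form: writing $u=(a,b)$ with $a,b\in\mathcal S_p$, one has $e_j(a,b)=(c_ja,-c_jb)$ for $j\ge 1$ and $e_0(a,b)=(b,a)$, where $c_j$ denotes Clifford multiplication on $\mathcal S_p$. Then $e_1e_0(a,b)=(c_1b,-c_1a)$, so $e_1e_0\,u=u$ is equivalent to $u=(c_1b,b)$ with $b\in\mathcal S_p$ free; on such $u$ the second condition $\mathbf{i}e_2\,u=u$ reduces (its first component being automatic from $c_1c_2=-c_2c_1$) to $\mathbf{i}c_2\,b=-b$. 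Thus $W$ is isomorphic to the $(-1)$-eigenspace of $\mathbf{i}c_2$ on $\mathcal S_p$, and since $\mathbf{i}c_2$ is a Hermitian involution with $\operatorname{tr}(\mathbf{i}c_2)=\mathbf{i}\operatorname{tr}(c_2)=0$ (because $c_2$ anticommutes with $c_1$ and $n\ge 2$), its two eigenspaces each have dimension $\tfrac12\dim_{\mathbb C}\mathcal S_p=2^{\lfloor n/2\rfloor-1}=2^{\lfloor(n-2)/2\rfloor}$, which is $\dim_{\mathbb C}W$. The only place that needs genuine care is the adjoint-and-sign bookkeeping in the "real versus skew-Hermitian" step of the second paragraph; once that is settled, the vanishing of the off-diagonal components of $X(u)$ and $Y(u)$ and the final dimension count are automatic, so I expect no serious obstacle beyond being careful with conventions.
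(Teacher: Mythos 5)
Your proof is correct and takes the same approach as the paper: both identify the joint $(+1)$-eigenspace of the commuting self-adjoint involutions $e_1e_0$ and $\mathbf{i}e_2$, check that membership in this eigenspace forces $X(u)=|u|^2e_1$ and $Y(u)=|u|^2e_2$, and count the dimension. The paper simply asserts the dimension is $\tfrac14\dim\overline{\mathcal S}(\mathbb R^n)$, whereas you justify this by the block reduction $u=(c_1 b,b)$ to the $(-1)$-eigenspace of the traceless involution $\mathbf{i}c_2$ on $\mathcal S_p$, which is a clean way to make the count rigorous.
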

\begin{proof}
    Note that $e_1 e_0$ and $\mathbf{i} e_2$ commute and that 
    $(e_1 e_0)^2 = (\mathbf{i} e_2)^2 = \operatorname{id}$. 
    Hence, there exist $\tfrac{1}{4}\dim\overline{\mathcal{S}}(\mathbb{R}^n)$ linearly independent
    unit-length parallel spinors $u \in \overline{S}(\mathbb{R}^n)$ satisfying
    \[
        e_1 e_0 u = u, \qquad \mathbf{i} e_2 u = u.
    \]
    These equations imply that $X(u)=e_1$ and $Y(u)=e_2$, as claimed.
\end{proof}

Imaginary Killing spinors in hyperbolic space correspond to parallel spinors in flat space.

\begin{proposition}\label{Prop construction killing spinor}
    Let $u=(u_1,u_2)\in \overline{\mathcal S}(B_1)$ be a constant spinor in the ball $B_1^n\subseteq \mathbb R^n$. 
    Let $\xi=\frac{1-{\varrho}^2}2$ and $b=\xi^{-2}\delta$ be the hyperbolic metric and define
\begin{align*}
  \psi^\infty=  \xi^{-\frac12}(1-\mathbf{i}{\varrho}e_{\varrho})u
\end{align*}
where $e_{\varrho}$ denotes the radial unit vector field (with respect to $g_{\mathbb R^n}$). 
Then $\psi^\infty$ is an imaginary Killing spinor with eigenvalue $\lambda=-\frac12\mathbf i$.
\end{proposition}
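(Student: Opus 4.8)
The plan is to verify the imaginary Killing equation $\mathring\nabla_X\psi^\infty = -\tfrac12\mathbf i\, X\cdot\psi^\infty$ directly by comparing the Levi-Civita connection of the conformal hyperbolic metric $b = \xi^{-2}\delta$ with the flat connection on $B_1^n$, and then tracking how the prescribed ansatz $\psi^\infty = \xi^{-1/2}(1-\mathbf i\varrho e_\varrho)u$ transforms. First I would record the standard conformal-change formula for spinors: if $\bar g = \xi^{-2} g$ and one identifies the spinor bundles via the conformal gauge, then for a spinor $\phi$ the connections are related by $\mathring\nabla_X\phi = \nabla^\delta_X\phi - \tfrac12\xi^{-1}\big(X(\xi)\,\phi + (\nabla^\delta\xi)\cdot X\cdot\phi\big)$ (up to the usual sign conventions for Clifford multiplication, which I would fix once and keep consistent). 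Since $u$ is $\delta$-parallel, $\nabla^\delta_X u = 0$, so the whole computation reduces to differentiating the explicit scalar/Clifford prefactor $\xi^{-1/2}(1-\mathbf i\varrho e_\varrho)$ against $\nabla^\delta$ and assembling the conformal correction terms.

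The key steps, in order: (i) compute $\nabla^\delta\xi = -\varrho e_\varrho$ (since $\xi = \tfrac{1-\varrho^2}{2}$) and $X(\xi) = -\varrho\,\langle e_\varrho, X\rangle$; (ii) compute $\nabla^\delta_X\big(\xi^{-1/2}\big) = -\tfrac12\xi^{-3/2}X(\xi)\cdot(\text{scalar})$ and $\nabla^\delta_X\big(\varrho e_\varrho\big)$, using that $\varrho e_\varrho = \sum z_j\,\partial_{z_j}$ is the radial (Euler) vector field, so $\nabla^\delta_X(\varrho e_\varrho) = X$ for every $X$ — this is the crucial simplification that makes the Clifford algebra collapse; (iii) substitute into the conformal formula, expand, and organize the result by powers of $\varrho$ and by the number of Clifford factors, using the anticommutation relations $e_j e_\varrho + e_\varrho e_j = -2\langle e_j, e_\varrho\rangle$ (sign per convention) to move $e_\varrho$ past $X$; (iv) check that everything reassembles into $-\tfrac12\mathbf i\, X\cdot\big(\xi^{-1/2}(1-\mathbf i\varrho e_\varrho)u\big) = -\tfrac12\mathbf i\, X\cdot\psi^\infty$, which establishes $\lambda = -\tfrac12\mathbf i$. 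It is cleanest to do this first for $X = e_\varrho$ (radial direction) and then for $X \perp e_\varrho$ (tangential directions), since the radial case isolates the derivative of the scalar factor and the tangential case isolates the $\nabla^\delta_X(\varrho e_\varrho) = X$ identity; covariance under rotations then gives the general $X$.

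The main obstacle I anticipate is purely bookkeeping: pinning down the correct signs in (a) the Clifford relation $e_i e_j + e_j e_i = \mp 2\delta_{ij}$ used in this paper, (b) the conformal transformation law for the spinor connection, and (c) the spacetime Clifford action $e_0(\psi_1,\psi_2) = (\psi_2,\psi_1)$ interacting with the factor $\mathbf i$ — a single sign error will send the eigenvalue to $+\tfrac12\mathbf i$ or introduce a spurious $e_0$ term. A secondary point to be careful about is that $e_\varrho$ is only smooth on $B_1^n\setminus\{0\}$, so strictly the identity $\nabla^\delta_X(\varrho e_\varrho) = X$ should be phrased in terms of the globally smooth vector field $\sum z_j\partial_{z_j}$ and $\varrho e_\varrho$ interpreted accordingly; the final spinor $\psi^\infty$ is nonetheless smooth across the origin since $\varrho e_\varrho$ extends smoothly. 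Modulo these sign conventions, no step requires a genuinely new idea — it is a direct (if delicate) verification, and the radial/tangential split keeps the algebra short.
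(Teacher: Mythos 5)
Your plan is exactly the paper's proof: apply the conformal transformation formula for the spinor connection (the paper cites \cite{BHMMM} for precisely the formula you wrote, with $f=\log\xi$), use that $u$ is $\delta$-parallel so only the prefactor $\xi^{-1/2}(1-\mathbf i\varrho e_\varrho)$ needs differentiating, and observe $\nabla^\delta\xi=-\varrho e_\varrho$ and $\nabla^\delta_X(\varrho e_\varrho)=X$ to collapse the Clifford algebra. Your radial/tangential split is a tidy organizational choice the paper does not bother with, but the underlying computation and the sign conventions you flag (the relation $e_ie_j+e_je_i=-2\delta_{ij}$ and the conformal correction) are the same, so this is a faithful reconstruction rather than a different route.
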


Although this result is well known, see for instance \cite{ChruscielMaertenTod}, we provide a proof for completeness' sake.

\begin{proof}
It is well known \cite{BHMMM} that for a metric $\overline g=e^{-2f}g$ and a spinor $\phi$, we have
\begin{align*}
    \overline{\nabla}_i\phi =\nabla_i\phi+\frac12 e_i \nabla f\phi +\frac12\nabla_i f\phi
\end{align*}
under conformal transformations.
   Applying this identity in our setting, we obtain
    \begin{align*}
    \begin{split}
        \mathring\nabla_i\psi^\infty=&
       \frac12 \xi^{-\frac12}x_i(1-\mathbf{i}{\varrho}e_{\varrho})u-\xi^{\frac12}\mathbf{i}e_iu-\frac12({\varrho}e_ie_{\varrho}+x_i)\xi^{-\frac12}(1-\mathbf{i}{\varrho}e_{\varrho})u+0\\
    = & \frac12 \xi^{-\frac12}(x_i-\mathbf{i}x_i{\varrho}e_{\varrho})u-\frac12\xi^{-\frac12}(\mathbf{i}e_i-{\varrho}^2\mathbf{i}e_i)u\\
    &-\frac12\xi^{-\frac12}({\varrho}e_ie_{\varrho}+x_i+i{\varrho}^2e_i-\mathbf{i}{\varrho}x_ie_{\varrho})u\\
        =&-\frac12 i e_i \xi^{-\frac12}(1-\mathbf{i}{\varrho}e_{\varrho})u.
        \end{split}
    \end{align*}
    Therefore, $\mathring\nabla_i\psi^\infty=-\frac12\mathbf i\psi^\infty$.
\end{proof}

\subsection{The AdS positive mass inequality}

Recall that in the upper half-space coordinates the hyperbolic metric takes the form $$b=y_n^{-2}(dy_1^2+\cdots+dy_n^2).$$
We first show that given any type I null spinor, we can adjust the upper half-space coordinates accordingly.

\begin{proposition} \label{infinty}
    Let $\psi^\infty\in\overline{\mathcal S}(\mathbb H^n)$ be a type I null imaginary Killing spinor.
    Then there exist upper half-space coordinates such that 
    \begin{equation*}
        N(\psi^\infty)=y_n^{-1}, \qquad
X(\psi^\infty)=y_n^{-2}\mathring{\nabla}y_1, \qquad Y(\psi^\infty)=y_n^{-2}\mathring{\nabla}y_n.
    \end{equation*}
\end{proposition}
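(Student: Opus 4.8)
The plan is to use the classification of type I null imaginary Killing spinors on $\mathbb{H}^n$ via the correspondence with parallel spinors on $\mathbb{R}^n$ (Proposition \ref{Prop construction killing spinor}), and then exploit the conformal transformation to flat space to read off $N, X, Y$ explicitly, finally using the freedom of isometries of $\mathbb{H}^n$ (which act transitively enough on the relevant data) to normalize the coordinates. First I would recall that in the ball model every imaginary Killing spinor with eigenvalue $-\tfrac12\mathbf{i}$ is of the form $\psi^\infty = \xi^{-1/2}(1-\mathbf{i}\varrho e_\varrho)u$ for a constant spinor $u \in \overline{\mathcal{S}}(\mathbb{R}^n)$, with $\xi = \tfrac{1-\varrho^2}{2}$. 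The key computational step is then to express $N(\psi^\infty)$, $X(\psi^\infty)$, $Y(\psi^\infty)$ in terms of the flat-space quantities $N(u) = |u|^2$, $X(u)$, $Y(u)$: a direct Clifford-algebra computation (using that $e_\varrho$ anticommutes with the spacetime structure appropriately and $(e_\varrho)^2 = 1$) gives $N(\psi^\infty) = \xi^{-1}(1+\varrho^2)N(u) + (\text{terms linear in } \varrho \text{ involving } X(u), Y(u))$ and similar bilinear expressions for $X$ and $Y$; the point is that these are all rational in the coordinates $z$ and the constant data $(N(u), X(u), Y(u), \omega(u))$.

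Next I would use Proposition \ref{Prop N=X implies Nw=XY}: since $\psi^\infty$ is null and type I, both $N = |X|$ and $N = |Y|$ hold everywhere, so $X(u)$ and $Y(u)$ must be orthogonal null-compatible unit vectors in the flat picture as well (the causal type is conformally invariant for these bilinears up to the conformal weight), and moreover $N(u)\omega(u) = X(u)\wedge Y(u)$ with $X(u)\perp Y(u)$. After rescaling $u$ so that $|u|^2 = N(u) = 1$, we have $X(u) = a$, $Y(u) = b$ for some orthonormal pair $a, b \in \mathbb{R}^n$. The group of isometries of $\mathbb{H}^n$ fixing a boundary point acts on the pair $(a,b)$ through rotations and the residual conformal action, and one checks this action is transitive on orthonormal pairs; hence after composing with such an isometry we may assume $a = e_1$, $b = e_n$ (where $e_n$ is the direction along which we will build the upper half-space chart). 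Passing to the upper half-space model via the standard transformation $x = \tfrac{2}{1-\varrho^2}z$ etc., and substituting $a = e_1$, $b = e_n$ into the explicit formulas from the previous step, the expressions collapse: $N(\psi^\infty) = y_n^{-1}$, $X(\psi^\infty) = y_n^{-2}\mathring\nabla y_1$, $Y(\psi^\infty) = y_n^{-2}\mathring\nabla y_n$, exactly matching the claimed normalization (note $\mathring\nabla y_i = y_n^2\,\partial_{y_i}$ so these are unit-length and null-compatible as required).

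The main obstacle I expect is the bookkeeping in identifying the correct isometry of $\mathbb{H}^n$ that simultaneously normalizes $X(u)$ and $Y(u)$ to $e_1$ and $e_n$ while being compatible with the choice of upper half-space chart, since the translation to the half-space model is precisely the choice of a boundary point (the point at infinity), and one must check that the stabilizer of that point still has enough freedom to rotate the orthonormal 2-frame $(X, Y)$ into standard position; concretely one wants $Y$ to point "toward infinity" in the sense that $Y = y_n^{-2}\mathring\nabla y_n$, which pins down the boundary point as the one dual to the null direction $(N,Y)$, and then the remaining $\mathrm{SO}(n-1)$ rotations fix $X$. The other delicate point is verifying that the conformal weight $\xi^{-1/2}$ in $\psi^\infty$ produces exactly the weight $y_n^{-1}$ for $N$ and $y_n^{-2}$ for the vector fields — this is a short but careful computation tracking the Jacobian of the coordinate change $z \mapsto y$ together with the $\varrho$-dependent prefactor, and I would isolate it as a lemma rather than grind through it inline.
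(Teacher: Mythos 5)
Your argument is correct and takes a genuinely different route from the paper's proof. The paper's proof is intrinsic: it computes the gradient equations $\mathring\nabla_i N = -Y_i$, $\mathring\nabla_i Y_j = -b_{ij} N$, $\mathring\nabla_i X_j = \omega_{ij}$ from the imaginary Killing equation, observes that type I null forces $\mathring\nabla^2 N = N\, b$ and $|\mathring\nabla N|_b = N$ (so $N$ is a horospherical function, hence $N = y_n^{-1}$ in some upper half-space chart and $Y = \mathring\nabla y_n^{-1}$), and then uses $\omega = N^{-1} X\wedge Y$ to show $y_n^2 X$ is the gradient of a function $\mathbf{f}$ whose Euclidean Hessian vanishes, so $\mathbf{f}$ is linear and defines $y_1$. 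Your route instead parametrizes $\psi^\infty$ by a constant spinor $u$ via Proposition \ref{Prop construction killing spinor}, uses the explicit bilinear formulas of Lemma \ref{lemma psi u N X Y omega} together with Corollary \ref{cor N=Y(u) implies N=Y(psi)} to transfer type I null to $u$, normalizes $(X(u), Y(u))$ to a standard orthonormal pair by an $\mathrm{SO}(n)$ rotation, and then substitutes into the transformation to the half-space chart. Each approach has a cost: the paper's version avoids the explicit formulas of Lemma \ref{lemma psi u N X Y omega} and the isometry bookkeeping (the half-space chart is manufactured directly out of $N$), at the expense of a Hessian argument for $y_1$; your version avoids the PDE argument once the bilinear identities are in hand, but you do need to be careful with the sign convention ($Y(u)$ should align with $-e_n$ given the paper's choice $y_n = (t+x_n)^{-1}$, or equivalently you choose the antipodal boundary point) and with an overall scale on $u$ that is absorbed by the dilation freedom of the upper half-space chart -- both of which you flag. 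Neither issue is a gap, just computational hygiene.
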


\begin{proof}
Since $\psi^\infty$ is an imaginary Killing spinor, i.e. $\mathring{\nabla}_i\psi^\infty=-\frac12\mathbf i e_i\psi^\infty$, a short computation yields
$$\mathring{\nabla}_i N(\psi^\infty)=-Y_i(\psi^\infty), \qquad
\mathring{\nabla}_i Y_j(\psi^\infty)=-b_{ij}N(\psi^\infty), \qquad \text{and}\qquad \mathring{\nabla}_i X_j(\psi^\infty)=\omega_{ij}(\psi^\infty).$$
This calculation is carried out in more generality in Theorem \ref{Thm: N,X,Y,omega}.
Moreover, since $\psi^\infty$ is type I null, we have $$N(\psi^\infty)=|X(\psi^\infty)|_b=|Y(\psi^\infty)|_b.$$
Consequently, $\mathring{\nabla}_{ij}N(\psi^\infty)=N(\psi^\infty) b_{ij}$ and $|\mathring{\nabla} N(\psi^\infty)|_b=N(\psi^\infty)$. 
Therefore, $N(\psi^\infty)=y_n^{-1}$ and $Y(\psi^\infty)=\mathring{\nabla}y_n^{-1}$ for some upper half-space coordinates, i.e. $b=y_n^{-2}g_{\mathbb R^n}$. 
It remains to determine the coordinate function $y_1$.

\medskip

By Proposition \ref{Prop N=X implies Nw=XY}, we have $\omega(\psi^\infty)=N(\psi^\infty)^{-1}X(\psi^\infty)\wedge Y(\psi^\infty)$.
This implies $$\mathring{\nabla}_i X_j(\psi^\infty)=y_n(X_j(\psi^\infty)\mathring{\nabla}_iy_n^{-1}-X_i(\psi^\infty)\mathring{\nabla}_jy_n^{-1}),$$
and thus, 
\begin{equation*}
 \mathring{\nabla}_i (y_n^{2}X_j(\psi^\infty))=y_n X_j(\psi^\infty)\mathring{\nabla}_iy_n+y_nX_i(\psi^\infty)\mathring{\nabla}_jy_n   
\end{equation*}
Hence, there exists a function $\mathbf{f}$ such that $y_n^{2}X(\psi^\infty)=\mathring{\nabla}\mathbf{f}$. 
Recall that the flat metric is given by $g_{\mathbb R^n}=y_n^2b$.
Consequently, $\nabla^2_{\mathbb{R}^n}\mathbf{f}=\langle d\log y_n,d\mathbf{f} \rangle_{\mathbb{R}^n}g_{\mathbb R^n}$$=0$ with the help of $X(\psi^\infty)\perp Y(\psi^\infty)$. 
Hence, $\mathbf{f}$ is linear on $\mathbb{R}^{n-1}$, and we can choose a coordinate function $y_1$ such that $X(\psi^\infty)=y_n^{-2}\mathring{\nabla}y_1$. 
\end{proof}

Next, we give a more precise version of Theorem \ref{Thm Intro PMT}:

    \begin{theorem}\label{pmt maerten}
        Let $(M^n,g,k)$ be an asymptotically AdS of order $q\in(\frac n2, n]$ spin initial data set satisfying the dominant energy condition $\mu\ge|J|$.
        Then there exists to each imaginary Killing spinor $\psi^\infty\in\overline{\mathcal S}(\mathbb H^n)$ a spinor $\psi\in \overline{\mathcal S}(M)$ solving $$\slashed D\psi=\frac 12\tr_g(k)e_0\psi+\frac12n\mathbf{i}\psi$$ which asymptotes to $\psi^\infty$ with decay and regularity
\begin{align*}
    \sigma\in H^1(M^n)\cap C^{2,a}(M^n)
\end{align*}
       where $\sigma=\psi-\psi^\infty$. Moreover, we have the mass formula
        \begin{align*}
            &\mathcal H(N(\psi^\infty),X(\psi^\infty))
            =\int_M \left( \left|\nabla_i\psi+\frac12k_{ij}e_je_0\psi+\frac12\mathbf{i}e_i\psi\right|^2+\frac12\mu|\psi|^2+\frac12\langle \psi, Je_0\psi\rangle\right).
        \end{align*}
        where 
        \begin{align*}
  \mathcal  H(N,X) = \lim_{R \to \infty} \frac{1}{2(n-1)\omega_{n-1}} \int_{S_R} \left( \mathbb{U}^i(N) + \mathbb{V}^i(X) \right) \nu_i dA,
\end{align*}
with
\begin{align*}
    \mathbb{U}^i(N) =&2 \left( Ng^{i[k} g^{j]l} \mathring\nabla_j g_{kl} + \mathring{\nabla}^{[i}Ng^{j]k} (g_{jk} - b_{jk}) \right),\\
    \mathbb{V}^i(X) =& 2 \left( k_{ij} - \tr_g(k) g_{ij} \right) X^j.
\end{align*}
Here $S_R$ denotes a sphere of radius $R$ in the asymptotic end with outward-pointing normal $\nu$.
    \end{theorem}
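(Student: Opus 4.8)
The plan is to run a Witten-type argument adapted to the hyperbolic background, in three stages: solve the Dirac--Witten equation with the prescribed asymptotics, integrate a Lichnerowicz-type identity for the Killing-modified connection against the solution, and identify the resulting flux at infinity with the charge functional $\mathcal H$. The overall template is that of \cite{Witten1981, Wang2001, Maerten2006, ChruscielMaertenTod}, carried out in the weighted spaces of Definition \ref{Def:AH}.

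\textbf{Existence.} Set $\hat\nabla_i\psi:=\nabla_i\psi+\tfrac12 k_{ij}e_je_0\psi+\tfrac12\mathbf i e_i\psi$. Using $k_{ij}e_ie_j=-\tr_g(k)$ and $e_ie_i=-n$, the displayed PDE is equivalent to $e_i\hat\nabla_i\psi=0$, and on the model $(\mathbb H^n,b,0)$ this operator reduces to $\mathring{\slashed D}-\tfrac n2\mathbf i$, which by Proposition \ref{Prop construction killing spinor} annihilates every imaginary Killing spinor of eigenvalue $-\tfrac12\mathbf i$. Transplant $\psi^\infty$ to the ends of $M$ via the charts $\phi^\ell$ and a cutoff, and write $\psi=\psi^\infty+\sigma$; then $e_i\hat\nabla_i\sigma=f$, where $f$ is linear in $g-b$, $\mathring\nabla(g-b)$ and $k$ with bounded coefficients, hence $f\in C^{s-1,a}_{-(q-1/2)}$ once one notes that $|\psi^\infty|=O(r^{1/2})$ (since $N(\psi^\infty)=|\psi^\infty|^2$ is, up to a constant, a linear function on the hyperboloid). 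This lands in the weighted $L^2$ space dual to the one in which we seek $\sigma$ precisely because $q>\tfrac n2$. The operator $e_i\hat\nabla_i$ is formally self-adjoint modulo a boundary term, and the Weitzenböck identity below together with the dominant energy condition $\mu\ge|J|$ and the spectral gap of the hyperbolic Dirac--Lichnerowicz operator yields the coercive estimate needed to invert it; elliptic bootstrapping with $(e^\ell,\eta^\ell)\in C^{s,a}_{-q}\times C^{s-1,a}_{-q}$ then upgrades the solution to $\sigma\in H^1(M^n)\cap C^{2,a}(M^n)$.

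\textbf{The integral identity.} I would next invoke the Lichnerowicz--Weitzenböck formula for $\hat\nabla$ on an initial data set: after using the Hamiltonian and momentum constraints, the curvature term is $\tfrac12\mu\,\psi+\tfrac12 Je_0\psi$, the $\tfrac12\mathbf i e_i$ modification being exactly what reproduces the $\tfrac12 n(n-1)$ cosmological contribution inside $\mu$. Pairing with $\psi$, integrating over $\{\bar r\le R\}$, letting $R\to\infty$, and using $e_i\hat\nabla_i\psi=0$ gives
\[
\int_M\Big(\big|\nabla_i\psi+\tfrac12 k_{ij}e_je_0\psi+\tfrac12\mathbf i e_i\psi\big|^2+\tfrac12\mu|\psi|^2+\tfrac12\langle\psi,Je_0\psi\rangle\Big)=\lim_{R\to\infty}\int_{S_R}\big\langle(\hat\nabla_\nu+\nu\cdot e_i\hat\nabla_i)\psi,\psi\big\rangle\,dA,
\]
where the volume integral is convergent (by $\mu\bar r,|J|_g\bar r\in L^1$ and $|\psi^\infty|^2=O(r)$) and nonnegative (by the dominant energy condition). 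It remains to show the flux converges and to evaluate it. Splitting $\psi=\psi^\infty+\sigma$, all contributions involving $\sigma$ are shown to vanish in the limit (this is the borderline point, governed by $q>\tfrac n2$); for the leading term one identifies the $g$- and $b$-spinor bundles and writes $\hat\nabla_i\psi^\infty$ as $\mathring\nabla_i\psi^\infty+\tfrac12\mathbf i e_i\psi^\infty$ (which vanishes, since $\mathring\nabla_i\psi^\infty=-\tfrac12\mathbf i e_i\psi^\infty$) plus a Christoffel/frame-difference term and the $\tfrac12 k_{ij}e_je_0\psi^\infty$ term; keeping the $\nu\cdot e_i\hat\nabla_i$ term and using that its $k$-part is $-\tfrac12\tr_g(k)e_0\psi^\infty$ supplies the trace correction. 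Taking real parts and using $\operatorname{Re}\langle e_je_k\psi^\infty,\psi^\infty\rangle=-N\delta_{jk}$, $\langle e_je_0\psi^\infty,\psi^\infty\rangle=X_j$, $N=|\psi^\infty|^2$ and $\mathring\nabla_iN=-Y_i$ (as in the computation behind Proposition \ref{infinty}), the metric-difference contributions assemble into a fixed multiple of $\mathbb U^i(N)$ and the $k$-contributions into the corresponding multiple of $\mathbb V^i(X)$; matching the overall constant with $\tfrac{1}{2(n-1)\omega_{n-1}}$ identifies the flux with $\mathcal H(N(\psi^\infty),X(\psi^\infty))$, which is the mass formula.

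\textbf{Main obstacle.} The crux is the borderline analysis sitting at the critical rate $q>\tfrac n2$: choosing the weighted spaces in which the \emph{growing} background $\psi^\infty$ is an admissible perturbation datum, establishing coercivity/Fredholmness of $e_i\hat\nabla_i$ on hyperbolic ends (which, unlike the asymptotically flat case, genuinely relies on the spectral gap of the hyperbolic Dirac--Lichnerowicz operator, not merely on nonnegativity of the curvature term), and — most delicately — proving that the flux integral converges and that every $\sigma$-dependent boundary contribution drops out, which cannot be done by naive term-by-term estimates and instead uses the divergence structure of $\mathbb U^i(N)+\mathbb V^i(X)$ together with the constraint equations (as in the asymptotically hyperbolic mass theory of \cite{Wang2001, ChruscielMaertenTod}). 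Once this convergence is secured, the identification of the limit with $\mathbb U^i(N)+\mathbb V^i(X)$ is essentially Clifford-algebra bookkeeping built on the identities already recorded in Proposition \ref{Prop N=X implies Nw=XY} and Proposition \ref{infinty}.
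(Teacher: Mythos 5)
Your proposal is correct and follows essentially the same Witten-type route as the paper: solve the Dirac--Witten equation $\slashed D\psi=\tfrac12\tr_g(k)e_0\psi+\tfrac n2\mathbf i\psi$ in weighted spaces, integrate the modified Lichnerowicz identity, and evaluate the boundary flux via the background Killing spinor. The paper's own proof is terser — it records the precise pointwise divergence identity, defers existence and the boundary-term evaluation to Maerten and Chru\'sciel--Maerten--Tod, and then concentrates on the regularity upgrade from $H^1$ to $C^{2,a}$ via Calder\'on--Zygmund and Schauder bootstrap — but the underlying argument is the one you describe.
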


We further analyze this positive mass inequality, or more precisely \emph{BPS bound}, in Section \ref{SS:charges}.
There we will also relate the positivity of $\mathcal H$ to the energy, linear momentum, center of mass and angular momentum of the initial data set $(M^n,g,k)$.

    \medskip

The above result corresponds to setting the cosmological constant $\Lambda=-\frac{n(n-1)}2$.
A similar inequality also holds for other choices of $\Lambda<0$.
For $\Lambda\to0$ the boundary term recovers the well-known quantity
$$EN(\psi^\infty)+\langle P,X(\psi^\infty)\rangle$$
for the ADM energy-momentum $E,P$ given by
     \begin{align*}
\lim_{R \to \infty} \frac{1}{2(n-1)\omega_{n-1}} \int_{r=R} \mathbb{U}^i(N)  \nu_i dA=NE,\quad\quad
 \lim_{R \to \infty} \frac{1}{2(n-1)\omega_{n-1}} \int_{r=R} \ \mathbb{V}^i(X)  \nu_i dA=\langle X,P\rangle.
\end{align*}
In particular, one immediately has $E\ge|P|$.
However, in the asymptotically AdS setting, the situation is more complicated since $\psi^\infty$ and therefore also $N(\psi^\infty),X(\psi^\infty)$ are not constant at infinity.

    \begin{proof}
        The inequality is due to Maerten \cite{Maerten2006, ChruscielMaertenTod} and can be derived with the help of the divergence identity
        \begin{align*}
         &\left|\nabla_i\psi+\frac12k_{ij}e_je_0\psi+\frac12\mathbf{i}e_i\psi\right|^2+\frac12\mu|\psi|^2+\frac12\langle Je_0\psi,\psi\rangle\\
         &-\left|\slashed D\psi-\frac 12\tr_g(k)e_0\psi-\frac12n\mathbf{i}\psi\right|^2\\
         =&\nabla_i(\langle \psi, \nabla_i\psi\rangle-\langle e_i\psi, \slashed D\psi\rangle)+\frac12\nabla_i\langle \psi,(k_{ij}e_j-\tr_g(k)e_i)e_0\psi-(n-1)\mathbf{i}e_i\psi\rangle.
    \end{align*}
        However, some comments are in order regarding the decay and regularity of the spinor $\psi$.

\medskip

        The existence theory of the spinor $\psi\in H^1(\overline{\mathcal S}(M))$ solving $\slashed D\psi=\frac12\tr_g(k)e_0+\frac n2\mathbf{i}\psi$ is standard, see e.g. \cite{HirschJangZhang24, Maerten2006}.
        To improve the regularity, we argue as in \cite{HZ24}. The operator $\slashed D-\frac12\operatorname{tr}_g(k)e_0-\frac12n\mathbf i$ is a linear elliptic system of first order. 
    Writing the equation in local coordinates and applying the Calderon-Zygmund estimates for elliptic systems \cite[Theorem 6.2.5]{Morrey} yields $W^{1,p}_{loc}$ regularity for any $p>1$ which can be bootstrapped to $W^{2,p}_{loc}$. 
    Using Sobolev and Schauder estimates for elliptic systems, finishes the proof.
  \end{proof}

\begin{definition}
We call $\psi^\infty\in \overline{\mathcal S}(\mathbb H^n)$ a mass-minimizing spinor if $\psi^\infty$ minimizes $\mathcal H(N(\psi^\infty),X(\psi^\infty))$ among all other spinors $\phi^\infty\in \overline{\mathcal S}(\mathbb H^n)$ with $|u|=|v|$ where $u,v$ are the constant spinors in Euclidean space corresponding to $\psi^\infty,\phi^\infty$ as in Proposition \ref{Prop construction killing spinor}.
\end{definition}

Note that by a simple compactness argument such a mass-minimizing spinor $\psi^\infty$ (together with $u,\psi$) always exists.


\section{Global charges and characterization of null mass-minimizing spinors}

Recall that imaginary Killing spinors in $\mathbb H^n$ can be constructed from parallel spinors in $\mathbb R^n$ via 
\begin{align*}
  \psi^\infty=  \xi^{-\frac12}(1-\mathbf{i}{\varrho}e_{\varrho})u
\end{align*}
where $\xi=\frac{1-{\varrho}^2}2$ and $b=\xi^{-2}\delta$ is the hyperbolic metric.
In this section we analyze the relationship between $\psi^\infty $ and $u$ with a focus on $N,X,Y,\omega$.
This will allow us to define the charges $\mathcal E,\mathcal P,\mathcal C,\mathcal A$ in Section \ref{SS:charges}.

\medskip

Before doing so, we remark that imaginary Killing spinors can also be constructed via the upper half-space model.
For instance, let $\phi$ be a constant spinor in $\mathbb R^n$ equipped with the flat metric $\delta =dy_1^2+\dots+dy_n^2$ and suppose that $N(\phi)=|Y(\phi)|$, i.e. $e_n\phi=-\mathbf{i}\phi$ for some unit vector $e_n=Y(\phi)|Y(\phi)|^{-1}$.
Consider the hyperbolic metric $b=y_n^{-2}\delta$.
Then 
\begin{align*}
    \mathring \nabla_i (y_n^{-\frac12}\phi)=\frac12 y_n^{-\frac12}(e_ie_n+\delta_{in})\phi-\frac12 {{y_n^{-\frac{1}{2}}}} \delta_{in}\phi=-\frac12 y_n^{-\frac12}\mathbf{i}e_i\phi,
\end{align*}
and $\psi^\infty=y_n^{-\frac12}\phi$ is an imaginary Killing spinor.
However, for our purposes the construction via the Poincar\'e disc model above will be more useful.

\medskip

Conversely, given hyperbolic space and a null spinor $\psi^\infty$, the spinor $\phi=y_n^{\frac12}\psi^\infty$ is a parallel spinor with respect to the metric $y_n^2g$.
Also, note that $dy_n=Y$.

\medskip

This implies that in the null case there are essentially only two charges instead of four, see the discussion in Section \ref{SS:charges} for the general case. 
More precisely, recall from Theorem \ref{pmt maerten}
\begin{equation*}
    \mathcal{H}(N(\psi^\infty),X(\psi^\infty)) = \lim_{R \to \infty} \frac{1}{2(n-1)\omega_{n-1}} \int_{r=R} \left( \mathbb{U}^i(N) + \mathbb{V}^i(X) \right) dS_i\ge 0,
\end{equation*}
where
\begin{equation*}
    \mathbb{U}^i(N) = 2\sqrt{\det g} \left( Ng^{i[k} g^{j]l} \dot{D}_j g_{kl} + D^{[i}Ng^{j]k} (g_{jk} - b_{jk}) \right),
\end{equation*}
and
\begin{equation*}
    \mathbb{V}^i(X) = 2\sqrt{\det g} \left( k^i{}_j - K \delta^i{}_j \right) X^j.
\end{equation*}
For null spinors we have $N(\psi^\infty)=y_n^{-1}N(\phi)$ and $X(\psi^\infty)=y_n^{-1}X(\phi)$.
Therefore, following \cite[page 11]{ChruscielMaertenTod} (also see Section \ref{SS:charges}), we obtain
\begin{align}\label{eq UHS E P def}
    \mathcal{H}(N(\psi^\infty),X(\psi^\infty))=N(\phi)\mathcal H(y_n^{-1},0)+X^i(\phi)\mathcal H(0,e_i y_n^{-1})=:N(\phi)\widehat{\mathcal E}+\langle X(\phi),\widehat{\mathcal P}\rangle.
\end{align}
We call these two charges UHS-energy and UHS-linear momentum which differ from the ones defined in Section \ref{SS: Siklos} via the Poincar\'e ball model. Here UHS stands for upper half-space model.
The latter definition has the advantage that it also works well in the timelike setting.

\subsection{Symmetries between parallel spinors in $\mathbb R^n$ and imaginary Killing spinors in $\mathbb H^n$} 

Given a spinor $\phi$, recall the definitions $N(\phi)=|\phi|^2$, $X(\phi)=\langle e_ie_0\phi,\phi\rangle e_i$, $Y(\phi)=\langle \mathbf{i}e_i\phi,\phi\rangle$, and $\omega_{ij}(\phi)=\operatorname{Im}\langle e_ie_je_0\phi,\phi\rangle$.

\begin{lemma}\label{lemma psi u N X Y omega}
   Let $u$ be a constant spinor in $B_1\subseteq\mathbb R^n$ and let $\psi^\infty$ be the corresponding imaginary Killing spinor, cf. Proposition \ref{Prop construction killing spinor}.
   Then
    \begin{align*}
        \begin{split}
            N(\psi^\infty)=&\frac2{1-{\varrho}^2}((1+{\varrho}^2)N(u)-2{\varrho} Y_{\varrho}(u)),\\
            Y_i(\psi^\infty)=&\frac{2}{1-{\varrho}^2}((1-{\varrho}^2)Y_i(u)-2{\varrho}N(u)\delta_{i{\varrho}}+2{\varrho}^2Y_{{\varrho}}(u)\delta_{i{\varrho}}),\\
            X_i(\psi^\infty)=&\frac{2}{1-{\varrho}^2}((1+{\varrho}^2)X_i(u)-2\varrho\omega_{i{\varrho}}(u)-2{\varrho}^2X_{\varrho}(u)\delta_{i{\varrho}}),\\
            \omega_{ij}(\psi^\infty)=&\frac{2}{1-{\varrho}^2}((1-{\varrho}^2)\omega_{ij}(u)+2{\varrho}\delta_{i{\varrho}}X_j(u)-2\varrho\delta_{j{\varrho}}X_i(u)+2{\varrho}^2\delta_{{\varrho}j}\omega_{i{\varrho}}(u)-2{\varrho}^2\delta_{{\varrho}i}\omega_{j{\varrho}}(u)).
        \end{split}
    \end{align*}
\end{lemma}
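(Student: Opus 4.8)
The plan is to compute the four bilinears $N, Y_i, X_i, \omega_{ij}$ for $\psi^\infty = \xi^{-\frac12}(1-\mathbf{i}{\varrho}e_{\varrho})u$ directly, by substituting this expression into the definitions and expanding. Write $P = 1 - \mathbf{i}{\varrho}e_{\varrho}$, so $\psi^\infty = \xi^{-\frac12} P u$ with $\xi^{-1} = \frac{2}{1-{\varrho}^2}$. The key elementary facts are: $e_{\varrho}^2 = 1$ (since $e_{\varrho}$ is a unit vector in the Euclidean metric $\delta$, which is the metric in which $u$ is constant); the adjoint relations $e_i^\dagger = -e_i$, $e_0^\dagger = e_0$, and $\mathbf{i}^\dagger = -\mathbf{i}$ with respect to the Hermitian inner product; and the anticommutation $e_i e_j + e_j e_i = -2\delta_{ij}$. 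From these one gets the Hermitian adjoint $P^\dagger = 1 + \mathbf{i}{\varrho}e_{\varrho}$ (the sign flip from $\mathbf{i}^\dagger$ and $e_{\varrho}^\dagger$ cancel partially: $(\mathbf{i} {\varrho} e_\varrho)^\dagger = -\mathbf{i}{\varrho}(-e_\varrho) = \mathbf{i}{\varrho}e_\varrho$), hence
\begin{align*}
    P^\dagger P = (1+\mathbf{i}{\varrho}e_{\varrho})(1-\mathbf{i}{\varrho}e_{\varrho}) = 1 + {\varrho}^2 e_{\varrho}^2 = 1 + {\varrho}^2,
\end{align*}
and similarly $P^\dagger e_{\varrho} e_0 P$, $P^\dagger \mathbf{i} e_{\varrho} P$, etc., reduce to short polynomials in ${\varrho}$ times products of at most three gamma matrices acting on $u$.

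The second step is to organize the computation by decomposing each index $i$ into its radial part (proportional to $\delta_{i{\varrho}}$, i.e. the $e_{\varrho}$ direction) and its tangential part. For $N$: $N(\psi^\infty) = \xi^{-1}\langle P u, P u\rangle = \frac{2}{1-{\varrho}^2}\langle u, P^\dagger P u\rangle$, but one must be careful — $P^\dagger P = 1 + {\varrho}^2$ would give only $(1+{\varrho}^2)N(u)$, so the cross term $-2{\varrho}Y_{\varrho}(u)$ must come from the fact that I should \emph{not} move $P^\dagger$ past $P$ naively when there are odd operators; rather, $N(\psi^\infty) = \xi^{-1}\langle (1+\mathbf i \varrho e_\varrho)(1 - \mathbf i \varrho e_\varrho) u, u\rangle$ — wait, the correct expansion keeps $\langle P u, Pu\rangle = \langle u, u\rangle + \langle u, -\mathbf i \varrho e_\varrho u\rangle + \langle -\mathbf i \varrho e_\varrho u, u \rangle + \varrho^2 \langle u, u\rangle$; the two middle terms combine to $-2\varrho\,\mathrm{Re}\langle \mathbf i e_\varrho u, u\rangle \cdot(\text{sign bookkeeping})$, and by definition $Y_\varrho(u) = \langle \mathbf i e_\varrho u, u\rangle$ is real, giving exactly $-2\varrho Y_\varrho(u)$. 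The same bookkeeping — expand $P^\dagger(\cdot)P$ into four terms, identify each resulting bilinear $\langle e_{i_1}\cdots e_{i_m}e_0 u, u\rangle$ (or with an $\mathbf i$) against the definitions of $N(u), X(u), Y(u), \omega(u)$, and use $e_\varrho = \delta_{j\varrho}e_j$ together with Clifford relations to collapse the three-gamma terms — yields all four formulas. The appearance of $\delta_{i{\varrho}}$ factors is exactly where a tangential index kills the relevant term because $e_\varrho$ anticommutes with $e_i$ for $i$ tangential but commutes with itself.

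The third step is the index-raising subtlety: $N, Y, X, \omega$ on the left are defined with respect to the hyperbolic frame, while on the right they are Euclidean objects; since $b = \xi^{-2}\delta$, a hyperbolic-orthonormal frame is $\xi$ times a Euclidean-orthonormal one, which accounts for the extra powers of $\xi^{-1} = \frac{2}{1-{\varrho}^2}$ — one factor for $N$ (a scalar built from $|\psi^\infty|^2$ with the conformal weight $\xi^{-\frac12}$ of the spinor contributing $\xi^{-1}$), and for the vector/two-form components one must convert $e_i^{\text{hyp}} = \xi\, \partial_i$ versus $e_i^{\text{eucl}} = \partial_i$, so $X_i(\psi^\infty)$ as a component in the hyperbolic frame equals $\xi$ times the coordinate component. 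Tracking these weights carefully is what produces the precise $\frac{2}{1-{\varrho}^2}$ prefactors and the $(1\pm{\varrho}^2)$ combinations (note $\xi^{-1}\cdot(1\pm\varrho^2)$ structure).

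I expect the main obstacle to be purely organizational rather than conceptual: keeping track of signs from Clifford anticommutation, the $e_0$-twist in the doubled bundle $\overline{\mathcal S} = \mathcal S \oplus \mathcal S$ (where $e_l$ acts with a relative sign on the two summands and $e_0$ swaps them), and the conformal weights, all simultaneously. A clean way to manage this is to first record the "master identities" $P^\dagger P = 1+\varrho^2$, $P^\dagger e_\varrho e_0 P = (1-\varrho^2)e_\varrho e_0 + \text{(lower terms)}$, $P^\dagger \mathbf i e_j P$, and $P^\dagger e_i e_j e_0 P$ as operator identities in the Clifford algebra (valid before taking $\langle \cdot u, u\rangle$), verify each by the anticommutation relations and $e_\varrho^2=1$, and only then pair against $u$ and read off the bilinears. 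Once the master identities are in hand, each of the four displayed formulas follows by a one-line substitution, and the verification that the conclusion of Proposition \ref{Prop construction killing spinor} (that $\psi^\infty$ is an imaginary Killing spinor) is consistent with these formulas can serve as an independent check via the relations $\mathring\nabla_i N = -Y_i$ etc. quoted in the proof of Proposition \ref{infinty}.
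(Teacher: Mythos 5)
Your high-level strategy—expand $\psi^\infty = \xi^{-1/2}Pu$ with $P = 1 - \mathbf i\varrho e_\varrho$ into each bilinear and read off the Euclidean bilinears of $u$—is exactly what the paper does; the paper just writes out the four expansions directly without the "master identity" framing. That framing is a reasonable organizing device, but as stated your master identities contain sign errors that would derail a full write-up.

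Two concrete problems. First, $e_\varrho^2 = -1$, not $+1$: the paper uses the Riemannian Clifford convention $e_ie_j + e_je_i = -2\delta_{ij}$, which you can verify from the proof of Proposition \ref{Prop N=X implies Nw=XY}, where $\langle e_1 e_1 e_\alpha\phi,\phi\rangle = \langle -e_\alpha\phi,\phi\rangle$ is used. Second, your own adjoint computation $(\mathbf i\varrho e_\varrho)^\dagger = \mathbf i\varrho e_\varrho$ is correct and shows $\mathbf i e_\varrho$ is \emph{self-adjoint}, hence $P^\dagger = P$; but you then write $P^\dagger = 1 + \mathbf i\varrho e_\varrho$, which contradicts your own line. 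With $P^\dagger = P$ and $e_\varrho^2 = -1$, the correct master identity is
\begin{align*}
    P^\dagger P = P^2 = 1 - 2\mathbf i\varrho e_\varrho + (\mathbf i\varrho e_\varrho)^2 = 1 + \varrho^2 - 2\mathbf i\varrho e_\varrho,
\end{align*}
which \emph{already contains} the cross term that produces $-2\varrho Y_\varrho(u)$; the awkward "wait, the correct expansion keeps\ldots" moment in your proposal disappears entirely once the self-adjointness is used correctly. The other master identities $P^\dagger e_\varrho e_0 P$, $P^\dagger \mathbf i e_j P$, $P^\dagger e_i e_j e_0 P$ would then have to be recomputed with these corrected signs; done properly this reduces to the same term-by-term expansion the paper carries out (where the paper splits, e.g., the $\omega$ computation into pieces $\mathbf{I},\mathbf{II},\mathbf{III}$ and handles the radial vs.\ tangential index cases exactly as you anticipate). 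The conformal-weight discussion is also somewhat loose—the paper simply fixes frames orthonormal for both $\delta$ and $b$ and interprets $\omega_{ij}(\psi^\infty) = \omega(\psi^\infty)(\xi^{-1}e_i,\xi^{-1}e_j)$, which is what makes the single uniform prefactor $\xi^{-1} = \tfrac{2}{1-\varrho^2}$ appear for all four quantities—but this is a bookkeeping convention rather than a conceptual gap. In short: right plan, same as the paper's, but fix the Clifford signs before executing.
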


Here we use orthonormal coordinates with respect to both $\delta$ and $b$.
E.g. we have $\omega_{ij}(u)=\omega(u)(e_i,e_j)$ and $\omega_{ij}(\psi^\infty)=\omega(\psi^\infty)(\xi^{-1}e_i,\xi^{-1}e_j)$ where $\delta(e_i,e_j)=\delta_{ij}$.

\begin{proof}
  Recall that $\psi=  \xi^{-\frac12}(1-\mathbf{i}{\varrho}e_{\varrho})u$, $\xi=\frac{1-{\varrho}^2}2$, ${\varrho}e_{\varrho}=x_ie_i$, and $g=\xi^{-2}\delta$.
  We compute for $e_i\perp e_j$
  \begin{align*}
      \begin{split}
          &-\langle \mathbf{i}e_ie_je_0(1-\mathbf{i}{\varrho}e_{\varrho})u,(1-\mathbf{i}{\varrho}e_{\varrho})u\rangle\\
          =&\omega_{ij}(u)-{\varrho}\langle e_ie_je_0e_{\varrho}u,u\rangle+{\varrho}\langle e_ie_je_0u,e_{\varrho}u\rangle-{\varrho}^2\langle \mathbf{i}e_ie_je_0e_{\varrho}u,e_{\varrho}u\rangle\\
          =&\omega_{ij}(u)+\textbf{I}+\textbf{II}+\textbf{III}.
      \end{split}
  \end{align*}
  To compute the first term, note that $\langle e_ie_je_0e_{\varrho}u,u\rangle$ is imaginary unless $e_i=e_{\varrho}$ or $e_j=e_{\varrho}$. Therefore,
  \begin{align*}
      \begin{split}
          \textbf{I}=&{\varrho}\delta_{i{\varrho}}X_j(u)-{\varrho}\delta_{j{\varrho}}X_i(u).
      \end{split}
  \end{align*}
Similarly, we obtain $\textbf{II}={\varrho}\delta_{i{\varrho}}X_j(u)-{\varrho}\delta_{j{\varrho}}X_i(u)=\textbf{I}$.
Next, we calculate
\begin{align*}
    \begin{split}
        \langle e_ie_je_0e_{\varrho}u,e_{\varrho}u\rangle=&
        -\langle e_{\varrho}e_ie_je_0e_{\varrho}u,u\rangle\\
        =&-\langle (e_ie_{\varrho}+2\delta_{i{\varrho}})e_je_{\varrho}e_0u,u\rangle\\
        =&\langle (e_ie_{\varrho}+2\delta_{i{\varrho}})(e_{\varrho}e_j+2\delta_{j{\varrho}})e_0u,u\rangle.
    \end{split}
\end{align*}
Consequently,
\begin{align*}
    \textbf{III}=-{\varrho}^2\omega_{ij}(u)+2{\varrho}^2\delta_{j{\varrho}}\omega_{i{\varrho}}(u)-2{\varrho}^2\delta_{i{\varrho}}\omega_{j{\varrho}}(u).
\end{align*}
    Similarly, we obtain
   \begin{align*}
       \begin{split}
           \langle (1-\mathbf{i}{\varrho}e_{\varrho})u,(1-\mathbf{i}{\varrho}e_{\varrho})u\rangle=&(1+{\varrho}^2)|u|^2-2{\varrho}\langle \mathbf{i}e_{\varrho}u,u\rangle,\\
           \langle \mathbf{i}e_i(1-\mathbf{i}{\varrho}e_{\varrho})u,(1-\mathbf{i}{\varrho}e_{\varrho})u\rangle=&(1-{\varrho}^2)\langle \mathbf{i}e_iu,u\rangle-2{\varrho}|u|^2\delta_{i{\varrho}}+2{\varrho}^2\delta_{i{\varrho}}\langle \mathbf{i}e_\varrho u,u\rangle , \\
           \langle e_ie_0(1-\mathbf{i}{\varrho}e_{\varrho})u,(1-\mathbf{i}{\varrho}e_{\varrho})u\rangle=&(1+{\varrho}^2)\langle e_ie_0u,u\rangle-2{\varrho}\operatorname{Im}\langle e_i e_{\varrho} e_0 u,u\rangle-2{\varrho}^2\langle e_{\varrho} e_0u,u\rangle \delta_{i{\varrho}}.
       \end{split}
   \end{align*}
   Hence, the result follows.
\end{proof}

\begin{corollary}\label{cor Nw=XY psi -> u}
We have $$N(\psi^\infty)\omega(\psi^\infty)=X(\psi^\infty)\wedge Y(\psi^\infty)$$ if and only if $$N(u)\omega(u)=X(u)\wedge Y(u).$$
More generally, this conclusion already holds if 
$$N(\psi^\infty)\omega(\psi^\infty)-X(\psi^\infty)\wedge Y(\psi^\infty)=o(1).$$
\end{corollary}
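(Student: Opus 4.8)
The plan is to reduce both assertions to the explicit transformation rules of Lemma \ref{lemma psi u N X Y omega}. For a spinor $\phi$ set
\[
T_{ij}(\phi):=N(\phi)\,\omega_{ij}(\phi)-X_i(\phi)\,Y_j(\phi)+X_j(\phi)\,Y_i(\phi),
\]
so that the two-form under consideration is $T(\phi)=N(\phi)\,\omega(\phi)-X(\phi)\wedge Y(\phi)$. If $u=0$ then $\psi^\infty=0$ and there is nothing to prove, so assume $u\neq0$; then $N(u)=|u|^2>0$.

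The first step is a direct, if somewhat tedious, computation. Fixing a point of the ball and working in the orthonormal frame of Lemma \ref{lemma psi u N X Y omega} with the index $\varrho$ aligned with the radial direction $e_\varrho$, I substitute the four identities of that lemma into $T_{ij}(\psi^\infty)$ and simplify, using the antisymmetry of $\omega$, the vanishing of $\delta_{\alpha\varrho}$ for tangential indices $\alpha$, and the elementary identity $(1+\varrho^2)^2-(2\varrho)^2=(1-\varrho^2)^2$. I expect this to produce, for tangential indices $\alpha,\beta\in\{1,\dots,n-1\}$, the transformation rules
\begin{align*}
T_{\varrho\alpha}(\psi^\infty)&=4\,T_{\varrho\alpha}(u),\\
T_{\alpha\beta}(\psi^\infty)&=\frac{4}{1-\varrho^2}\Big((1+\varrho^2)\,T_{\alpha\beta}(u)-2\varrho\,(Y(u)\wedge\omega(u))_{\varrho\alpha\beta}\Big).
\end{align*}
The feature to isolate is that the mixed radial--tangential components of $T$ get rescaled by the \emph{constant} $4$, with no dependence on $\varrho$ (consistently with $\psi^\infty=\sqrt2\,u$ at $\varrho=0$ and the quartic homogeneity of $T$ in the spinor).

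Granting these identities, both conclusions follow quickly. If $N(u)\omega(u)=X(u)\wedge Y(u)$, i.e.\ $T(u)=0$, then $\omega(u)=N(u)^{-1}X(u)\wedge Y(u)$ is a simple two-form, whence $Y(u)\wedge\omega(u)=N(u)^{-1}\,Y(u)\wedge X(u)\wedge Y(u)=0$; both displayed right-hand sides then vanish, so $T(\psi^\infty)\equiv0$ and in particular $N(\psi^\infty)\omega(\psi^\infty)=X(\psi^\infty)\wedge Y(\psi^\infty)$. For the converse — which simultaneously gives the stronger statement — suppose only that $T(\psi^\infty)=o(1)$ as $r\to\infty$. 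Along any fixed ray the first rule gives $T_{\varrho\alpha}(\psi^\infty)=4\,T_{\varrho\alpha}(u)$, a quantity constant in $\varrho$, while $r=2\varrho/(1-\varrho^2)\to\infty$ as $\varrho\to1$; hence $T_{\varrho\alpha}(u)=0$ for every tangential $\alpha$. Since $e_\varrho$ may be an arbitrary unit vector (choosing the appropriate ray) and $e_\alpha$ an arbitrary unit vector orthogonal to it, the two-form $T(u)$ vanishes, and then $T(\psi^\infty)\equiv0$ by the previous paragraph. The forward implication of the equivalence is the special case $T(\psi^\infty)\equiv0$.

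The main obstacle is the bookkeeping in the first step: one has to keep careful track of the orthonormal frames associated with $\delta$ and with $b$, of the sign conventions in $\omega$ and in the exterior products, and of the distinct roles of radial and tangential indices. Once the two transformation rules are in hand the rest is purely formal, and it is exactly the $\varrho$-independence of the mixed components that makes the hypothesis $T(\psi^\infty)=o(1)$ as potent as outright vanishing.
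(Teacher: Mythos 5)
Your proof is correct, and your transformation rules are exactly the ones the paper derives (after dividing the paper's displayed identities by $(1-\varrho^2)^2/4$). The forward direction is the same in both: $T(u)=0$ forces $\omega(u)$ to be the simple two-form $N(u)^{-1}X(u)\wedge Y(u)$, so $Y(u)\wedge\omega(u)=0$ and both right-hand sides vanish identically. For the converse and the $o(1)$ variant, however, you take a genuinely different and arguably cleaner route. The paper fixes a radial direction, uses the radial--tangential rule to get $T_{\varrho j}(u)=0$ for that direction, then substitutes this into the tangential--tangential rule to obtain $T_{ij}(\psi^\infty)$ proportional to $T_{ij}(u)$ with the factor $(1-\varrho^2)^{-1}\bigl(1+\varrho^2-2\varrho N^{-1}(u)Y_\varrho(u)\bigr)$, and must then split into two cases, $\sup|N^{-1}(u)Y_\varrho(u)|<1$ (the factor blows up, forcing $T_{ij}(u)=0$) and $\sup|N^{-1}(u)Y_\varrho(u)|=1$ (invoking Proposition~\ref{Prop N=X implies Nw=XY} via $N(u)=|Y(u)|$). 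You instead observe that the mixed component $T_{\varrho\alpha}(\psi^\infty)=4\,T_{\varrho\alpha}(u)$ is $\varrho$-independent along each ray, and that letting $e_\varrho$ range over all unit directions (choosing the appropriate ray) and $e_\alpha$ over all directions orthogonal to it covers every orthonormal pair; since a constant two-form vanishing on all orthonormal pairs vanishes identically, $T(u)\equiv 0$ follows from the first rule alone. This dispenses with both the tangential--tangential formula and the case distinction, making the $o(1)$ hypothesis do all the work exactly where it bites.
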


\begin{proof}
To show the first claim, we compute for $e_i=e_{\varrho}$ and $e_j\neq e_{\varrho}$
    \begin{equation*}
    \begin{split}
        &\frac{(1-{\varrho}^2)^2}{4}\left(N(\psi^\infty)\omega(\psi^\infty)-X(\psi^\infty)\wedge Y(\psi^\infty)\right)_{{\varrho}j}
        \\=& \left((1+{\varrho}^2)N(u)-2{\varrho}Y_{\varrho}(u)\right)\left((1+{\varrho}^2)\omega_{{\varrho}j}(u)+2{\varrho}X_j(u)\right)
        \\&-(1-{\varrho}^2)X_{\varrho}(u)(1-{\varrho}^2)Y_j(u)+ \left((1+{\varrho}^2)X_j(u)-2{\varrho}\omega_{j{\varrho}}(u)\right)\left((1+{\varrho}^2)Y_{\varrho}(u)-2{\varrho}N(u)\right)
        \\=&
        (1-{\varrho}^2)^2\left(N(u)\omega_{{\varrho}j}(u)-X_{\varrho}(u)Y_j(u)+X_j(u)Y_{\varrho}(u)\right).
    \end{split}
    \end{equation*}
    Similarly, we obtain for $e_i,e_j\neq e_{\varrho}$ 
    \begin{equation*}
    \begin{split}
        &\frac{(1-{\varrho}^2)^2}{4}\left(N(\psi^\infty)\omega(\psi^\infty)-X(\psi^\infty)\wedge Y(\psi^\infty)\right)_{ij}
        \\=& \left((1+{\varrho}^2)N(u)-2\varrho Y_\varrho(u)\right)(1-{\varrho}^2)\omega_{ij}(u)
        \\&-\left((1+{\varrho}^2)X_i(u)-2\varrho\omega_{i{\varrho}}(u)\right)(1-{\varrho}^2)Y_j(u)+\left((1+{\varrho}^2)X_j(u)-2\varrho\omega_{j{\varrho}}(u)\right)(1-{\varrho}^2)Y_i(u)  
        \\=&(1+{\varrho}^2)(1-{\varrho}^2)\left(N(u)\omega(u)
        -X(u)\wedge Y(u)\right)_{ij}
        \\&-2\varrho(1-{\varrho}^2)\left(Y_{\varrho}(u)\omega_{ij}(u)-\omega_{i{\varrho}}(u)Y_j(u)+\omega_{j{\varrho}}(u)Y_i(u)\right)
        \\=& \left(1+{\varrho}^2-2\varrho N^{-1}(u)Y_{\varrho}(u)\right)(1-{\varrho}^2)\left(N(u)\omega(u)
        -X(u)\wedge Y(u)\right)_{ij}
        \\&-2\varrho(1-{\varrho}^2)\left(Y_{\varrho}(u)N^{-1}(u)(X_i(u)Y_j(u)-Y_i(u)X_j(u))-\omega_{i{\varrho}}(u)Y_j(u)+\omega_{j{\varrho}}(u)Y_i(u)\right)
    \end{split}
    \end{equation*}
    The last line vanishes if $N(u)\omega_{{\varrho}j}(u)=X_{\varrho}(u)Y_j(u)-X_j(u)Y_{\varrho}(u)$ for $e_j\neq e_{\varrho}$.
    Thus, the first claim follows.

    \medskip

 To prove the second claim, suppose that  $N(\psi^\infty)\omega(\psi^\infty)-X(\psi^\infty)\wedge Y(\psi^\infty)=o(1)$.
 Therefore, \begin{align*}
     N(u)\omega_{j{\varrho}}(u)-X_{\varrho}(u)Y_j(u)+X_j(u)Y_{\varrho}(u)=o(1)
 \end{align*} which implies that this term must vanish.
Consequently, we also have
\begin{align*}
    (1-{\varrho}^2)^{-1}(1+{\varrho}^2-2\varrho N^{-1}(u)Y_{\varrho}(u))\left(N(u)\omega(u)  -X(u)\wedge Y(u)\right)_{ij}=o(1)
\end{align*}
 which shows $N(u)\omega(u)=X(u)\wedge Y(u)$ in case $\sup |N^{-1}(u)Y_{\varrho}(u)|<1$.
 If $\sup |N^{-1}(u)Y_{\varrho}(u)|= 1$, then we have $N(u)=|Y(u)|$ everywhere and $N(u)\omega(u)=X(u)\wedge Y(u)$ by Proposition \ref{Prop N=X implies Nw=XY}.
\end{proof}

\begin{corollary}\label{cor N=Y(u) implies N=Y(psi)}
    We have $|Y(u)|=N(u)$ if and only if $|Y(\psi^\infty)|=N(\psi^\infty)$.
    Moreover, if $|Y(u)|=N(u)$, then also $|X(u)|=N(u)$ if and only if $|X(\psi^\infty)|=N(\psi^\infty)$
\end{corollary}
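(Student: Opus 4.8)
The plan is to deduce both equivalences from two exact pointwise identities, obtained directly from Lemma~\ref{lemma psi u N X Y omega}, comparing the causal ``gaps'' $N(\psi^\infty)^2-|Y(\psi^\infty)|^2$ and $N(\psi^\infty)^2-|X(\psi^\infty)|^2$ with their counterparts for the constant spinor $u$. We may assume $u\neq0$, so $N(u)=|u|^2>0$; since $\mathring\nabla_iN(\psi^\infty)=-Y_i(\psi^\infty)$ with $|Y(\psi^\infty)|\le N(\psi^\infty)$ everywhere, also $N(\psi^\infty)>0$. Fix $z\in B_1$, set $\varrho=|z|$, and split every index into the radial direction $e_\varrho$ and the directions orthogonal to it; recall that all components below are taken in a frame orthonormal for both $\delta$ and $b$, so $|X(\psi^\infty)|^2=\sum_iX_i(\psi^\infty)^2$ and likewise for $Y$.

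For the first claim, Lemma~\ref{lemma psi u N X Y omega} gives $Y_i(\psi^\infty)=2Y_i(u)$ for $i\neq\varrho$, together with
\[
N(\psi^\infty)=\tfrac{2}{1-\varrho^2}A,\quad Y_\varrho(\psi^\infty)=\tfrac{2}{1-\varrho^2}B,\quad A:=(1+\varrho^2)N(u)-2\varrho Y_\varrho(u),\quad B:=(1+\varrho^2)Y_\varrho(u)-2\varrho N(u).
\]
Since $A+B=(1-\varrho)^2\big(N(u)+Y_\varrho(u)\big)$ and $A-B=(1+\varrho)^2\big(N(u)-Y_\varrho(u)\big)$, one gets $A^2-B^2=(1-\varrho^2)^2\big(N(u)^2-Y_\varrho(u)^2\big)$, so the prefactor cancels and $N(\psi^\infty)^2-Y_\varrho(\psi^\infty)^2=4\big(N(u)^2-Y_\varrho(u)^2\big)$. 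Adding the tangential terms $-\sum_{i\neq\varrho}Y_i(\psi^\infty)^2=-4\big(|Y(u)|^2-Y_\varrho(u)^2\big)$ yields $N(\psi^\infty)^2-|Y(\psi^\infty)|^2=4\big(N(u)^2-|Y(u)|^2\big)$. As $(N,Y)$ is causal for both spinors, each side is nonnegative, and one vanishes iff the other does; this is the first equivalence.

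For the second claim, assume $N(u)=|Y(u)|$. By Proposition~\ref{Prop N=X implies Nw=XY} we then have $\langle X(u),Y(u)\rangle=0$ and $N(u)\omega(u)=X(u)\wedge Y(u)$, hence $\omega_{i\varrho}(u)=N(u)^{-1}\big(X_i(u)Y_\varrho(u)-X_\varrho(u)Y_i(u)\big)$. Substituting this into the formula for $X_i(\psi^\infty)$ in Lemma~\ref{lemma psi u N X Y omega} gives $X_\varrho(\psi^\infty)=2X_\varrho(u)$ and, for $i\neq\varrho$,
\[
X_i(\psi^\infty)=\tfrac{2}{1-\varrho^2}\Big(\lambda X_i(u)+\tfrac{2\varrho}{N(u)}X_\varrho(u)Y_i(u)\Big),\qquad \lambda:=(1+\varrho^2)-\tfrac{2\varrho}{N(u)}Y_\varrho(u),
\]
and $N(\psi^\infty)=\tfrac{2}{1-\varrho^2}N(u)\lambda$. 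Squaring and summing, using $\langle X(u),Y(u)\rangle=0$, $|Y(u)|=N(u)$, and the elementary relation $\big(\lambda+\tfrac{2\varrho}{N(u)}Y_\varrho(u)\big)^2-4\varrho^2=(1-\varrho^2)^2$, all the $X_\varrho(u)^2$-contributions to $N(\psi^\infty)^2-|X(\psi^\infty)|^2$ cancel, leaving
\[
N(\psi^\infty)^2-|X(\psi^\infty)|^2=\frac{N(\psi^\infty)^2}{N(u)^2}\big(N(u)^2-|X(u)|^2\big).
\]
Since $N(\psi^\infty),N(u)>0$, the left side vanishes precisely when $N(u)=|X(u)|$, which is the second equivalence.

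The only delicate point is the bookkeeping in this last step: one must carry the radial component $X_\varrho(u)$ through the substitution for $\omega_{i\varrho}(u)$ and verify that the coefficient of $X_\varrho(u)^2$ collapses to $(1-\varrho^2)^2$. This collapse genuinely uses the type~I hypothesis through $\langle X(u),Y(u)\rangle=0$ and $N(u)\omega(u)=X(u)\wedge Y(u)$; without it no such clean identity holds, consistent with the fact that an imaginary Killing spinor on $\mathbb H^n$ can have mixed causal type even though the Dirac current of the corresponding constant spinor $u$ is null or timelike uniformly.
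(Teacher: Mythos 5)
Your proof is correct and follows essentially the same route as the paper's: both use Lemma~\ref{lemma psi u N X Y omega} to compare the causal gaps, invoke Proposition~\ref{Prop N=X implies Nw=XY} to get $X(u)\perp Y(u)$ and $N(u)\omega(u)=X(u)\wedge Y(u)$ for the second part, and establish the two exact identities $N(\psi^\infty)^2-|Y(\psi^\infty)|^2=4(N(u)^2-|Y(u)|^2)$ and $N(\psi^\infty)^2-|X(\psi^\infty)|^2=N(\psi^\infty)^2N(u)^{-2}(N(u)^2-|X(u)|^2)$. The only difference is cosmetic: you extract the cancellation through the factorization $A^2-B^2=(1-\varrho^2)^2(N(u)^2-Y_\varrho(u)^2)$ and the identity $(\lambda+2\varrho Y_\varrho(u)/N(u))^2-4\varrho^2=(1-\varrho^2)^2$, whereas the paper normalizes $N(u)=1$ and expands brute-force, arriving at the equivalent expression $(1-|X(u)|^2)\bigl((1+\varrho^2)-2\varrho Y_\varrho(u)\bigr)^2$.
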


\begin{proof}
    We normalize such that $N(u)=1$ and compute
    \begin{align*}
        \begin{split}
           & \frac{(1-{\varrho}^2)^2}{4}(N(\psi^\infty)^2-|Y(\psi^\infty)|^2)\\
           =&(1+{\varrho}^2)^2+4{\varrho}^2Y_{\varrho}(u)^2-4{\varrho}Y_{\varrho}(u)(1+{\varrho}^2)\\
            &-(1-{\varrho}^2)^2|Y(u)|^2-4{\varrho}^2-4{\varrho}^4Y_{\varrho}(u)^2+4{\varrho}Y_{\varrho}(u)(1-{\varrho}^2)-4{\varrho}^2(1-{\varrho}^2)Y_{\varrho}(u)^2+8{\varrho}^3Y_{\varrho}(u)\\
            =&(1-{\varrho}^2)^2(1-|Y(u)|^2).
        \end{split}
    \end{align*}
    This shows the first claim.

    \medskip

To prove the second claim, recall from Proposition \ref{Prop N=X implies Nw=XY} that $N(u)\omega(u)=X(u)\wedge Y(u)$ and $N(\psi^\infty)\omega(\psi^\infty)=X(\psi^\infty)\wedge Y(\psi^\infty)$.
Moreover, both $X(u)$ and $Y(u)$ as well as $X(\psi^\infty)$ and $Y(\psi^\infty)$ are perpendicular.
Normalize again to $N(u)=1=|Y(u)|$ and compute
    \begin{align*}
        \begin{split}
            & \frac{(1-{\varrho}^2)^2}{4}(N(\psi^\infty)^2-|X(\psi^\infty)|^2)\\
           =&(1+{\varrho}^2)^2+4{\varrho}^2Y_{\varrho}(u)^2-4{\varrho}Y_{\varrho}(u)(1+{\varrho}^2)\\
           &-(1+{\varrho}^2)^2|X(u)|^2-4{\varrho}^2(|X(u)|^2Y_{\varrho}(u)^2+X_{\varrho}(u)^2)-4{\varrho}^4X_{\varrho}(u)^2\\
           &+4{\varrho}(1+{\varrho}^2)|X(u)|^2Y_{\varrho}(u)+4{\varrho}^2(1+{\varrho}^2)X_{\varrho}(u)^2\\
           =&(1-|X(u)|^2)((1+{\varrho}^2)-2{\varrho}Y_{\varrho}(u))^2.   
        \end{split}
    \end{align*}  
    This finishes the proof.
\end{proof}

\begin{corollary} \label{cor N^2-X^2-Y^2-w^2 psi -> u}
    We have
    \begin{align*}
    \begin{split}
        |Y(\psi^\infty)|^2-N^2(\psi^\infty)=&4(|Y(u)|^2-N^2(u)),\\
       |X(\psi^\infty)|^2-\frac12|\omega(\psi^\infty)|^2=&4(|X(u)|^2-\frac12|\omega(u)|^2).
    \end{split}
    \end{align*}
    In particular, both $|Y(\psi^\infty)|^2-N^2(\psi^\infty)$ and $|X(\psi^\infty)|^2-\frac12|\omega(\psi^\infty)|^2$ are constant.
    Moreover,
    \begin{align*}
        |Y(\psi^\infty)|^2+|X(\psi^\infty)|^2-N^2(\psi^\infty)-\frac12|\omega(\psi^\infty)|^2=4\left(|Y(u)|^2+|X(u)|^2-N^2(u)-\frac12|\omega(u)|^2\right).
    \end{align*}
\end{corollary}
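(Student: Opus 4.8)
The plan is to reduce everything to the explicit dictionary of Lemma~\ref{lemma psi u N X Y omega} between the components of $N,X,Y,\omega$ for $\psi^\infty$ and those for $u$. Throughout I work in coordinates singling out the radial slot $\varrho$ (all $\varrho$-dependence in that lemma is carried by $e_\varrho$, so this is the only bookkeeping device needed) and use the conventions there, in which the quantities for $\psi^\infty$ are recorded in a $b$-orthonormal frame and those for $u$ in the corresponding $\delta$-orthonormal frame in the same directions, so that $|Y(\psi^\infty)|^2=\sum_i Y_i(\psi^\infty)^2$, $|X(\psi^\infty)|^2=\sum_i X_i(\psi^\infty)^2$, $|\omega(\psi^\infty)|^2=\sum_{i,j}\omega_{ij}(\psi^\infty)^2$, and likewise for $u$.

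For the first identity, the content is already inside the proof of Corollary~\ref{cor N=Y(u) implies N=Y(psi)}: reinstating the normalization there, the intermediate computation reads $\tfrac{(1-\varrho^2)^2}{4}\bigl(N(\psi^\infty)^2-|Y(\psi^\infty)|^2\bigr)=(1-\varrho^2)^2\bigl(N(u)^2-|Y(u)|^2\bigr)$, and dividing by $\tfrac{(1-\varrho^2)^2}{4}$ gives $|Y(\psi^\infty)|^2-N^2(\psi^\infty)=4\bigl(|Y(u)|^2-N^2(u)\bigr)$. More structurally, the imaginary Killing equation gives $\mathring\nabla_iN(\psi^\infty)=-Y_i(\psi^\infty)$ and $\mathring\nabla_iY_j(\psi^\infty)=-b_{ij}N(\psi^\infty)$ as in the proof of Proposition~\ref{infinty}, whence $\mathring\nabla_i\bigl(N(\psi^\infty)^2-|Y(\psi^\infty)|^2\bigr)=0$; the resulting constant is then pinned down by specializing Lemma~\ref{lemma psi u N X Y omega} at the center $\varrho=0$, where $N(\psi^\infty)=2N(u)$ and $Y_i(\psi^\infty)=2Y_i(u)$.

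The second identity is where the real computation lies. I would substitute the Lemma~\ref{lemma psi u N X Y omega} formulas into $|X(\psi^\infty)|^2$ and $|\omega(\psi^\infty)|^2$ and split each sum according to whether an index equals $\varrho$. The crux is that every term carrying an odd power of $\varrho$ — in particular the products pairing the $\delta_{i\varrho}X_j(u)$ parts of $\omega_{ij}(\psi^\infty)$ against the $\delta_{i\varrho}X_\varrho(u)$ and $\omega_{i\varrho}(u)$ parts of $X_i(\psi^\infty)$ — must cancel between the $|X(\psi^\infty)|^2$ and $\tfrac12|\omega(\psi^\infty)|^2$ contributions, leaving exactly $\tfrac{(1-\varrho^2)^2}{4}\bigl(|X(u)|^2-\tfrac12|\omega(u)|^2\bigr)$; organizing this bookkeeping is the main, purely mechanical, obstacle. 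The cancellation is not accidental: $|X|^2-\tfrac12|\omega|^2$, together with $|Y|^2-N^2$, is one of the two algebraic invariants of this data — the same combination that appears in Proposition~\ref{Prop N=X implies Nw=XY} and in the conservation law underlying Theorem~\ref{Thm Intro cannot change causal type} — so multiplication by the Clifford element $1-\mathbf i\varrho e_\varrho$ relating $u$ and $\psi^\infty$ can only rescale it. Alternatively one may bypass the expansion entirely: Proposition~\ref{infinty} gives $\mathring\nabla_iX_j(\psi^\infty)=\omega_{ij}(\psi^\infty)$, and a short computation (generalized in Theorem~\ref{Thm: N,X,Y,omega}) gives $\mathring\nabla_i\omega_{jk}(\psi^\infty)=b_{ij}X_k(\psi^\infty)-b_{ik}X_j(\psi^\infty)$, whence $\mathring\nabla_i\bigl(|X(\psi^\infty)|^2-\tfrac12|\omega(\psi^\infty)|^2\bigr)=2X^j\omega_{ij}-2\omega_{ij}X^j=0$; constancy on $\mathbb H^n$ together with $X_i(\psi^\infty)=2X_i(u)$ and $\omega_{ij}(\psi^\infty)=2\omega_{ij}(u)$ at $\varrho=0$ yields the factor $4$ once more.

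Finally, both left-hand sides are constant because $u$ is parallel on $(\mathbb R^n,\delta)$, so $N(u),X(u),Y(u),\omega(u)$ — hence also $4(|Y(u)|^2-N^2(u))$ and $4(|X(u)|^2-\tfrac12|\omega(u)|^2)$ — are constant in Cartesian coordinates; and the last displayed identity is obtained simply by adding the two.
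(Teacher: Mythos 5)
Your proposal is correct. For the first identity you correctly observe that the computation is already embedded (up to normalization) in the proof of Corollary~\ref{cor N=Y(u) implies N=Y(psi)}, and for the second you mention both the direct substitution from Lemma~\ref{lemma psi u N X Y omega} and the derivative argument. The paper's own proof is simply the direct substitution: it plugs the Lemma~\ref{lemma psi u N X Y omega} expressions for $Y_i(\psi^\infty)$, $N(\psi^\infty)$, $X_i(\psi^\infty)$, $\omega_{ij}(\psi^\infty)$ into the two displayed sums and verifies algebraically that each equals $(1-\varrho^2)^2$ times the corresponding quantity for $u$ — the bookkeeping you describe as ``the main, purely mechanical, obstacle'' is exactly what the authors carry out, without further commentary on why the odd-power-of-$\varrho$ terms cancel.

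Your alternative route via covariant constancy is a genuinely different (and arguably cleaner) argument: using the $k=0$ specialization of Theorem~\ref{Thm: N,X,Y,omega} one has $\mathring\nabla_i N=-Y_i$, $\mathring\nabla_i Y_j=-b_{ij}N$, $\mathring\nabla_i X_j=\omega_{ij}$, $\mathring\nabla_i\omega_{jk}=b_{ij}X_k-b_{ik}X_j$ for the imaginary Killing spinor $\psi^\infty$, from which $\mathring\nabla_i\bigl(N^2-|Y|^2\bigr)=0$ and $\mathring\nabla_i\bigl(|X|^2-\tfrac12|\omega|^2\bigr)=2X_j\omega_{ij}-2\omega_{ij}X_j=0$; then evaluating Lemma~\ref{lemma psi u N X Y omega} at $\varrho=0$ gives $N(\psi^\infty)=2N(u)$ and likewise for $X,Y,\omega$, producing the factor $4$. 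This bypasses the expansion entirely and makes the constancy of both quantities transparent rather than incidental. The one caveat is that your ``structural'' remark that $|X|^2-\tfrac12|\omega|^2$ is an algebraic invariant that the Clifford factor $1-\mathbf{i}\varrho e_\varrho$ ``can only rescale'' is heuristic and would need justification (the element is not unitary — it acts as a boost in the $(N,Y_\varrho)$ plane), but you do not rely on it: the derivative computation you supply is rigorous and suffices.
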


\begin{proof}
    We compute
    \begin{align*}
        \begin{split}
        &\sum_{i=1}^n((1-{\varrho}^2)Y_i(u)-2{\varrho}N(u)\delta_{i{\varrho}}+2{\varrho}^2Y_{\varrho}(u)\delta_{i{\varrho}})^2\\
            &-((1+{\varrho}^2)N(u)-2{\varrho} Y_{\varrho}(u))^2\\
                        =&(1-{\varrho}^2)^2(|Y(u)|^2-N^2(u)),
        \end{split}
    \end{align*}
and
    \begin{align*}
        \begin{split}
                        &\sum_{i=1}^n((1+{\varrho}^2)X_i(u)-2{\varrho}\omega_{i{\varrho}}(u)-2{\varrho}^2X_{\varrho}(u)\delta_{i{\varrho}})^2\\
            &-\frac12\sum_{i,j=1}^n((1-{\varrho}^2)\omega_{ij}(u)+2{\varrho}\delta_{i{\varrho}}X_j(u)-2{\varrho}\delta_{j{\varrho}}X_i(u)+2{\varrho}^2\delta_{j{\varrho}}\omega_{i{\varrho}}(u)-2{\varrho}^2\delta_{i{\varrho}}\omega_{j{\varrho}}(u))^2\\
            =&(1-{\varrho}^2)^2(|X(u)|^2-\frac12|\omega(u)|^2).
        \end{split}
    \end{align*}
Hence, the result follows.
\end{proof}

\subsection{Positivity of mass and charges at infinity}\label{SS:charges}

Recall from Theorem \ref{pmt maerten} that for asymptotically AdS initial data sets $(M,g,k)$ satisfying the dominant energy condition
      \begin{align*}
            \mathcal H(N(\psi^\infty),X(\psi^\infty))\ge0
        \end{align*}
        where 
        \begin{align*}
  \mathcal  H(N,X) = \lim_{R \to \infty} \frac{1}{2(n-1)\omega_{n-1}} \int_{r=R} \left( \mathbb{U}^i(N) + \mathbb{V}^i(X) \right) \nu_i dA.
\end{align*}
Moreover, recall from Lemma \ref{lemma psi u N X Y omega} that
            \begin{align*}
        \begin{split}
            N(\psi^\infty)=&\frac2{1-{\varrho}^2}((1+{\varrho}^2)N(u)-2{\varrho} Y_{\varrho}(u)),\\
            X_i(\psi^\infty)=&(1+{\varrho}^2)X_i(u)-2{\varrho}\omega_{i{\varrho}}(u)-2{\varrho}^2X_{\varrho}(u)\delta_{i{\varrho}}.
        \end{split}
    \end{align*}
Hence, we obtain as in \cite{ChruscielMaertenTod}
\begin{align*}
    \mathcal  H(N(\psi^\infty),X(\psi^\infty)) =&
   N(u) \mathcal H(\frac2{1-{\varrho}^2}((1+{\varrho}^2),0)+Y^i(u)\mathcal H(-2x_i ,0)\\
    &+X_i(u)\mathcal H(0,(1+{\varrho}^2)e_i-2x^ix_ke_k)+\omega_{ij}(u)\mathcal H(0,-e_ix_j+x_ie_j)
\end{align*}
and we call the charges
\begin{align}\label{eq: def charges}
\begin{split}
    \mathcal E=&\mathcal H(\frac2{1-{\varrho}^2}((1+{\varrho}^2),0),\\
   \mathcal P_i=&\mathcal H(-2x_i ,0),\\
   \mathcal C_i=&\mathcal H(0,(1+{\varrho}^2)\delta_i^k\partial_k-2x^ix_k\partial_k),\\
   \mathcal A_{ij}=&\mathcal H(0,-\partial_ix_j+x_i\partial_j)
   \end{split}
\end{align}
the energy $\mathcal E$, center of mass $\mathcal C$, the linear momentum $\mathcal P$ and the angular momentum $\mathcal A$.
Thus, the positive mass theorem in the AdS setting becomes:

\begin{corollary}\label{cor pmt maerten}
   Given an asymptotically AdS spin initial data set $(M,g,k)$ satisfying the dominant energy condition $\mu\ge|J|$, we have
   \begin{align*}
      \mathcal EN(u)+\langle   \mathcal P,Y(u)\rangle+\langle   \mathcal C,X(u)\rangle+\langle \omega(u),  \mathcal  A\rangle\ge0
\end{align*}
for any constant spinor $u$ at infinity.
\end{corollary}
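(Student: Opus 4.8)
The plan is short: combine the positive mass inequality of Theorem \ref{pmt maerten} with the dictionary of Lemma \ref{lemma psi u N X Y omega} between an imaginary Killing spinor and the constant spinor it is built from, together with the definition of the charges in \eqref{eq: def charges}.

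First I would fix a constant spinor $u\in\overline{\mathcal S}(B_1)$ and let $\psi^\infty=\xi^{-1/2}(1-\mathbf{i}{\varrho}e_{\varrho})u$ be the imaginary Killing spinor of Proposition \ref{Prop construction killing spinor}. Applying Theorem \ref{pmt maerten} to $\psi^\infty$ produces a spinor $\psi\in\overline{\mathcal S}(M)$ asymptotic to $\psi^\infty$ together with the mass formula for $\mathcal H(N(\psi^\infty),X(\psi^\infty))$. Under $\mu\ge|J|$ the right-hand side of that formula is nonnegative: the first term is a pointwise square, and since $X_i(\psi)=\langle e_ie_0\psi,\psi\rangle$ is real one has $\langle\psi,Je_0\psi\rangle=\langle J,X(\psi)\rangle_g$, whence $|\langle\psi,Je_0\psi\rangle|\le|J|_g\,|X(\psi)|_g\le|J|_g\,N(\psi)=|J|_g\,|\psi|^2$ using $N\ge|X|$ from the preliminaries; hence $\tfrac12\mu|\psi|^2+\tfrac12\langle\psi,Je_0\psi\rangle\ge\tfrac12(\mu-|J|)|\psi|^2\ge0$. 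Thus $\mathcal H(N(\psi^\infty),X(\psi^\infty))\ge0$.

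Next I would decompose the boundary functional. Inspecting the formulas for $\mathbb U^i(N)$ and $\mathbb V^i(X)$, the integrand is $\mathbb R$-linear in $N$ and in $X$ separately, so $\mathcal H(N,X)=\mathcal H(N,0)+\mathcal H(0,X)$ with each summand linear in its argument. By Lemma \ref{lemma psi u N X Y omega}, $N(\psi^\infty)$ and the components $X_i(\psi^\infty)$ are linear combinations --- with fixed coefficient functions on $\mathbb H^n$ --- of the \emph{constants} $N(u),Y_i(u),X_i(u),\omega_{ij}(u)$. Substituting these expressions and regrouping exactly as in the display preceding \eqref{eq: def charges} yields
\begin{align*}
\mathcal H(N(\psi^\infty),X(\psi^\infty))=\mathcal E\,N(u)+\langle \mathcal P,Y(u)\rangle+\langle \mathcal C,X(u)\rangle+\langle \omega(u),\mathcal A\rangle,
\end{align*}
with $\mathcal E,\mathcal P,\mathcal C,\mathcal A$ the energy, linear momentum, center of mass and angular momentum of \eqref{eq: def charges}. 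Combining this identity with the inequality of the previous step gives the claim, for every constant spinor $u$ at infinity.

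The one point that must be checked --- bookkeeping rather than a genuine obstacle --- is that each of the four charge integrals in \eqref{eq: def charges} converges on its own, so that the four-term splitting is legitimate and not merely formal. This is exactly the convergence established by Chru\'sciel--Maerten--Tod \cite{ChruscielMaertenTod}: it follows from the decay rate $q\in(\tfrac n2,n]$ together with the controlled growth on $S_R$ of the lapse/shift coefficients $\tfrac{2(1+{\varrho}^2)}{1-{\varrho}^2}$, $x_i$, $(1+{\varrho}^2)e_i-2x^ix_ke_k$, and $-e_ix_j+x_ie_j$ relative to the hyperbolic background (these are the generators of the boosts, translations and rotations of $\mathbb H^n$ acting on the space of Killing spinors, already discussed before the corollary). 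Since replacing $u$ by another constant spinor only permutes these families of coefficients, no further convergence issue arises.
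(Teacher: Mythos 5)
Your proof is correct and follows the paper's own route: Theorem \ref{pmt maerten} gives nonnegativity under the dominant energy condition (which you justify via $\langle\psi,Je_0\psi\rangle=\langle J,X(\psi)\rangle_g$ and $N\ge|X|$), and the $\mathbb R$-linear splitting of $\mathcal H$ together with Lemma \ref{lemma psi u N X Y omega} and the charge definitions \eqref{eq: def charges} yields exactly the stated combination. The convergence of the individual charge integrals, which you rightly flag as the one point needing care, is handled in the paper by the same reference to Chru\'sciel--Maerten--Tod.
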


Consequently, one obtains a stronger result compared to the asymptotically flat or hyperboloidal case where there is no control on the center of mass or angular momentum.
The algebraic implications of this inequality are analyzed in detail in \cite{ChruscielMaertenTod}.
Since Corollary \ref{cor pmt maerten} is not just showing that the mass is positive, we will not refer to this result as positive mass theorem, but rather call it a BPS bound which is more common in the Physics literature. These Bogomol'nyi–Prasad–Sommerfield bounds play an important role in supersymmetry. 

\begin{remark}
    Mathematically, the reason there are only two charges in the asymptotically flat setting comes from $N,X$ being constant at infinity, so they can be taken outside of the integral at infinity.
    Moreover, there are explicit examples \cite{HuangSchoenWang2011} which demonstrate that the center of mass and angular momentum cannot be controlled.
    Similarly, in the asymptotic hyperboloidal setting, $X(\psi^\infty)$ cannot be chosen freely since $X(\psi^\infty)=-\nabla N(\psi^\infty)$ at infinity.
\end{remark}

\begin{remark}
    Note that $\mathcal C=0$ and $\mathcal A=0$ when $k=0$ (which is equivalent to the case $\Lambda=0$ and $k=g$). In particular, Theorem \ref{Thm Intro rigidity} shows that $(M,g,k)$ is contained in the AdS spacetime in this setting. 
    Hence, Theorem \ref{Thm Intro rigidity} also recovers the results of \cite{HirschJangZhang24}.
\end{remark}

\begin{remark}
    We point out that $\mathcal E$ has one, $\mathcal P,\mathcal C$ have each $n$, and $\mathcal A$ has $\frac{n(n-1)}2$ degrees of freedom.
   This adds up to $1+2n+\frac{n(n-1)}2=\frac{(n+1)(n+2)}2$ which is the dimension of $SO(n,2)$, the symmetry group of AdS space.
\end{remark}

In the remainder of this section, we analyze how the properties of the charges $\mathcal C,\mathcal P,\mathcal A$ characterize the existence of a mass-minimizing null spinor.

    \begin{lemma}
        Suppose $\mathcal H$ gets minimized by a spinor satisfying $N(u)=|Y(u)|$.
        Then we also have $N(u)=|X(u)|$ after possibly replacing $u$ by another minimizing spinor.
        The same statement holds with $Y(u)$ and $X(u)$ interchanged.
    \end{lemma}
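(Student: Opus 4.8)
The plan is to pass to the finite–dimensional picture. After normalizing $|u|=1$, a mass-minimizing spinor is a unit constant spinor realizing $\lambda_0 := \min\{\,f(u) : |u|=1\,\}$, where
$$f(u):=\mathcal E\, N(u)+\langle \mathcal P, Y(u)\rangle+\langle \mathcal C, X(u)\rangle+\langle \mathcal A, \omega(u)\rangle$$
is the quantity appearing in Corollary \ref{cor pmt maerten} (here $N(u),X(u),Y(u),\omega(u)$ are genuinely constant, $u$ being constant). Suppose $u_0$ realizes $\lambda_0$, is normalized to $|u_0|=1$, and satisfies $N(u_0)=|Y(u_0)|$; set $\hat n:=Y(u_0)/N(u_0)\in\mathbb R^n$, a unit vector. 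Exactly as in the proof of Proposition \ref{Prop N=X implies Nw=XY}, the condition $N(u_0)=|Y(u_0)|$ is equivalent to $\mathbf i\, e_{\hat n}\,u_0=u_0$, i.e. $u_0$ lies in the subspace $W:=\ker(\mathbf i\, e_{\hat n}-\operatorname{Id})$ of the constant spinors.

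The first step is to compute $N,X,Y,\omega$ on all of $W$. Using $e_{\hat n}u=-\mathbf i u$ for $u\in W$, one checks directly that $\mathbf i\, e_j u$, $e_j e_0 u$ and $e_i e_j e_0 u$ land in the $(-1)$-eigenspace of $\mathbf i\, e_{\hat n}$ whenever $e_i,e_j\perp\hat n$ (while $e_{\hat n}e_0 u=\mathbf i\, e_0 u$ with $e_0 u$ in that eigenspace too). Hence for every $u\in W$ one gets $Y(u)=N(u)\,\hat n$, $X(u)\perp\hat n$, $\langle e_0 u,u\rangle=0$, and $\omega(u)$ has only components involving $\hat n$. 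In particular $N(u)=|Y(u)|$ throughout $W$, so Proposition \ref{Prop N=X implies Nw=XY} yields $\omega(u)=X(u)\wedge\hat n$ on $W$. Substituting into $f$ collapses $f|_W$ to a linear functional of the Dirac current,
$$f(u)=c_0\,N(u)+\langle \vec d, X(u)\rangle\qquad (u\in W),$$
for a fixed scalar $c_0=\mathcal E+\langle\mathcal P,\hat n\rangle$ and a fixed vector $\vec d$, which we may take orthogonal to $\hat n$ since $X(u)\perp\hat n$, built from $\mathcal C$ and the contraction of $\mathcal A$ against $\hat n$.

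The second step is to minimize this reduced functional. On the unit sphere of $W$ we have $N(u)=1$ and $X(u)\perp\hat n$ with $|X(u)|\le 1$, so $f(u)\ge c_0-|\vec d|$, with equality precisely when $X(u)=-\vec d/|\vec d|$ (any unit vector $\perp\hat n$ if $\vec d=0$). The lemma producing unit parallel spinors with prescribed perpendicular values of $X$ and $Y$, applied with $Y=\hat n$, supplies such a unit spinor $u^\ast\in W$, and by construction $N(u^\ast)=|X(u^\ast)|=1$. Since $u_0\in W$ is a global minimizer, $\lambda_0=f(u_0)\le\min_{u\in W,\,|u|=1}f=f(u^\ast)\le f(u_0)=\lambda_0$, so $f(u^\ast)=\lambda_0$: the spinor $u^\ast$ is itself mass-minimizing and satisfies $N=|X|$. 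This proves the claim; the version with $X$ and $Y$ interchanged follows by running the identical argument with $\hat n$ replaced by $\hat m:=X(u_0)/N(u_0)$ and the projector $\mathbf i\, e_{\hat n}$ replaced by $e_{\hat m}e_0$.

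The main obstacle is the first step: one must verify that the whole subspace $W$ — not just $u_0$ — consists of spinors with $Y$ pinned to $\hat n$ and $\omega$ of the simple form $X\wedge\hat n$, so that $f|_W$ genuinely reduces to a linear functional of $(N,X)$ whose minimum over the unit sphere is automatically attained at a null spinor. Once that is in place, the comparison is immediate because $u_0$ already lies in $W$, so the $W$-minimum cannot exceed the global minimum.
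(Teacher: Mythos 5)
Your argument is correct and rests on the same core reduction the paper uses — the hypothesis $N(u)=|Y(u)|$ forces $\omega(u)=N(u)^{-1}X(u)\wedge Y(u)$, so the Hermitian form $f$ collapses to a linear expression in the Dirac current — but your execution is considerably more careful than the paper's three-line sketch. The paper substitutes the $\omega$-identity, groups the $\mathcal A$-term into the $Y$-slot of an inner product, and asserts that at fixed $(N,X)$ the minimum over $Y$ occurs at $|Y|=N$; as written this establishes the interchanged statement rather than the one claimed, and in either reading it does not explain why the extremal direction of $X$ (or $Y$) is actually realized by a constant spinor. Your passage to the eigenspace $W=\ker(\mathbf{i}\,e_{\hat n}-\operatorname{Id})$ supplies precisely that missing step: it pins $Y$ and $\omega$ on all of $W$, so that $f|_W$ is genuinely linear in the pair $(N,X)$ over the unit sphere of $W$, and the paper's Lemma on parallel spinors with prescribed perpendicular $X$ and $Y$ produces a unit spinor $u^\ast\in W$ realizing the extremal $X$. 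Since $u_0\in W$, the $W$-minimum equals the global minimum, so $u^\ast$ is the desired new minimizer. Two small slips, neither affecting the conclusion: for $e_j\perp\hat n$ the spinor $e_je_0u$ lies in the $+1$-eigenspace of $\mathbf{i}\,e_{\hat n}$, not the $-1$-eigenspace (the correct deduction $X(u)\perp\hat n$ comes from $e_{\hat n}e_0 u=\mathbf{i}\,e_0u$ lying in the $-1$-eigenspace, which you do note separately), and the closing chain of inequalities should read $\lambda_0\le\min_{u\in W,\,|u|=1}f = f(u^\ast)\le f(u_0)=\lambda_0$.
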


    \begin{proof} 
Recall from Proposition \ref{Prop N=X implies Nw=XY} that $N\omega=X\wedge Y$.
Thus, the mass term becomes
   \begin{align*}
  \mathcal H(N(\psi^\infty),X(\psi^\infty))=    \mathcal EN(u)+\langle   \mathcal C,X(u)\rangle+\langle   \mathcal P+N^{-1}(u)  \mathcal A(X(u),\cdot),Y(u)\rangle.
\end{align*}
Keeping $N(u),X(u)$ fixed, the right-hand side becomes minimized for maximal $|Y(u)|$, i.e. $|Y(u)|=N(u)$.
    \end{proof}

According to \cite[Theorem 3.1]{ChruscielMaertenTod}, $\mathcal{E}\ge |\mathcal{P}|$ and $\mathcal{E}=0$ implies the vanishing of all charges. When $\mathcal{E}= |\mathcal{P}|$, we have the following theorem.
\begin{theorem} \label{E=P}
    If $\mathcal{E}=|\mathcal{P}|$, then there exists a type I null mass-minimizing spinor.
\end{theorem}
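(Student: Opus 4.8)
The plan is to reduce the statement to a linear-algebra problem about the quadratic form $\mathcal{H}$ on the space of constant spinors $u$, and then exploit the algebraic relations between $N,X,Y,\omega$ forced by the boundary constraint $\mathcal{E}=|\mathcal{P}|$. First I would choose Cartesian coordinates so that $\mathcal{P}=|\mathcal{P}|\,e_1$. The hypothesis $\mathcal{E}=|\mathcal{P}|$ means that the $2\times 2$ block of $\mathcal{H}$ in the variables $(N(u),Y_1(u))$—which by \eqref{eq: def charges} has the shape $\mathcal{E}N(u)^2+\langle\mathcal{P},Y(u)\rangle N(u)$ plus a $\langle\mathcal{P},Y\rangle$-type cross term once one remembers $N\ge|Y|$—is degenerate along the null direction $N(u)=|Y_1(u)|$. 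Concretely, recall that $(N(u),Y(u))$ is a causal vector in $\mathbb{R}^{1,n}$, so $\mathcal{E}N(u)+\langle\mathcal{P},Y(u)\rangle\ge \mathcal{E}N(u)-|\mathcal{P}||Y(u)|\ge(\mathcal{E}-|\mathcal{P}|)N(u)=0$, with equality exactly when $Y(u)=-|Y(u)|\,\hat{\mathcal P}$ and $N(u)=|Y(u)|$, i.e. when $u$ is type I with $Y(u)$ pointing opposite to $\mathcal{P}$.

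Next I would show that one can actually \emph{realize} such a configuration by a minimizing spinor. Take any mass-minimizing spinor $u_0$ (which exists by the compactness remark after the definition of mass-minimizing spinor). Since $\mathcal{H}(N(\psi^\infty),X(\psi^\infty))\ge 0$ and the first-variation/ellipticity argument of Theorem \ref{pmt maerten} forces $\mathcal{H}=0$ at a minimizer only if... — more carefully: I would argue that the minimum value is $0$. Indeed, pick a constant spinor $v$ with $N(v)=|Y(v)|$, $Y(v)=-|Y(v)|\hat{\mathcal P}$ (such $v$ exists by the Lemma producing parallel spinors with prescribed $X,Y$, applied with the role of $X$ and $Y$ swapped, or directly since $e_1\phi=\mathbf{i}\phi$ has solutions). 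For this $v$ the contributions of $\mathcal{E},\mathcal{P}$ cancel, and then I would further select $v$ among such spinors to also kill the $\langle\mathcal{C},X(v)\rangle+\langle\omega(v),\mathcal{A}\rangle$ terms: using that when $N(v)=|Y(v)|$ Proposition \ref{Prop N=X implies Nw=XY} gives $N\omega=X\wedge Y$ and $X\perp Y$, the remaining charge terms become $\langle \mathcal{C}+N^{-1}\mathcal{A}(Y(v),\cdot),\,X(v)\rangle$; scaling $v$ to make $N(v)$ small (or using the parameter freedom in choosing $X(v)$ within $Y(v)^\perp$ together with the freedom from the $2^{\lfloor(n-2)/2\rfloor}$-dimensional space of such parallel spinors) I can drive this to a non-positive value, forcing the minimum to be $\le 0$, hence $=0$ by Corollary \ref{cor pmt maerten}. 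Therefore $\mathcal{H}\equiv 0$ on the constraint set, and in particular the spinor $v$ itself is mass-minimizing. By construction $v$ is type I ($N(v)=|Y(v)|$) — it remains to upgrade it to type I \emph{null}, i.e. also $N(v)=|X(v)|$; this is exactly the content of the Lemma immediately preceding Theorem \ref{E=P} (the statement "if $\mathcal{H}$ is minimized by a spinor with $N(u)=|Y(u)|$, then after replacing $u$ it also has $N(u)=|X(u)|$"), which I invoke directly. Passing from $v$ to $\psi^\infty$ via Proposition \ref{Prop construction killing spinor} and using Corollaries \ref{cor N=Y(u) implies N=Y(psi)} and (for the null condition) the $X$-analogue, the corresponding imaginary Killing spinor $\psi^\infty$ is type I null, as desired.

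The main obstacle I anticipate is the middle step: showing the minimum is genuinely $0$ and simultaneously arranging the type-I spinor to annihilate the center-of-mass and angular-momentum contributions, rather than just the energy–momentum part. This requires carefully using the degrees of freedom in the space of parallel spinors with prescribed $Y$ (the $2^{\lfloor(n-2)/2\rfloor}$ count and the freedom to rotate/scale $X$ inside $Y^\perp$) together with the identity $N\omega=X\wedge Y$ to see that $\langle \mathcal{C},X(v)\rangle+\langle\omega(v),\mathcal{A}\rangle$ can always be made $\le 0$ while holding $N(v)=|Y(v)|$ and $Y(v)\parallel -\mathcal{P}$ fixed; once the minimum is pinned at $0$, every such $v$ is automatically a minimizer and the preceding Lemma finishes the job. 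If instead the intended argument is softer — simply that $\mathcal{E}=|\mathcal{P}|$ already forces any minimizer to satisfy $N(u)=|Y(u)|$ because otherwise $\mathcal{E}N(u)+\langle\mathcal{P},Y(u)\rangle>0$ could not be compensated — then the proof shortens to: a minimizer has value $0$ (standard), the strict causal inequality forces $N(u)=|Y(u)|$ with $Y(u)\parallel-\mathcal{P}$, and the preceding Lemma promotes it to type I null; I would present this streamlined version as the primary line of argument and relegate the explicit charge-cancellation bookkeeping to a remark.
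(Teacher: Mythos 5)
Your plan follows essentially the same route as the paper's proof: set $N(u)=|Y(u)|$ with $Y(u)$ anti\nobreakdash-parallel to $\mathcal P$ so the $\mathcal E,\mathcal P$ contribution cancels, use $N\omega=X\wedge Y$ (from Proposition \ref{Prop N=X implies Nw=XY}) to collapse the remaining charges to the single linear term $\langle \mathcal C+\mathcal A(\cdot,Y(u)),X(u)\rangle$, and then pick the direction of $X(u)$ to make this nonpositive, forcing $\mathcal H=0$ by Corollary \ref{cor pmt maerten}. The one streamlining over your sketch is that you can choose $X(u)$ to be a \emph{unit} vector from the outset (the existence lemma for spinors with prescribed Dirac current supplies a unit-length $u$ with $|X(u)|=|Y(u)|=N(u)=1$), which makes $u$ type I null directly without the detour through the preceding lemma, and renders the scaling-$N(v)$ idea unnecessary since $\mathcal H$ is quadratic-homogeneous in $u$.
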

\begin{proof}
    Assume $\mathcal{E}=|\mathcal{P}|>0$. Let $u$ be a parallel spinor satisfying $N(u)=1$,
    $Y(u)=-|\mathcal{P}|^{-1}\mathcal{P}$, and with $X(u)$ to be determined later. Then 
    $\omega(u)=X(u)\wedge Y(u)$. Hence
    \[
        \mathcal{H}(\psi^\infty)
        =\left\langle \mathcal{C}+\mathcal{A}(\,\cdot\,,Y(u)),\,X(u)\right\rangle.
    \]
    We may choose a unit vector $X(u)\in\mathbb R^n$ perpendicular to $Y(u)$ such that 
    $\mathcal{H}(\psi^\infty)\le 0$. Thus $u$ is a type I null mass-minimizing spinor.
\end{proof}

    \begin{theorem}\label{thm: examples null}
      In the following cases there exists a type I null mass-minimizing spinor:
        \begin{enumerate}
        \item  The angular momentum $\mathcal A$ vanishes.
            \item $  \mathcal A$ is a simple two-form and the center of mass $  \mathcal C$ vanishes.
            \item $  \mathcal A$ is a simple two-form and the linear momentum $  \mathcal P$ vanishes.
        \end{enumerate}
    \end{theorem}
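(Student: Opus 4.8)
The plan is to handle each of the three cases by exhibiting an explicit constant spinor $u$ at infinity for which the mass functional
\[
\mathcal H(\psi^\infty)=\mathcal EN(u)+\langle \mathcal P,Y(u)\rangle+\langle \mathcal C,X(u)\rangle+\langle \omega(u),\mathcal A\rangle
\]
is forced to vanish (hence, by the BPS bound of Corollary \ref{cor pmt maerten}, is minimized) and for which simultaneously $N(u)=|X(u)|=|Y(u)|$, so that $u$ is type I null. By Corollary \ref{cor N=Y(u) implies N=Y(psi)}, a constant spinor with $N(u)=|X(u)|=|Y(u)|$ produces an imaginary Killing spinor $\psi^\infty$ that is again type I null, so it suffices to work entirely at the level of the constant spinor $u$ on $\mathbb R^n$ and the four algebraic objects $N(u),X(u),Y(u),\omega(u)$. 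Throughout, normalize $N(u)=1$; then by the Lemma constructing parallel spinors from a pair of perpendicular unit vectors, any choice of orthonormal $e_1=X(u)$, $e_2=Y(u)$ is realizable, and in that case $\omega(u)=X(u)\wedge Y(u)$ by Proposition \ref{Prop N=X implies Nw=XY}.

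\textbf{Case (1): $\mathcal A=0$.} Here the functional reduces to $\mathcal EN(u)+\langle\mathcal P,Y(u)\rangle+\langle\mathcal C,X(u)\rangle$. Since $\mathcal E\ge|\mathcal P|$, if $\mathcal E=|\mathcal P|$ we are already done by Theorem \ref{E=P}; if $\mathcal E>|\mathcal P|$, first use a boost (Proposition \ref{Prop boost}) to arrange $\mathcal P=0$, so the functional is $\mathcal E+\langle\mathcal C,X(u)\rangle$. Now choose $Y(u)$ to be any unit vector perpendicular to $\mathcal C$ (if $\mathcal C=0$ then all charges vanish by Corollary \ref{cor pmt maerten} after the boost and the statement is trivial via the AdS spinor; otherwise such a perpendicular direction exists since $n\ge 3\ge 2$), and then choose the unit vector $X(u)$ perpendicular to $Y(u)$ pointing in the direction that makes $\langle\mathcal C,X(u)\rangle\le -\mathcal E$; this is possible because within the hyperplane $Y(u)^\perp$ the vector $\mathcal C$ still has some component and $|X(u)|$ can be rotated to oppose it with $|\langle\mathcal C,X(u)\rangle|$ as large as the norm of that component — and if that component is too small we instead only need $\langle\mathcal C,X(u)\rangle\le 0$ isn't enough, so here the cleaner route is: since the ambient inequality forces $\mathcal H\ge 0$ for \emph{every} admissible $u$, and we can also feed in $-X(u)$, we get both $\mathcal E+\langle\mathcal C,X(u)\rangle\ge0$ and $\mathcal E-\langle\mathcal C,X(u)\rangle\ge0$, but that only bounds $|\langle\mathcal C,X(u)\rangle|\le\mathcal E$; to \emph{force equality to zero} I instead observe that varying $X(u)$ over the unit sphere in $Y(u)^\perp$ and minimizing, the minimum of $\mathcal E+\langle\mathcal C,X(u)\rangle$ is $\mathcal E-|\mathcal C_{Y(u)^\perp}|$, which by positivity is $\ge0$; choosing $Y(u)\perp\mathcal C$ makes $|\mathcal C_{Y(u)^\perp}|=|\mathcal C|$, so $\mathcal E\ge|\mathcal C|$, and one then argues as in Theorem \ref{E=P} that if $\mathcal E>|\mathcal C|$ one can still... — here I expect to need the full strength of the sharper algebraic analysis of \cite[Theorem 3.1]{ChruscielMaertenTod}, which already pins down exactly when equality in $\mathcal E\ge|\mathcal P|$ etc.\ is forced, and I will invoke it to conclude that in case $\mathcal A=0$ equality must in fact hold. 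This is the main obstacle: case (1) is not a pure ``choose $u$ cleverly'' argument but requires the structural fact that vanishing $\mathcal A$ together with the BPS bound already implies $\mathcal E=|\mathcal P|$ after a suitable boost.

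\textbf{Cases (2) and (3): $\mathcal A$ simple.} Write $\mathcal A=\alpha\wedge\beta$ for vectors $\alpha,\beta$ (possible since $\mathcal A$ is simple). In case (2), $\mathcal C=0$, so $\mathcal H(\psi^\infty)=\mathcal EN(u)+\langle\mathcal P,Y(u)\rangle+\langle\alpha\wedge\beta,X(u)\wedge Y(u)\rangle$; with $N(u)=1$ and $X(u)\perp Y(u)$ unit, this is $\mathcal E+\langle\mathcal P,Y(u)\rangle+\bigl(\langle\alpha,X(u)\rangle\langle\beta,Y(u)\rangle-\langle\alpha,Y(u)\rangle\langle\beta,X(u)\rangle\bigr)$. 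The plan is: first apply the boost of Proposition \ref{Prop boost} so that $\mathcal P=0$ (here one should check, or cite, that a boost preserves simplicity of $\mathcal A$ and the vanishing of $\mathcal C$, or absorbs $\mathcal P$ appropriately); then choose $Y(u)$ inside $\mathrm{span}\{\alpha,\beta\}$ and $X(u)$ the unit vector in $\mathrm{span}\{\alpha,\beta\}$ perpendicular to $Y(u)$, so that the bracket becomes (up to sign) the area $|\alpha\wedge\beta|=|\mathcal A|$ when $Y(u)$ is chosen to maximize it; then the functional is $\mathcal E\pm|\mathcal A|$ for a suitable sign of $X(u)$, and feeding $-X(u)$ into the BPS bound gives $\mathcal E\ge|\mathcal A|$, after which — exactly parallel to Theorem \ref{E=P} — choosing the sign with $\langle\alpha\wedge\beta,X(u)\wedge Y(u)\rangle=-|\mathcal A|\le -\mathcal E$ yields $\mathcal H(\psi^\infty)\le0$, hence $=0$, and $u$ is the desired type I null mass-minimizer. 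Case (3) is the mirror image: after the boost sets some charge to zero we should instead keep $\mathcal C$ and kill $\mathcal P$ by hypothesis $\mathcal P=0$ directly (no boost needed, or a boost that fixes $\mathcal P=0$), giving $\mathcal H(\psi^\infty)=\mathcal E+\langle\mathcal C,X(u)\rangle+\langle\alpha\wedge\beta,X(u)\wedge Y(u)\rangle$, and the same geometric optimization — now choosing $X(u)$ to simultaneously lie in a good position relative to both $\mathcal C$ and the plane of $\mathcal A$ — produces the bound and the minimizer. The only subtlety I anticipate in (2)--(3), beyond bookkeeping, is making sure the vector I want to use as $Y(u)$ (resp.\ $X(u)$) can be taken \emph{unit} and \emph{perpendicular} to the other while still lying in $\mathrm{span}\{\alpha,\beta\}$; since that span is at most two-dimensional this is automatic as long as $\mathcal A\ne0$ (and if $\mathcal A=0$ we are in case (1)). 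I would organize the write-up so that cases (2) and (3) share a single lemma: \emph{if $\mathcal A$ is simple and one of $\mathcal C,\mathcal P$ vanishes, then $\mathcal E\le\sup_{u}\bigl(-\langle\text{remaining charges}\rangle\bigr)$, whence equality holds and the optimal $u$ is type I null.}
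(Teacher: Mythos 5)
Your proposal rests on a misreading of what ``mass-minimizing'' means in this paper, and that misreading breaks the argument from the start. The definition after Theorem~\ref{pmt maerten} says that $\psi^\infty$ is mass-minimizing if it \emph{minimizes} $\mathcal H$ among all spinors with the same norm at infinity; it does \emph{not} say $\mathcal H(\psi^\infty)=0$. Theorem~\ref{thm: examples null} is therefore the purely algebraic claim that, under the stated conditions on the charges, the minimizer of the functional $u\mapsto\mathcal E N(u)+\langle\mathcal C,X(u)\rangle+\langle\mathcal P,Y(u)\rangle+\langle\mathcal A,\omega(u)\rangle$ over unit constant spinors can be taken to be type~I null. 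It makes no assertion that the minimum value is zero, and indeed the minimum need not vanish: for a generic initial data set with $\mathcal A=0$ the energy $\mathcal E$ can strictly exceed $\sqrt{|\mathcal C|^2+|\mathcal P|^2+|\mathcal C\wedge\mathcal P|}$, in which case $\mathcal H>0$ at every admissible $u$. Your plan to ``force the functional to vanish'' is therefore attempting to prove something strictly stronger than the theorem, and that stronger statement is false in general. This is exactly where your argument stalls in Case~(1): the contortions you describe (boosting, then trying to arrange $\langle\mathcal C,X(u)\rangle\le-\mathcal E$, then invoking Chru\'sciel--Maerten--Tod to force an equality that isn't there) are symptoms of chasing a conclusion that the hypotheses do not imply.

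The paper's proof takes a completely different and self-contained route: it packages the charge-pairing as a Hermitian form $\langle\mathcal Q u,u\rangle$ via the Clifford-algebra-valued matrix
\[
\mathcal Q=\begin{pmatrix}\mathbf{i}\mathcal P&\mathcal C-\mathbf{i}\mathcal A\\-\mathcal C-\mathbf{i}\mathcal A&-\mathbf{i}\mathcal P\end{pmatrix},
\]
computes $\mathcal Q^2$ in each of the three cases, reads off the extreme eigenvalue (and hence $\lambda_{\min}(\mathcal Q)=-\sqrt{\lambda_{\max}(\mathcal Q^2)}$ once a matching eigenvector is found), and then writes down an explicit type~I null $u$ whose value of $\langle\mathcal Q u,u\rangle$ attains this minimum (negating $X(u)$ if necessary to get the sign right). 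The geometric choices you make in Cases~(2) and (3) --- putting $X(u),Y(u)$ in the plane of $\mathcal A$, exploiting simplicity --- are essentially the same explicit construction the paper uses in the final step of each case, so that intuition is sound. What your write-up is missing is the eigenvalue analysis that certifies these explicit choices are actually global minimizers of the quadratic form; without some replacement for the $\mathcal Q^2$ computation you have no way to rule out a non-type-I-null $u$ achieving a strictly smaller value. Once the goal is corrected from ``make $\mathcal H=0$'' to ``exhibit a type~I null minimizer of the quadratic form,'' the boost of Proposition~\ref{Prop boost} is also unnecessary here (it is used later, in the proof of Theorem~\ref{Thm Intro determining causal type}), and the worry about whether a boost preserves simplicity of $\mathcal A$ or vanishing of $\mathcal C$ disappears.
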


           Here a two-form $A$ is called simple if $  \mathcal A=a e_1e_2$ for some constant $a\in\mathbb R$ and a choice of orthonormal basis $e_1,\dots,e_n$.
    In particular, in dimension $3$ every two-form is simple.

    \begin{proof}
Define 
\begin{align*}
    \mathcal Q=\begin{pmatrix}
        \mathbf{i}\mathcal P&\mathcal C-\mathbf{i}\mathcal A\\-\mathcal C-\mathbf{i}\mathcal A&-\mathbf{i}\mathcal P
    \end{pmatrix}.
\end{align*}
Then $\mathcal Q$ is a Hermitian matrix acting on the spinor space $\overline {\mathcal S}(\mathbb R^n)$.
Moreover,
\begin{align*}
    \langle \mathcal C, X(u)\rangle+\langle \mathcal P,Y(u)\rangle+\langle \mathcal A,\omega(u)\rangle=\langle \mathcal Q(u),u\rangle.
\end{align*}
Thus, to prove the above theorem, we need to find the smallest eigenvalue of $\mathcal Q$ and show that it is realized by a type I null spinor.

\medskip

For the first claim, we compute
\begin{align*}
    \mathcal Q^2=\begin{pmatrix}
        (|\mathcal C|^2+|\mathcal P|^2)\Id &-\mathbf{i}\mathcal C\wedge \mathcal P\\
     -\mathbf{i}\mathcal C\wedge \mathcal P &(|\mathcal C|^2+|\mathcal P|^2)\Id
    \end{pmatrix}
\end{align*}
where we use the convention $\mathcal P\wedge \mathcal C=\mathcal P\mathcal C-\mathcal C\mathcal P$.
Next, observe that
\begin{align*}
   \mathcal F\mathcal Q^2\mathcal F^{-1}= \begin{pmatrix}
          (|\mathcal C|^2+|\mathcal P|^2)\Id-\mathbf{i}\mathcal C\wedge \mathcal P&0\\0& (|\mathcal C|^2+|\mathcal P|^2)\Id+\mathbf{i}\mathcal C\wedge \mathcal P
    \end{pmatrix}
\end{align*}
where
\begin{align*}
  \mathcal F=  \begin{pmatrix}
    \Id&\Id\\-\Id&\Id
\end{pmatrix}.
\end{align*}
Diagonalizing $\mathbf{i}\mathcal C\wedge \mathcal P$, we obtain that the smallest eigenvalue of $\mathcal Q^{2}$ is given by 
\begin{align*}
    \lambda_{\operatorname{min}}=-|\mathcal C|^2-|\mathcal P|^2-|\mathcal C\wedge \mathcal P|.
\end{align*}
Next, let $\Pi$ be the plane containing $\text{span}\{\mathcal P,\mathcal C\}$.
Without loss of generality $\mathcal C\ne0$.
Let $e_1=\mathcal C|\mathcal C|^{-1}$ and $e_2\perp e_1$ with $|e_1|=1$.
Choose $u$ such that $X(u)=\cos\alpha e_1+\sin\alpha e_2$ and $Y(u)=\cos\alpha e_2-\sin\alpha e_1 $ where $\alpha $ satisfies
$|\mathcal C|\sin\alpha+|\mathcal P|\cos(\theta-\alpha)=0$ and $\theta$ is given by $\mathcal P=\cos\theta |\mathcal P|e_1+\sin\theta |\mathcal P|e_2$.
Then
\begin{align*}
    \langle \mathcal C,X(u)\rangle+\langle \mathcal P,Y(u)\rangle=&
    |\mathcal C|\cos\alpha+|\mathcal P|\sin(\theta-\alpha)
\end{align*}
and
\begin{align*}
     &( \langle \mathcal C,X\rangle+\langle \mathcal P,Y\rangle)^2\\
    =&|\mathcal C|^2+ |\mathcal P|^2 +2|\mathcal C||\mathcal P|\sin\theta    \\
    &-|\mathcal C|^2\sin^2\alpha-|\mathcal P|^2\cos^2(\theta-\alpha)-2|\mathcal C||\mathcal P|(\sin\theta-\cos\alpha\sin(\theta-\alpha))\\
    =&|\mathcal C|^2+|\mathcal P|^2+|\mathcal P\wedge \mathcal C|.
\end{align*}
where we used that
\begin{align*}
    \sin\theta-\cos\alpha\sin(\theta-\alpha)=\sin\alpha\cos(\theta-\alpha).
\end{align*}
If $ \langle \mathcal C,X(u)\rangle+\langle \mathcal P,Y(u)\rangle >0$, then we can choose $\check{u}$ such that $X(\check{u})=-X(u)$ and $Y(\check{u})=-Y(u)$. 
This proves the first claim.

\medskip

For the second claim, we have
\begin{align*}
    \mathcal Q^2=\begin{pmatrix}
        (|\mathcal C|^2+\frac12|\mathcal A|^2)\Id-\mathbf{i}(\mathcal C\mathcal A-\mathcal A\mathcal C)&0\\
        0&(|\mathcal C|^2+\frac12|\mathcal A|^2)\Id+\mathbf{i}(\mathcal C\mathcal A-\mathcal A\mathcal C)
    \end{pmatrix}
\end{align*}
Note that $\mathcal C\mathcal A-\mathcal A\mathcal C=-2\mathcal A(\mathcal C)$.
Therefore, the smallest eigenvalue of $\mathcal Q^2$ is given by
\begin{align*}
    \lambda_{\operatorname{min}}=-|\mathcal C|^2-\frac12|\mathcal A|^2-2|\mathcal A(\mathcal C)|.
\end{align*}
Next, let $\Pi$ be the plane spanned by $\mathcal A$ where without loss of generality $\mathcal A\ne0$.
Denote with $\mathcal C^\intercal$ and $\mathcal C^\perp$ the components of $\mathcal C$ which are tangential and perpendicular to $\Pi$ respectively.
Let $A,B$ be perpendicular unit-vectors spanning $\Pi$ such that $A=\mathcal C^\intercal|\mathcal C^\intercal|^{-1}$ in case $\mathcal C^\intercal\ne0$.
Moreover, let $C$ be a unit-vector perpendicular to $\Pi$ satisfying $C=\mathcal C^\perp|\mathcal C^\perp|^{-1}$ in case $\mathcal C^\perp\ne0$.
Choose the angle $\beta$ such that $\sin\beta(|\mathcal C^\intercal|+\frac12|\mathcal A|)=\cos\beta|\mathcal C^\perp|$.
Let $u$ be a spinor such that $X(u)=\cos\beta A+\sin\beta C$ and $Y(u)=B$.
Then
\begin{align*}
    \langle \mathcal C,X(u)\rangle+\langle \mathcal A,\omega(u)\rangle=&(|\mathcal C^\intercal|+\frac12|\mathcal A|)\cos\beta+|\mathcal C^\perp|\sin\beta
\end{align*}
and a short computation yields 
\begin{align*}
    ((|\mathcal C^\intercal|+\frac12|\mathcal A|)\cos\beta+|\mathcal C^\perp|\sin\beta)^2=&|\mathcal C|^2+\frac12|\mathcal A|^2+|\mathcal A||\mathcal C^\intercal|\\
    =&|\mathcal C|^2+\frac12|\mathcal A|^2+2|\mathcal A(\mathcal C)|.
\end{align*}
If $\langle \mathcal C,X(u)\rangle+\langle \mathcal A,\omega(u)\rangle >0$, then we can choose $\check{u}$ such that $X(\check{u})=-X(u)$ and $Y(\check{u})=Y(u)$ which implies $\omega(\check{u})=-\omega(u)$. 
This proves the second claim.

\medskip

For the third claim, we again compute
\begin{align*}
    \mathcal Q^2=\begin{pmatrix}
        (|\mathcal P|^2+\frac12|\mathcal A|^2)\Id&-\mathcal A\mathcal P+\mathcal P\mathcal A\\
        \mathcal A\mathcal P-\mathcal P\mathcal A&(|\mathcal P|^2+\frac12|\mathcal A|^2)\Id
    \end{pmatrix}
\end{align*}
Using  $\mathcal P\mathcal A-\mathcal A\mathcal P=-2\mathcal A(\mathcal P)$, we proceed as above to find that the smallest eigenvalue of $\mathcal Q^2$ satisfies
\begin{align*}
    \lambda_{\operatorname{min}}=-|\mathcal P|^2-\frac12|\mathcal A|^2-2|\mathcal A(\mathcal P)|.
\end{align*}
The corresponding type I null spinor is constructed as in the proof of the second claim.
    \end{proof}

In the reverse direction we have the following result:

\begin{proposition}\label{Prop: example non-null}
    Suppose that $\mathcal P=0$ and $\mathcal C=0$.
    Moreover, assume that $\mathcal A$ satisfies $\mathcal A=\sum_{a=1}^{k}\lambda_ae_{2a-1}\wedge e_{2a} $ for non-zero constants $\lambda_a$ with $2\le k\le \lfloor\frac{n}{2}\rfloor$. Then there exists no null mass-minimizing spinor.
\end{proposition}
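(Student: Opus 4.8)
The plan is to argue by contradiction: suppose a null mass-minimizing spinor $\psi^\infty$ exists. By the Lemma preceding Theorem \ref{E=P}, together with Corollary \ref{cor N=Y(u) implies N=Y(psi)}, a null minimizer can be upgraded to a \emph{type I null} minimizer, so we may assume the underlying constant spinor $u$ satisfies $N(u)=|X(u)|=|Y(u)|$, hence by Proposition \ref{Prop N=X implies Nw=XY} we have $N(u)\omega(u)=X(u)\wedge Y(u)$ with $X(u)\perp Y(u)$. Normalizing $N(u)=1$ and writing $X=X(u)$, $Y=Y(u)$ for orthonormal vectors, the mass functional reduces (since $\mathcal P=\mathcal C=0$) to $\mathcal H(\psi^\infty)=\langle \mathcal A,\omega(u)\rangle=\langle \mathcal A,X\wedge Y\rangle$, which must be $\le 0$ for \emph{every} unit $Y\perp X$ (otherwise flipping $u$ to $\check u$ with $X(\check u)=X$, $Y(\check u)=-Y$ would make it negative, contradicting minimality at $0$ — and in fact we need the minimum value to be exactly $0$ since $\mathcal H\ge 0$ by Corollary \ref{cor pmt maerten}).

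The heart of the matter is the following linear-algebra fact: for $\mathcal A=\sum_{a=1}^k\lambda_a\,e_{2a-1}\wedge e_{2a}$ with all $\lambda_a\ne 0$ and $k\ge 2$, there is \emph{no} pair of orthonormal vectors $X,Y$ with $\langle\mathcal A,X\wedge Y\rangle\ge|\langle\mathcal A,X\wedge Y'\rangle|$ for all unit $Y'\perp X$; equivalently, one cannot have $\langle\mathcal A, X\wedge Y'\rangle$ of one sign for all $Y'\perp X$ together with the minimum of the eigenvalue problem being achieved by a type I spinor. I would make this precise via the Hermitian matrix $\mathcal Q$ from the proof of Theorem \ref{thm: examples null}: here $\mathcal Q=\begin{pmatrix}0&-\mathbf i\mathcal A\\-\mathbf i\mathcal A&0\end{pmatrix}$, so after the conjugation by $\mathcal F$ one gets $\mathcal Q^2 \sim \operatorname{diag}\big(\tfrac12|\mathcal A|^2\Id,\tfrac12|\mathcal A|^2\Id\big)$ — i.e. $\mathcal Q^2=\tfrac12|\mathcal A|^2\Id$ since $\mathcal A\mathcal A=-|\mathcal A|^2$ when all the $\lambda_a$ are distinct in absolute value... more carefully, $\mathcal A$ acts on $\overline{\mathcal S}(\mathbb R^n)$ with eigenvalues $\pm\mathbf i(\pm\lambda_1\pm\cdots\pm\lambda_k)$ on the relevant weight spaces. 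The smallest eigenvalue of $\mathcal Q$ is $\lambda_{\min}=-\sum_a|\lambda_a|=-\tfrac12\|\mathcal A\|_{\text{op-ish}}$, attained on a single weight space $W$ (all signs aligned). The key point: because $k\ge 2$, any spinor $u$ in this lowest eigenspace $W$ is a simultaneous eigenvector of \emph{each} $\mathbf i e_{2a-1}e_{2a}$, which forces (computing $X(u),Y(u)$ in each $2$-plane $\operatorname{span}\{e_{2a-1},e_{2a}\}$) that $X(u)$ and $Y(u)$ each have nonzero components in at least two of the coordinate $2$-planes, hence $|X(u)|^2<N(u)^2$ or $|Y(u)|^2<N(u)^2$ strictly — so $u$ is \emph{not} type I null.

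I would then combine this with the upgrade lemma to finish: the minimum of $\mathcal H$ over all constant spinors of fixed norm is $\lambda_{\min}<0$... wait — but $\mathcal H\ge 0$ always, so in fact the relevant constraint is $|u|=|v|$, i.e. we minimize $\langle \mathcal Q u,u\rangle$ only over the subspace where the two $\mathcal S$-components have equal norm, and on that subspace the infimum is $0$, attained exactly on type II spinors. The cleanest route is: a null mass-minimizer, if it existed, could be taken type I null by the Lemma before Theorem \ref{E=P}, giving a type I null $u$ with $\langle\mathcal Q u,u\rangle=0$; but the analysis above shows every $u$ realizing the constrained minimum $0$ has both $|X(u)|<N(u)$ and $|Y(u)|<N(u)$, contradiction. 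The main obstacle is the representation-theoretic bookkeeping of how $X(u),Y(u)$ sit inside $\mathbb R^n$ when $u$ lies in a joint eigenspace of the commuting operators $\mathbf i e_{2a-1}e_{2a}$ for $a=1,\dots,k$ — in particular checking that the constraint $k\ge 2$ (rather than $k=1$, which is the simple case covered by Theorem \ref{thm: examples null}) is exactly what obstructs type I-ness, and correctly handling the $|u|=|v|$ constraint so that the minimum is genuinely $0$ and genuinely forces type II.
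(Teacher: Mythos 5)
Your plan is the same one the paper follows: identify the lowest eigenspace of the Hermitian form $\langle\mathcal Q u,u\rangle$ via the commuting operators attached to the $2$-planes of $\mathcal A$, and show that for $k\ge2$ this eigenspace is incompatible with nullness (the paper produces an explicit joint eigenvector $u$ of $e_0$ and $\mathbf{i}e_0e_{2a-1}e_{2a}$, computes $X(u)=Y(u)=0$ and $\omega(u)=-N(u)\sum_a\epsilon_ae_{2a-1}\wedge e_{2a}$, and evaluates $\langle\omega(u),\mathcal A\rangle$). But two of your concrete steps fail. First, the minimum of $\mathcal H$ is not forced to be $0$: a mass-minimizing spinor merely attains $\mathcal H(u)=\mathcal EN(u)+\langle\mathcal Q u,u\rangle=(\mathcal E+\lambda_{\operatorname{min}}(\mathcal Q))N(u)$ over spinors of a fixed norm, the BPS bound only gives $\mathcal E+\lambda_{\operatorname{min}}\ge0$, and the constraint $|u|=|v|$ in the definition fixes the total norm of $u$, not the relative norms of its two $\mathcal S$-components. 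The comparison you want lives in $\langle\mathcal Q u,u\rangle$ alone: for any null $\psi^\infty$, Proposition \ref{Prop N=X implies Nw=XY} and Corollary \ref{cor Nw=XY psi -> u} already make $\omega(u)$ a simple two-form (no type-I upgrade required), so $|\langle\mathcal A,\omega(u)\rangle|$ is bounded by a multiple of $\max_a|\lambda_a|\cdot N(u)$, whereas $\lambda_{\operatorname{min}}(\mathcal Q)$ scales with $-\sum_a|\lambda_a|$; for $k\ge2$ the sum strictly beats the max.

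Second, the eigenspace bookkeeping is off. For $k\ge2$ one does not have $\mathcal Q^2=\tfrac12|\mathcal A|^2\Id$: the block $-\mathbf{i}\mathcal A$ acts by the scalar $-\sum_a\lambda_a\sigma_a$ on each weight string $(\sigma_1,\dots,\sigma_k)$, so $\mathcal Q^2$ is block-scalar rather than scalar. The $\lambda_{\operatorname{min}}$-eigenspace of $\mathcal Q$ consists of joint eigenvectors of $\mathbf{i}e_0e_{2a-1}e_{2a}$ — the $e_0$ factor matters — and it is built from \emph{both} extremal weight strings $\pm(\operatorname{sgn}\lambda_1,\dots,\operatorname{sgn}\lambda_k)$; a generic element is not an eigenvector of the individual $\mathbf{i}e_{2a-1}e_{2a}$, contrary to what you claim. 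The conclusion you draw, that $X(u),Y(u)$ ``have nonzero components in at least two planes'', is also not what happens: every $\lambda_{\operatorname{min}}$-eigenvector has $X(u)=Y(u)=0$ identically, because each $e_j$ flips at most one sign in a weight string and therefore, for $k\ge2$, sends one extremal weight space into something orthogonal to the other, killing all the Dirac-current pairings. These slips are repairable and the overall strategy survives, but as written the argument does not close.
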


\begin{proof}
Let $\epsilon_a=-\operatorname{sgn} \lambda_a$.
According to \cite[Lemma 3.5]{ChruscielMaertenTod}, there exists a spinor $u$ such that 
\begin{equation*}
    e_0u=\epsilon_0u,\quad\quad
    \mathbf i e_0e_{2a-1}e_{2a}u=\epsilon_au
\end{equation*}
for all $a\ge1$, where $\epsilon_0=1$ or $-1$. 
A short computation yields
\begin{equation*}
Y(u)=0,\quad\quad X(u)=0,\quad\quad \omega(u)=-N(u)\sum_a\epsilon_a e_{2a-1}\wedge e_{2a}.
\end{equation*}
Evaluating $\langle \omega(u),\mathcal A\rangle$, the result follows.
\end{proof}

\subsection{Boosted initial data sets}

Recall from Corollary \ref{cor pmt maerten} that $\mathcal{E}=\mathcal{H}(2t,0)$ and $\mathcal{P}_i=\mathcal{H}(2x_i,0)$ in hyperboloidal coordinates. 
By rotating coordinates, we can always ensure that $\mathcal{P}_1=|\mathcal{P}|$ and $\mathcal{P}_\alpha=0$ for $\alpha=2,\cdots,n$. 
Suppose that $\mathcal{E}>|\mathcal{P}|$. 
Let $\Psi$ be a linear transformation in $SO(1,n)$ mapping $\mathbb{R}^{1,n}$ to $\mathbb{R}^{1,n}$ such that $\Psi(\mathcal{E},\mathcal{P})=(\sqrt{\mathcal{E}^2-|\mathcal{P}|^2},0,\cdots,0)$.  
More precisely,
\begin{equation*}
    \Psi=\begin{bmatrix}
        \cosh \theta & \sinh \theta &
        \\ \sinh \theta& \cosh \theta &
        \\ & & I_{n-1}
    \end{bmatrix},
    \quad \text{where}\; \cosh \theta=\frac{\mathcal{E}}{\sqrt{\mathcal{E}^2-|\mathcal{P}|^2}},\; \sinh \theta=\frac{-|\mathcal{P}|}{\sqrt{\mathcal{E}^2-|\mathcal{P}|^2}}.
\end{equation*}
Since $t$ and $x_i$ are functions on $\mathbb{H}^n$, and $\Psi$ is a boost map on $\mathbb{H}^n$, we have 
\begin{align*}
    \widetilde{t}:=&t\circ \Psi=t\frac{\mathcal{E}}{\sqrt{\mathcal{E}^2-|\mathcal{P}|^2}}-x_1\frac{|\mathcal{P}|}{\sqrt{\mathcal{E}^2-|\mathcal{P}|^2}}, 
    \\
    \widetilde{x}_1:=&x_1\circ \Psi=x_1\frac{\mathcal{E}}{\sqrt{\mathcal{E}^2-|\mathcal{P}|^2}}-t\frac{|\mathcal{P}|}{\sqrt{\mathcal{E}^2-|\mathcal{P}|^2}},
    \\ \widetilde{x}_\alpha:=&x_\alpha\circ \Psi=x_\alpha,\;\;\;
    \alpha=2,\dots,n.
\end{align*}
Consequently,
\begin{equation*}
    \mathcal{H}(2\widetilde{t},0)=\sqrt{\mathcal{E}^2-|\mathcal{P}|^2},\quad \mathcal{H}(2\widetilde{x}_i,0)=0,\; i=1,\dots,n.
\end{equation*}
Denote by $\Phi$ is the diffeomorphism from the designated end $M_{end}$ to $\mathbb{H}^n$ and recall that the mass functional $\mathcal{H}$ is defined in terms of $\Phi$.  
Moreover, let $\widetilde{\mathcal{H}}$ be the mass functional corresponding to $\Psi\circ \Phi$.

\begin{lemma}
 Let $\mathcal H,\widetilde{\mathcal H}$ be as above. Then we have $\widetilde{\mathcal{H}}(N,X)=\mathcal{H}(N,X)$.
\end{lemma}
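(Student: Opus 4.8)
The plan is to use the diffeomorphism naturality of the boundary integral defining $\mathcal H$ together with the only relevant feature of $\Psi$: as an element of $SO(1,n)$ preserving the hyperboloid it acts on $(\mathbb H^n,b)$ as an isometry, so $\Psi^\ast b=b$ and $\Psi^\ast\mathring\nabla=\mathring\nabla$. First I would record what replacing $\Phi$ by $\Psi\circ\Phi$ does to the data: writing $\hat g=\Phi_\ast g$, $\hat k=\Phi_\ast k$ and transporting the pair $(N,X)$ (which in the mass formula arises from a spinor field and is therefore carried along with the identification) to $\hat N=\Phi_\ast N$, $\hat X=\Phi_\ast X$, the identification $\Psi\circ\Phi$ produces exactly $(\Psi_\ast\hat g,\Psi_\ast\hat k)$ and $(\Psi_\ast\hat N,\Psi_\ast\hat X)$. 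I would also check that $\Psi_\ast\hat g-b=\Psi_\ast(\hat g-b)$ still lies in $C^{s,a}_{-q}$: an isometry changes the $b$-distance to the origin, and hence the weight $r$, by at most the fixed constant $\mathrm{dist}_b(0,\Psi^{-1}0)$, so $\Psi\circ\Phi$ is again an admissible chart of the same order and the hypotheses $\mu\bar r,|J|_g\bar r\in L^1$ are unaffected.

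The key step is then a change of variables. The integrand $\bigl(\mathbb U^i(N)[\hat g,b]+\mathbb V^i(X)[\hat k]\bigr)\nu_i\,dA$, viewed as a density on a hypersurface, is a universal expression in $(\hat g,\hat k,b,\mathring\nabla,N,X)$, so applying the isometry $\Psi$ to $\hat g,\hat k,\hat N,\hat X$ with $b,\mathring\nabla$ fixed simply pushes it forward by $\Psi$. Substituting $\Psi$ in the defining limit, and using that $\Psi$ maps the sphere $S_R\subset\mathbb H^n$ to a hypersurface on which $r\asymp R$ (with constants depending only on $\mathrm{dist}_b(0,\Psi^{-1}0)$), one obtains
\begin{equation*}
\widetilde{\mathcal H}(N,X)=\lim_{R\to\infty}\frac{1}{2(n-1)\omega_{n-1}}\int_{\Psi^{-1}(S_R)}\bigl(\mathbb U^i(N)[\hat g,b]+\mathbb V^i(X)[\hat k]\bigr)\nu_i\,dA .
\end{equation*}
Thus $\widetilde{\mathcal H}(N,X)$ is exactly the flux defining $\mathcal H(N,X)$, evaluated over the exhaustion $\{\Psi^{-1}(S_R)\}$ of $M_{end}$ in place of $\{S_R\}$. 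To finish I would invoke exhaustion-independence: by the Witten-type divergence identity in the proof of Theorem \ref{pmt maerten}, together with the decay of the spinor $\psi$ and the bounds $\mu\bar r,|J|_g\bar r\in L^1$, the vector field $\mathbb U^i(N)[\hat g]+\mathbb V^i(X)[\hat k]$ differs from a field with $L^1$ divergence near infinity by one of integrable flux, so the limits over the two exhaustions agree and $\widetilde{\mathcal H}(N,X)=\mathcal H(N,X)$.

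I expect exhaustion-independence of the limiting boundary term to be the only genuinely delicate step, and it is exactly where the weighted integrability of $\mu$ and $J$ enters; the bookkeeping of the first two paragraphs—treating $(N,X)$ on the same footing as $(g,k)$, so that $\mathcal H$ and $\widetilde{\mathcal H}$ describe one asymptotic geometry in two charts related by $\Psi$—is routine but must be written out carefully. A slightly softer alternative is to argue through the mass formula of Theorem \ref{pmt maerten}: its right-hand side $\int_M\bigl(|\nabla_i\psi+\tfrac12k_{ij}e_je_0\psi+\tfrac12\mathbf{i}e_i\psi|^2+\tfrac12\mu|\psi|^2+\tfrac12\langle\psi,Je_0\psi\rangle\bigr)$ is manifestly independent of the chart at infinity, so once one knows the Dirac solution asymptotic to a prescribed boundary spinor is unique in the relevant weighted space, the two bulk integrals, hence the two boundary terms, coincide.
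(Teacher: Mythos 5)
Your proposal is correct and matches the paper's argument: the paper likewise observes that since $\Psi$ is an isometry of $(\mathbb H^n,b)$ the integrands of $\mathcal H$ and $\widetilde{\mathcal H}$ coincide, and then invokes the divergence identity $\operatorname{div}(N\mathbb U+\mathbb V(X))=2N\mu+2\langle J,X\rangle+O(r^{1-2q})$ together with $N\mu,\langle J,X\rangle\in L^1$ to conclude that the limiting flux does not depend on the exhaustion, exactly as you do in your second and third paragraphs. Your bookkeeping of the chart change and the softer uniqueness-based alternative are extra but consistent; the published proof is simply a compressed version of the same two-step argument.
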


\begin{proof}
    Since $\Psi$ is an isometry on $\mathbb{H}^n$, the integrands of $\widetilde{\mathcal{H}}(N,X)$ and $\mathcal{H}(N,X)$ are the same. 
    Moreover,  according to \cite[Equation 2.11]{CJL},
    \begin{equation*}
        \operatorname{div}\left(N\mathbb{U} + \mathbb{V}(X)\right)=2N(\mu+n(n-1))+2\langle J,X\rangle+O(r^{1-2q}),
    \end{equation*}
    then combined with $N\mu,\langle J,X\rangle\in L^1(M^n)$, the integrals are independent of the chosen exhaustion.
\end{proof}

\begin{proposition}\label{Prop boost}
Suppose $\mathcal{E}>|\mathcal{P}|$. Then   there exists a boost map $\Psi$ such that the corresponding energy $\widetilde{\mathcal{E}}=\sqrt{\mathcal{E}^2-|\mathcal{P}|^2}$ and the linear momentum $\widetilde{\mathcal{P}}=0$.
\end{proposition}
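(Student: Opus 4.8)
The plan is to exhibit the boost explicitly and then reduce the statement to the linearity of the mass functional $\mathcal H$ in its arguments, together with the chart-independence established in the Lemma just above.

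First I would normalize, as in the discussion preceding the statement, by rotating the hyperboloidal coordinates with an element of $SO(n)$ so that $\mathcal P_1=|\mathcal P|$ and $\mathcal P_\alpha=0$ for $\alpha=2,\dots,n$; this changes nothing, since $SO(n)$ acts by hyperbolic isometries and, by the Lemma, leaves $\mathcal H$ unaffected. I would then take $\Psi\in SO(1,n)$ to be the boost in the $(t,x_1)$-plane with $\cosh\theta=\mathcal E/\sqrt{\mathcal E^2-|\mathcal P|^2}$ and $\sinh\theta=-|\mathcal P|/\sqrt{\mathcal E^2-|\mathcal P|^2}$; the hypothesis $\mathcal E>|\mathcal P|$ is exactly what makes $\theta$ well defined (and when $\mathcal P=0$ one simply takes $\Psi=\operatorname{Id}$). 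Composing the coordinate functions with $\Psi$ gives, as recorded above, $\widetilde t=\cosh\theta\,t+\sinh\theta\,x_1$, $\widetilde x_1=\sinh\theta\,t+\cosh\theta\,x_1$, and $\widetilde x_\alpha=x_\alpha$.

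The heart of the matter is then purely algebraic. Since $\mathbb U^i(N)$ is linear in $N$ and $\mathbb V^i(X)$ is linear in $X$, the functional $(N,X)\mapsto\mathcal H(N,X)$ is linear; combining this with $\mathcal E=\mathcal H(2t,0)$ and $\mathcal P_i=\mathcal H(2x_i,0)$ yields
\begin{align*}
\mathcal H(2\widetilde t,0)=\cosh\theta\,\mathcal E+\sinh\theta\,\mathcal P_1=\frac{\mathcal E^2-|\mathcal P|^2}{\sqrt{\mathcal E^2-|\mathcal P|^2}}=\sqrt{\mathcal E^2-|\mathcal P|^2},
\end{align*}
while
\begin{align*}
\mathcal H(2\widetilde x_1,0)=\sinh\theta\,\mathcal E+\cosh\theta\,\mathcal P_1=0,\qquad \mathcal H(2\widetilde x_\alpha,0)=\mathcal H(2x_\alpha,0)=0\quad(\alpha\ge2).
\end{align*}

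Finally, the mass functional $\widetilde{\mathcal H}$ attached to the boosted chart $\Psi\circ\Phi$ agrees with $\mathcal H$ by the Lemma, and the coordinate functions defining the boosted energy and momentum are precisely $\widetilde t$ and $\widetilde x_i$; hence $\widetilde{\mathcal E}=\mathcal H(2\widetilde t,0)=\sqrt{\mathcal E^2-|\mathcal P|^2}$ and $\widetilde{\mathcal P}_i=\mathcal H(2\widetilde x_i,0)=0$, which is the assertion. There is no serious obstacle here: the proposition is essentially a bookkeeping consequence of the linearity of $\mathcal H$ and the Lemma. The one point that is not entirely formal — already dispatched by the Lemma — is that $\mathcal H$ is a priori only a limit of boundary integrals over an arbitrary exhaustion, so that comparing the value computed in the chart $\Phi$ with the one computed in $\Psi\circ\Phi$ (whose natural exhausting spheres differ) requires the divergence identity $\operatorname{div}(N\mathbb U+\mathbb V(X))=2N(\mu+n(n-1))+2\langle J,X\rangle+O(r^{1-2q})$ and the integrability $N\mu,\langle J,X\rangle\in L^1(M^n)$ from the asymptotically AdS hypotheses.
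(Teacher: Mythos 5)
Your argument is correct and takes essentially the same route as the paper: rotate so that $\mathcal P$ points along $x_1$, apply the explicit $(t,x_1)$-boost, use linearity of $(N,X)\mapsto\mathcal H(N,X)$ to compute $\mathcal H(2\widetilde t,0)=\sqrt{\mathcal E^2-|\mathcal P|^2}$ and $\mathcal H(2\widetilde x_i,0)=0$, and invoke the preceding Lemma (chart-independence of $\mathcal H$ via the divergence identity and $L^1$-integrability of $N\mu$ and $\langle J,X\rangle$) to identify $\widetilde{\mathcal H}$ with $\mathcal H$. The paper merely compresses the linearity computation into the prose immediately before the proposition, so your spelling it out is just a more explicit rendering of the same proof.
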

\begin{proof}
This follows from
    \begin{equation*}
        \widetilde{\mathcal{E}}=\widetilde{\mathcal{H}}(2\widetilde{t},0)=\mathcal{H}(2\widetilde{t},0)=\sqrt{\mathcal{E}^2-|\mathcal{P}|^2},\quad 
        \widetilde{\mathcal{P}}_i=\widetilde{\mathcal{H}}(2\widetilde{x}_i,0)=\mathcal{H}(2\widetilde{x}_i,0)=0.
    \end{equation*}
\end{proof}

\begin{proof}[Proof of Theorem \ref{Thm Intro determining causal type}]
    This follows from combining Theorem \ref{E=P}, \ref{thm: examples null}, Proposition \ref{Prop: example non-null}, and Proposition \ref{Prop boost}.
\end{proof}


\section{Siklos waves}\label{SS: Siklos}

\begin{definition}
We call a Lorentzian manifold $(\overline M^{n+1},\overline g)$ a \emph{Siklos wave} if it admits global \emph{AdS-Brinkmann coordinates}, i.e. $\overline M^{n+1}=\mathbb R^n\times \mathbb R_{>0}$ and
\begin{align*}
    \overline g=\frac{1}{u^{2}_n}\Big( 2du_1dt + Ldu_1^{2}+\delta_{\alpha\beta}du_{\alpha}du_{\beta}\Big),
\end{align*}
where $u_n>0$ and $\alpha,\beta=2,\dots,n$.
Here $L$ is the \emph{wave profile function}, which is independent of $t$, and on each $\{u_1=\text{constant}\}$ slice satisfies the \emph{Siklos equation}
\begin{align*}
    \Big(\Delta_{\mathbb H^{n-1}}-2u_n\partial_{n}\Big)L\le0.
\end{align*}
Equivalently, $\mathcal{L}=u_n^{-1}(L-1)$ satisfies
\begin{align*}
     \Big(\Delta_{\mathbb H^{n-1}}-(n-1)\Big)\mathcal{L}\le 0
\end{align*}
\end{definition}

\begin{remark}
We point out that $\tilde g=u_n^2\overline g$ is a pp-wave in Brinkmann coordinates.
Hence, Siklos waves are conformal to pp-waves.
However, we note that the wave profile function of a pp-wave satisfies a different partial differential inequality
\begin{align*}
    \Delta_{\mathbb R^{n-1}}L\le0.
\end{align*}
\end{remark}

Siklos waves in this general form were first discovered by Siklos in \cite{Siklos85}. For an overview, we refer to \cite{Calvaruso2022Siklos} and for their physical interpretation to \cite{Podolsky1998SiklosAdSWaves}.

\begin{proposition}\label{Prop: stress energy siklos wave}
Let $(\overline M^{n+1},\overline g)$ be a Siklos wave with wave profile function $L$.
Then the Ricci tensor is given by
\begin{align*}
    \overline \Ric=-n\overline{g}+\left(u_n\partial_{n}L-\frac{1}{2}\Delta_{\mathbb{H}^{n-1}}L\right)u_n^{-2}du_1^2
\end{align*}
Moreover, the scalar curvature satisfies
\begin{align*}
    \overline{\R}=-n(n+1)
\end{align*}
In particular, the Einstein tensor is given by
\begin{align*}
    \overline{\Ric}-\frac{1}{2}\overline{\R}\overline{g}-\frac{n(n-1)}{2} \overline{g}=8\pi T=\left(u_n\partial_{u_n}L-\frac{1}{2}\Delta_{\mathbb{H}^{n-1}}L\right)u_n^{-2}du_1^2.
\end{align*}
Let $\hat{\mathbf{n}}=u_nL^{-\frac{1}{2}}(L\partial_t-\partial_{u_1})$ be the timelike future-pointing unit normal vector to $\{t=0\}-$slice. Then 
\begin{align*}
    \mu=&8\pi T(\hat{\mathbf{n}}, \hat{\mathbf{n}})=L^{-1}\left(u_n\partial_{u_n}L-\frac{1}{2}\Delta_{\mathbb{H}^{n-1}}L\right)\\
    J=&8\pi T(\hat{\mathbf{n}}, \cdot)=-L^{-\frac{1}{2}}\left(u_n\partial_{u_n}L-\frac{1}{2}\Delta_{\mathbb{H}^{n-1}}L\right) u_n^{-1}du_1
\end{align*}
\end{proposition}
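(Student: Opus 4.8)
## Proof proposal

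The plan is to compute the Ricci tensor of the Siklos metric directly in AdS-Brinkmann coordinates, exploiting the conformal relation to a pp-wave to minimize the calculation. Write $\overline g = u_n^{-2}\tilde g$ where $\tilde g = 2du_1 dt + L\, du_1^2 + \delta_{\alpha\beta} du_\alpha du_\beta$ is a pp-wave metric in Brinkmann form. First I would recall (or record) the standard fact that a pp-wave of this type has Ricci tensor $\widetilde{\Ric} = \tfrac12(\Delta_{\mathbb R^{n-1}} L)\, du_1^2$ where $\Delta_{\mathbb R^{n-1}}$ acts on the coordinates $u_2,\dots,u_n$ — this follows from the fact that the only nonzero Christoffel symbols involve $\partial L$ and that $L$ is independent of $t$, so the curvature is entirely driven by the Hessian of $L$ in the transverse directions. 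Crucially $\tilde g$ has vanishing scalar curvature, since the only nonzero Ricci component $\widetilde{\Ric}_{u_1 u_1}$ is contracted against $\tilde g^{u_1 u_1}=0$.

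Next I would apply the conformal transformation law for the Ricci tensor. With $\overline g = e^{2\varphi}\tilde g$ and $\varphi = -\log u_n$, the standard formula in dimension $n+1$ reads
\begin{align*}
    \overline\Ric = \widetilde\Ric - (n-1)\left(\tilde\nabla^2\varphi - d\varphi\otimes d\varphi\right) - \left(\tilde\Delta\varphi + (n-1)|\tilde\nabla\varphi|^2_{\tilde g}\right)\tilde g.
\end{align*}
Here $\varphi = -\log u_n$ depends only on $u_n$, so all derivatives of $\varphi$ are elementary: $d\varphi = -u_n^{-1} du_n$, and the Hessian $\tilde\nabla^2\varphi$ requires only the Christoffel symbols of the pp-wave, which again involve only $\partial L$. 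The point is that most terms reassemble into a multiple of $\overline g$ — indeed, for AdS itself ($L=1$, so the metric is pure $u_n^{-2}\delta$ after a coordinate change) this computation must return exactly $-n\,\overline g$, which fixes all the "background" pieces. The only genuinely new contribution beyond the AdS value comes from (i) the pp-wave Ricci term $\tfrac12(\Delta_{\mathbb R^{n-1}} L)du_1^2$ and (ii) the cross-terms in $\tilde\nabla^2\varphi$ where a $\partial_{u_n}$ hitting a Christoffel symbol produces a $\partial_{u_n} L$ factor times $du_1^2$. Collecting these and rewriting $\Delta_{\mathbb R^{n-1}}$ in terms of the hyperbolic Laplacian $\Delta_{\mathbb H^{n-1}}$ on the slice $\{u_1 = \text{const}\}$ with metric $u_n^{-2}(\delta_{\alpha\beta}du_\alpha du_\beta + du_n^2)$ — which introduces exactly the $-2u_n\partial_n$ correction seen in the Siklos operator, multiplied by appropriate powers of $u_n$ — yields the stated formula $\overline\Ric = -n\overline g + (u_n\partial_n L - \tfrac12\Delta_{\mathbb H^{n-1}}L)u_n^{-2}du_1^2$.

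The scalar curvature $\overline{\R} = -n(n+1)$ then follows immediately: the correction term is proportional to $du_1^2$, and $\overline g^{u_1 u_1} = u_n^2 \tilde g^{u_1 u_1} = 0$, so tracing kills it and leaves only the AdS value $\overline{\R}(-n\overline g) = -n(n+1)$. The Einstein tensor formula is pure algebra from these two. For the last part, I would plug the given future-pointing unit normal $\hat{\mathbf n} = u_n L^{-1/2}(L\partial_t - \partial_{u_1})$ into $8\pi T = \overline\Ric - \tfrac12\overline{\R}\,\overline g - \tfrac{n(n-1)}{2}\overline g = (u_n\partial_n L - \tfrac12\Delta_{\mathbb H^{n-1}}L)u_n^{-2}du_1^2$; since $du_1(\hat{\mathbf n}) = u_n L^{-1/2}\cdot(-1) = -u_n L^{-1/2}$ (using $du_1(\partial_t) = 0$, $du_1(\partial_{u_1}) = 1$), one gets $du_1^2(\hat{\mathbf n},\hat{\mathbf n}) = u_n^2 L^{-1}$, giving $\mu = L^{-1}(u_n\partial_n L - \tfrac12\Delta_{\mathbb H^{n-1}}L)$, and $du_1^2(\hat{\mathbf n},\cdot) = -u_n L^{-1/2} du_1$ gives the stated $J$. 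The main obstacle is purely bookkeeping: correctly tracking which terms in the conformal Ricci formula combine into $-n\overline g$ versus which survive as the $du_1^2$ correction, and handling the conversion between the flat transverse Laplacian and the hyperbolic slice Laplacian without sign or power-of-$u_n$ errors. Using AdS ($L\equiv 1$) as a consistency check at every stage makes this tractable.
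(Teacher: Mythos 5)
Your proof proposal is structurally correct and, in spirit, the natural argument: exploiting $\overline g = u_n^{-2}\tilde g$ with $\tilde g$ a pp-wave and applying the conformal transformation law for the Ricci tensor is exactly the computation the paper implicitly has in mind. (The paper's own proof is a one-line citation to Calvaruso's computation in the $n=3$ case, so your explicit derivation fills in material the paper omits.) The observation that $\tilde g^{u_1 u_1}=0$ kills the pp-wave Ricci in the trace and the bookkeeping via the $L\equiv 1$ AdS consistency check are both good.

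There is, however, a sign error in the one ``standard fact'' you quote: the pp-wave Ricci tensor is $\widetilde{\Ric} = -\tfrac12\bigl(\Delta_{\mathbb R^{n-1}} L\bigr)\,du_1^2$, not $+\tfrac12$. This is easy to verify directly: for $\tilde g = 2\,du_1\,dt + L\,du_1^2 + \delta_{\alpha\beta}\,du_\alpha du_\beta$ the nonzero Christoffel symbols are $\tilde\Gamma^{t}_{u_1 u_1}=\tfrac12\partial_{u_1}L$, $\tilde\Gamma^{t}_{u_1\alpha}=\tfrac12\partial_\alpha L$, $\tilde\Gamma^{\alpha}_{u_1 u_1}=-\tfrac12\partial_\alpha L$, and only $\partial_\alpha\tilde\Gamma^\alpha_{u_1 u_1}=-\tfrac12\Delta_{\mathbb R^{n-1}}L$ survives in $\widetilde{\Ric}_{u_1 u_1}$. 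The sign matters: only with the minus does your collection of terms reproduce the stated correction $\bigl(u_n\partial_n L - \tfrac12\Delta_{\mathbb H^{n-1}}L\bigr)u_n^{-2}du_1^2$, since $u_n\partial_n L - \tfrac12\Delta_{\mathbb H^{n-1}}L = \tfrac{n-1}{2}u_n\partial_n L - \tfrac12 u_n^2\Delta_{\mathbb R^{n-1}}L$ and the first summand comes from the Hessian-of-$\varphi$ term while the second must come from $\widetilde{\Ric}_{u_1 u_1}$ with its minus sign intact. Note also that your $L\equiv 1$ sanity check would not catch this error, since $\Delta L$ vanishes identically for AdS; you would need, e.g., the Kaigorodov profile $L=au_n^3$ as a second check. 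Once the sign is corrected, the rest of the argument (tracing, unit normal contraction) is fine as written.
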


\begin{proof}
   In dimension $n=3$ this is contained in \cite[Equation 2.3]{Calvaruso2022Siklos}.
   The higher dimensional case is computed analogously. 
\end{proof}

\begin{remark}
We can also consider Siklos waves with varying cosmological constant.
Consider the metric
\begin{align*}
    \overline g_\Lambda=\frac{\ell^2}{u^{2}_n}\Big( 2du_1dt + Ldu_1^{2}+\delta_{\alpha\beta}du_{\alpha}du_{\beta}\Big)
\end{align*}
where $\ell=-\frac{n(n-1)}{2\Lambda}$. Changing coordinates $\widetilde u_n=u_n-\ell$, $\widetilde u_A=u_A$, we obtain
\begin{align*}
    \overline g_\Lambda=\frac{\ell^2}{(\widetilde u_n+\ell)^2}\Big( 2d\widetilde u_1dt + Ld\widetilde u_1^{2}+\delta_{\alpha\beta}d\widetilde u_{\alpha}d\widetilde u_{\beta}\Big).
\end{align*}
Multiplying the inequality$ (\Delta_{\mathbb H^{n-1}}-2u_n\partial_{u_n})L\le0$ by $u_n^{-2}$ yields
\begin{align*}
   0\ge \Big(\Delta_{\mathbb R^{n-1}}    -\frac{n-1}{u_n}\partial_{u_n}\Big)L= \Big(\Delta_{\mathbb R^{n-1}}    -\frac{n-1}{\widetilde u_n+\ell}\partial_{u_n}\Big)L
\end{align*}
Thus for $\ell \to\infty$ (equivalently, $\Lambda\to0$), we obtain the well-known pp-wave metric together with the PDE $$ \Delta_{\mathbb R^{n-1}}
    L\le0.$$
\end{remark}

\begin{example}
    Let $L=L(u_1)$ be a function depending only on $u_1$. Then
\begin{align*}
    \overline g=&\frac1{u_n^2}\Big(2du_1dt+Ldu_1^2+\delta_{\alpha\beta}du_{\alpha}du_{\beta}\Big)\\
    =&\frac1{u_n^2}\Big(du_1(2dt+Ldu_1)+\delta_{\alpha\beta}du_{\alpha}du_{\beta}\Big)\\
    =&\frac1{u_n^2}\Big(du_1dv+\delta_{\alpha\beta}du_{\alpha}du_{\beta}\Big),
\end{align*}
where $v=2t+\int_0^{u_1}L(s)ds$.
Hence, the corresponding Siklos wave is the AdS spacetime.
\end{example}

In spacetime dimension $n+1=4$, several special cases of Siklos waves have already been discovered earlier:

\begin{example}
    The \emph{Kaigorodov spacetime} \cite{Kaigorodov1962} is given by the wave profile function $L=au_3^3$, $a>0$, and models pure gravitational waves.
    We have $  (\Delta_{\mathbb H^{2}}-2u_3\partial_{u_3})L=0$ and by Proposition \ref{Prop: stress energy siklos wave}, the spacetime is vacuum.
\end{example}

\begin{example}
The \emph{Defrise spacetime} \cite{Defrise69} is given by the wave profile function $L=au_3$, $a>0$.
It represents null dust or radiation.
\end{example}

\begin{example}
\emph{Ozsváth's homogeneous solution} to the Einstein-Maxwell equations \cite{Ozsvath65} is given by the wave profile function $L=a^2u_3^2$, $a>0$.
Similarly, consider the wave profile function $L=-a^2u_3^4$.
Then 
\begin{align*}
    \overline g=\frac{1}{u^{2}_3}\Big( 2du_1dt - a^2u_3^4du_1^{2}+du_2^2+du_3^2\Big)
\end{align*}
with inverse metric
\begin{align*}
    \overline g^{-1}=
    \begin{pmatrix}
        a^2u_3^6&u_3^2&0&0\\
        u_3^2&0&0&0\\
        0&0&u_3^2&0\\
        0&0&0&u_3^2
    \end{pmatrix}
\end{align*}
in these coordinates.
We point out that $\overline g$ is still Lorentzian although $L$ is chosen to be negative.
In this case the stress-energy tensor is realized by a null Maxwell field.
More precisely, let $F=adu_1\wedge du_3$ and note that
\begin{align*}
    \ast F=a du_1\wedge du_2
\end{align*}
Then the Maxwell equations $dF=0$ and $d\ast F=0$ hold.
Furthermore,
\begin{align*}
 |F|^2=   F_{\alpha\beta}F_{\gamma\delta}\overline g^{\alpha\gamma}\overline g^{\beta\delta}=0,
\end{align*}
and similarly
\begin{align*}
    |\ast F|^2=(\ast F)_{\alpha\beta}(\ast F)_{\gamma\delta}\overline g^{\alpha\gamma}\overline g^{\beta\delta}=0.
\end{align*}
Moreover,
\begin{align*}
  &2\overline g^{\gamma\delta}F_{\alpha\gamma}F_{\beta\delta}-\frac12\overline g_{\alpha\beta}|F|^2
    =2a^2\overline g^{\gamma\delta}\delta_{\alpha 1}\delta_{\gamma 3}\delta_{\beta 1}\delta_{\delta 3}
    =2a^2u_3^{2}\delta_{1\alpha}\delta_{1\beta},
\end{align*}
while on the other side
\begin{align*}
 8\pi T=&   \left(u_3\partial_{u_3}L-\frac{1}{2}\Delta_{\mathbb{H}^{2}}L\right)u_3^{-2}du_1^2
 =2a^2u_3^2du_1^2.
\end{align*}
Hence, the Einstein-Maxwell equations are satisfied and the Siklos wave models an electromagnetic wave within an AdS background.

\end{example}

We will show in Section \ref{S:asymptotic analysis} that there exist non-trivial Siklos waves $(\overline M^{n+1},\overline g)$ containing asymptotically AdS initial data sets $(M^n,g,k)$ for $n\ge5$.

\begin{remark}
Physically, Siklos waves model gravitational and, in some cases, electromagnetic radiation propagating in AdS backgrounds.
In the AdS/CFT context, such bulk waves (e.g. the Kaigorodov spacetime) correspond to energy fluxes or stress-tensor excitations in the dual conformal field theory on the boundary providing a useful exact example for studying holographic non-equilibrium dynamics \cite{CveticLuPope1999, GibbonsGielen2008}.
\end{remark}

\begin{proposition}\label{prop siklos admit spinors}
    Siklos wave spacetimes admit null imaginary Killing spinors.
\end{proposition}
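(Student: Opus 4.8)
The plan is to obtain $\overline\psi$ by conformally rescaling a parallel spinor of the associated pp-wave. By the Remark following the definition of a Siklos wave, $\tilde g := u_n^2\,\overline g = 2\,du_1\,dt + L\,du_1^2 + \delta_{\alpha\beta}\,du_\alpha\,du_\beta$ is a pp-wave in Brinkmann coordinates, with parallel null vector $\partial_t$ whose dual $1$-form is $du_1$. It is classical that such a metric admits parallel spinors: in the natural frame the spin connection of $\tilde g$ is proportional to Clifford multiplication by $du_1$, so any spinor $\tilde\psi$ that is constant in this frame and satisfies $\gamma(du_1)\tilde\psi=0$ is $\tilde\nabla$-parallel, and these span a subspace of dimension $\tfrac12\dim\overline{\mathcal S}$.

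First I would choose such a $\tilde\psi$ subject to the additional normalization $\bar e_n\cdot\tilde\psi=-\mathbf{i}\,\tilde\psi$, where $\bar e_n$ is the $\overline g$-unit vector along $u_n$ (equivalently $\partial_{u_n}\cdot\tilde\psi=-\mathbf{i}\,\tilde\psi$ in $\mathrm{Cl}(\tilde g)$, in analogy with the condition $e_n\phi=-\mathbf{i}\phi$ appearing in the upper half-space construction of imaginary Killing spinors). This is possible: since $\partial_{u_n}$ is orthogonal to $\partial_t$, Clifford multiplication by $\partial_{u_n}$ anticommutes with $\gamma(du_1)$ and hence preserves $\ker\gamma(du_1)$, where $\mathbf{i}\,\gamma(\partial_{u_n})$ is an involution with a nonzero $(-1)$-eigenspace; and because both conditions are preserved under parallel transport, $\tilde\psi$ remains parallel.

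Next I would put $\overline\psi:=u_n^{-1/2}\tilde\psi$ and check that it is an imaginary Killing spinor for $\overline g$. Since $\overline g=e^{-2f}\tilde g$ with $f=\log u_n$ depending only on the coordinate $u_n$ --- along which $\tilde g$ is flat and orthogonal to the wave direction --- the conformal transformation formula for the spin connection (cf.\ \cite{BHMMM} and the proof of Proposition \ref{Prop construction killing spinor}) applies; using $\tilde\nabla\tilde\psi=0$ and $\bar e_n\cdot\tilde\psi=-\mathbf{i}\,\tilde\psi$, the computation is identical to the one relating a constant flat spinor to an imaginary Killing spinor on $\mathbb H^n$, and gives
\begin{align*}
   \overline\nabla_\mu\overline\psi
   =\tfrac12 u_n^{-1/2}\bigl(\bar e_\mu\bar e_n+\delta_{\mu n}\bigr)\tilde\psi-\tfrac12 u_n^{-1/2}\delta_{\mu n}\,\tilde\psi
   =-\tfrac12\mathbf{i}\,\bar e_\mu\,\overline\psi .
\end{align*}
Note that $\overline g$ is not Einstein for generic $L$ (Proposition \ref{Prop: stress energy siklos wave}); the usual integrability obstruction for Killing spinors nonetheless vanishes precisely because the extra Ricci term lies in the $du_1$ direction and $\gamma(du_1)\tilde\psi=0$.

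Finally, $\overline\psi$ is null: it inherits $\gamma(\partial_t)\overline\psi=0$ from $\tilde\psi$ (Clifford multiplication by the null vector $\partial_t$, computed with respect to either metric under the natural bundle identification), and a spinor annihilated by a null vector has Dirac current proportional to that vector; hence $|X(\overline\psi)|=N(\overline\psi)$ on every slice, i.e.\ $\overline\psi$ is null in the sense of Section \ref{S:preliminaries}. I expect the main obstacle to be the bookkeeping --- matching Clifford and conformal-rescaling conventions so that the Killing eigenvalue is \emph{imaginary} rather than real, and confirming that a nonzero $\tilde\psi$ obeying both algebraic constraints exists in the relevant dimensions --- together with the one genuinely non-routine point, namely that it is the parallelism $\tilde\nabla\tilde\psi=0$ over the pp-wave (not over Minkowski) that makes the AdS computation carry over even though $\overline g$ is not Einstein.
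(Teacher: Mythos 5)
Your proof follows essentially the same route as the paper's: rescale $\overline g$ to the conformal pp-wave $\widetilde g = u_n^2\,\overline g$, pick a parallel pp-wave spinor that is additionally a $(-\mathbf{i})$-eigenvector of Clifford multiplication by the unit vector in the $u_n$-direction (the two algebraic constraints being compatible because the corresponding commuting involutions can be simultaneously diagonalized), and apply the conformal change formula for the spin connection to conclude that $u_n^{-1/2}$ times this spinor is an imaginary Killing spinor of $\overline g$. Your closing paragraph spelling out nullness from $\gamma(du_1)\widetilde\psi=0$ matches the paper's remark that the pp-wave parallel spinor is null with Dirac current along the wave direction.
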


\begin{proof}
Let $(\overline M^{n+1},\overline g)$ be a Siklos wave.
We construct imaginary Killing spinors on $(\overline M^{n+1},\overline g)$ from parallel spinors contained in the pp-wave spacetime $(\widehat M^{n+1},\widehat g)$ conformal to $(\overline M^{n+1},\overline g)$.

 \medskip
 
We denote by $\widehat{c}$ and $\widehat{\nabla}$ the Clifford multiplication and connection with respect to the corresponding pp-wave.
Then we have $\widehat{c}(V)\varphi=y_nV\varphi$ for any spinor $\varphi$ and vector field $V$. 
Let $\overline{\nabla}$ be the connection on $(\overline M^{n+1},\overline g)$.
Let $\phi$ be a null parallel spinor on the pp-wave satisfying $\widehat{c}(\widehat{\nabla} u_n)\phi=-\mathbf{i}\phi$.
Note that such a spinor $\phi$ exists since $\widehat c(\widehat \nabla u_n) \mathbf i:\mathcal S(\widehat M^{n+1})\to\mathcal S(\widehat M^{n+1})$ has eigenvalues $\pm1$ and divides $\mathcal S$ into eigenspaces $\mathcal S^\pm$.
Observe that both eigenspaces have the same dimension since $\widehat c(\widehat \nabla u_n)$ interchanges $\mathcal S^\pm$ with $\mathcal S^\mp$.
In particular, $\mathcal S^-$ is non-empty.
Similarly, for a pp-wave it is well-known \cite{HZ24} that a spinor $\phi$ is parallel if and only if it solves $\widehat c(\partial u_1)\phi=-\widehat c(\partial_t)\phi$.
Note that this corresponds to $\phi$ being null and the Dirac current $\llangle e_\alpha\phi,\phi\rrangle e_\alpha$ pointing in the direction of the pp-wave where $\llangle \cdot,\cdot\rrangle$ is the inner product in $\mathcal S$.
As above, $\widehat c(\partial u_1)\widehat c(\partial_t)$ divides $\mathcal S$ into eigenspaces of the same dimension.
Furthermore, $\widehat c(\partial u_1)\widehat c(\partial_t)$ commutes with $\widehat c(\widehat \nabla u_n) \mathbf i$.
Consequently, $\mathcal S$ is divided into four spaces having the same dimension.
In particular, we can find a spinor $\phi$ which is parallel (i.e. $\widehat c(\partial u_1)\phi=-\widehat c(\partial_t)\phi$) and satisfies $\widehat{c}(\widehat{\nabla} u_n)\phi=-\mathbf{i}\phi$.

\medskip

Applying the conformal formula for spinors, cf. \cite{BHMMM}
 \begin{equation*}
     \overline{\nabla}_V \phi= \widehat{\nabla}_V \phi+\frac{1}{2}\left(\widehat{\nabla}_V\log u_n+\widehat{c}(V)\widehat{c}(\widehat{\nabla} \log u_n)\right)\phi,
 \end{equation*}
 we obtain for $V=\partial t, \partial u_1\cdots, \partial u_{n-1}$
 \begin{equation*}
     \overline{\nabla}_{V}\phi=\frac{1}{2}\left(\widehat{\nabla}_V\log u_n+\widehat{c}(V)\widehat{c}(\widehat{\nabla} \log u_n)\right)\phi=-\frac{1}{2}\mathbf iV\phi
     \end{equation*}
     and for $V=\partial u_n=\widehat{\nabla}u_n$
     \begin{equation*}
     \overline{\nabla}_{\partial u_n}\phi= \frac{1}{2}\left(\widehat{\nabla}_{\partial u_n}\log u_n+\widehat{c}(\partial u_n)\widehat{c}(\widehat{\nabla} \log u_n)\right)\phi=0.
 \end{equation*}
Define $\psi=u_n^{-\frac{1}{2}}\phi$.
Then $\psi$ is an imaginary Killing spinor on $(\overline M^{n+1},\overline g)$.
\end{proof}

\begin{remark}
    The spinor space of a $(n+1)$-dimensional Lorentzian manifold satisfies $\dim (\mathcal S)=2^{\lfloor\frac{n+1}2\rfloor}$.
    In Minkowski space we have $\dim (\mathcal S)$ linearly independent parallel spinors and we have $\frac12\dim (\mathcal S)$ linearly independent parallel spinors in a pp-wave.
    For Siklos waves the proof above constructs $\frac14\dim (\mathcal S)$ spinors solving $\nabla_i\psi=-\frac12\mathbf{i}e_i\psi$.
    Moreover, by a similar argument, there are also $\frac14\dim (\mathcal S)$ many spinors solving $\nabla_i\psi=+\frac12\mathbf{i}e_i\psi$.
\end{remark}

 \begin{theorem}\label{Thm Siklos wave charges}
     Let $(M^n,g,k)$ be the $t=0$-slice of the Siklos wave $(\overline{M}^{n+1},\overline{g})$ given by
     \begin{equation*}
          \overline g=\frac{1}{u^{2}_n}\Big( 2du_1dt+ Ldu_1^{2}  +\delta_{\alpha \beta}du_{\alpha}du_{\beta}
          \Big)\quad \text{and}\quad L=1+O_2(r^{-q}).
     \end{equation*}
     Then we have for any lapse-shift pair $(N,X)$
     \begin{equation*}
       \mathcal{H}(N,X)= \frac{1}{(n-1)\omega_{n-1}}\int_{\mathbb{H}^n}(N\mu+\langle X,J\rangle)LdV_b.
     \end{equation*}
 \end{theorem}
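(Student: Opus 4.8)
The plan is to rewrite the surface integral defining $\mathcal H(N,X)$ as a bulk integral over $\{t=0\}$ and to evaluate it from the explicit induced data. On $\{t=0\}$ the induced metric is the rank-one perturbation $g=b+(L-1)\,u_n^{-2}du_1^2 = b+(L-1)\,\theta\otimes\theta$ of the hyperbolic metric along the $b$-unit covector $\theta:=u_n^{-1}du_1$, so that $g^{-1}=b^{-1}-\frac{L-1}{L}\theta^\sharp\otimes\theta^\sharp$ and $\det g=L\det b$; the timelike future unit normal is $\hat{\mathbf n}=u_nL^{-1/2}(L\partial_t-\partial_{u_1})$ as in Proposition~\ref{Prop: stress energy siklos wave}, and $k_{ij}=-\overline g(\overline\nabla_{\partial_{u_i}}\partial_{u_j},\hat{\mathbf n})$ is then explicit in the first derivatives of $L$. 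Since $L=1+O_2(r^{-q})$ and $\theta$ has bounded $b$-geometry, $(g-b,k)\in C^{s,a}_{-q}\times C^{s-1,a}_{-q}$, and together with Proposition~\ref{Prop: stress energy siklos wave} (which gives $\mu=L^{-1}Q\ge0$, $J=-L^{-1/2}Q\,u_n^{-1}du_1$ with $Q:=u_n\partial_{u_n}L-\frac12\Delta_{\mathbb H^{n-1}}L$) the integrability requirements of Definition~\ref{Def:AH} hold; thus $(M^n,g,k)$ is a legitimate asymptotically AdS initial data set, in fact one with $\mu=|J|_g$.

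\emph{Surface integral to bulk integral.} The chart identifies $M$ with $\{u_n>0\}=\mathbb H^n$, which is complete, has a single end, and has no interior boundary; in particular the coordinate balls $B_R$ exhaust $M$. Using the form of $\mathbb U^i(N),\mathbb V^i(X)$ that carries the factor $\sqrt{\det g}$ (so that they are coordinate-divergence densities), the limit defining $\mathcal H(N,X)$ exists by the $L^1$ hypotheses, and the divergence theorem on $B_R$ followed by $R\to\infty$ yields $\mathcal H(N,X)=\frac{1}{2(n-1)\omega_{n-1}}\int_{\{u_n>0\}}\partial_i(\mathbb U^i(N)+\mathbb V^i(X))\,dx$.

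\emph{The pointwise identity.} Substituting the explicit $(g,k)$ into $\mathbb U^i(N),\mathbb V^i(X)$, differentiating, and simplifying with the Hamiltonian and momentum constraints — where the rank-one structure of $g-b$, the closed form of $g^{-1}$, and the fact that both $J$ and the non-trivial part of $k$ are proportional to $du_1$ all enter — should collapse to $\partial_i(\mathbb U^i(N)+\mathbb V^i(X))=2(N\mu+\langle X,J\rangle)\,L\,u_n^{-n}$. Since $dV_b=u_n^{-n}dx$, this and the previous step give the claimed formula $\mathcal H(N,X)=\frac{1}{(n-1)\omega_{n-1}}\int_{\mathbb H^n}(N\mu+\langle X,J\rangle)L\,dV_b$ for every lapse--shift pair. (A shorter, less self-contained route: by Proposition~\ref{prop siklos admit spinors} the Siklos wave carries an imaginary Killing spinor, whose restriction $\psi$ to $\{t=0\}$ satisfies $\nabla_i\psi+\frac12k_{ij}e_je_0\psi+\frac12\mathbf{i}e_i\psi=0$ pointwise, so the gradient term in the mass formula of Theorem~\ref{pmt maerten} vanishes and $\mathcal H(N(\psi^\infty),X(\psi^\infty))=\frac12\int_M(\mu N(\psi)+\langle J,X(\psi)\rangle)dV_g$; one then computes $N(\psi),X(\psi)$ from $\psi=u_n^{-1/2}\phi$, uses $dV_g=\sqrt L\,dV_b$, and extends from pairs $(N(\psi^\infty),X(\psi^\infty))$ to all lapse--shift pairs by the linearity that defines $\mathcal E,\mathcal P,\mathcal C,\mathcal A$ in \eqref{eq: def charges}.)

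\emph{Main obstacle.} The delicate point is the pointwise identity: after differentiation one must see that the nonlinear-in-$(g-b,k)$ terms, together with $\sqrt{\det g}=\sqrt L\,u_n^{-n}$ and the $L^{-1}$ coming from $g^{-1}$, organize into the single weight $L$ — and not $\sqrt L$ or $1$ — multiplying the energy--momentum density. This is a finite but careful computation that relies essentially on the special structure of the slice data. In the spinorial alternative the analogous delicate point is reconciling $N(\psi),X(\psi)$ computed in $(M,g)$ with $N(\psi^\infty),X(\psi^\infty)$ computed in $(\mathbb H^n,b)$ so as to produce exactly the weight $L$ and the constant $\frac1{(n-1)\omega_{n-1}}$.
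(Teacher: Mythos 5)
Your primary route is exactly the one the paper follows in Appendix~\ref{A:Siklos}: compute the induced $(g,k)$ on $\{t=0\}$, take the divergence of $\mathbb U(N)+\mathbb V(X)$, match against $\mu$ and $J$ from Proposition~\ref{Prop: stress energy siklos wave}, and pass from the boundary integral to a bulk integral. The one thing to correct is the claim of an \emph{exact} pointwise identity
\[
\partial_i(\mathbb U^i(N)+\mathbb V^i(X))=2(N\mu+\langle X,J\rangle)\,L\,u_n^{-n}.
\]
This is too strong. In the paper's computation the quadratic-in-$(g-b,k)$ contributions do \emph{not} cancel pointwise; they are collected into residual terms of the form $\operatorname{div}_b\big(O_1(N r^{-2q})\big)$ and $\operatorname{div}_b\big(O_1(r^{1-2q})\big)$ whose boundary contributions vanish as $R\to\infty$ precisely because $q>\tfrac n2$. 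So the identity holds only modulo total divergences of rapidly-decaying fields, and the divergence-theorem step must be combined with the decay hypothesis to kill those pieces. This is a real part of the argument, not just bookkeeping.

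The spinorial alternative in your parenthetical remark is genuinely different from the paper, but as stated it has a gap. Proposition~\ref{prop siklos admit spinors} only produces \emph{null} imaginary Killing spinors on the Siklos wave, so the restrictions $\psi$ to $\{t=0\}$ satisfy $N(\psi^\infty)=|X(\psi^\infty)|$, and the induced pairs $\big(N(\psi^\infty),X(\psi^\infty)\big)$ do not range over all lapse--shift pairs (they are constrained to the null cone, and moreover are further constrained by the algebraic conditions picked out in the proof of Proposition~\ref{prop siklos admit spinors}). For a generic $\psi^\infty\in\overline{\mathcal S}(\mathbb H^n)$, the solution $\psi$ of the boundary value problem in Theorem~\ref{pmt maerten} will \emph{not} be the restriction of an imaginary Killing spinor on $(\overline M,\overline g)$, and then the gradient term in the mass formula does not vanish, so you cannot simply write $\mathcal H(N(\psi^\infty),X(\psi^\infty))=\frac12\int_M(\mu N(\psi)+\langle J,X(\psi)\rangle)\,dV_g$. ``Extending by linearity'' would require knowing that the constrained null spinorial pairs span the whole $(N,X)$ space of KIDs, which you would have to prove and is not automatic. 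The direct computation avoids this issue entirely, which is presumably why the paper takes that route.
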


The proof is lengthy and will be deferred to Appendix \ref{A:Siklos}.

\begin{remark}
   Recall from \eqref{eq: def charges} the formulas
    \begin{align*}
    \mathcal E=&\mathcal H(\frac2{1-r^2}((1+r^2),0),\\
   \mathcal P_i=&\mathcal H(-2x_i ,0),\\
   \mathcal C_i=&\mathcal H(0,(1+r^2)\delta_i^k\partial_k-2x^ix_k\partial_k),\\
   \mathcal A_{ij}=&\mathcal H(0,-\partial_ix_j+x_i\partial_j).
\end{align*}
Thus, we obtain explicit expressions for the charges $\mathcal E,\mathcal P,\mathcal C,\mathcal A$ of Siklos waves.
\end{remark}

\begin{remark}
    Similar formulas hold in the asymptotically flat setting:
    Given an asymptotically flat $t=0$-slice of a pp-wave spacetime $(\overline M^{n+1},\overline g)$ given by
      \begin{equation*}
          \overline g= 2dudt  + Fdu^{2} +\delta_{ij}dx^{i}dx^{j}
          \end{equation*}
          we have
          \begin{align*}
            \mathcal E=|\mathcal P|= \frac{1}{(n-1)\omega_{n-1}}\int_{\mathbb R^n}\mu F dV.
          \end{align*}
          This follows easily from integration by parts and the formula for $\mathcal E$ established in \cite[p. 751]{Huang-Lee:2024}.
\end{remark}

\begin{remark}
Interestingly, we do not have
\begin{align*}
    \mathcal E=|\mathcal P|
\end{align*}
for Siklos waves, and Siklos waves have non-trivial center of mass and angular momentum.
In particular, the propagation direction rotates.
This contrasts with pp-waves, which always have $\mathcal E=|\mathcal P|$.
On the other hand, for Siklos waves we do have
    \begin{align*}
\widehat{\mathcal E}=    |\widehat{\mathcal P}|
\end{align*}
for the UHS-energy and the UHS-linear momentum defined in \eqref{eq UHS E P def}.
To verify this, we can choose $\psi^\infty$ as in Proposition \ref{infinty} such that $N(\psi^\infty)=y_n^{-1}$ and $X(\psi^\infty)=y_n^{-2}\mathring{\nabla} y_1$.  
     Note that $N(\psi^\infty)\mu+\langle J, X(\psi^\infty)\rangle=0$. Thus, $\mathcal{H}(N(\psi^\infty),X(\psi^\infty))=0$, i.e., $\hat{\mathcal{E}}=|\hat{\mathcal{P}}|$.
\end{remark}

AdS space can be realized as sphere in the pseudo-Riemannian manifold $\mathbb R^{n,2}$.
A similar result holds for Siklos waves:

\begin{theorem}
    Siklos waves $(\overline M^{n+1},\overline g)$ embed into pseudo-Riemannian manifolds $(\check M^{n+2},\check g)$ with two time dimensions which admit parallel spinors.
    Moreover, the second fundamental form of this embedding is given by $\overline g$.
\end{theorem}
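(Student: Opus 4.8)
The plan is to follow the analogy flagged just above: exactly as $\mathrm{AdS}^{n+1}$ is the pseudo-sphere $\{Q=-1\}$ inside the flat space $\mathbb{R}^{n,2}$, I would realize an arbitrary Siklos wave as a totally umbilic slice of its (in general \emph{curved}) Lorentzian cone. First I would set $\check M^{n+2}:=\mathbb{R}_{>0}\times\overline M^{n+1}$ and equip it with the warped product metric
\[
    \check g:=-d\rho^2+\rho^2\,\overline g,
\]
$\rho$ being the coordinate on the $\mathbb{R}_{>0}$ factor. Since $\overline g$ is Lorentzian, of signature $(n,1)$, and the term $-d\rho^2$ supplies one further negative direction, $\check g$ has signature $(n,2)$, so $(\check M,\check g)$ is pseudo-Riemannian with two time directions; moreover $\check M\cong\mathbb{R}^{n+2}$ is contractible and hence spin. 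The hypersurface $\{\rho=1\}$ is isometric to $(\overline M,\overline g)$, its unit normal is proportional to the timelike vector $\partial_\rho$ (note $\check g(\partial_\rho,\partial_\rho)=-1$), and from $\check\Gamma^{\rho}_{ij}=\rho\,\overline g_{ij}$ one reads off, at $\rho=1$, that its second fundamental form equals $\overline g$ for the appropriate orientation of the normal; in particular the embedding is totally umbilic. When $L\equiv 1$, so that $\overline g$ is the AdS metric, $(\check M,\check g)$ is flat and equals $\mathbb{R}^{n,2}$, which recovers the classical picture.

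Second, I would promote the imaginary Killing spinor to a parallel spinor on $(\check M,\check g)$. By Proposition \ref{prop siklos admit spinors} the Siklos wave carries a spinor $\psi$ with $\overline\nabla_V\psi=-\tfrac12\mathbf{i}\,V\psi$ for all $V$. After fixing compatible spin structures, one identifies along $\{\rho=1\}$ the restriction $\mathcal S(\check M)|_{\{\rho=1\}}$ with copies of the spinor bundle $\mathcal S(\overline M)$ of the Siklos wave, in such a way that Clifford multiplication by the new timelike vector $\partial_\rho$ becomes the operator that, through the warped product formula for the spinorial Levi-Civita connection, converts the parallel equation on $\check M$ into the Killing equation on $\overline M$. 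Writing $\check\psi:=\rho^{1/2}\psi$ (extended $\rho$-independently in this trivialization) and substituting $\check g=-d\rho^2+\rho^2\overline g$ into that formula, the equation $\check\nabla\check\psi=0$ splits into a radial ordinary differential equation, solved exactly by the factor $\rho^{1/2}$, and a tangential part which reduces precisely to $\overline\nabla_V\psi=-\tfrac12\mathbf{i}\,V\psi$; the sign of the mass term and the appearance of the imaginary unit are forced by the timelike, rather than spacelike, character of $\partial_\rho$. This is the Lorentzian-cone analogue of B\"ar's correspondence between real Killing spinors and parallel spinors, with the flat case $L\equiv 1$ providing the model $\mathbb{R}^{n,2}$ equipped with its constant spinors. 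Combining the two steps proves the theorem.

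The hard part will be the spin-geometric bookkeeping in the second step, rather than any genuine geometric difficulty. One has to match Clifford conventions carefully, reconciling the formal imaginary unit $\mathbf{i}$ and the auxiliary timelike generator $e_0$ with honest Clifford multiplication by the new timelike direction $\partial_\rho$ — which is exactly where the difference between the Clifford algebras $\mathrm{Cl}(n,1)$ and $\mathrm{Cl}(n,2)$, and hence a possible doubling of the spinor dimension according to the parity of $n$, enters the discussion — and one has to track every sign in the warped product connection formula as well as in the chosen sign convention for the second fundamental form. I would calibrate all of this by first running the computation in the flat case $L\equiv 1$, where $(\check M,\check g)=\mathbb{R}^{n,2}$ and the parallel spinors are literally constant, fixing the conventions there, and then observing that the tangential component of $\check\nabla\check\psi=0$ only ever invokes the imaginary Killing equation for $\psi$, which by Proposition \ref{prop siklos admit spinors} holds for every Siklos profile $L$.
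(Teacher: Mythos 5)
Your proposal matches the paper's proof: both form the Lorentzian cone $\check g = \tau^2\overline g - d\tau^2$ over the Siklos wave, compute that the $\{\tau=1\}$ slice has second fundamental form $\overline g$, and lift the imaginary Killing spinor of Proposition \ref{prop siklos admit spinors} to a parallel spinor on the cone (the paper compresses this last step into a citation of Leitner). The only detail to revisit is your conformal weight $\rho^{1/2}$: in the natural cone trivialization by the radially parallel frame $\hat e_i = \rho^{-1}e_i$, the radial spin connection vanishes, so the parallel spinor is the \emph{trivial} ($\rho$-independent) extension of $\overline\psi$ as the paper states, and your planned calibration against the flat $L\equiv 1$ case should show the $\rho^{1/2}$ factor drops out.
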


\begin{proof}
   Consider on $\check M=\overline M^{n+1}\times\mathbb R$ the Lorentzian cone metric
   \begin{align*}
       \check g=\tau^2\overline{g}-d\tau^2
   \end{align*}
   and extend $\overline \psi$ trivially to $(\check M,\check g)$ to obtain a spinor $\check \psi$ on $\check M$.
   Then $\check \psi$ is parallel, cf. \cite[Theorem 3.1]{Leitner2003_JMP}.
\end{proof}


\section{Causal character of mass-minimizing spinors}\label{S:causality}

In this section we show that there always exists a mass-minimizing spinor which is either null or timelike and does not change causal character.
Physically, this implies that there are no ergospheres in this setting.
Interestingly, even for initial data sets in the AdS spacetime, there exist mass-minimizing spinors with mixed causal character, cf. \cite[Example 5.4]{Leitner2003_JMP}.
Thus, it is necessary to replace in general the original mass-minimizing spinor by a possibly different spinor.
We restate Theorem \ref{Thm Intro cannot change causal type} for convenience:

\begin{theorem}\label{thm causal character}
    Let $(M,g,k)$ be an asymptotically AdS initial data set satisfying the dominant energy condition and suppose that there exists a spinor $ \psi^\infty\in \overline {\mathcal S}(\mathbb H^n)$ such that the mass functional vanishes, i.e.
        \begin{align*}
           \mathcal H(N(\psi^\infty),X(\psi^\infty)).
    \end{align*}
    Then there exists another spinor $\phi^\infty\in \overline {\mathcal S}(\mathbb H^n)$ such that the spinor $\phi$ which asymptotes to $\phi^\infty$ and solves the PDE $\slashed D\phi=\frac12\tr_g(k)e_0\phi+\frac n2\mathbf{i}\phi$ is either null or timelike.
    Moreover, $\mathcal H(\phi^\infty)=0$.
\end{theorem}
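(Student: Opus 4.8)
The plan is to first reduce to the Killing-spinor setting. Since $\mathcal H(N(\psi^\infty),X(\psi^\infty))=0$ and $\mu\ge|J|$, every (nonnegative) term in the integrand of the mass formula of Theorem \ref{pmt maerten} must vanish identically, so $\psi$ solves
\[
\nabla_i\psi=-\tfrac12k_{ij}e_je_0\psi-\tfrac12\mathbf i e_i\psi ,
\]
and moreover $\mu|\psi|^2\equiv0$, $\langle\psi,Je_0\psi\rangle\equiv0$. The same is true of any $\phi^\infty$ with $\mathcal H(\phi^\infty)=0$, and since $\mathcal H\ge0$ this is exactly the statement that $\psi^\infty$ is mass-minimizing. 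For a Killing spinor of this type I would then write down the closed first-order system satisfied by the quadratic quantities $N,X,Y,\omega$ (extending the computation in the proof of Proposition \ref{infinty}, which is the case $k=0$; see Theorem \ref{Thm: N,X,Y,omega}) and deduce the two ingredients advertised in the introduction: a conservation law showing that
\[
Q(\psi):=N(\psi)^2+\tfrac12|\omega(\psi)|^2-|X(\psi)|^2-|Y(\psi)|^2
\]
is constant on $M$, and a monotonicity formula for the two-form $F(\psi):=N(\psi)\omega(\psi)-X(\psi)\wedge Y(\psi)$ along a natural foliation of $M$ (the level sets of $N$, which exhaust $M$ properly by the slicing of Theorem \ref{Thm Intro minimal slicing}).

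The next step is a dichotomy on $Q$. Because $\psi-\psi^\infty$ decays at the end and $Q(\psi)$ is constant, $Q(\psi)\equiv Q(\psi^\infty)=4Q(u)$ by Corollary \ref{cor N^2-X^2-Y^2-w^2 psi -> u}, where $u\in\overline{\mathcal S}(\mathbb R^n)$ is the constant spinor associated to $\psi^\infty$. If $\psi$ is nowhere null — in particular whenever $Q(u)\ne0$, since by the identity in Proposition \ref{Prop N=X implies Nw=XY} a null point would force $Q(\psi)=0$ there — then $\psi$ is strictly timelike on all of $M$ and I take $\phi=\psi$.

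It remains to handle the case in which $\psi$ is null at some point, so $Q(\psi)\equiv0$. Here I would pass to a possibly different minimizing spinor $\phi^\infty$ whose companion $\phi$ has $F(\phi)$ decaying at the end; this uses the structure of the space of minimizers (the lowest eigenspace of the Hermitian charge operator appearing in the proof of Theorem \ref{thm: examples null}, together with the null-spinor constructions of Theorems \ref{E=P} and \ref{thm: examples null}), noting via Corollary \ref{cor Nw=XY psi -> u} that decay of $F(\phi^\infty)$ is equivalent to $N(v)\omega(v)=X(v)\wedge Y(v)$ for the associated constant spinor $v$. Feeding this decay into the monotonicity formula along the exhausting foliation should force $F(\phi)\equiv0$ on $M$. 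Combined with $Q(\phi)\equiv0$ and the algebraic identity $Q=N^{-2}(N^2-|X|^2)(N^2-|Y|^2)$, valid whenever $N\omega=X\wedge Y$ and $X\perp Y$ (as in the proof of Proposition \ref{Prop N=X implies Nw=XY}), this yields that $\phi$ is null or type I at every point; a unique-continuation argument for the $(N,X,Y,\omega)$-system then rules out a spinor that is null on one region and timelike on another and upgrades this to $\phi$ being strictly null throughout. Finally $\mathcal H(\phi^\infty)=0$ since $\phi^\infty$ was chosen mass-minimizing and $\mathcal H\ge0$.

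I expect the main obstacle to be the monotonicity formula for $F=N\omega-X\wedge Y$ together with extracting $F\equiv0$ from it: this requires identifying the correct foliation (where the spinorial slicing of Theorem \ref{Thm Intro minimal slicing} enters) and the precise divergence structure of the full $(N,X,Y,\omega)$-system including its $k$-corrections. Also delicate are guaranteeing a mass-minimizing spinor that is simultaneously $Q$-null and has $F$ decaying at infinity, and the final unique-continuation step forbidding genuinely mixed causal behaviour.
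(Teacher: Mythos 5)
Your high-level architecture matches the paper in its broad outline (conservation of $Q$, control of the two-form $N\omega-X\wedge Y$, passing to a new mass-minimizing spinor, propagation from infinity), but the mechanisms you propose at the two critical junctures diverge from what is actually needed and, as stated, leave genuine gaps.

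\textbf{The ``monotonicity along a foliation'' step is both the wrong mechanism and circular.} The paper does not integrate anything along the level sets of $N$. Instead it establishes a Gronwall-type pointwise bound $N\,|\nabla|\zeta||\le C(1+N)|\zeta|$ for $\zeta=N\omega-X\wedge Y$ (Proposition \ref{Prop: zeta = 0}), concluding that $\zeta$ either vanishes identically or is nowhere zero; coupling this with the decay $|\zeta|=O(e^{(-q-2)s})$ along integral curves of $\nabla y_n$ (Lemma \ref{NXY decay}) forces $\zeta\equiv0$. Invoking Theorem \ref{Thm Intro minimal slicing} for the foliation is circular: that theorem takes a null type I spinor as its hypothesis, which is precisely what the present theorem is trying to produce.

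\textbf{$F\equiv0$ and $Q\equiv0$ do not yield a spinor of pure causal type, and the ``unique-continuation'' step does not close the gap.} Your factorization gives $0=N^{-2}(N^2-|X|^2)(N^2-|Y|^2)$ at each point, i.e.\ ``null or type I pointwise.'' There is no off-the-shelf unique-continuation principle that prevents the spinor from being null on one open region and type I (but timelike) on another. What the paper proves instead (Proposition \ref{Prop: |X| = N=Y}) is the stronger propagation $N=|X|=|Y|$ everywhere, and this requires a delicate domain decomposition: splitting $M$ into $\Omega_1=\{N^2-|X|^2\ge 100(N^2-|Y|^2)\}$ and its complement, exploiting the factorization to control $|\langle X,Y\rangle|$ by whichever defect is larger, and running Gronwall on the Lipschitz function $\max\{N^2-|X|^2,\,100(N^2-|Y|^2)\}$. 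To apply this you also need the new asymptotic spinor to satisfy $N=|X|=|Y|$ at infinity, not merely $F=0$ and $Q=0$; your eigenspace-based choice of $\phi^\infty$ does not guarantee this. The paper produces such a spinor explicitly: set $\widetilde u=\mathbf{i}e_0e_1e_2 u$ (with $e_1\sim X(u)$, $e_2\sim Y^\perp(u)$), verify $\widetilde u$ has the same $N,X,Y,\omega$ and is hence also mass-minimizing (Proposition \ref{tilde u mass minimizing}), and take $v=\frac1{\sqrt2}(u+\widetilde u)$, for which Proposition \ref{Prop properties v} shows $N(v)=|X(v)|=|Y(v)|$. This explicit construction is the heart of the argument and is missing from your proposal. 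Finally, note the transfer Lemma \ref{lemma n>=4} from $\psi$ to $u$ requires $n\ge4$; your proposal does not flag this restriction, which the paper leaves to a separate argument for $n=3$.
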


In the asymptotically flat setting, we have $\nabla_iN(\psi)=-k_{ij}X_j(\psi)$ and $\nabla_iX_j(\psi)=-k_{ij}N(\psi)$ which immediately implies $\nabla_i(N^2-|X|^2)=0$. 
Consequently, spinors cannot change causal type in this setting.

\medskip

Since the arguments below are quite sophisticated, we begin with giving an overview.
At the heart of the proof there are two monotonicity formulas for 
\begin{align*}
    N^2-|X|^2-|Y|^2+\frac12|\omega|^2\qquad\text{and}\qquad N\omega-X\wedge Y.
\end{align*}
They also imply that the condition $N^2=|X|^2=|Y|^2$ is preserved.

\begin{enumerate}
\item Suppose that $|X( \psi)|=N(\psi)$ at $p\in M$ and $n\ge4$. For $n=3$, a separate argument is needed.
    \item Since $|X( \psi)|=N( \psi)$, we also have $N( \psi)\omega( \psi)=X( \psi)\wedge Y( \psi)$ and $N( \psi)^2+\frac12|\omega( \psi)|^2=|X( \psi)|^2+|Y( \psi)|^2$.
    \item The properties $N( \psi)w( \psi)=X( \psi)\wedge Y( \psi)$ and $N^2( \psi)+\frac12|\omega( \psi)|^2=|X( \psi)|^2+|Y( \psi)|^2$ still hold for $\psi^\infty$ via ODE methods and decay estimates for $n\ge4$.
    \item We also have $N(u)\omega(u)=X(u)\wedge Y(u)$ and $N^2(u)+\frac12|\omega(u)|^2=|X(u)|^2+|Y(u)|^2$ for the corresponding parallel spinor $u$ in Euclidean space.
    \item Let $\widetilde u=\mathbf{i}e_0X(u)Y^\perp(u)|X(u)|^{-1}|Y^\perp(u)|^{-1}u=\mathbf{i}e_0\omega(u) |\omega(u)|^{-1}u$ (in case $\omega(u)\ne0$). Then $N(u)=N(\widetilde u)$, $X(u)=X(\widetilde u)$, $Y(u)=Y(\tilde u)$, $\omega(u)=\omega(\widetilde u)$.
    \item This implies $N(\psi^\infty)=N(\widetilde{\psi}^\infty)$, $X(\psi^\infty)=X(\widetilde{\psi}^\infty)$, $Y(\psi^\infty)=Y(\widetilde{\psi}^\infty)$, $\omega(\psi^\infty)=\omega(\widetilde{\psi}^\infty)$ where $\widetilde \psi^\infty$ is the imaginary Killing spinor in hyperbolic space corresponding to $\widetilde u$.
    \item Since $\widetilde u, \widetilde \psi^\infty$ are still mass-minimizing, there exists a spinor $\widetilde \psi$ asymptotic to $\widetilde \psi^\infty$ solving $\nabla_i\widetilde \psi=-\frac12k_{ij}e_je_0\widetilde \psi-\frac12\mathbf{i}e_i\widetilde \psi$.
\item  Let $v=\frac1{\sqrt 2}(u+\widetilde u)$, $\phi^\infty=\frac1{\sqrt2}(\psi^\infty+\widetilde\psi^\infty)$ and $\phi=\frac1{\sqrt2}(\psi+\widetilde \psi)$. Then $v=0$ if and only if $N(u)=|X(u)|=|Y(u)|$.
\item If $N(u)=|X(u)|=|Y(u)|$, then $N(\psi^\infty)=|X(\psi^\infty)|=|Y(\psi^\infty)|$ and $N(\psi)=|X(\psi)|=|Y(\psi)|$, in which case $\psi$ is strictly null. Thus, suppose that $N(u)=|X(u)|=|Y(u)|$ does not hold, i.e. $v\ne0$.
\item We have $N(v)=|X(v)|=|Y(v)|$ which shows that $N(\phi^\infty)=|X(\phi^\infty)|=|Y(\phi^\infty)|$ as well, and $N(\phi)=|X(\phi)|=|Y(\phi)|$.
\item By linearity $\nabla_i\phi=-\frac12k_{ij}e_je_0\phi-\frac12\mathbf{i}e_i\phi$. 
\item A computation yields $JN(\phi)=-\mu X(\phi)$. By the mass formula, the spinor $\phi^\infty$ is also mass-minimizing.
\end{enumerate}

To transfer information between $\psi$ and $\psi^\infty$ (as well as $\phi$ and $\phi^\infty)$, we also need to prove decay estimates which are stronger than the general ones obtained in Theorem \ref{pmt maerten}.


\subsection{Improved decay estimates}\label{S:decay estimates}

Throughout this subsection let $\psi=(\psi_1,\psi_2)$ be a spinor in the spacetime spinor bundle $\overline{\mathcal S}$ satisfying 
\begin{align*}
     \nabla_i\psi=-\frac12k_{ij}e_je_0\psi-\frac12\mathbf{i}e_i\psi
\end{align*}
which is asymptotic to a pair of imaginary Killing spinors $\psi^\infty=(\psi_1^\infty,\psi_2^\infty)$ where $\psi_i^\infty$, $i=1,2$.

\begin{lemma} \label{decay estimate}
    Suppose $\psi$ satisfies $\widetilde{\nabla}\psi=0$. Then $$\sigma=O_2(r^{-q}|\psi^\infty|_b)$$
    where $\sigma=\psi-\psi^\infty$.
\end{lemma}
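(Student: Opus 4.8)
The plan is to derive a decay equation for $\sigma = \psi - \psi^\infty$ by comparing the two Killing-type connections and then run a weighted elliptic estimate. First I would write the equation satisfied by $\sigma$: since $\widetilde\nabla_i\psi = \nabla_i\psi + \tfrac12 k_{ij}e_je_0\psi + \tfrac12\mathbf i e_i\psi = 0$ and $\mathring\nabla_i\psi^\infty = -\tfrac12\mathbf i e_i\psi^\infty$, subtracting gives
\begin{align*}
    \nabla_i\sigma = (\mathring\nabla_i - \nabla_i)\psi^\infty - \tfrac12 k_{ij}e_je_0\psi - \tfrac12\mathbf i e_i\sigma + (\text{Clifford-frame correction terms})\psi^\infty,
\end{align*}
where the difference $\mathring\nabla - \nabla$ of the two spinorial connections, as well as the difference between Clifford multiplication with respect to $g$ and with respect to $b$, is controlled by $e = g - b \in C^{s,a}_{-q}$ and its first derivatives, hence is $O_{s-1,a}(r^{-q})$; and the $k$-term is $O_{s-1,a}(r^{-q})$ by Definition \ref{Def:AH}. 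Applying the Dirac operator $\slashed D$ and using the PDE $\slashed D\psi = \tfrac12\tr_g(k)e_0\psi + \tfrac n2\mathbf i\psi$ together with the corresponding identity $\mathring{\slashed D}\psi^\infty = \tfrac n2\mathbf i\psi^\infty$, one obtains a Schrödinger-type equation of the schematic form $\mathring{\slashed D}\sigma = \tfrac n2\mathbf i\sigma + F$, where the inhomogeneity satisfies $F = O_{s-1,a}(r^{-q}|\psi^\infty|_b)$ — the key point being that every error term is quadratically small, i.e.\ a decaying coefficient times either $\psi^\infty$ (which grows like $r^{1/2}$) or $\sigma$ itself.

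Next I would invoke the standard mapping properties of the hyperbolic Dirac operator on weighted spaces. The operator $\mathring{\slashed D} - \tfrac n2\mathbf i$ (equivalently, after the conformal change, the flat Dirac operator) is an isomorphism between appropriate weighted Hölder spaces in the asymptotic range of weights; in particular, modulo a finite-dimensional cokernel that is absorbed by the imaginary Killing spinors themselves, $F \in C^{s-1,a}_{q - 1/2}$ (measuring against $r^{-1/2}$, the growth rate of $\psi^\infty$) forces $\sigma \in C^{s,a}_{q-1/2}$, which is exactly the statement $\sigma = O_{s,a}(r^{-q}|\psi^\infty|_b)$. Since the excerpt has already established $\sigma \in H^1 \cap C^{2,a}$ in Theorem \ref{pmt maerten}, I would use that as the base case and bootstrap: feed the a priori bound $\sigma = o(|\psi^\infty|_b)$ into the right-hand side to see $F$ decays at the claimed rate, then apply the weighted Schauder estimate for the (conformally flat) Dirac system to upgrade to $\sigma = O_2(r^{-q}|\psi^\infty|_b)$. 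The $C^{2,a}$ regularity already in hand means $s = 2$ derivatives are available on the coefficients where needed.

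The main obstacle is making the weighted-space isomorphism step precise in the presence of the non-constant, growing model spinor $\psi^\infty$: one cannot simply cite a scalar Fredholm statement, because $\psi^\infty$ itself lies in (or at the threshold of) the relevant weighted space, so the linear operator has a nontrivial kernel exactly spanned by imaginary Killing spinors. The right formulation is to work with the \emph{rescaled} unknown $\sigma/|\psi^\infty|_b$ or, cleaner, to pass to the conformally related flat picture (as in the proof of Proposition \ref{Prop construction killing spinor}, where $\psi^\infty = \xi^{-1/2}(1-\mathbf i\varrho e_\varrho)u$ with $u$ constant), so that decay of $\sigma$ relative to $\psi^\infty$ becomes ordinary decay of a flat spinor and the standard asymptotically-flat Dirac estimates apply verbatim. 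I would also need to check that the indicial roots of $\mathring{\slashed D} - \tfrac n2\mathbf i$ have no obstruction in the interval $(\tfrac n2 - \tfrac12, n - \tfrac12)$ so that the weight $q$ in the stated range $q \in (\tfrac n2, n]$ is admissible; this is where the hypothesis $q > \tfrac n2$ is used, and it is the one computation I would carry out carefully rather than cite.
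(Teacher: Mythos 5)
Your route is genuinely different from the paper's, and it is worth contrasting the two. The paper does \emph{not} pass to a Dirac-type elliptic system and run weighted Fredholm theory. Instead, it exploits the full strength of the overdetermined first-order hypothesis $\widetilde\nabla\psi=0$ directly: since $\widetilde\nabla_j\sigma=-\widetilde\nabla_j\psi^\infty=O(r^{-q}|\psi^\infty|_b)$, one gets the pointwise gradient bound $|\nabla|\sigma||\le|\sigma|+O(r^{-q}|\psi^\infty|_b)$ outside a compact set. The paper then (a) uses $\sigma\in L^2(M)$ to show $|\sigma|\to0$ at infinity by contradiction, and (b) integrates the gradient bound along radial integral curves, using the elementary observation that $r^{-1/2}|\psi^\infty|_b$ is monotone along such curves, to conclude $|\sigma|=O(r^{-q}|\psi^\infty|_b)$; the $C^2$ upgrade is then a one-line elliptic bootstrap of the Dirac equation for $\sigma$. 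This is an entirely self-contained ODE/comparison argument.

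What each approach buys: the paper's argument requires no asymptotic Fredholm theory of Dirac operators on hyperbolic space — no indicial root computation, no mapping properties on weighted Hölder spaces, and no accounting for the cokernel and kernel (which, as you note, is nontrivially occupied by the imaginary Killing spinors themselves). All of that machinery, together with your proposed admissible-weight interval $(\tfrac n2-\tfrac12,n-\tfrac12)$, would need to be established from scratch, and you correctly flag this as the delicate step. The deeper structural point your write-up misses is that by applying $\slashed D$ to the first-order equation you discard exactly the information that makes the elementary argument work: the Dirac equation alone would not give the pointwise bound $|\nabla|\sigma||\le|\sigma|+O(r^{-q}|\psi^\infty|_b)$, whereas $\widetilde\nabla\psi=0$ does. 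Your approach is in this sense more general (it would apply even if $\psi$ only solved the Dirac equation), but at the cost of substantially heavier machinery that the paper entirely avoids. If you want to carry your proposal through, the indicial-root and cokernel analysis would have to be done in full, and the anisotropic weight $|\psi^\infty|_b$ (which depends on angle, not just $r$) would need to be handled carefully in the conformal-flat reduction; the paper sidesteps both issues.
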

\begin{proof}
Let $(\theta_1,\theta_2,\cdots, \theta_{n-1})$ be a coordinate system on $S^{n-1}$, and $(r,\theta_1,\theta_2,\dots, \theta_{n-1})$ be a hyperboloidal coordinate system on the asymptotic region of $M^n$ such that $|\psi^\infty|^2_b=t+c_0r\cos \theta_1$ for some constant $c_0\in[-1,1]$. 
Hence, $r^{-\frac{1}{2}}|\psi^\infty|_b$ is decreasing with respect to $r$.
Since $\widetilde{\nabla}_j \psi^\infty=O(r^{-q}|\psi^\infty|_b)$, we have $\widetilde{\nabla}_j \sigma=O(r^{-q}|\psi^\infty|_b)$. Consequently,
    \begin{equation}\label{ODE sigma}
        |\nabla |\sigma||\le |\sigma|+O(r^{-q}|\psi^\infty|_b)        
    \end{equation}
  outside the ball $B_R(0)$ where $R\gg1$ is chosen such that $|k|\le 1$ outside $B_R(0)$.  

  \medskip

    We first show that $|\sigma|\to 0$ when $r\to \infty$. If not, then there exists $\varepsilon>0$ such that $|\sigma|(p_i)\ge \varepsilon$ when $r(p_i)\to \infty$. Thus, for $r(p_i)$ sufficiently large,  Equation \eqref{ODE sigma} implies that $|\sigma|>\frac{\varepsilon}{2}$ for a fixed-size ball $B_{r_0}(p_i)$. However, this contradicts $\sigma\in L^2(M)$.  

    \medskip

    Next, we prove that $\sigma=O(r^{-q}|\psi^\infty|_b)$.   
    Let $\gamma(s): [0,\infty)\to M^n$ be a curve parametrized by its arclength such that $\gamma(0)=p$, $\gamma'(s)=\frac{\partial_r}{|\partial_r|}|_{\gamma(s)}$ and $\lim_{s\to \infty}r(\gamma(s))=\infty$. Then we have 
    \begin{equation*}
        \frac{d}{ds} |\sigma|(\gamma(s))\ge -|\sigma|(\gamma(s))-Cr^{-q}|\psi^\infty|_b.
    \end{equation*}
    Note that $r(\gamma(s))=O(e^{s})$, and
    \begin{equation*}
        \frac{d}{ds}[e^s|\sigma|(\gamma(s))]\ge -C_1 e^{(\frac{3}{2}-q)s}(r^{-\frac{1}{2}}|\psi^\infty|_b).
    \end{equation*}
    Since $|\sigma|\to 0$ as $r\to \infty$ and $[r^{-\frac{1}{2}}|\psi^\infty|_b]_{\gamma(s)}$ is decreasing with respect to $s$, we obtain 
    \begin{equation*}
        |\sigma|(\gamma(s))\le e^{-s}\int_s^\infty C_1[r^{-\frac{1}{2}}|\psi^\infty|_b]_{\gamma(s)} e^{(-q+\frac{3}{2})\tau}d\tau 
        =O(r^{-q}|\psi^\infty|_b).
    \end{equation*}
    Finally, the $C^2$-estimates follow by applying the standard elliptic estimates to the PDE $\widetilde{\slashed{D}}\sigma=O_{1,a}(r^{-q}|\psi^\infty|_b)$.
    \end{proof}

\begin{lemma} \label{NXY decay}
In case $\psi$ is null and type I, we have the following decay estimates:
    \begin{equation*}
        N-y_n^{-1}=O_2(r^{-q}y_n^{-1}), \qquad
        X-y_n^{-2}\nabla y_1=O_2(r^{-q}y_n^{-1}), \qquad
        Y-y_n^{-2}\nabla y_n=O_2(r^{-q}y_n^{-1}). 
    \end{equation*}
\end{lemma}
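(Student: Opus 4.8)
The plan is to deduce Lemma \ref{NXY decay} directly from Lemma \ref{decay estimate} together with Proposition \ref{infinty}. Since $\psi$ is type I null, so is its asymptotic limit $\psi^\infty$, and hence by Proposition \ref{infinty} we may fix upper half-space coordinates on the end in which
\[
 N(\psi^\infty)=y_n^{-1},\qquad X(\psi^\infty)=y_n^{-2}\mathring{\nabla}y_1,\qquad Y(\psi^\infty)=y_n^{-2}\mathring{\nabla}y_n .
\]
In particular $|\psi^\infty|_b^{2}=N(\psi^\infty)=y_n^{-1}$, so the conclusion of Lemma \ref{decay estimate} (whose hypothesis $\widetilde{\nabla}\psi=0$ is precisely our standing assumption on $\psi$) reads $\sigma=\psi-\psi^\infty=O_2\!\big(r^{-q}y_n^{-1/2}\big)$, and therefore $|\psi|\le|\psi^\infty|+|\sigma|=\big(1+O(r^{-q})\big)y_n^{-1/2}$ near infinity.

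First I would establish the $C^0$ bounds. Writing $\psi=\psi^\infty+\sigma$ and expanding the bilinear expressions defining $N,X,Y$, one finds for each index $i$
\begin{align*}
 N(\psi)-N(\psi^\infty)&=2\operatorname{Re}\langle\sigma,\psi^\infty\rangle+|\sigma|^{2},\\
 X_i(\psi)-X_i(\psi^\infty)&=\langle e_ie_0\psi^\infty,\sigma\rangle+\langle e_ie_0\sigma,\psi^\infty\rangle+\langle e_ie_0\sigma,\sigma\rangle,\\
 Y_i(\psi)-Y_i(\psi^\infty)&=\langle\mathbf i e_i\psi^\infty,\sigma\rangle+\langle\mathbf i e_i\sigma,\psi^\infty\rangle+\langle\mathbf i e_i\sigma,\sigma\rangle .
\end{align*}
Because Clifford multiplication by the unit vectors $e_i$, by $e_0$, and by $\mathbf i$ all preserve the Hermitian spinor norm, Cauchy--Schwarz bounds each right-hand side by $C\big(|\sigma|\,|\psi^\infty|+|\sigma|^{2}\big)=O\!\big(r^{-q}|\psi^\infty|_b^{2}\big)=O\!\big(r^{-q}y_n^{-1}\big)$, which is the assertion at the level of $C^0$.

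For the $O_2$ statement I would differentiate these identities once and twice. From the Killing equation $\mathring{\nabla}_i\psi^\infty=-\tfrac12\mathbf i e_i\psi^\infty$ we get $|\mathring{\nabla}^{(j)}\psi^\infty|\le C|\psi^\infty|$ for $j=0,1,2$; transplanting $\psi^\infty$ to the end of $(M,g)$ and replacing $\mathring{\nabla}$ by the spin connection of $g$ only changes this by $O(r^{-q}|\psi^\infty|)$, since $g-b=O_2(r^{-q})$, while $\nabla\sigma$ and $\nabla^2\sigma$ are $O(r^{-q}|\psi^\infty|_b)$ by Lemma \ref{decay estimate}. The one point that has to be checked is that the weight $|\psi^\infty|_b\sim y_n^{-1/2}$ is admissible under differentiation: from $\mathring{\nabla}_iN(\psi^\infty)=-Y_i(\psi^\infty)$ and $|Y(\psi^\infty)|_b=N(\psi^\infty)$ (type I) one obtains $\big|\mathring{\nabla}\,|\psi^\infty|_b\big|_b=\tfrac12|\psi^\infty|_b$, so $|\psi^\infty|_b$ has uniformly bounded logarithmic derivative and $\nabla^{(j)}\big(r^{-q}|\psi^\infty|_b\big)=O(r^{-q}|\psi^\infty|_b)$ for $j\le2$. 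Feeding these estimates into the once- and twice-differentiated bilinear identities shows that the first and second derivatives of $N-y_n^{-1}$, $X-y_n^{-2}\mathring{\nabla}y_1$ and $Y-y_n^{-2}\mathring{\nabla}y_n$ are again $O(r^{-q}y_n^{-1})$, which completes the proof.

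The main obstacle is technical rather than conceptual: one must carefully handle the position-dependent weight $|\psi^\infty|_b$, which grows like $y_n^{-1/2}$ and can be as small as $r^{-1/2}$ in certain asymptotic directions, and simultaneously keep track of the discrepancy between the spin connections (and Hermitian inner products) attached to $b$ and to $g$ on the overlap. Once this is organised, the lemma reduces to Cauchy--Schwarz and the Leibniz rule applied to the defining bilinear forms of $N,X,Y$.
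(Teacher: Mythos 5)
Your proof is correct and follows the same approach as the paper's (which is considerably terser, simply citing Lemma \ref{decay estimate} to get $\sigma=O_2(r^{-q}y_n^{-1/2})$ and then ``combining with Proposition \ref{infinty}''). You have correctly supplied the intermediate steps the paper elides: the conversion of the weight via $|\psi^\infty|_b^2=N(\psi^\infty)=y_n^{-1}$, the bilinear expansions and Cauchy--Schwarz for the $C^0$ bounds, and the observation that $|\psi^\infty|_b$ has bounded logarithmic derivative so the weight survives differentiation.
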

\begin{proof}
   According to Lemma \ref{decay estimate}, we have $\psi-\psi^\infty=O(r^{-q}y_n^{-\frac{1}{2}})$.
   Hence, a standard elliptic equation argument implies that $\psi-\psi^\infty=O_2(r^{-q}y_n^{-\frac{1}{2}})$. 
   Combining this with Proposition \ref{infinty}, it follows that the decay estimates hold. 
\end{proof}


\subsection{Extracting differential forms}

The next theorem shows how to square the spinor and extract the corresponding differential forms.
Physically, this corresponds to constructing the bosonic partners of the fermion $\psi$.

\begin{theorem}\label{Thm: N,X,Y,omega}
    Let $\psi\in\overline{\mathcal S}(M)$, $\psi\ne0$, be a spinor satisfying 
    \begin{align*}
    \nabla_i\psi=-\frac12k_{ij}e_je_0\psi-\frac12\mathbf{i}e_i\psi.
\end{align*}
Then $N=N(\psi)$, $X=X(\psi)$, $Y=Y(\psi)$ and $\omega=\omega(\psi)$ satisfy
\begin{equation} \label{NXYw}
\begin{split}
    \nabla_i N=&-k_{ij}X_j-Y_i, \\
    \nabla_i X_j=&-k_{ij}N+\omega_{ij},\\
    \nabla_i Y_j=&-g_{ij}N+k_{il}\omega_{jl},\\
    \nabla_i\omega_{jk}=&g_{ij}X_k-g_{ik}X_j-k_{ij}Y_k+k_{ik}Y_j.
    \end{split}
\end{equation}
\end{theorem}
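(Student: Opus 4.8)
The plan is to compute the covariant derivative of each of the four quantities $N,X,Y,\omega$ directly from the defining spinorial equation $\nabla_i\psi=-\tfrac12 k_{ij}e_je_0\psi-\tfrac12\mathbf i e_i\psi$, using the compatibility of $\nabla$ with Clifford multiplication and with the Hermitian inner product $\langle\cdot,\cdot\rangle$ on $\overline{\mathcal S}$. The only inputs needed are: (i) $\nabla_i$ is a metric connection, so $\nabla_i\langle \Phi,\Psi\rangle=\langle\nabla_i\Phi,\Psi\rangle+\langle\Phi,\nabla_i\Psi\rangle$; (ii) Clifford multiplication by a fixed orthonormal frame vector $e_l$ is parallel, i.e. $\nabla_i(e_l\phi)=e_l\nabla_i\phi$ (working at a point in a synchronous frame, or keeping the $\nabla_i e_l$ terms which cancel); (iii) the Clifford relations $e_le_m+e_me_l=-2\delta_{lm}$ for spatial indices, $e_0^2=\mathrm{id}$, $e_0e_l=-e_le_0$, together with the self-adjointness/skew-adjointness properties of $e_l$, $e_0$, and $\mathbf i$ with respect to $\langle\cdot,\cdot\rangle$ that make $N,X,Y,\omega$ real.

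First I would record the "master identity": for any spinor-valued bilinear expression $\langle A\psi,\psi\rangle$ with $A$ a fixed Clifford-algebra element,
\begin{align*}
    \nabla_i\langle A\psi,\psi\rangle = \langle A\nabla_i\psi,\psi\rangle+\langle A\psi,\nabla_i\psi\rangle,
\end{align*}
and then substitute $\nabla_i\psi=-\tfrac12 k_{ij}e_je_0\psi-\tfrac12\mathbf i e_i\psi$ into both slots. This produces, for each of $A\in\{\mathrm{id},\,e_je_0,\,\mathbf i e_j,\,\mathrm{Im}(e_je_ke_0)\}$, a sum of terms of the form $\langle A e_j e_0\psi,\psi\rangle$ and $\langle A e_i\psi,\psi\rangle$ (and their conjugates). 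Each such term is then rewritten, using the Clifford relations to move the extra $e_i$ or $e_je_0$ past $A$, as a linear combination of the four basic bilinears $N,X_k,Y_k,\omega_{jk}$, contracted against $k$ or $g$. For instance, for $N=\langle\psi,\psi\rangle$ one gets $\nabla_iN=-k_{ij}\langle e_je_0\psi,\psi\rangle-\mathrm{Re}\langle\mathbf i e_i\psi,\psi\rangle$; the first term is $-k_{ij}X_j$ and the second, since $\langle\mathbf i e_i\psi,\psi\rangle$ is purely imaginary (skew part) with real part... wait — one has to be careful: $Y_i=\langle\mathbf i e_i\psi,\psi\rangle$ is defined to be the real-valued quantity, so $\nabla_iN=-k_{ij}X_j-Y_i$ comes out once the reality bookkeeping is done correctly. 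The other three equations are obtained the same way, with the $g_{ij}N$ terms in $\nabla_iY_j$ and $\nabla_i\omega_{jk}$ arising from the Clifford contraction $e_ie_i=-\mathrm{id}$ (no sum) when the index $i$ coincides with an index inside $A$.

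The main obstacle is purely bookkeeping rather than conceptual: keeping track of which bilinears are real versus imaginary, the sign conventions for Clifford multiplication on $\overline{\mathcal S}=\mathcal S\oplus\mathcal S$ (recall $e_l$ acts as $(e_l\psi_1,-e_l\psi_2)$ and $e_0$ as the swap), and correctly extracting the symmetric-in-$(jk)$ versus antisymmetric-in-$(jk)$ parts so that $\omega$ really is a two-form and the identities close among exactly these four objects. I would organize the computation by first proving the $\psi^\infty$ case on hyperbolic space (where $k=0$, $\nabla_i N=-Y_i$, $\nabla_iY_j=-b_{ij}N$, $\nabla_iX_j=\omega_{ij}$, $\nabla_i\omega_{jk}=b_{ij}X_k-b_{ik}X_j$) as a warm-up that isolates the algebraic content, then add back the $k$-dependent terms, which enter linearly and only through the substitution, so their contributions are immediate once the $k=0$ algebra is in place. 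A useful consistency check at the end: contracting the four equations should reproduce $\nabla_i(N^2-|X|^2-|Y|^2+\tfrac12|\omega|^2)$ and the two-form $\nabla(N\omega-X\wedge Y)$ in forms compatible with the monotonicity/conservation statements announced in the overview of Section \ref{S:causality}.
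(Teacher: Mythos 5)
Your proposal is correct and coincides with the paper's own proof: both are direct term-by-term computations of $\nabla_i\langle A\psi,\psi\rangle$ after substituting the spinorial equation and rewriting with the Clifford relations. The only stylistic difference is that the paper splits the $\nabla_i\omega_{jk}$ computation explicitly into the cases $i\notin\{j,k\}$ and $i\in\{j,k\}$ rather than giving the uniform anticommutator bookkeeping you sketch; your ``warm-up with $k=0$'' is not present in the paper but is compatible with it, since the $k$-dependent terms indeed enter linearly.
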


\begin{proof}[Proof of Theorem \ref{Thm: N,X,Y,omega}]
Recall from Section \ref{S:preliminaries} that $N,X,Y,\omega$ are real-valued.
We compute
\begin{align*}
    \begin{split}
        \nabla_iN=&\nabla_i\langle \psi,\psi\rangle\\
        =&-\frac12\langle (k_{ij}e_je_0+\mathbf{i}e_i)\psi,\psi\rangle-\frac12\langle \psi,(k_{ij}e_je_0+\mathbf{i}e_i)\psi\rangle\\
        =&-k_{ij}X_j-Y_i.
    \end{split}
\end{align*}
Moreover,
\begin{align*}
    \begin{split}
        \nabla_iX_j=&\nabla_i\langle e_je_0\psi,\psi\rangle\\
        =&-\frac12\langle e_je_0(k_{il}e_le_0+\mathbf{i}e_i)\psi,\psi\rangle-\frac12\langle e_je_0\psi,(k_{il}e_le_0+\mathbf{i}e_i)\psi\rangle
        \\=&-k_{ij}N-\frac12i\langle e_je_0 e_i\psi,\psi\rangle +\frac12i \langle e_je_0\psi, e_i\psi\rangle\\
        =&-k_{ij}N+\omega_{ij},
    \end{split}
\end{align*}
where $\omega_{ij}=\operatorname{Im}\langle e_ie_je_0\psi,\psi\rangle$, and 
\begin{align*}
    \begin{split}
        \nabla_iY_j=&\nabla_i\langle \mathbf{i}e_j\psi,\psi\rangle
        \\ =&-\frac12\langle \mathbf{i}e_j(k_{il}e_le_0+\mathbf{i}e_i)\psi,\psi\rangle-\frac12\langle \mathbf{i}e_j\psi,(k_{il}e_le_0+\mathbf{i}e_i)\psi\rangle
        \\
        =&-g_{ij}N-\frac12k_{il}\langle \mathbf{i}e_je_le_0\psi,\psi\rangle-\frac12k_{il}\langle \mathbf{i}e_j\psi,e_le_0\psi\rangle\\
        =&-g_{ij}N+k_{il}\omega_{jl}.
    \end{split}
\end{align*}
Furthermore, for pairwise disjoint $i,j,k$,
\begin{align*}
    \begin{split}
        \nabla_i\omega_{jk}=&
        \nabla_i\operatorname{Im}\langle e_je_ke_0\psi,\psi\rangle\\
        =&-\frac12\operatorname{Im}\langle e_je_ke_0(k_{il}e_le_0+\mathbf{i}e_i)\psi,\psi\rangle
        -\frac12\operatorname{Im}\langle e_je_ke_0\psi,(k_{il}e_le_0+\mathbf{i}e_i)\psi\rangle\\
           =&-\frac12k_{il}\operatorname{Im}\langle e_je_ke_0e_le_0\psi,\psi\rangle
        -\frac12k_{il}\operatorname{Im}\langle e_je_ke_0\psi, e_le_0\psi\rangle\\
                   =&\frac12k_{il}\operatorname{Im}\langle e_je_ke_l\psi,\psi\rangle
        +\frac12k_{il}\operatorname{Im}\langle e_je_k\psi, e_l\psi\rangle\\
        =&\sum_{l\ne j,k}\left(\frac12k_{il}\operatorname{Im}\langle e_je_ke_l\psi,\psi\rangle
        +\frac12k_{il}\operatorname{Im}\langle e_je_k\psi, e_l\psi\rangle\right)\\
        &+\frac12k_{ij}\operatorname{Im}\langle e_je_ke_j\psi,\psi\rangle
        +\frac12k_{ij}\operatorname{Im}\langle e_je_k\psi, e_j\psi\rangle       \\
      &+  \frac12k_{ik}\operatorname{Im}\langle e_je_ke_k\psi,\psi\rangle
        +\frac12k_{ik}\operatorname{Im}\langle e_je_k\psi, e_k\psi\rangle\\
        =&0-k_{ij}Y_k+k_{ik}Y_j,
    \end{split}
\end{align*}
and for disjoint $i,j$ we compute (not summing over $i$)
\begin{align*}
    \begin{split}
        \nabla_i\omega_{ij}=&
        \nabla_i \operatorname{Im}\langle e_ie_je_0\psi,\psi\rangle\\
        =&-\frac12\operatorname{Im}\langle e_ie_je_0(k_{il}e_le_0+\mathbf{i}e_i)\psi,\psi\rangle
        -\frac12\operatorname{Im}\langle e_ie_je_0\psi, (k_{il}e_le_0+\mathbf{i}e_i)\psi\rangle\\
        =&\langle e_je_0\psi,\psi\rangle+\frac12k_{il}\operatorname{Im}\langle e_ie_je_l\psi,\psi\rangle
        +\frac12k_{il}\operatorname{Im}\langle e_ie_j\psi, e_l\psi\rangle\\
        =&X_j+\sum_{l\ne i,j}\left(\frac12k_{il}\operatorname{Im}\langle e_ie_je_l\psi,\psi\rangle
        +\frac12k_{il}\operatorname{Im}\langle e_ie_j\psi, e_l\psi\rangle\right)\\
        &+\frac12k_{ii}\operatorname{Im}\langle e_ie_je_i\psi,\psi\rangle
        +\frac12k_{ii}\operatorname{Im}\langle e_ie_j\psi, e_i\psi\rangle\\
        &+\frac12k_{ij}\operatorname{Im}\langle e_ie_je_j\psi,\psi\rangle
        +\frac12k_{ij}\operatorname{Im}\langle e_ie_j\psi, e_j\psi\rangle\\
        =&X_j+0-k_{ii} Y_j+k_{ij}Y_i.
    \end{split}
\end{align*}
This finishes the proof.
\end{proof}

\begin{corollary}
     Given $\psi$ as in Theorem \ref{Thm: N,X,Y,omega}, $N$ is nowhere vanishing.
\end{corollary}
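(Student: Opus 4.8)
The plan is to read off the first identity of \eqref{NXYw}, namely $\nabla_i N = -k_{ij}X_j - Y_i$, and combine it with the pointwise inequalities $|X(\psi)|\le N(\psi)$ and $|Y(\psi)|\le N(\psi)$ recorded in Section \ref{S:preliminaries}. Since $|k_{ij}X_j|\le |k|_g\,|X|$ by Cauchy--Schwarz, this yields the differential inequality
\begin{align*}
    |\nabla N| \;\le\; |k|_g\,|X| + |Y| \;\le\; \big(|k|_g + 1\big)\,N
\end{align*}
at every point of $M$. Because $N = |\psi|^2 \ge 0$ and $\psi\ne 0$, the open set $U := \{\,p\in M : N(p) > 0\,\}$ is nonempty.

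The next step is to show that $U$ is also closed. Fix $p\in\overline U$ and some $q\in U$, and join them by a smooth curve $\gamma\colon[0,1]\to M$. On the compact image $\gamma([0,1])$ we have $|k|_g\le C$ for some constant $C$, so $f(s) := N(\gamma(s))$ satisfies $f'(s) \ge -(C+1)\,|\gamma'(s)|\,f(s)$ wherever $f>0$. If $f$ vanished at some first parameter $s_0\in(0,1]$, then integrating $\tfrac{d}{ds}\log f$ on $[0,s_0)$ and applying Gronwall would give $f(s_0^-) \ge f(0)\,\exp\!\big(-(C+1)\,\mathrm{length}(\gamma)\big) > 0$, contradicting continuity of $f$ together with $f(s_0)=0$. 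Hence $f>0$ on all of $[0,1]$, so $N(p)=f(1)>0$ and $p\in U$.

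Thus $U$ is a nonempty open and closed subset of the connected manifold $M$ (Definition \ref{Def:AH}), so $U=M$, i.e.\ $N$ is nowhere vanishing. There is no real obstacle here; the only point requiring slight care is that $|k|_g$ need not be globally bounded, which is why the Gronwall estimate is applied along individual compact curve segments rather than globally. Equivalently, one may invoke uniqueness for the linear first-order system $\nabla_i\psi = -\tfrac12 k_{ij}e_je_0\psi - \tfrac12\mathbf{i}e_i\psi$, which shows directly that the zero set of $\psi$ is open (and obviously closed), and conclude in the same way.
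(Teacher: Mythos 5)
Your proof is correct, and the central estimate is exactly the one the paper uses: from the first line of \eqref{NXYw} together with $|X|\le N$ and $|Y|\le N$ one gets $|\nabla N| \le (|k|_g + 1)N$. The difference lies in how the two arguments are anchored. The paper invokes Lemma \ref{decay estimate} to conclude that $N$ is nonvanishing near infinity, and then the gradient estimate propagates positivity inward along radial curves. You instead exploit only the hypothesis $\psi\ne 0$ from Theorem \ref{Thm: N,X,Y,omega} to obtain a point where $N>0$, and then run a standard open-closed (Gronwall) argument on the connected manifold. Your version is slightly more self-contained, since it does not rely on the asymptotic decay of $\sigma=\psi-\psi^\infty$, and so it applies verbatim even in settings where no asymptotic model spinor $\psi^\infty$ has been fixed. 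One small remark: your worry that $|k|_g$ might not be globally bounded is unfounded in this paper's setting — by Definition \ref{Def:AH} the tensor $k$ decays at infinity and is continuous on $M$, hence globally bounded once one bounds it on the compact core — so the paper's global constant $C_k$ is legitimate; working along compact curve segments is a harmless (and not strictly necessary) precaution.
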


\begin{proof}
    Since $|X|\le N$ and $|Y|\le N$, Theorem \ref{Thm: N,X,Y,omega} yields
    \begin{align*}
        |\nabla_i N|\le (C_k+1) N
    \end{align*}
    where $C_k$ is a constant such that $|k|\le C_k$.
    Hence, the result follows from the fact that $N$ is nonvanishing near infinity in view of Lemma \ref{decay estimate}.
\end{proof}

We also have the following Fierz-type identity:

\begin{corollary}\label{Cor: constancy}
     Given $\psi$ as in Theorem \ref{Thm: N,X,Y,omega}, the quantity
    \begin{align*}
        N^2-|X|^2-|Y|^2+\frac12|\omega|^2
    \end{align*}
    is constant.
\end{corollary}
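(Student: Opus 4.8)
The plan is to show directly that $F := N^2 - |X|^2 - |Y|^2 + \tfrac12|\omega|^2$ is parallel and then invoke connectedness of $M$. The only inputs needed are the first-order system \eqref{NXYw} of Theorem \ref{Thm: N,X,Y,omega} and the fact that $N, X, Y, \omega$ are real-valued. Working in a local orthonormal frame $e_1,\dots,e_n$ (so that $g_{ij} = \delta_{ij}$), I would differentiate the four summands of $F$ using \eqref{NXYw}:
\begin{align*}
\nabla_i(N^2) &= -2Nk_{ij}X_j - 2NY_i, &
\nabla_i(|X|^2) &= -2Nk_{ij}X_j + 2\omega_{ij}X_j,\\
\nabla_i(|Y|^2) &= -2NY_i + 2k_{il}\omega_{jl}Y_j, &
\nabla_i\!\left(\tfrac12|\omega|^2\right) &= 2\omega_{ik}X_k - 2k_{ij}\omega_{jk}Y_k,
\end{align*}
where in the last identity the two $X$-terms $\delta_{ij}X_k - \delta_{ik}X_j$ inside $\nabla_i\omega_{jk}$ combine into $2\omega_{ik}X_k$ via the antisymmetry $\omega_{jk} = -\omega_{kj}$, and likewise the two $Y$-terms $-k_{ij}Y_k + k_{ik}Y_j$ combine into $-2k_{ij}\omega_{jk}Y_k$ after relabelling $j \leftrightarrow k$.

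The remainder is pure bookkeeping. In $\nabla_i F = \nabla_i(N^2) - \nabla_i(|X|^2) - \nabla_i(|Y|^2) + \nabla_i(\tfrac12|\omega|^2)$ the cancellations occur in three groups: the $Nk_{ij}X_j$ and $NY_i$ terms of $\nabla_i(N^2)$ are killed by $-\nabla_i(|X|^2)$ and $-\nabla_i(|Y|^2)$ respectively; the $\omega X$ terms $-2\omega_{ij}X_j$ from $-\nabla_i(|X|^2)$ cancel $+2\omega_{ik}X_k$ from $\nabla_i(\tfrac12|\omega|^2)$ (these agree after relabelling the dummy index); and, using the antisymmetry of $\omega$ once more to relabel $j \leftrightarrow k$, the $k\omega Y$ terms $-2k_{il}\omega_{jl}Y_j$ from $-\nabla_i(|Y|^2)$ cancel $-2k_{ij}\omega_{jk}Y_k$ from $\nabla_i(\tfrac12|\omega|^2)$. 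Hence $\nabla F \equiv 0$, and since $M$ is connected (Definition \ref{Def:AH}), $F$ is constant on $M$.

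There is no genuine obstacle here, since the computation is entirely elementary; the only place demanding attention is the index gymnastics — keeping signs straight when collapsing the pairs of $X$- and $Y$-terms in $\nabla_i\omega_{jk}$, and relabelling consistently in the $k\omega Y$ group. It is worth noting that the identity $\nabla_i\omega_{jk} = g_{ij}X_k - g_{ik}X_j - k_{ij}Y_k + k_{ik}Y_j$ from Theorem \ref{Thm: N,X,Y,omega} is tensorial, hence valid for all index combinations (including the diagonal cases treated separately in that proof), so it may be applied uniformly without case distinctions.
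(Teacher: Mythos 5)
Your proposal is correct and takes essentially the same approach as the paper: differentiate each summand via the system \eqref{NXYw}, exploit the antisymmetry of $\omega$ to match terms, observe that everything cancels, and use connectedness. Your remark that the formula for $\nabla_i\omega_{jk}$ is a tensorial identity — and hence may be applied without the case distinctions used in the proof of Theorem \ref{Thm: N,X,Y,omega} — is a useful and accurate clarification.
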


\begin{proof}
By Theorem \ref{Thm: N,X,Y,omega} we have
    \begin{align*}
\begin{split}
    &  \frac{1}{2}\nabla_i\left(|X|^2+|Y|^2-\frac12|\omega|^2\right)-\frac{1}{2}\nabla_iN^2
    \\=&-k_{ij}X_jN+\omega_{ij}X_j-g_{ij}Y_j N+k_{il} \omega_{jl}Y_j
    \\&-\frac{1}{2}(g_{ij}X_k-g_{ik}X_j-k_{ij}Y_k+k_{ik}Y_j) \omega_{jk}+k_{ij}X_j N+Y_i N=0.
\end{split}
\end{align*}
Hence $N^2-|X|^2-|Y|^2+\frac12|\omega|^2$ is constant.
\end{proof}

\subsection{Proof of Theorem \ref{Thm Intro cannot change causal type}}

\begin{proposition}\label{Prop: zeta = 0}
 Given $\psi$ as in Theorem \ref{Thm: N,X,Y,omega}, the following holds.
Suppose that either 
\begin{align*}
    N\omega =XY-YX
\end{align*}
at a point $p\in M$, or that
\begin{align*}
    N(\psi^\infty)=|Y(\psi^\infty)|
\end{align*}
Then
    \begin{align*}
        N\omega = XY-YX
    \end{align*}
    everywhere in $M$.
\end{proposition}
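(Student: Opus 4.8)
The plan is to encode the defect as a two-form and show it satisfies a linear first-order transport system. Set $\zeta_{ij}:=N\omega_{ij}-X_iY_j+X_jY_i$, a two-form on $M$; by Proposition \ref{Prop N=X implies Nw=XY} the hypothesis ``$N\omega=XY-YX$ at $p$'' is exactly $\zeta(p)=0$, and the goal is $\zeta\equiv0$ on $M$. The engine is a first-order identity: inserting the evolution equations of Theorem \ref{Thm: N,X,Y,omega} and writing $\omega_{jk}=N^{-1}(X_jY_k-X_kY_j)+N^{-1}\zeta_{jk}$ (legitimate since $N$ is nowhere zero), one checks that every term not containing a factor of $\zeta$ cancels, leaving
\begin{align*}
    \nabla_i\zeta_{jk}=N^{-1}\bigl(-Y_i\zeta_{jk}-Y_k\zeta_{ij}+Y_j\zeta_{ik}\bigr)-N^{-1}k_{il}\bigl(X_l\zeta_{jk}+X_j\zeta_{kl}-X_k\zeta_{jl}\bigr).
\end{align*}
Since $|X|,|Y|\le N$, the coefficients are bounded by $C(1+|k|)$, so along any $C^1$ curve $\zeta$ solves a linear homogeneous first-order ODE with continuous coefficients. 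This settles the first case: if $\zeta(p)=0$, then $\zeta$ vanishes along every curve from $p$ by uniqueness, hence $\zeta\equiv0$ since $M$ is connected.

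\medskip

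For the second case I combine this identity with a monotonicity formula and the decay estimates. Passing to $\widehat\zeta:=N^{-1}\zeta$ gives a sharper identity in which the $\zeta_{jk}$-term also drops,
\begin{align*}
    \nabla_i\widehat\zeta_{jk}=N^{-1}\bigl(-Y_k\widehat\zeta_{ij}+Y_j\widehat\zeta_{ik}\bigr)-N^{-1}k_{il}\bigl(X_j\widehat\zeta_{kl}-X_k\widehat\zeta_{jl}\bigr),
\end{align*}
whence $\nabla_i|\widehat\zeta|^2=-4N^{-1}(\widehat\zeta^2Y)_i+O(|k|\,|\widehat\zeta|^2)$, where $\widehat\zeta^2$ is the matrix square of the antisymmetric matrix $(\widehat\zeta_{ij})$ and is therefore negative semidefinite. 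Contracting with $\widehat Y:=Y/N$ yields the monotonicity
\begin{align*}
    \nabla_{\widehat Y}|\widehat\zeta|^2\ \ge\ -C|k|\,|\widehat\zeta|^2,
\end{align*}
whose coefficient is only $O(|k|)$ — integrable along rays to infinity — rather than $O(1)$; this is the decisive gain.

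\medskip

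Now I feed in the asymptotic hypothesis. By Corollary \ref{cor N^2-X^2-Y^2-w^2 psi -> u}, $|Y(\psi^\infty)|^2-N(\psi^\infty)^2$ is constant on $\mathbb H^n$, so $N(\psi^\infty)=|Y(\psi^\infty)|$ holds everywhere; Proposition \ref{Prop N=X implies Nw=XY}, whose proof is a pointwise algebraic computation valid for any spinor, then gives $\zeta(\psi^\infty)\equiv0$ on $\mathbb H^n$ and shows $\psi^\infty$ is type I, so $|\mathring\nabla N(\psi^\infty)|_b=N(\psi^\infty)$ and $N(\psi^\infty)$ decays to zero only toward a single point of the conformal boundary. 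Since $\zeta$ is built bilinearly from $N,\omega$ and $X,Y$, which are quadratic in $\psi$, the decay estimate $\sigma=\psi-\psi^\infty=O_2(r^{-q}|\psi^\infty|_b)$ of Lemma \ref{decay estimate} together with $\zeta(\psi^\infty)=0$ gives $|\zeta(\psi)|=O(r^{-q}N(\psi^\infty)^2)$, hence $|\widehat\zeta(\psi)|=O(r^{-q}N(\psi^\infty))=O(r^{1-q})\to0$ at spatial infinity (recall $q>\tfrac n2>1$). Finally, fix a small $\delta>0$ and work in $\{N(\psi^\infty)<\delta\}$ inside the end: there $\widehat Y(\psi)$ is a small perturbation of the unit gradient flow of $-\log N(\psi^\infty)$, so the forward $\widehat Y$-flow from any point $p$ stays in the end, has $r\to\infty$ exponentially fast and $N(\psi)\to0$. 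Along this flow $|\widehat\zeta(\psi)|^2\to0$ by the decay estimate, while $s\mapsto\exp\!\bigl(\int_0^s C|k|\bigr)|\widehat\zeta(\psi)|^2$ is non-decreasing by the monotonicity formula; letting $s\to\infty$ forces $|\widehat\zeta(\psi)|^2(p)=0$. Thus $\zeta(\psi)$ vanishes at a point of $M$, and the first case concludes $\zeta\equiv0$.

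\medskip

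The main obstacle, I expect, is this last step: controlling the $\widehat Y$-flow as a perturbation of the gradient flow of $-\log N(\psi^\infty)$ near the point of the conformal boundary where $N(\psi^\infty)$ degenerates, and making uniform the asymptotic estimates that drive the Gronwall argument. The two displayed first-order identities, and the cancellations behind them, are routine but lengthy to verify.
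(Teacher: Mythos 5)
Your proof of the first case is essentially the paper's: both rewrite $\nabla\zeta$ as a linear first-order transport identity with coefficients bounded by $C(1+|k|)$ and invoke ODE uniqueness; the only cosmetic difference is that you substitute $\omega=N^{-1}(X\wedge Y)+N^{-1}\zeta$ into the relation $\nabla_i\zeta_{jk}=-(Y\wedge\omega)_{ijk}-k_{il}(X\wedge\omega)_{jkl}$, whereas the paper decomposes $\omega=\alpha\,X\wedge Y+\beta$ with $\beta\perp X\wedge Y$ and uses $|\beta|\le|\zeta|$; the resulting gradient bound on $|\zeta|$ is the same.

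For the second case your route is genuinely different from, and somewhat cleaner than, the paper's. The paper keeps $\zeta$ itself, derives the Lipschitz estimate $|\nabla|\zeta||\le 3.3|\zeta|$ outside a compact set (where $|k|\le 0.1$), flows along the explicit integral curves of $\nabla y_n$, and plays the Gronwall lower bound $|\zeta|(\gamma(s))\gtrsim e^{-3.3s}$ against the decay $|\zeta|(\gamma(s))=O(e^{(-q-2)s})$, which works because $q+2>n/2+2\ge 3.5>3.3$. You instead pass to $\widehat\zeta=N^{-1}\zeta$, observe that the additional term $-N^{-2}(\nabla_i N)\zeta_{jk}=N^{-1}(Y_i+k_{il}X_l)\widehat\zeta_{jk}$ exactly cancels the $Y_i\widehat\zeta_{jk}$ and $X_l\widehat\zeta_{jk}$ terms in the transport identity (I checked this), and extract the monotonicity $\nabla_{\widehat Y}|\widehat\zeta|^2\ge -C|k|\,|\widehat\zeta|^2$ from the sign of the negative-semidefinite matrix $\widehat\zeta^2$. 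Since $|k|$ is integrable along the $\widehat Y$-rays, the Gronwall factor $\exp(\int C|k|)$ stays bounded, so it suffices to know $|\widehat\zeta|\to 0$ at infinity—no comparison of explicit rates is needed, and the hard-coded constants $3.3$ and $0.1$ in the paper disappear. The trade-off, which you flag yourself, is that your flow is along $\widehat Y(\psi)$ rather than the exactly hyperbolic $\nabla y_n$, so you need to control it as a perturbation; this is unproblematic because $\widehat Y(\psi)=\mathring\nabla\log y_n+O(r^{-q})$ after Lemma \ref{decay estimate}, but it is a point the paper avoids by flowing along the background gradient. Both arguments are correct; yours has the merit of not depending on the numerical coincidence $q+2>3.3$.
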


\begin{proof}
Let  $\zeta_{jk}=N\omega_{jk}-(X\wedge Y)_{jk}$.
We compute using Theorem \ref{Thm: N,X,Y,omega}
\begin{align*}
    \begin{split}
    &\nabla_i\zeta_{jk}\\
       = &\nabla_i(N\omega_{jk}-X_jY_k+X_kY_j)\\
        =&N(g_{ij}X_k-g_{ik}X_j-k_{ij}Y_k+k_{ik}Y_j)-\omega_{jk}(k_{il}X_l+Y_i)\\
        &-Y_k(-k_{ij}N+\omega_{ij})+Y_j(-k_{ik}N+\omega_{ik})+X_k(-g_{ij}N+k_{il}\omega_{jl})-X_j(-g_{ik}N+k_{il}\omega_{kl})\\
             =&-\omega_{jk}(k_{il}X_l+Y_i)-Y_k\omega_{ij}+Y_j\omega_{ik}+X_kk_{il}\omega_{jl}-X_jk_{il}\omega_{kl}\\
             =&-(Y\wedge \omega)_{ijk}-k_{il}(X\wedge\omega)_{jkl}.
    \end{split}
\end{align*}
Decomposing $N\omega_{jk}=\alpha (X\wedge Y)_{jk}+\beta_{jk}$ where $\beta$ is a two-form perpendicular to $X\wedge Y$, we obtain
\begin{align*}
    N\nabla_i \zeta_{jk}=-(Y\wedge \beta)_{ijk}-k_{il}(X\wedge \beta)_{jkl}.
\end{align*}
Hence $|\zeta|=\sqrt{(1-\alpha)^2|X\wedge Y|^2+|\beta|^2}\ge|\beta|$ and
\begin{align*}
  N|  \nabla_i|\zeta|| \le 3(|Y|+|k||X|)|\beta| \le 3(|Y|+|k||X|)|\zeta|\le (3+3C_kN)|\zeta|
\end{align*}
where $C_k$ is a constant such that $|k|\le C_k$.
Suppose $\zeta\neq0$ at some point, then $\zeta\neq0 $ on all of $M^n$. 
Hence, we have already shown the first statement.

\medskip

Since $N(\psi^\infty)=|Y(\psi^\infty)|$, we have $N(\psi^\infty)\omega(\psi^\infty)=X(\psi^\infty)Y(\psi^\infty)-Y(\psi^\infty)X(\psi^\infty)$ by Proposition \ref{Prop N=X implies Nw=XY}.
Moreover, we may apply the decay estimates in Lemma \ref{NXY decay}.
Let $\gamma(s)$ be the integral curve of $\nabla y_n$ parametrized by arclength. Then $y_n^{-1}=O(e^{-s})$ and $r^{-1}=O(e^{-s})$. 
Without loss of generality we may assume that $|k|(\gamma(s))\le 0.1$.
This yields the ODE
\begin{equation*}
    \frac{d}{ds} |\zeta|(\gamma(s))\le 3.3|\zeta|(\gamma(s)).
\end{equation*}
Using the decay estimate $|\zeta|(\gamma(s))=O(e^{(-q-2)s})$ which follows from Lemma \ref{NXY decay} we obtain a contradiction. 
\end{proof}

\begin{remark}
    We have already seen by Proposition \ref{Prop N=X implies Nw=XY}, Corollary \ref{Cor: constancy} and Proposition \ref{Prop: zeta = 0} that if $N=|X|$ at a point $p$, we have
    \begin{align*}
        N^2=|X|^2+|Y|^2-\frac12|\omega|^2
    \end{align*}
    and 
    \begin{align*}
        N\omega=XY-YX
    \end{align*}
 everywhere in $M$.
A similar identity holds at a spinorial level, and one can show (with the help of Theorem \ref{Thm Intro hidden symmetry}) that if $N=|X|$ somewhere, we must have
    \begin{align*}
    (N+e_0X)(N-\mathbf{i}Y)\psi=\langle X,Y\rangle \mathbf{i}e_0\psi
\end{align*}
everywhere in $M$.
\end{remark}

\begin{proposition}\label{Prop: |X| = N=Y}
Given $\psi$ as in Theorem \ref{Thm: N,X,Y,omega}, suppose that 
    \begin{align*}
        N=|X|=|Y|
    \end{align*}
    at a point $p\in M$, or that
        \begin{align*}
        N(\psi^\infty)=|X|(\psi^\infty)=|Y|(\psi^\infty).
    \end{align*}
    Then     \begin{align*}
        N=|X|=|Y|
    \end{align*}
    and $X\perp Y$ everywhere in $M$.
Moreover, the vector fields $X,Y$ are everywhere non-vanishing.
\end{proposition}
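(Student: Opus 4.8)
Write $A:=N^{2}-|X|^{2}\ge 0$, $B:=N^{2}-|Y|^{2}\ge 0$, $f:=\langle X,Y\rangle$, where $A,B\ge 0$ records that $(N,X)$ and $(N,Y)$ are causal. The plan is first to establish, in both cases, that on all of $M$
\[
\zeta:=N\omega-X\wedge Y\equiv 0,\qquad Q:=N^{2}-|X|^{2}-|Y|^{2}+\tfrac12|\omega|^{2}\equiv 0 .
\]
If $N=|X|=|Y|$ at $p$, then Proposition~\ref{Prop N=X implies Nw=XY} gives $\zeta(p)=0=Q(p)$ and $X(p)\perp Y(p)$; Proposition~\ref{Prop: zeta = 0} (its first alternative) upgrades $\zeta\equiv 0$, and Corollary~\ref{Cor: constancy} upgrades $Q\equiv 0$. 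If instead $N(\psi^{\infty})=|X(\psi^{\infty})|=|Y(\psi^{\infty})|$, then $N(\psi^{\infty})=|Y(\psi^{\infty})|$ already gives $\zeta\equiv 0$ by the second alternative of Proposition~\ref{Prop: zeta = 0}, and since $Q$ is constant it suffices to see $Q\to 0$ at infinity. Once $\zeta=0$ one has $\tfrac12|\omega|^{2}=N^{-2}(|X|^{2}|Y|^{2}-f^{2})$, hence $Q=N^{-2}(AB-f^{2})$; putting $\psi^{\infty}$ into the normal form of Proposition~\ref{infinty} (legitimate as $\psi^{\infty}$ is null and type~I) and using $\psi-\psi^{\infty}=O_{2}(r^{-q}|\psi^{\infty}|_{b})$ from Lemma~\ref{decay estimate} gives $A,B,f=O(r^{-q}y_{n}^{-2})$ and $N^{-2}=O(y_{n}^{2})$, so $Q=O(r^{2-2q})\to 0$ since $q>n/2\ge 3/2$. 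In either case, combining the two identities yields $AB=f^{2}$ on all of $M$.

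Differentiating $A,B,f$ with Theorem~\ref{Thm: N,X,Y,omega}, and using $\zeta=0$, $AB=f^{2}$, $|X|^{2}=N^{2}-A$, $|Y|^{2}=N^{2}-B$, one obtains
\[
\nabla_{i}A=-\tfrac2N\bigl(AY_{i}+fX_{i}\bigr),\qquad |\nabla A|^{2}=4A(A+B),
\]
\[
\nabla_{i}B=-\tfrac2N\bigl(Bk_{ij}X_{j}+fk_{ij}Y_{j}\bigr),\qquad
\nabla_{i}f=-\tfrac1N\bigl(Ak_{ij}Y_{j}+BX_{i}\bigr)+\tfrac fN\nabla_{i}N .
\]
When $N=|X|=|Y|$ at a point $p\in M$, set $h:=A+B+|f|$: the bounds $|X|,|Y|\le N$, $|k|\le C_{k}$, $|\nabla N|\le (C_{k}+1)N$ give $|\nabla h|\le C\,h$, and since $h(p)=0$ and $M$ is connected, integrating along paths forces $h\equiv 0$. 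Thus $A\equiv B\equiv f\equiv 0$, i.e. $N=|X|=|Y|$ and $X\perp Y$ everywhere; and since $N$ is nowhere vanishing (corollary to Theorem~\ref{Thm: N,X,Y,omega}) the fields $X,Y$ are nowhere zero.

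For the $\psi^{\infty}$ hypothesis, assume $n\ge 4$. From $|\nabla A|^{2}=4A(A+B)$ we get $|\nabla\sqrt A|^{2}=A+B\ge A$ on $\{A>0\}$, so $\sqrt A$ admits no interior local maximum of positive value; but $A=O(r^{-q}y_{n}^{-2})=O(r^{2-q})\to 0$ at infinity (here $q>n/2\ge 2$), so $\sup_{M}\sqrt A$ is attained, hence equals $0$, i.e. $A\equiv 0$. Then $f^{2}=AB=0$ and $f\equiv 0$, so the $B$–equation reduces to $\nabla_{i}B=-\tfrac2N Bk_{ij}X_{j}$, i.e. $|\nabla\log B|\le 2|k|$ where $B>0$. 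Along the integral curves of $\nabla r/|\nabla r|$ one has $r\sim e^{s}$ and $|k|=O(r^{-q})$, so $|k|$ is integrable along such curves; were $B>0$ at a far–out point, $\log B$ would have a finite limit along the curve, contradicting $B=O(r^{2-q})\to 0$. Hence $B$ vanishes somewhere, and $|\nabla B|\le 2C_{k}B$ gives $B\equiv 0$. So again $N=|X|=|Y|$, $X\perp Y$, and $X,Y$ are nowhere zero. The case $n=3$, where one can no longer guarantee $q>2$, requires a separate argument.

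I expect the two genuinely delicate points to be: obtaining $Q\equiv 0$ under the $\psi^{\infty}$ hypothesis, which hinges on the decay estimates together with the factorization $Q=N^{-2}(AB-f^{2})$; and promoting ``$A\to 0$ at infinity'' to ``$A\equiv 0$'', where the clean identity $|\nabla\sqrt A|^{2}=A+B$ supplies a no–local–maximum principle for $\sqrt A$, while for $B$ one must instead exploit the integrability of $|k|=O(r^{-q})$ along rays to rule out a nonzero decaying tail. The rest is Clifford–algebra bookkeeping already contained in Theorem~\ref{Thm: N,X,Y,omega} and its corollaries.
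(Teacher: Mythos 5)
Your proposal follows the same preparatory scaffolding as the paper (Propositions~\ref{Prop N=X implies Nw=XY} and~\ref{Prop: zeta = 0}, Corollary~\ref{Cor: constancy}, and the key factorization $(N^2-|X|^2)(N^2-|Y|^2)=\langle X,Y\rangle^2$), but diverges at the final, decisive step. For the pointwise hypothesis, the paper constructs the Lipschitz function $F=\max\{N^2-|X|^2,\,100(N^2-|Y|^2)\}$, shows $|\nabla F|\le CF$ a.e., and concludes $F\equiv 0$ or $F>0$; your variant with $h=A+B+|f|$ is arguably cleaner, since the Cauchy--Schwarz estimates give $|\nabla h|\le Ch$ directly without any region splitting. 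The identity $|\nabla\sqrt{A}|^2=A+B$ that you extract is a genuinely nice observation the paper does not record.

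However, there is a gap. For the $\psi^\infty$ hypothesis you replace the paper's argument by a ``no interior maximum of $\sqrt A$'' principle plus an integrability argument for $B$, both of which lean on $A,B=O(r^{2-q})\to 0$ \emph{in all directions at infinity}, hence on $q>2$, hence on $n\ge 4$. You flag this, but it is a real incompleteness: the proposition (and its use in Theorem~\ref{Thm Intro rigidity}) is stated without dimensional restriction, and the paper's proof does cover $n=3$. The paper avoids the issue by integrating its Gronwall inequality for $F$ along integral curves $\gamma$ of $\nabla y_n$ (where $y_n\to\infty$), along which $F(\gamma(s))=O(e^{-(q+2)s})$ outstrips the allowed Gronwall rate for \emph{any} $q>0$. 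In fact your own $h$-Gronwall inequality, restricted to far-out integral curves of $\nabla y_n$ where $|k|\le 0.1$ so that $|\nabla h|\le 3.3h$, together with $h(\gamma(s))=O(e^{-(q+2)s})$ and $q+2>3.5$, would close the $n=3$ case with no further ideas; the ``$\sup\sqrt{A}$ attained'' route is the one that forces the unnecessary restriction. So the approach is sound and somewhat cleaner for $n\ge 4$, but as written it does not prove the full statement.
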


\begin{proof}
Recall that by Proposition \ref{Prop: zeta = 0} and Corollary \ref{Cor: constancy} we have
\begin{equation}\label{key1}
N\omega_{ij}=X_iY_j-X_jY_i \quad and \quad N^2-|X|^2-|Y|^2+\frac{1}{2}|\omega|^2=0.
\end{equation}
Consequently,
\begin{equation} \label{easy0}
\begin{split}
    0=&N^4-N^2(|X|^2+|Y|^2)+\frac{1}{2}|X_iY_j-X_jY_i|^2
    \\=&N^4-N^2(|X|^2+|Y|^2)+|X|^2|Y|^2-\langle X,Y\rangle^2
    \\=&(N^2-|X|^2)(N^2-|Y|^2)-\langle X,Y\rangle^2.
\end{split}
\end{equation}
Applying equation \eqref{key1} again, we obtain 
\begin{equation} \label{easy1}
\begin{split}
    &\frac{1}{2}\nabla_i(N^2-|X|^2)
    \\=& -Nk_{ij}X_j-NY_i+k_{ij}NX_j-\omega_{ij}X_j
    \\=&-N Y_i-N^{-1}(X_i\langle X , Y\rangle-|X|^2 Y_i)
    \\=&-(N^2-|X|^2)N^{-1}Y_i-\langle X,Y\rangle N^{-1} X_i,
\end{split}
\end{equation}
and 
\begin{equation} \label{easy2}
    \begin{split}
        &\frac{1}{2}\nabla_i(N^2-|Y|^2)
        \\=& 
        -Nk_{ij}X_j-N Y_i+NY_i-k_{il}\omega_{jl}Y_j
        \\=& -Nk_{ij} X_j-N^{-1}k_{il}(\langle X,Y\rangle Y_l-X_l |Y|^2)
        \\=& -(N^2-|Y|^2)k_{ij} N^{-1}X_j-k_{il}\langle X,Y\rangle N^{-1}Y_l.
    \end{split}
\end{equation}
Let 
\begin{align*}
\Omega_1=\{p\in M: N^2-|X|^2\ge 100(N^2- |Y|^2)\;\text{at}\; p\},
\end{align*}
and $\Omega_2=M^n\setminus \Omega_1$. 
Equation \eqref{easy0} implies that $0.1(N^2-|X|^2)\ge  |\langle X, Y\rangle|$ holds on $\Omega_1$.
Therefore, using Equation \eqref{easy1} and the facts $N\ge |X|$ as well as $N\ge|Y|$, we obtain 
\begin{equation} \label{F1}
    \left|\nabla (N^2-|X|^2)\right|\le  2.2(N^2-|X|^2)
\end{equation}
in $\Omega_1$.
Similarly, assuming $|k|\le C_k$ in $M$, we have, $10(N^2-|Y|^2)\ge |\langle X, Y\rangle|$ in $\Omega_2$, and
 \begin{equation} \label{F2}
     |\nabla(N^2-|Y|^2)|\le 22C_k (N^2-|Y|^2).
 \end{equation}
Next, we denote with $F$ the function $F=\max\{N^2-|X|^2, 100(N^2-|Y|^2)\}$.
Then $F=N^2-|X|^2$ on $\Omega_1$, and $F=100(N^2-|Y|^2)$ on $\Omega_2$.
Moreover, $F$ is Lipschitz. 
Combining equation \eqref{F1} and \eqref{F2}, we obtain
 \begin{equation*} \label{F22}
     |\nabla F|\le \max\{2.2, \;22C_k\}F\quad a.e..
 \end{equation*}
 Hence, $F\equiv 0$ or $F> 0$ everywhere on $M^n$ which shows the first statement.

\medskip

 Next, we use a similar argument in Proposition \ref{Prop: zeta = 0}.
 Suppose $F\neq 0$, and recall again the notation $\gamma$  from Proposition \ref{Prop: zeta = 0}.
Without loss of generality we may assume that $|k|(\gamma(s))\le 0.1$.
Then 
\begin{equation*}
    \left|\frac{d}{ds}F(\gamma(s))\right| 
    \le 2.2 F(\gamma(s)) \quad a.e..
\end{equation*}
However, note that $y_n^{-1}=O(e^{-s})$ and $r^{-1}=O(e^{-s})$, we have $F(\gamma(s))=O(e^{(-q-2)s})$ by Lemma \ref{NXY decay}.
Hence $N=|X|=|Y|$.
Finally, equation \eqref{easy0} yields $X\perp Y$.
\end{proof}

The following lemma is the only instance where we use the assumption $n\ge4$.

\begin{lemma}\label{lemma n>=4}
    Let $n\ge4$.
    Given $\psi$ as in Theorem \ref{Thm: N,X,Y,omega}, suppose that $N(\psi)\omega(\psi)=X(\psi)Y(\psi)-Y(\psi)X(\psi)$ at a point $p\in M$.
    Then 
    \begin{align*}
        N(u)\omega(u)=X(u)Y(u)-Y(u)X(u).
    \end{align*}
  Similarly, suppose that $ N^2(\psi)=|X(\psi)|^2+|Y(\psi)|^2-\frac12|\omega (\psi)|^2$ at a point $p\in M$. Then
    \begin{align*}
        N^2(u)=|X(u)|^2+|Y(u)|^2-\frac12|\omega (u)|^2.
    \end{align*}
\end{lemma}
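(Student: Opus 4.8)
The plan is to push each identity outward from $M$ to the model spinor $\psi^\infty$ on $\mathbb{H}^n$ using a decay estimate, and then from $\psi^\infty$ to the constant spinor $u$ on $B_1$ using the algebraic transfer identities of Corollary \ref{cor Nw=XY psi -> u} and Corollary \ref{cor N^2-X^2-Y^2-w^2 psi -> u}. Since $\psi$ solves $\nabla_i\psi=-\frac12k_{ij}e_je_0\psi-\frac12\mathbf{i}e_i\psi$, i.e. $\widetilde\nabla\psi=0$, Lemma \ref{decay estimate} applies and yields $\sigma:=\psi-\psi^\infty=O_2(r^{-q}|\psi^\infty|_b)$. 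Because $|\psi^\infty|_b^2=N(\psi^\infty)=t+c_0 r\cos\theta_1=O(r)$, this gives the pointwise bound $|\sigma|=O(r^{\frac12-q})$ on the asymptotic end.

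For the first claim, write $Q(\phi)=N(\phi)\omega(\phi)-X(\phi)\wedge Y(\phi)$, a quartic expression in $\phi$. The hypothesis $Q(\psi)(p)=0$ combined with Proposition \ref{Prop: zeta = 0} gives $Q(\psi)\equiv 0$ on all of $M$. Expanding $Q(\psi^\infty)=Q(\psi-\sigma)$, every term of the difference $Q(\psi)-Q(\psi^\infty)$ carries at least one factor of $\sigma$, so on the end
\[
|Q(\psi^\infty)|=|Q(\psi^\infty)-Q(\psi)|\le C\bigl(|\psi^\infty|+|\sigma|\bigr)^3|\sigma|=O(r^{3/2})\cdot O(r^{\frac12-q})=O(r^{2-q}),
\]
where the difference between norms and wedge products formed with $g$ versus $b$ only contributes lower-order terms since $g-b=O(r^{-q})$. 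As $n\ge4$ forces $q>\tfrac n2\ge 2$, we conclude $Q(\psi^\infty)=o(1)$ as $r\to\infty$, and the second part of Corollary \ref{cor Nw=XY psi -> u} then gives $Q(u)=0$, i.e. $N(u)\omega(u)=X(u)Y(u)-Y(u)X(u)$.

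For the second claim, set $P(\phi)=N(\phi)^2-|X(\phi)|^2-|Y(\phi)|^2+\frac12|\omega(\phi)|^2$. By Corollary \ref{Cor: constancy}, $P(\psi)$ is constant on $M$, and since $P(\psi)(p)=0$ we get $P(\psi)\equiv 0$. The same expansion as above gives $|P(\psi^\infty)|=|P(\psi^\infty)-P(\psi)|=O(r^{2-q})=o(1)$ on the end. On the other hand, Corollary \ref{cor N^2-X^2-Y^2-w^2 psi -> u} shows $P(\psi^\infty)=4P(u)$, which is independent of $\varrho$; being simultaneously constant and $o(1)$, it must vanish identically, so $P(u)=0$, i.e. $N(u)^2=|X(u)|^2+|Y(u)|^2-\frac12|\omega(u)|^2$.

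The main obstacle is the decay bookkeeping in the middle step: the quartic quantities $Q$ and $P$ vanish identically on $M$, but after replacing $\psi$ by $\psi^\infty$ they only decay like $O(r^{2-q})$, which tends to zero exactly when $q>2$. Together with the standing range $q\in(\tfrac n2,n]$ this is precisely where the hypothesis $n\ge4$ enters — for $n=3$ one may have $q\le2$, the transferred quantities need not go to zero, and the argument fails. Everything else reduces to a direct application of the already-established pointwise propagation results (Proposition \ref{Prop: zeta = 0}, Corollary \ref{Cor: constancy}) and the Poincaré-ball transfer identities.
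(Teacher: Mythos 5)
Your proposal is correct and follows essentially the same approach as the paper: propagate the pointwise identity to all of $M$ via Proposition \ref{Prop: zeta = 0} (resp.\ Corollary \ref{Cor: constancy}), use the decay $\psi-\psi^\infty=O(r^{1/2-q})$ to conclude $N(\psi^\infty)\omega(\psi^\infty)-X(\psi^\infty)\wedge Y(\psi^\infty)=O(r^{2-q})=o(1)$ (and similarly for the Fierz quantity) since $n\ge4$ forces $q>2$, and then finish with Corollaries \ref{cor Nw=XY psi -> u} and \ref{cor N^2-X^2-Y^2-w^2 psi -> u}. The only cosmetic difference is that you cite Lemma \ref{decay estimate} for the decay rate (which is in fact the cleaner reference), whereas the paper attributes it somewhat loosely to Theorem \ref{pmt maerten}.
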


\begin{proof}
  Suppose that $N(\psi)\omega(\psi)=X(\psi)Y(\psi)-Y(\psi)X(\psi)$ at a point $p\in M$.
    Recall from Theorem \ref{pmt maerten} that $\psi-\psi^\infty=\mathcal O_{2}(r^{\frac{1}{2}-q})$ where $q>\frac{n}{2}\ge 2$.
    Therefore, 
    \begin{align*}
        N(\psi)-N(\psi^\infty)=&\mathcal O_2(r^{1-q}),\quad\quad X(\psi)-X(\psi^\infty)=\mathcal O_2(r^{1-q}),\\
        Y(\psi)-Y(\psi^\infty)=&\mathcal O_2(r^{1-q}),\quad\quad\omega(\psi)-\omega(\psi^\infty)=\mathcal O_2(r^{1-q}). 
    \end{align*}
Since $N(\psi)\omega(\psi)=X(\psi)Y(\psi)-Y(\psi)X(\psi)$ everywhere on $M$ by Proposition \ref{Prop: zeta = 0}, we have
\begin{align*}
    N(\psi^\infty)\omega(\psi^\infty)-X(\psi^\infty)Y(\psi^\infty)+Y(\psi^\infty)X(\psi^\infty)=\mathcal O(r^{2-q}).
\end{align*}
Since $2-q<0$, Corollary \ref{cor Nw=XY psi -> u} implies $N(u)\omega(u)=X(u)Y(u)-Y(u)X(u)$.

\medskip

Similarly, we obtain in the second case
\begin{align*}
    N(\psi^\infty)^2-|X(\psi^\infty)|^2-|Y(\psi^\infty)|^2+\frac{1}{2}|\omega(\psi^\infty)|^2=\mathcal O(r^{2-q}).
\end{align*}
    Therefore, the result follows from Corollary \ref{cor N^2-X^2-Y^2-w^2 psi -> u}.
\end{proof}

Let $e_1=X(u)|X(u)|^ {-1}$ and $e_2=|Y^\perp(u)|^{-1}Y^\perp(u)$ where $Y^\perp= Y(u)-\frac{\langle X(u),Y(u)\rangle}{|X(u)|^2}X(u)$. 
In case $\{X(u), Y(u)\}$ is linearly dependent, we instead choose two orthonormal vectors $e_1$ and $e_2$ such that $\{X(u), Y(u)\}\subset \text{Span}\{e_1,e_2\}$.
Let $e_3,e_4$ be two other unit normal vectors perpendicular to $e_1,e_2$ and define the constant spinors 
\begin{align*}
    \widetilde u=\mathbf{i}e_0e_1e_2u,\quad\quad \widehat u=e_3e_4u.
\end{align*}
Moreover, let $\widetilde\psi^\infty$ and $\widehat\psi^\infty$ be the corresponding imaginary Killing spinors in hyperbolic space.

\begin{proposition}\label{tilde u mass minimizing}
    Suppose that $N(u)\omega(u)=X(u)Y(u)-Y(u)X(u)$.
    Then
    \begin{align*}
        N(\widetilde u)=&N(\hat u)=N(u),\quad\quad X(\widetilde u)=X(\widehat u)=X(u),\\
        Y(\widetilde u)=&Y(\hat u)=Y(u),\quad\quad \omega(\widetilde u)=\omega(\widehat u)=\omega(u).
    \end{align*}
    In particular both $\tilde u$ and $\hat u$ are also mass-minimizing.
\end{proposition}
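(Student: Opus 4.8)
The plan is to realize $\widetilde u$ and $\widehat u$ as the images of $u$ under two explicit unitary operators of the spinor module $\overline{\mathcal S}(\mathbb R^n)$ which lie in a $\{\pm 1\}$-extension of the spin group, and then to track how the bilinears $N,X,Y,\omega$ transform under them. Write $A=\mathbf i e_0e_1e_2$ and $B=e_3e_4$, so that $\widetilde u=Au$ and $\widehat u=Bu$. Using the Clifford relations $e_le_m=-e_me_l$ for $l\ne m$, $e_l^2=-1$, $e_0e_l=-e_le_0$, $e_0^2=\Id$, together with centrality of $\mathbf i$ and $\mathbf i^2=-1$, one first records the elementary facts $A^2=\Id$, $A^*=A$, $B^2=-\Id$, $B^*=B^{-1}$; in particular both $A$ and $B$ are unitary. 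This already gives $N(\widetilde u)=|\widetilde u|^2=|u|^2=N(u)$ and $N(\widehat u)=N(u)$, so the three spinors have the same norm.

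Next I would compute the conjugations $A^{-1}(e_ie_0)A$, $A^{-1}(\mathbf i e_i)A$, $A^{-1}(e_ie_je_0)A$ and their $B$-analogues. Since $A$ and $B$ differ from honest spin lifts of rotations only by the central/involutive factor $\mathbf i e_0$, these conjugations only produce signs: one finds $X_i(\widetilde u)=\varepsilon^A_iX_i(u)$, $Y_i(\widetilde u)=\varepsilon^A_iY_i(u)$, $\omega_{ij}(\widetilde u)=\varepsilon^A_i\varepsilon^A_j\,\omega_{ij}(u)$, where $\varepsilon^A_i=+1$ for $i\in\{1,2\}$ and $\varepsilon^A_i=-1$ otherwise; symmetrically $X_i(\widehat u)=\varepsilon^B_iX_i(u)$, and so on, with $\varepsilon^B_i=-1$ precisely for $i\in\{3,4\}$. (Equivalently: on the bosonic currents $A$ acts as the orthogonal reflection fixing $\operatorname{span}\{e_1,e_2\}$, and $B$ as the $\pi$-rotation in the $e_3$--$e_4$ plane.)

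Now the hypothesis enters. By construction the frame is always chosen so that $X(u),Y(u)\in\operatorname{span}\{e_1,e_2\}$, in both the generic and the linearly-dependent case. Since $N(u)=|u|^2>0$, the assumption $N(u)\omega(u)=X(u)Y(u)-Y(u)X(u)$ forces $\omega(u)=N(u)^{-1}X(u)\wedge Y(u)$, which is a simple two-form supported on $e_1\wedge e_2$ (and vanishes when $X(u),Y(u)$ are dependent). Thus all of $X(u),Y(u),\omega(u)$ have nonzero components only among the indices $\{1,2\}$, where the sign factors $\varepsilon^A,\varepsilon^B$ act trivially. Combining with the previous paragraph yields $X(\widetilde u)=X(u)$, $Y(\widetilde u)=Y(u)$, $\omega(\widetilde u)=\omega(u)$, and the same identities for $\widehat u$ (which of course requires $n\ge4$ so that $e_3,e_4$ exist).

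Finally, for the mass-minimizing assertion: by Corollary \ref{cor pmt maerten} the functional is $\mathcal H(\psi^\infty)=\mathcal EN(u)+\langle\mathcal P,Y(u)\rangle+\langle\mathcal C,X(u)\rangle+\langle\omega(u),\mathcal A\rangle$, hence depends on $u$ only through $(N,X,Y,\omega)(u)$. Since these four quantities coincide for $u$, $\widetilde u$ and $\widehat u$, we get $\mathcal H(\psi^\infty)=\mathcal H(\widetilde\psi^\infty)=\mathcal H(\widehat\psi^\infty)$; and since $|\widetilde u|=|\widehat u|=|u|$, the spinors $\widetilde\psi^\infty$ and $\widehat\psi^\infty$ lie in the same competitor class as $\psi^\infty$ in the definition of mass-minimizing, so they are mass-minimizing too. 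The only step demanding genuine care is the second one: the sign bookkeeping in the Clifford conjugations (tracking the stray $\mathbf i$ and $e_0$ in $A$) and the separate handling of the degenerate frames where $X(u)$ or $Y^\perp(u)$ vanishes; everything else is immediate.
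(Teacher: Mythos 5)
Your proof is correct and follows essentially the same route as the paper's: both rest on the direct computation that conjugation of the bilinears by $\mathbf{i}e_0e_1e_2$ (resp.\ $e_3e_4$) produces only sign flips on the indices outside $\{1,2\}$ (resp.\ $\{3,4\}$), which are harmless because the frame is chosen so that $X(u),Y(u)$ and hence $\omega(u)=N(u)^{-1}X(u)\wedge Y(u)$ are supported on $\operatorname{span}\{e_1,e_2\}$, and both conclude mass-minimality from the fact that $\mathcal H$ depends on $u$ only through $(N,X,Y,\omega)(u)$ and $|u|$. Your packaging of the calculation as a unitary/sign-bookkeeping statement and your explicit remark that $\widehat u$ needs $n\ge 4$ are nice clarifications, but the underlying argument is the same as the paper's.
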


\begin{proof}
Clearly, we have $ N(\tilde u)=N(\hat u)=N(u)$.
Next, we compute
\begin{align*}
    X_i(\widetilde u)=&\langle e_ie_1e_2u,e_0e_1e_2u\rangle=X_i,\\
    Y_i(\widetilde u)=&-\langle \mathbf{i}e_ie_1e_2u,e_1e_2u\rangle=Y_i,\\
    \omega_{ij}(\widetilde u)=&-\langle \mathbf{i}e_ie_je_1e_2u,e_0e_1e_2u\rangle=\omega_{ij}
\end{align*}
where we used that $X_i(u)=0,\,Y_i(u)=0,\,\omega_{ij}(u)=0$ for $i,j>2$.
The identities for $\hat u$ are shown analogously.
\end{proof}

Next, we define $v=\frac1{\sqrt2}(u+\tilde u)$ and $\phi^\infty=\frac1{\sqrt2}(\psi^\infty+\tilde\psi^\infty)$.

\begin{proposition}\label{Prop properties v}
 Let $u\in\overline{\mathcal S}(B_1^n(0))$ and let $\widetilde u, v$ be as above.
 Then we have $N(v)=0$ if and only if $N(u)=|X(u)|=|Y(u)|$.
Moreover, if we additionally assume $N(u)\omega(u)=X(u)Y(u)-Y(u)X(u)$ and $  N^2(u)=|X(u)|^2+|Y(u)|^2-\frac12|\omega (u)|^2$, we have
\begin{align*}
    N(v)=|X(v)|=|Y(v)|.
\end{align*}
\end{proposition}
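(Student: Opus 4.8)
The plan is to push everything through Clifford-algebra bookkeeping for the two constant spinors $u$ and $\widetilde u=\mathbf i e_0e_1e_2 u$. The operator $P:=\mathbf i e_0e_1e_2$ is a Hermitian involution ($P^*=P$, $P^2=\mathrm{Id}$), so $|\widetilde u|=|u|$ and $\langle Pu,u\rangle\in\mathbb R$; therefore
$$N(v)=\tfrac12|u+\widetilde u|^2=N(u)+\langle Pu,u\rangle=N(u)-\omega(u)(e_1,e_2),$$
the last step because $\langle \mathbf i e_0e_1e_2 u,u\rangle=\mathbf i\langle e_1e_2e_0 u,u\rangle$ and $e_1e_2e_0$ is skew-Hermitian, so $\langle e_1e_2e_0u,u\rangle=\mathbf i\,\omega(u)(e_1,e_2)$. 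Since $e_1e_2e_0$ is unitary, Cauchy--Schwarz gives $|\omega(u)(e_1,e_2)|\le N(u)$, hence $N(v)\ge0$ with equality precisely when $e_1e_2e_0 u=\mathbf i u$.

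For the first assertion I would then invoke the identity $N(u)\omega(u)=X(u)\wedge Y(u)$, which is available whenever this proposition is applied (Proposition \ref{tilde u mass minimizing}): in the frame fixed by $X(u)=|X(u)|e_1$ and $Y(u)=\tfrac{\langle X(u),Y(u)\rangle}{|X(u)|}e_1+|Y^\perp(u)|e_2$ this yields $\omega(u)(e_1,e_2)=N(u)^{-1}|X(u)|\,|Y^\perp(u)|$, so
$$N(v)=\frac{N(u)^2-|X(u)|\,|Y^\perp(u)|}{N(u)}.$$
As $|X(u)|\le N(u)$ and $|Y^\perp(u)|\le|Y(u)|\le N(u)$, this is nonnegative and vanishes iff $|X(u)|=N(u)$ and $|Y^\perp(u)|=N(u)$; the latter together with $|Y^\perp(u)|\le|Y(u)|\le N(u)$ forces $|Y(u)|=N(u)$. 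Conversely, if $N(u)=|X(u)|=|Y(u)|$ then Proposition \ref{Prop N=X implies Nw=XY} gives $X(u)\perp Y(u)$, hence $Y^\perp(u)=Y(u)$ and $N(v)=0$. The degenerate cases ($X(u)=0$, or $X(u)\wedge Y(u)=0$) are read off the same formula: there $\omega(u)=0$ by the Fierz relation, so $N(v)=N(u)$, which is $0$ only when $u=0$, consistent with $N(u)=|X(u)|=|Y(u)|$ failing.

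For the second assertion, write $a=|X(u)|$, $b=|Y^\perp(u)|$, $c=\langle X(u),Y(u)\rangle/|X(u)|$, so $X(u)=ae_1$, $Y(u)=ce_1+be_2$, $|Y(u)|^2=b^2+c^2$, $\langle X(u),Y(u)\rangle=ac$. Since $X(\widetilde u)=X(u)$ and $Y(\widetilde u)=Y(u)$ by Proposition \ref{tilde u mass minimizing}, expanding $X_i(v)=\langle e_ie_0v,v\rangle$ and $Y_i(v)=\langle\mathbf i e_iv,v\rangle$ leaves the cross-terms $\mathrm{Re}\langle e_ie_0u,\widetilde u\rangle$ and $\mathrm{Re}\langle\mathbf i e_iu,\widetilde u\rangle$; writing $\widetilde u=Pu$ and using $P^*=P$, these become $\langle\mathbf i e_0e_1e_2e_ie_0 u,u\rangle$ and $-\langle e_0e_1e_2e_i u,u\rangle$, and reducing the Clifford words (using $e_0e_1e_2e_ie_0=-e_1e_2e_i$ for $i\ne 0$, then evaluating $e_1e_2e_iu$ separately for $i=1$, $i=2$, $i\ge3$) expresses them through $X_j(u),Y_j(u)$. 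The outcome is
$$X(v)=(a-b)e_1+c\,e_2,\qquad Y(v)=c\,e_1+(b-a)e_2,$$
with all other components zero; in particular $|X(v)|^2=(a-b)^2+c^2=|Y(v)|^2$ with no further hypotheses. Finally the Fierz relation gives $\omega(u)(e_1,e_2)=N(u)^{-1}ab$, so $N(v)=(N(u)^2-ab)/N(u)$, while $\tfrac12|\omega(u)|^2=N(u)^{-2}(|X(u)|^2|Y(u)|^2-\langle X(u),Y(u)\rangle^2)=N(u)^{-2}a^2b^2$; substituting into the hypothesis $N(u)^2=|X(u)|^2+|Y(u)|^2-\tfrac12|\omega(u)|^2$ produces $N(u)^2(a^2+b^2+c^2-N(u)^2)=a^2b^2$, which is exactly the identity $N(v)^2=(a-b)^2+c^2$. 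Hence $N(v)=|X(v)|=|Y(v)|$.

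The main obstacle is the middle step of the second part: correctly reducing the mixed inner products $\langle e_ie_0u,\widetilde u\rangle$ and $\langle\mathbf i e_iu,\widetilde u\rangle$ through the Clifford algebra and recognizing the surviving terms as components of $X(u)$ and $Y(u)$. This demands care with the Hermitian versus skew-Hermitian type of each Clifford word — so that spurious imaginary contributions cancel — and with signs, and it must be carried out separately for the index ranges $i=1$, $i=2$, and $i\ge3$. By comparison, the concluding algebra (combining the two Fierz-type identities) is routine, and the first assertion is short once the formula $N(v)=N(u)-\omega(u)(e_1,e_2)$ is in place.
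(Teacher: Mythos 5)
Your proof is correct and follows essentially the same route as the paper: compute $N(v)=N(u)-\omega_{12}(u)$, use the Fierz relation $N\omega=X\wedge Y$ to rewrite $\omega_{12}(u)$, expand $X(v)$ and $Y(v)$ in the adapted frame $\{e_1,e_2\}$ (your formulas $X(v)=(a-b)e_1+ce_2$, $Y(v)=ce_1+(b-a)e_2$ match the paper's $X_i(v)=X_i-Y_2\delta_{1i}+Y_1\delta_{2i}$, $Y_i(v)=Y_i-X_1\delta_{i2}$ in the frame where $X(u)=ae_1$), and then confirm $N(v)^2=|X(v)|^2=|Y(v)|^2$ by the constancy identity $N^2=|X|^2+|Y|^2-\tfrac12|\omega|^2$. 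Your treatment is a bit more explicit about the Clifford reductions and the degenerate cases, but there is no genuinely different idea.
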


\begin{proof}
We obtain
    \begin{align*}
    N(v)=N(u)-\omega_{12}(u)=N(u)-N(u)^{-1}(|X(u)||Y^\perp(u)|).
\end{align*}
This implies that $N(v)=0$ if and only if $N(u)=|X(u)|=|Y(u)|$.

\medskip

Next, we compute 
\begin{align*}
    X_i(v)=&
   \frac12 \langle e_ie_0 (u+\mathbf{i}e_0e_1e_2u),u+\mathbf{i}e_0e_1e_2u\rangle\\
    =&X_i(u)-\operatorname{Re}\langle e_iu, \mathbf{i}e_1e_2u\rangle
    \\=& X_i(u) -Y_2(u)\delta_{1i}+Y_1(u)\delta_{2i}.
\end{align*}
Therefore,
\begin{align*}
    |X(v)|^2=|X|^2+|Y|^2-2X_1Y_2=N(v)^2
\end{align*}
where we used $N(u)\omega(u)=X(u)Y(u)-Y(u)X(u)$ and $  N^2(u)=|X(u)|^2+|Y(u)|^2-\frac12|\omega (u)|^2$.
Furthermore,
\begin{align*}
    Y_i(v)=&\frac12\langle \mathbf{i}e_i (u+\mathbf{i}e_0e_1e_2u),u+\mathbf{i}e_0e_1e_2u\rangle\\
    =&Y_i+\operatorname{Re}\langle e_iu, e_0e_1e_2u\rangle
    \\=&Y_i-X_1\delta_{i2}.
\end{align*}
Similarly, this implies $|Y(v)|^2=N(v)^2$.
\end{proof}

\begin{lemma}\label{lemma mu J X N}
  Suppose that $\phi\in\overline{\mathcal S}(M^n)$ solves $\nabla_i\phi=-\frac12k_{ij}e_je_0\phi-\frac12\mathbf{i}e_i\phi$.
  Then
  \begin{align*}
      \mu X=-NJ.
  \end{align*}
In particular, assuming additionally $\mu\ge|J|$, yields
  \begin{align*}
      \mu=|J|.
  \end{align*}
\end{lemma}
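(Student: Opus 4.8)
The plan is to upgrade the Killing spinor equation to the pointwise \emph{spinorial} identity
$$\mu\,\phi = -\,J_i\,e_i e_0\phi$$
and then read off the stated vector identity by squaring it. First I would observe that contracting $\nabla_i\phi = -\tfrac12 k_{ij}e_j e_0\phi - \tfrac12\mathbf{i}e_i\phi$ with $e_i$ shows that $\phi$ also solves $\slashed{D}\phi = F\phi$, where $A_i := -\tfrac12 k_{ij}e_j e_0 - \tfrac12\mathbf{i}e_i$ and $F := \sum_i e_i A_i = \tfrac12\tr_g(k)e_0 + \tfrac n2\mathbf{i}$. Then I would feed $\phi$ into the Lichnerowicz formula $\slashed{D}^2 = \nabla^*\nabla + \tfrac14 R$ (valid on $\overline{\mathcal S}$, whose $\mathbb{C}^2$-factor carries the trivial connection), computing $\nabla^*\nabla\phi = -\big(\sum_i\nabla_i A_i\big)\phi - \big(\sum_i A_i A_i\big)\phi$ and $\slashed{D}^2\phi = \sum_i e_i\nabla_i(F\phi)$ purely algebraically. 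Using $e_0^2 = \operatorname{id}$, $e_0 e_i = -e_i e_0$, $\mathbf{i}^2 = -1$ and the momentum constraint $J_l = \nabla_i k_{il} - \nabla_l\tr_g(k)$, one finds $\sum_i A_i A_i = \tfrac14|k|^2 + \tfrac n4$ and $\sum_i e_i F A_i = -\tfrac14(\tr_g k)^2 - \tfrac{n^2}4$, while every term containing $\nabla(\tr_g k)$ and every mixed $e_0$-term cancels, leaving
$$\tfrac14 R\,\phi = -\tfrac12 J_i e_i e_0\phi - \tfrac14(\tr_g k)^2\phi + \tfrac14|k|^2\phi - \tfrac{n(n-1)}4\phi.$$
Substituting the Hamiltonian constraint $2\mu = R + (\tr_g k)^2 - |k|^2 + n(n-1)$ then gives exactly $\mu\phi = -J_i e_i e_0\phi$.

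Second, I would pair this identity with $e_j e_0\phi$. Since $e_0$ is self-adjoint and $e_j$ skew-adjoint one has $\langle e_i e_0\phi, e_j e_0\phi\rangle = -\langle e_j e_i\phi,\phi\rangle$, which equals $N = |\phi|^2$ when $i=j$ and is purely imaginary when $i\neq j$ (as $e_j e_i$ is then skew-adjoint). Because $\langle\mu\phi, e_j e_0\phi\rangle = \mu X_j$ is real, the pairing forces $-J_i\langle e_i e_0\phi, e_j e_0\phi\rangle$ to be real and equal to $-N J_j$; hence $\mu X = -N J$. Pairing the same identity with $\phi$ gives the scalar companion $\mu N = -\langle J, X\rangle$, which I would keep for the last step.

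For the ``in particular'' I would use $\mu\ge|J|$ together with the fact that $N>0$ everywhere (the corollary to Theorem \ref{Thm: N,X,Y,omega}). From $\mu X = -N J$ we get $J = -\tfrac{\mu}{N}X$; inserting this into $\mu N = -\langle J, X\rangle$ yields $\mu(N^2 - |X|^2) = 0$. If $\mu = 0$ then $0\le|J|\le\mu$ forces $|J| = 0 = \mu$; if instead $N = |X|$ then $\mu|X| = N|J| = |X|\,|J|$ with $|X| = N>0$ gives $\mu = |J|$. Either way $\mu = |J|$.

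The main obstacle is the Clifford bookkeeping in the first paragraph: although finite, it requires careful sign tracking (especially $e_0 e_i e_0 = -e_i$ and $\mathbf{i}^2=-1$) and genuine use of \emph{both} constraint equations to see the spurious $\nabla(\tr_g k)$ and cross terms disappear. This computation is precisely the pointwise refinement of the divergence identity underlying Theorem \ref{pmt maerten}; in fact the same cancellations show that the vector field $W$ appearing there vanishes identically on Killing spinors, which re-derives the scalar identity $\mu N = -\langle J, X\rangle$ independently.
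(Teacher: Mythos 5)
Your proof is correct, and it takes a genuinely different route from the paper's. The paper applies the curvature commutator identity $\mathcal R(e_j,e_i)e_j\phi = -\tfrac12 R_{ij}e_j\phi$, expands $NR_{ij}$ term by term using $\nabla_i\phi = A_i\phi$, and then \emph{traces} to get the scalar identity $N\mu = -\langle J,X\rangle$; to reach the stated vector identity $\mu X = -NJ$ it then needs the Cauchy--Schwarz rigidity implied by $N\ge|X|$, $\mu\ge|J|$, so in the paper the vector identity is implicitly a consequence of the DEC. You instead push the Lichnerowicz formula $\slashed D^2 = \nabla^*\nabla + \tfrac14 R$ on the twisted bundle $\overline{\mathcal S}$ all the way to a pointwise \emph{spinorial} identity $\mu\phi = -J_i e_i e_0\phi$. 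This is strictly more information than the paper extracts: pairing with $e_je_0\phi$ and taking real parts gives $\mu X = -NJ$ \emph{without} the DEC, and pairing with $\phi$ gives $\mu N = -\langle J,X\rangle$ as a byproduct. The Clifford bookkeeping (the computation of $\sum_i A_iA_i$, $\sum_i e_iFA_i$, and the cancellation of the $\mathbf{i}e_0$ cross-terms) is comparable in effort to the paper's expansion of the ten terms $\textbf{I}$--$\textbf{X}$, and your derivation that the $\nabla\tr_g(k)$ and $\nabla_i k_{ij}$ contributions assemble into $J$ via the momentum constraint is exactly right. Your deduction of $\mu = |J|$ from the pair of identities plus $N>0$ and $\mu\ge|J|$ is also sound. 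The one place to be careful when writing this up formally is the sentence claiming that ``every term containing $\nabla(\tr_g k)$ \ldots cancels'': it does not cancel but rather combines with $-\nabla_i k_{ij}$ into $-J_j$; you clearly intend this (you invoke the constraint), so it is a matter of phrasing, not substance.
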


\begin{proof}
Recall the identity $\mathcal R(e_j,e_i)e_j\phi=-\frac12R_{ij}e_j\phi$ where $\mathcal R$ is the curvature operator given by $\mathcal R(e_j,e_i)\phi=(\nabla_j\nabla_i-\nabla_i\nabla_j)\phi$.
    Hence, we may compute
    \begin{align*}
    \begin{split}
        NR_{ij}=&\operatorname{Re}\langle R_{il}e_l\phi, e_j\phi\rangle\\
        =&-2 \operatorname{Re}\langle e_l(\nabla_i\nabla_l-\nabla_l\nabla_i)\phi, e_j\phi\rangle\\
        =& \operatorname{Re}\langle e_l\nabla_i(k_{lm}e_m e_0 \phi+\mathbf{i}e_l\phi)-e_l\nabla_l(k_{im}e_me_0\phi+\mathbf{i}e_i\phi), e_j\phi\rangle \\
        =&\operatorname{Re}\langle e_l\nabla_ik_{lm}e_m e_0 \phi, e_j\phi\rangle-\operatorname{Re}\langle e_l\nabla_lk_{im}e_me_0\phi, e_j\phi\rangle \\
        &-\frac12 \operatorname{Re}\langle  e_l k_{lm}e_me_0k_{in}e_ne_0\phi, e_j\phi\rangle +\frac12 \operatorname{Re}\langle e_lk_{im}e_me_0k_{ln}e_ne_0\phi,e_j\phi\rangle\\
                &-\frac12 \operatorname{Re}\langle  e_l k_{lm}e_me_0\mathbf{i}e_i\phi, e_j\phi\rangle +\frac12 \operatorname{Re}\langle e_lk_{im}e_me_0\mathbf{i}e_l\phi,e_j\phi\rangle\\
                &-\frac12\operatorname{Re}\langle  i e_le_l k_{im}e_me_0\phi,e_j\phi\rangle +\frac12\operatorname{Re}\langle   e_le_le_i\phi,e_j\phi\rangle     \\
                &+\frac12\operatorname{Re}\langle i e_le_ik_{lm}e_me_0\phi,e_j\phi\rangle-\frac12\operatorname{Re}\langle  e_le_ie_l\phi,e_j\phi\rangle\\
        =&:\text{\textbf{I}+\textbf{II}+\textbf{III}+\textbf{IV}+\textbf{V}+\textbf{VI}+\textbf{VII}+\textbf{VIII}+\textbf{IX}+\textbf{X}}.
        \end{split}
        \end{align*}
We compute
        \begin{align*}
            \begin{split}
                \text{\textbf{I}}=&\left(\sum_{l=m}+\sum_{l=j\ne m}+\sum_{m=j\ne l}  \right)\nabla_ik_{lm}\operatorname{Re}\langle e_le_m e_0 \phi, e_j\phi\rangle\\
                =&\nabla_i\tr_g(k)X_j+\nabla_ik_{jm}X_m-\nabla_ik_{jj}X_j-\nabla_ik_{mj}X_m+\nabla_ik_{jj}X_j\\
                =&\nabla_i\tr_g(k)X_j,
            \end{split}
        \end{align*}
        and
               \begin{align*}
            \begin{split}
                \text{\textbf{II}}=&-\left(\sum_{l=m}+\sum_{l=j\ne m}+\sum_{m=j\ne l}  \right)\nabla_lk_{im}\operatorname{Re}\langle e_le_me_0\phi, e_j\phi\rangle\\
                =&-\nabla_lk_{il}X_j-\nabla_jk_{im}X_m+\nabla_jk_{ij}X_j+\nabla_lk_{ij}X_l-\nabla_jk_{ij}X_j\\
                 =&-\nabla_lk_{il}X_j-\nabla_jk_{im}X_m+\nabla_lk_{ij}X_l.
            \end{split}
            \end{align*}
            Moreover,
                \begin{align*}
            \begin{split}
                \text{\textbf{III}}=&-\frac12 \operatorname{Re}\langle  e_l k_{lm}e_me_0k_{in}e_ne_0\phi, e_j\phi\rangle \\
                =&-\frac12\tr_g(k)k_{in} \operatorname{Re}\langle e_n\phi, e_j\phi\rangle \\
                =&-\frac12\tr_g(k)k_{ij}N.
                \end{split}
                \end{align*}
                where we exploited the symmetry of $k_{lm}$ and antisymmetry of $e_le_m$ to obtain $k_{lm}e_le_m=-\tr_g(k)$. 
 Similarly, we calculate
     \begin{align*}
            \begin{split}
                \text{\textbf{IV}}=&\frac12 \operatorname{Re}\langle e_lk_{im}e_me_0k_{ln}e_ne_0\phi,e_j\phi\rangle\\
                =&\frac12k_{im} \operatorname{Re}\langle (e_me_l+2\delta_{ml})k_{ln}e_n\phi,e_j\phi\rangle\\
                =&-\frac12Kk_{im} \operatorname{Re}\langle e_m\phi,e_j\phi\rangle
                +k_{il} k_{ln} \operatorname{Re}\langle e_n\phi,e_j\phi\rangle\\
                =&-\frac12\tr_g(k)k_{ij}N+k_{il}k_{lj}N,
                \end{split}
                \end{align*}
Consequently,
    \begin{align*}
     \text{\textbf{I}+\textbf{II}+\textbf{III}+\textbf{IV}} =X_j\nabla_i\tr_g(k)-X_j\nabla_lk_{il}+\nabla_Xk_{ij}-\nabla_ik_{lj}X_l+Nk_{il}k_{lj}-N\tr_g(k)k_{ij}.
    \end{align*}
Next, we compute
  \begin{align*}
            \text{\textbf{V}}=\frac12\tr_g(k)\operatorname{Re}\langle  e_0\mathbf{i}e_i\phi, e_j\phi\rangle=-\frac12\tr_g(k)\operatorname{Re}\langle  \mathbf{i}e_ie_0\phi, e_j\phi\rangle,
        \end{align*}
and
        \begin{align*}
            \text{\textbf{VI}}=&  \frac12 \operatorname{Re}\langle e_lk_{im}e_me_0\mathbf{i}e_l\phi,e_j\phi\rangle\\
           =& -\frac12 \operatorname{Re}\langle e_lk_{im}e_me_le_0\mathbf{i}\phi,e_j\phi\rangle\\
           =&-\frac n2\operatorname{Re}\langle k_{im}e_me_0 \mathbf{i}\phi,e_j\phi\rangle+\operatorname{Re}\langle e_m k_{im}e_0 \mathbf{i}\phi,e_j\phi\rangle,
        \end{align*}
as well as
\begin{align*}
  \text{\textbf{VII}}=&  \frac n2\operatorname{Re}\langle    k_{im}e_me_0i\phi,e_j\phi\rangle.
\end{align*}
Finally, we have
     \begin{align*}
           \text{\textbf{VIII}}=-\frac n2g_{ij}N,
     \end{align*}   
     and
             \begin{align*}
       \text{\textbf{IX}}  =&  \frac12\operatorname{Re}\langle \mathbf{i} e_le_ik_{lm}e_me_0\phi,e_j\phi\rangle\\
       =&\frac12\tr_g(k)\operatorname{Re}\langle \mathbf{i}e_ie_0\phi,e_j\phi\rangle-\operatorname{Re}\langle \mathbf{i} k_{im}e_me_0\phi,e_j\phi\rangle,
        \end{align*}
as well as 
     \begin{align*}
         \text{\textbf{X}}=-\frac{n-2}{2}g_{ij}N.
     \end{align*}
     Therefore, we obtain
\begin{align*}
    \text{\textbf{V}+\textbf{VI}+\textbf{VII}+\textbf{VIII}+\textbf{IX}+\textbf{X}}=&-(n-1)Ng_{ij}.
\end{align*}
Hence, we may conclude
\begin{align*}
    NR_{ij}=&X_j\nabla_i\tr_g(k)-X_j\nabla_lk_{il}+\nabla_Xk_{ij}-\nabla_ik_{lj}X_l+Nk_{il}k_{lj}-N\tr_g(k)k_{ij}-(n-1)Ng_{ij}.
\end{align*}
Taking the trace implies
\begin{align*}
    2N\mu=&-2\langle J,X\rangle.
\end{align*}
Since $N>0$, $N\ge|X|$ and $\mu\ge|J|$, the result follows.
\end{proof}

\begin{proof}[Proof of Theorem \ref{thm causal character} for $n\ge4$]
    Suppose that there exists a point $p\in M$ such that $N=|X|$ at $p$.
    Using Proposition \ref{Prop N=X implies Nw=XY} and Lemma \ref{lemma n>=4}, we obtain
    \begin{align*}
        N(u)\omega(u)=X(u)Y(u)-Y(u)X(u),\quad\quad
        N^2(u)=|X(u)|^2+|Y(u)|^2-\frac12|\omega (u)|^2.
    \end{align*}
    If $N(u)=|Y(u)|=|X(u)|$, we can use Corollary \ref{cor N=Y(u) implies N=Y(psi)} and Proposition \ref{Prop: |X| = N=Y} to obtain that $N=|X|$ everywhere.
    Thus, let us assume that $N(u)=|Y(u)|=|X(u)|$ does not hold.
    Then we can use Proposition \ref{Prop properties v} to construct a constant spinor $v=\frac1{\sqrt2}(u+\widetilde u)$ with $N(v)=|X(v)|=|Y(v)|\ne0$ and with imaginary Killing spinor $\phi^\infty=\frac1{\sqrt2}(\psi^\infty+\widetilde \psi^\infty)$.
    Recall that $u,\widetilde u $ give rise to the imaginary Killing spinors $\psi^\infty $ and $\widetilde\psi^\infty$.
    Moreover, there are spinors $\psi,\widetilde\psi$ solving $\nabla_i\psi=-\frac12k_{ij}e_je_0\psi-\frac12\mathbf{i}e_i\psi$ and $\nabla_i\widetilde \psi=-\frac12k_{ij}e_je_0\widetilde\psi-\frac12\mathbf{i}e_i\widetilde\psi$ since both $u$ and $v$ are mass-minimizing, cf. Proposition \ref{tilde u mass minimizing}.
    By linearity, the spinor $\phi=\frac1{\sqrt2}(\psi+\widetilde \psi)$ also solves $\nabla_i\phi=-\frac12k_{ij}e_je_0\phi-\frac12\mathbf{i}e_i\phi$ and asymptotes to $\phi^\infty$.
    Since $N(\phi^\infty)=|X(\phi^\infty)|=|Y(\phi^\infty)|$, we can apply Proposition \ref{Prop: |X| = N=Y} to find that $N(\phi)=|X(\phi)|=|Y(\phi)|$ everywhere.
    Consequently, $\phi$ is a null spinor and does not change causal character.

    \medskip

    It remains to show that $v$ is mass-minimizing.
    This follows from the integral formula, Theorem \ref{pmt maerten}, together with the identities $\nabla_i\phi=-\frac12k_{ij}e_je_0\phi-\frac12\mathbf{i}e_i\phi$ and $(\mu+\frac12n(n-1))N=-\langle J,X\rangle$, cf. Lemma \ref{lemma mu J X N}.
\end{proof}

\subsection{Spinors of mixed causal type in hyperbolic space}

The following has been observed in \cite{Leitner2003_JMP} by Leitner.
Since no proof was provided, we include one here.

\begin{proposition}\label{Prop: existence spinor of mixed type}
    There exists a spinor $u\in\overline{\mathcal S}(\mathbb H^n)$ of mixed causal type.
\end{proposition}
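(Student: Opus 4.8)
The plan is to produce an explicit imaginary Killing spinor on $\mathbb H^n$ whose Dirac current $(N,X)$ is null along a hypersurface and strictly timelike off it. By Proposition \ref{Prop construction killing spinor}, in the Poincar\'e ball model every imaginary Killing spinor $\psi^\infty$ has the form $\xi^{-1/2}(1-\mathbf i\varrho e_\varrho)u$ for a constant spinor $u\in\overline{\mathcal S}(B_1^n)$, and Lemma \ref{lemma psi u N X Y omega} records $N(\psi^\infty),X(\psi^\infty),Y(\psi^\infty),\omega(\psi^\infty)$ in terms of the constant quantities $N(u),X(u),Y(u),\omega(u)$. Hence it suffices to choose $u$ suitably and compute $N(\psi^\infty)^2-|X(\psi^\infty)|^2$.

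First I would take $u$ to be a \emph{null} constant spinor with $Y(u)=0$. After rescaling we may assume $N(u)=|X(u)|=1$, and after a rotation $X(u)=e_1$; this is equivalent to $e_1e_0u=u$, i.e.\ $u$ lies in the $(+1)$-eigenspace $V$ of $e_1e_0$. That $V$ contains an element with $Y(u)=0$ is an algebraic fact about spinor bilinears: on $V$ the operators $\mathbf i e_2,\dots,\mathbf i e_n$ generate a copy of $\mathrm{Cl}_{n-1}$, while $\langle\mathbf i e_1u,u\rangle=0$ holds automatically; for $n$ odd one may take $u$ to be an eigenspinor of the chirality element $\mathbf i e_2\cdots\mathbf i e_n$, which forces $\langle\mathbf i e_\alpha u,u\rangle=0$ for every $\alpha$, and for $n$ even one uses that $V$ splits as two copies of the irreducible $\mathrm{Cl}_{n-1}$-module and picks a balanced element. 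Since $N(u)=|X(u)|$, Proposition \ref{Prop N=X implies Nw=XY} together with $Y(u)=0$ forces $\omega(u)=X(u)\wedge Y(u)=0$.

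Substituting $N(u)=1$, $X(u)=e_1$, $Y(u)=0$, $\omega(u)=0$ into Lemma \ref{lemma psi u N X Y omega} and working in an orthonormal frame containing the radial field $e_\varrho$, one finds $N(\psi^\infty)=\tfrac{2(1+\varrho^2)}{1-\varrho^2}$; the radial component of $X(\psi^\infty)$ is $2\langle e_1,e_\varrho\rangle=2x_1/\varrho$ (where $x=(x_1,\dots,x_n)$ are the ball coordinates and $\varrho=|x|$), while its tangential part equals $\tfrac{2(1+\varrho^2)}{1-\varrho^2}$ times the tangential part of $e_1$, of squared norm $1-(x_1/\varrho)^2$. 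A short computation then gives
\[
N(\psi^\infty)^2-|X(\psi^\infty)|^2=\frac{16\,x_1^2}{(1-\varrho^2)^2}.
\]
(As a check: for null $u$ one has $N(\psi^\infty)\omega(\psi^\infty)=X(\psi^\infty)\wedge Y(\psi^\infty)$ by Corollary \ref{cor Nw=XY psi -> u} and $N(\psi^\infty)^2-|Y(\psi^\infty)|^2=4$ by Corollary \ref{cor N^2-X^2-Y^2-w^2 psi -> u}, whence, as in \eqref{easy0}, $N(\psi^\infty)^2-|X(\psi^\infty)|^2=\langle X(\psi^\infty),Y(\psi^\infty)\rangle^2/\big(N(\psi^\infty)^2-|Y(\psi^\infty)|^2\big)$, which agrees with the displayed value.) The right-hand side is nonnegative, vanishes exactly on the nonempty proper subset $B_1^n\cap\{x_1=0\}$, and is strictly positive elsewhere. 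Therefore $\psi^\infty$ is null along a hypersurface through the center of the ball and strictly timelike on its complement, so it is of mixed causal type.

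The main obstacle is the existence step: arranging a constant spinor $u$ with $N(u)=|X(u)|$ and $Y(u)=0$ uniformly in $n$ takes a little care with the $\mathrm{Cl}_{n-1}$-module structure of $V$ (the chirality argument settles odd $n$ at once, the even case needs the splitting of $V$; alternatively one may drop $Y(u)=0$ and use any $u\in V$ with $(N(u),Y(u))$ strictly timelike, at the cost of a slightly longer computation). The only other point requiring attention is the index bookkeeping in Lemma \ref{lemma psi u N X Y omega}, whose $\delta_{i\varrho}$ terms are written in a radial-adapted orthonormal frame, so one must decompose $X(u)=e_1$ into radial and tangential parts before taking norms; after that the evaluation of $N(\psi^\infty)^2-|X(\psi^\infty)|^2$ is routine algebra.
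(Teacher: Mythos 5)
Your proof is correct and follows the paper's approach: pick a null constant spinor $u$ with $Y(u)=0$ (so $N(u)=1$, $X(u)=e_1$, $\omega(u)=0$), push through Lemma \ref{lemma psi u N X Y omega}, and conclude that $\psi^\infty$ is null precisely on the hyperplane $\{x_1=0\}$ — your explicit formula $N(\psi^\infty)^2-|X(\psi^\infty)|^2=16x_1^2/(1-\varrho^2)^2$ is exactly the quantitative version of the paper's ``$N(\psi^\infty)=|X(\psi^\infty)|$ iff $e_\varrho\perp E_1$.'' The one place where you diverge — the existence step you flag as the ``main obstacle,'' with chirality elements, a $\mathrm{Cl}_{n-1}$-module splitting, and an even/odd $n$ case split — is handled much more simply in the paper: one just takes $u=(u_1,\mathbf{i}u_1)\in\overline{\mathcal S}$ with $u_1\in\mathcal{S}(\mathbb{R}^n)$ a unit-norm spinor satisfying $\mathbf{i}e_1u_1=u_1$, from which $e_1e_0u=u$ and $Y(u)=0$ both follow by a one-line computation, uniformly in $n$, so no representation-theoretic bookkeeping is needed.
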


\begin{proof}
Let $E_1,\dots, E_n$ be a parallel orthonormal frame on $\mathbb{R}^n$. Let $u_1$ be a parallel spinor in $\mathcal{S}(\mathbb{R}^n)$ satisfying $\mathbf{i}e_1u_1=u_1$ and $\|u_1\|^2=\frac{1}{2}$. Thus, $\langle \mathbf{i}e_j u_1,u_1\rangle=0$ for $j>1$. Let $u=(u_1,\mathbf{i} u_1)$. Then 
\begin{align*}
N(u)=&\|u_1\|^2+\|\mathbf{i}u_1\|^2=1
\\
X_j(u)=&\langle E_j(\mathbf{i}u_1,u_1),  (u_1,\mathbf{i}u_1)\rangle=2\langle \mathbf{i}e_j u_1,u_1\rangle=\delta_{1j},  
\\    Y_j(u)=&\langle \mathbf{i}E_j(u_1,iu_1),(u_1,iu_1)\rangle=0,\quad \forall j=1,\dots, n.
\end{align*}
Since $N(u)=|X(u)|$, we have $N(u)\omega(u)=X(u)\wedge Y(u)$. Hence, $\omega(u)=0$.
Next, applying Lemma \ref{lemma psi u N X Y omega}, denote $e_\alpha$ as the direction perpendicular to $e_{\varrho}$, we have 
\begin{equation*}
    N(\psi^\infty)=\frac{2(1+\varrho^2)}{1-\varrho^2},\quad
 Y_{\varrho}(\psi^\infty)=-\frac{4\varrho}{1-\varrho^2},\quad Y_\alpha(\psi^\infty)=0,
\end{equation*}
\begin{equation*}
    X_{\varrho}(\psi^\infty)=2X_\varrho(u), \quad X_\alpha(\psi^\infty)=\frac{2(1+\varrho^2)}{1-\varrho^2}X_\alpha(u). 
\end{equation*}
Therefore, $N(\psi^\infty)=|X(\psi^\infty)|$ if and only if $e_{\varrho}\perp E_1$.
\end{proof}


\section{Hidden symmetries of timelike imaginary Killing spinors}

\subsection{Hidden symmetries of imaginary Killing spinors}

Throughout this section let $(M^n,g,k)$ be an asymptotically AdS initial data set.
We first need the technical preliminary lemma which shows that in the timelike case we can automatically construct a second spinor solving $\nabla_i\psi=-\frac12k_{ij}e_je_0\psi-\frac12\mathbf{i}e_i\psi$.
We restate Theorem \ref{Thm Intro hidden symmetry} for convenience:

\begin{theorem}\label{lemma construct new timelike spinor}
    Suppose that $\psi\in \overline{\mathcal S}(M) $ solves $\nabla_i\psi=-\frac12k_{ij}e_je_0\psi-\frac12\mathbf{i}e_i\psi$.
    Then $\varphi$ defined by 
    \begin{align*}
        \varphi=e_0(N(\psi)-\mathbf{i}Y(\psi)+e_0X(\psi)-\frac12\mathbf{i}e_0\omega(\psi))\psi
    \end{align*}
    also solves
    \begin{align*}
        \nabla_i\varphi=-\frac12k_{ij}e_je_0\varphi-\frac12 \mathbf{i}e_i\varphi.
    \end{align*}
\end{theorem}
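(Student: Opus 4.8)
The plan is to recognize $\varphi$ as $\psi$ multiplied by a Clifford-algebra-valued function assembled from its own bilinears, and then to reduce the assertion to one pointwise algebraic identity. Introduce the modified connection $D_i := \nabla_i + \tfrac12 k_{ij}e_je_0 + \tfrac12\mathbf ie_i$, so that the hypothesis reads $D_i\psi=0$ and the goal is $D_i\varphi=0$. Viewing the bilinears as operators through Clifford multiplication (with $Y=Y_je_j$, $X=X_je_j$, $\omega=\tfrac12\omega_{jk}e_je_k$), set $\mathcal F:=N-\mathbf iY+e_0X-\tfrac12\mathbf ie_0\omega$, so that $\varphi=e_0\mathcal F\psi$. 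Since $e_0$ and Clifford multiplication by a form are $\nabla$-parallel, the Leibniz rule gives $\nabla_i\varphi=e_0(\nabla_i\mathcal F)\psi+e_0\mathcal F\,\nabla_i\psi$, where $\nabla_i\mathcal F$ is obtained by covariantly differentiating the coefficients $N,X_j,Y_j,\omega_{jk}$.

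First I would substitute the first-order system of Theorem \ref{Thm: N,X,Y,omega} for $\nabla_iN$, $\nabla_iX_j$, $\nabla_iY_j$, $\nabla_i\omega_{jk}$ into $\nabla_i\mathcal F$, and substitute $\nabla_i\psi=-(\tfrac12k_{ij}e_je_0+\tfrac12\mathbf ie_i)\psi$ into the last term. This turns the desired identity $D_i\varphi=0$ into $\mathcal B_i\psi=0$, with
\[
\mathcal B_i \;=\; e_0\,\nabla_i\mathcal F \;+\; \tfrac12\bigl(k_{ij}e_je_0+\mathbf ie_i\bigr)e_0\mathcal F \;-\; \tfrac12\,e_0\mathcal F\bigl(k_{ij}e_je_0+\mathbf ie_i\bigr),
\]
and the key structural observation is that $\mathcal B_i$ is \emph{linear} in $N,X,Y,\omega$ — hence an honest section of the complexified Clifford bundle (quadratic, not cubic, in $\psi$). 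So it suffices to prove $\mathcal B_i\psi\equiv0$.

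The verification of $\mathcal B_i\psi=0$ is a direct, if lengthy, computation: expand every product $e_je_0$, $e_ie_j$, $e_0e_ie_je_k,\dots$ using $e_0^2=1$, $e_j^2=-1$, $e_ie_j=-e_je_i$ for $i\ne j$, together with $k_{ij}=k_{ji}$ and $\omega_{jk}=-\omega_{kj}$, and collect terms by Clifford degree. The cancellations are not accidental: the system of Theorem \ref{Thm: N,X,Y,omega} is precisely the integrability condition of $D_i$, so the $g_{ij}$-terms produced by $\nabla_iY_j$ and $\nabla_i\omega_{jk}$ cancel against the commutator with $\tfrac12\mathbf ie_i$, while the $k$-terms produced by $\nabla_iN$, $\nabla_iX_j$, $\nabla_iY_j$, $\nabla_i\omega_{jk}$ cancel against the commutator with $\tfrac12k_{ij}e_je_0$; this is the spinorial shadow of the fact that the Dirac current of an (imaginary) Killing spinor acts on it again as a symmetry. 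Should residual terms survive the grouping by degree, they must be annihilated, upon acting on $\psi$, by the Fierz-type relations among $N,X,Y,\omega$ valid for any spinor in $\overline{\mathcal S}$ (of which Proposition \ref{Prop N=X implies Nw=XY} records special instances). The main obstacle is purely the bookkeeping — tracking the signs introduced by reordering Clifford generators and by the interplay of the algebraic factors $e_0$ and $\mathbf i$, and in particular making the degree-three piece $-\tfrac12\mathbf ie_0\omega_{jk}e_je_k$ pair correctly against \emph{both} commutators — and, as a subsidiary matter, fixing the normalization conventions in $\mathcal F$ (the factor $\tfrac12$ in front of $\mathbf ie_0\omega$, and whether $\omega$ denotes $\omega_{jk}e_je_k$ or $\tfrac12\omega_{jk}e_je_k$), since every constant in the computation is sensitive to them.
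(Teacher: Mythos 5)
Your proposal is correct and follows essentially the same route as the paper's proof: differentiate $\varphi = e_0\mathcal F\psi$ by the Leibniz rule, substitute the first-order system of Theorem \ref{Thm: N,X,Y,omega} for $\nabla_iN,\nabla_iX,\nabla_iY,\nabla_i\omega$ together with $\nabla_i\psi=-\frac12k_{ij}e_je_0\psi-\frac12\mathbf{i}e_i\psi$, and verify the resulting Clifford identity $\mathcal B_i=0$ term by term. Your fallback appeal to Fierz-type relations is in fact unnecessary: once the derivatives are substituted, $\mathcal B_i$ vanishes identically as a Clifford-algebra element linear in $(N,X,Y,\omega)$, so no further special properties of $\psi$ are used.
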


\begin{proof}
    We obtain
    \begin{align*}
        \begin{split}
            &\nabla_i(N-\mathbf{i}Y+e_0X-\frac12\mathbf{i}e_0\omega)\psi\\
            =&-k_{ij}X_j\psi-Y_i\psi+\mathbf{i}Ne_i\psi-\mathbf{i}k_{il}\omega_{jl}e_j\psi   + Nk_{ij}e_j e_0\psi-\omega_{ij}e_je_0\psi \\
&-\frac12\mathbf{i}e_0(g_{ij}X_k-g_{ik}X_j-k_{ij}Y_k+k_{ik}Y_j)e_je_k\psi\\
            &-\frac12(N-\mathbf{i}Y+e_0X-\frac12\mathbf{i}e_0\omega)(k_{ij}e_je_0+\mathbf{i}e_i)\psi\\
            =&\frac12(k_{ij}e_je_0+\mathbf{i}e_i)(N-\mathbf{i}Y+e_0X-\frac12\mathbf{i}e_0\omega)\psi
        \end{split}
    \end{align*}
    where we wrote $N,X,Y,\omega$ instead of $N(\psi),X(\psi),Y(\psi),\omega(\psi)$ to simplify the notation.
    To verify the last equation holds, we check term by term.
    We compute the identities
\begin{align*}
\begin{split}
    Nk_{ij}e_je_0\psi-\frac{1}{2}Nk_{ij}e_je_0\psi=&\frac{1}{2}N k_{ij}e_j e_0\psi,\\
    -k_{ij}X_j\psi-\frac{1}{2}e_0 X k_{ij}e_j e_0\psi=&\frac{1}{2}k_{ij}e_jX\psi,\\
    iNe_i\psi-\frac{1}{2}N\mathbf{i}e_i\psi=&\frac{1}{2}\mathbf{i}e_i N\psi,\\
    -Y_i\psi-\frac{1}{2}Ye_i\psi=&\frac{1}{2}e_i Y\psi,
    \end{split}
\end{align*}
and
\begin{align*}
\begin{split}
  &-\frac{1}{2}\mathbf{i}e_0 (-k_{ij} Y_k+k_{ik}Y_j)e_j e_k\psi +\frac{1}{2} \mathbf{i} Y k_{ij}e_j e_0\psi
  \\=& -\frac{1}{2}\mathbf{i} k_{ij}e_j e_0 Y\psi-\frac{1}{2}\mathbf{i}Y k_{ik}e_k e_0\psi+\frac{1}{2} \mathbf{i} Y k_{ij}e_j e_0 \psi 
  \\=&-\frac{1}{2}\mathbf{i} k_{ij}e_j e_0 Y\psi.
  \end{split}
\end{align*}
Using $\omega e_j\psi=\omega_{kl}e_k(-2\delta_{lj}-e_je_l)\psi=e_j\omega-4\omega_{kj}e_k \psi$, we find
\begin{equation*}
    -\mathbf{i}k_{il}\omega_{jl}e_j\psi+\frac{1}{4}\mathbf{i}e_0\omega k_{ij}e_je_0\psi= -\mathbf{i}k_{ij}\omega_{lj}e_l\psi-\frac{1}{4}i k_{ij}\omega e_j\psi=-\frac{1}{4}\mathbf{i}k_{ij}e_j\omega\psi.
\end{equation*}
Finally, we have
\begin{equation*}
    -\frac{1}{2}\mathbf{i}e_0(g_{ij}X_k-g_{ik}X_j)e_je_k\psi-\frac{1}{2}e_0X\mathbf{i}e_i\psi=-\frac{1}{2}\mathbf{i}e_0(e_iX-Xe_i+Xe_i)\psi=\frac{1}{2}\mathbf{i}e_ie_0 X\psi
\end{equation*}
and
\begin{equation*}
    -\omega_{ij}e_j e_0\psi-\frac{1}{4}e_0\omega e_i\psi=e_0(\omega_{ij}e_j-\frac{1}{4}\omega e_i)\psi=-\frac{1}{4}e_0e_i \omega\psi=\frac{1}{4}e_ie_0 \omega\psi
\end{equation*}
which finishes the proof.
\end{proof}

\subsection{The Killing development}

\begin{theorem}
    Suppose that $(M^n,g,k)$ admits a spinor $\psi$ solving $$\nabla_i\psi=-\frac12k_{ij}e_je_0\psi-\frac12 \mathbf{i}e_i\psi.$$
    Then there exists a Lorentzian manifold $(\overline M^{n+1},\overline g)$ such that $(M^n,g)$ isometrically embeds into $(\overline M^{n+1},\overline g)$ with second fundamental form $k$.
    Moreover, $(\overline M^{n+1},\overline g)$ admits an imaginary Killing spinor $\overline \psi$, i.e. $\overline \psi$ solves $$\overline \nabla_\alpha\overline \psi=-\frac12\mathbf{i}e_\alpha \overline{\psi},\quad\alpha=0,1,\dots,n.$$
    In particular, when $N>|X|$, $(\overline M^{n+1},\overline g)$ is a vacuum spacetime. 
\end{theorem}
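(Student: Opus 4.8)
The plan is to realize $(\overline M^{n+1},\overline g)$ as the Killing development of $(M^n,g,k)$ along the Dirac current of $\psi$, and to assemble $\psi$ into an imaginary Killing spinor on it using the hidden symmetry of Theorem \ref{lemma construct new timelike spinor}. Concretely, set $\overline M=M\times\mathbb R_\tau$ and $\overline g=-N(\psi)^2\,d\tau^2+g_{ij}(dx^i+X^i(\psi)\,d\tau)(dx^j+X^j(\psi)\,d\tau)$, with $N(\psi),X(\psi),g$ the data on $M\times\{0\}$ extended to be independent of $\tau$. Since $\psi$ is nowhere zero (one has $|\nabla N|\le(1+|k|)N$ and $N>0$ near infinity by Lemma \ref{decay estimate}), $N(\psi)=|\psi|^2>0$; together with $N(\psi)\ge|X(\psi)|$ this makes $\overline g$ Lorentzian, with $\partial_\tau$ a Killing field that is timelike exactly where $\psi$ is timelike and null where $\psi$ is null. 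The slice $M\times\{0\}$ has induced metric $g$, and its second fundamental form matches $k$: the symmetric part of the second identity in Theorem \ref{Thm: N,X,Y,omega} reads $\nabla_{(i}X_{j)}(\psi)=-N(\psi)k_{ij}$ ($\omega$ being skew), which is exactly what the ADM formula for the extrinsic curvature of $\{\tau=0\}$ demands.

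To build $\overline\psi$, identify the restriction of the spinor bundle of $\overline M$ to each slice $M\times\{\tau\}$ with $\overline{\mathcal S}(M)$ and let $\psi(\tau)$ solve the pointwise ODE $\partial_\tau\psi(\tau)=\varphi(\psi(\tau))$, $\psi(0)=\psi$, with $\varphi$ as in Theorem \ref{lemma construct new timelike spinor}. On the maximal existence interval, $\psi(\tau)$ keeps solving the generalized Killing equation $\nabla_i\psi(\tau)=-\tfrac12 k_{ij}e_je_0\psi(\tau)-\tfrac12\mathbf i e_i\psi(\tau)$: if $E(\tau)$ denotes the left-minus-right of this equation, re-running the computation in the proof of Theorem \ref{lemma construct new timelike spinor} while keeping the terms proportional to $E$ yields $\partial_\tau E(\tau)=\mathcal L_{\psi(\tau)}E(\tau)$ for a zeroth-order bundle endomorphism $\mathcal L_{\psi(\tau)}$, so $E(0)=0$ forces $E\equiv0$; a continuation argument using $\partial_\tau|\psi(\tau)|^2=-\langle X(\psi),Y(\psi(\tau))\rangle$ and $|Y|\le N$ then extends the solution to all of $\mathbb R$. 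Now assemble $\overline\psi$ from the $\psi(\tau)$. By the spinorial Gauss formula $\overline\nabla_{e_i}=\nabla_{e_i}+\tfrac12 k_{ij}e_je_0$, the vanishing $E\equiv0$ is precisely $\overline\nabla_{e_i}\overline\psi=-\tfrac12\mathbf i e_i\overline\psi$; and $\varphi$ is designed so that, after substituting the identities of Theorem \ref{Thm: N,X,Y,omega} for $\nabla N,\nabla X,\nabla Y,\nabla\omega$, the spacetime spin connection along $\partial_\tau$ in this gauge turns $\partial_\tau\psi=\varphi(\psi)$ into $\overline\nabla_{\partial_\tau}\overline\psi=-\tfrac12\mathbf i\,\partial_\tau\cdot\overline\psi$. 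Combined with the spatial components and $\partial_\tau=N(\psi)e_0+X(\psi)$ this gives $\overline\nabla_{e_0}\overline\psi=-\tfrac12\mathbf i e_0\overline\psi$, so $\overline\psi$ is an imaginary Killing spinor on $(\overline M,\overline g)$ with eigenvalue $-\tfrac12\mathbf i$.

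For the final assertion, differentiate the Killing equation once and contract to obtain the integrability condition $(\overline{\Ric}_{\mu\nu}+n\,\overline g_{\mu\nu})\,e^\nu\cdot\overline\psi=0$ for all $\mu$ (the constant $-n$ in dimension $n+1$ being forced by the eigenvalue $-\tfrac12\mathbf i$, consistently with AdS). Since $|\overline\psi|^2=N(\psi)>0$, $\overline\psi$ is nowhere zero. If $N>|X|$ on $M$, the Dirac current of $\overline\psi$ is timelike everywhere, so $\overline\psi$ is not a pure null spinor and the map $v\mapsto v\cdot\overline\psi$ is injective on every $T_p\overline M$ (a non-null $v$ acts invertibly in the Clifford algebra, while a null $v$ in the kernel would have to span the then necessarily null Dirac current). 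Hence $\overline{\Ric}+n\,\overline g=0$, which is the vacuum Einstein equation with $\Lambda=-\tfrac{n(n-1)}2$. (When $N=|X|$ the same argument gives only $\overline{\Ric}+n\,\overline g=\lambda\,\ell\otimes\ell$ for the null form $\ell$ dual to the Dirac current, which is the Siklos situation of Proposition \ref{Prop: stress energy siklos wave}.)

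The main difficulty lies in the second step: identifying $\varphi$ with the $\partial_\tau$-component of the Killing equation in the ADM gauge above — that is, computing the spacetime spin connection along $\partial_\tau$ and verifying that Theorem \ref{Thm: N,X,Y,omega} collapses it exactly into $\varphi$ — together with the bookkeeping that isolates the $E$-linear remainder $\mathcal L_{\psi(\tau)}E$ in the propagation step. Once these computational points are settled, the global existence of the ODE and the integrability-condition argument for vacuum are routine.
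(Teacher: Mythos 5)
Your proposal follows the same strategy as the paper's proof: take the Killing development $\overline g=-N^2\,d\tau^2+g_{ij}(dx^i+X^i\,d\tau)(dx^j+X^j\,d\tau)$, identify the second fundamental form from the symmetrized gradient formula $\nabla_{(i}X_{j)}=-Nk_{ij}$ of Theorem~\ref{Thm: N,X,Y,omega}, propagate $\psi$ off the slice by an ODE built from $\varphi$ of Theorem~\ref{lemma construct new timelike spinor}, and verify the spacetime Killing equation by combining the spinorial Gauss formula with the $\tau$-evolution. The overall structure is correct.

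A few normalizations and details differ from the paper. The paper's actual propagation equation is $\partial_\tau\overline\psi=-\tfrac12(\mathbf i e_0 N+e_0 Y+\mathbf i X+\tfrac12\omega)\overline\psi$, which after expanding $\varphi=e_0(N-\mathbf i Y+e_0 X-\tfrac12\mathbf i e_0\omega)\psi=(Ne_0-\mathbf i e_0 Y+X-\tfrac12\mathbf i\omega)\psi$ is $-\tfrac12\mathbf i\,\varphi(\overline\psi)$, not $\varphi(\overline\psi)$; the missing factor $-\tfrac12\mathbf i$ is not a time reparametrization and is precisely what turns $\partial_\tau\overline\psi$ into the correct $\tau$-component $\overline\nabla_0\overline\psi=-\tfrac12\mathbf i e_0\overline\psi$ after subtracting the spin-connection terms and $N^{-1}\nabla_X\overline\psi$. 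Your Gronwall-type propagation of the spatial constraint $E(\tau)$ is a reasonable substitute for the paper's (somewhat terse) appeal to linearity, and your global-existence heuristic works, though the paper's version is cleaner: once $\overline\psi$ is known to be an imaginary Killing spinor on its domain, the bound $|\overline\nabla\overline\psi|\le\tfrac12|\overline\psi|$ immediately precludes finite-time blow-up, so one can take $\overline M=M\times\mathbb R$. For the final claim, the paper simply invokes the standard fact that a Lorentzian manifold with a timelike imaginary Killing spinor is Einstein; your integrability-condition argument via $(\overline{\Ric}_{\mu\nu}+n\,\overline g_{\mu\nu})e^\nu\cdot\overline\psi=0$ spells out the same content more explicitly and correctly handles the null case as a remark.
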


Constructing the so-called \emph{Killing development} $(\overline M^{n+1},\overline g)$ is standard and the main challenge is to extend $\psi$ to an imaginary Killing spinor. 
To do so, Theorem \ref{lemma construct new timelike spinor} is crucial.

\begin{proof}
Let $\overline M^{n+1}=M\times[-a,a]$ and let $\tau$ be the coordinate function on $[-a,a]$.
We extend $\psi$ to $\overline M^{n+1}$ by solving 
\begin{align*}
    \partial_\tau\overline\psi(\tau)=&-\frac12(\mathbf{i}e_0N+e_0Y+\mathbf{i}X+\frac12\omega)\overline{\psi}
\end{align*}
with $\overline \psi(0)=\psi$, and where $N=N(\overline \psi(\tau)),X=X(\overline \psi(\tau)),Y=Y(\overline\psi(\tau)),\omega=\omega(\overline\psi(\tau))$.
By Lemma \ref{lemma construct new timelike spinor} and the linearity of the underlying equation, the extended spinor $\overline\psi$ still solves $\nabla_i\overline\psi=-\frac12k_{ij}e_je_0\overline\psi-\frac12\mathbf{i}e_i\overline\psi$ for $e_i\in TM$.
Moreover, by choosing $a$ sufficiently small, we can ensure $\overline\psi$ stays bounded.
Below, we will show that we can actually set $a=\infty$.

\medskip

We define on $\overline M^{n+1}$ the Lorentzian metric
\begin{align}\label{eq: Killing development}
      \overline{g}=-N^2d\tau^2+g_{ij}(dx^i+X^i d\tau)(dx^j+X^j d\tau).
\end{align}
Since $\nabla_iX_j+\nabla_jX_i=-2k_{ij}N$, the second fundamental form of $(M,g)\subseteq(\overline M^{n+1},\overline g)$ is given by $k$, cf. \cite[Lemma B1]{Huang-Lee:2024}.
Next, we show that $\overline\psi$ is an imaginary Killing spinor in $(\overline M^{n+1},\overline g)$.
Also, note that $e_0=N^{-1}(\partial_\tau-X)$.

\medskip

For this purpose we need to prove that 
\begin{equation*}
     \overline{\nabla}_{i}\overline\psi=-\frac12\mathbf{i}e_i\overline\psi 
     \quad\quad
     \text{and}\quad 
\quad     \overline{\nabla}_{0}\overline\psi=-\frac12\mathbf{i}e_0\overline\psi,
\end{equation*}
where $e_0$ is the unit normal of $M$ in $ \overline M^{n+1}$.

\medskip

To show the first identity, we directly compute
\begin{align*}
    \overline \nabla_i\overline\psi=&\nabla_i\overline\psi-\frac14\langle \overline \nabla_i e_j,e_k\rangle e_je_k\overline\psi\\
    &-\frac14\langle \overline \nabla_i e_j,e_0\rangle e_je_0\overline\psi-\frac14\langle \overline \nabla_i e_0,e_j\rangle e_0e_j\overline\psi-\frac14\langle \overline \nabla_i e_0,e_0\rangle e_0e_0\overline\psi\\
    =&-\frac12k_{ij}e_je_0\overline\psi-\frac12\mathbf{i}e_i\overline\psi-0+\frac14k_{ij}e_je_0\overline\psi+\frac14k_{ij}e_je_0\overline\psi-0\\
    =&-\frac12\mathbf{i}e_i\overline\psi.
\end{align*}
Before we prove the second identity, we first calculate with the help of the gradient equation $\nabla_i X_j=-k_{ij} f+\omega_{ij}$ and the Koszul formula
\begin{align*}
    \langle\overline{\nabla}_0 e_i, e_j\rangle=&-\frac{1}{2}\left(\langle[e_0,e_j],e_i\rangle+\langle[e_i,e_0],e_j\rangle+\langle[e_i,e_j],e_0\rangle\right)
    \\=&-\frac{1}{2}N^{-1}(\langle [e_j,X],e_i\rangle+\langle [X,e_i],e_j\rangle)
    \\=& -\frac{1}{2}N^{-1}(\nabla_j X_i-\nabla_i X_j)\\
    =&N^{-1}\omega_{ij}
\end{align*}
where we also used that 
\begin{align*}
    [e_0,e_j]=N^{-1}[Ne_0,e_j]+N^{-1}\nabla_jNe_0=N^{-1}[e_j,X]+N^{-1}\nabla_jNe_0.
\end{align*}
Moreover, we have 
\begin{align*}
    \langle \overline{\nabla}_0 e_i, e_0\rangle=&\langle[e_0,e_i], e_0\rangle
    \\=&\langle N^{-1}[e_i,X]+N^{-1}\nabla_iNe_0,e_0\rangle
    \\=&-N^{-1}\nabla_iN.
\end{align*}
Consequently,
\begin{align*}
    \overline \nabla_0\overline\psi=&N^{-1}(\partial_\tau-\nabla_X)\overline\psi-\frac14\langle \overline \nabla_0 e_j,e_k\rangle e_je_k\overline\psi\\
    &-\frac14\langle \overline \nabla_0 e_j,e_0\rangle e_je_0\overline\psi-\frac14\langle \overline \nabla_0 e_0,e_j\rangle e_0e_j\overline\psi-\frac14\langle \overline \nabla_0 e_0,e_0\rangle e_0e_0\overline\psi\\
    =& N^{-1}(\partial_\tau-\nabla_X)(\overline\psi)-\frac{1}{2}N^{-1}\nabla_i N e_0e_i\overline\psi-\frac{1}{4}N^{-1}\omega_{ij}e_je_i\overline\psi.
\end{align*}
Using that
\begin{align*}
     \nabla_X\overline\psi=-\frac12iX\overline\psi-\frac{1}{2}k_{ij}X_ie_je_0\overline\psi=-\frac12iX\overline\psi+\frac{1}{2}(\nabla_j N)e_j e_0\overline\psi+\frac12Y_je_0\overline\psi
\end{align*}
and the evolution equation for $\partial_\tau\overline\psi$, we obtain
\begin{align*}
     2N \overline{\nabla}_{e_0} \overline\psi=&-(\mathbf{i}e_0N+e_0Y+\mathbf{i}X+\frac12\omega)\overline\psi\\
     &-(-\mathbf{i}X+(\nabla_j N)e_j e_0+Y_je_0)\overline\psi-\nabla_i N e_0e_i\overline\psi-\frac{1}{2}\omega_{ij}e_je_i\overline\psi\\
     =&-\mathbf{i}Ne_0\psi
\end{align*}
as desired.
Therefore, $\overline\psi$ is indeed an imaginary Killing spinor.

\medskip

For an imaginary Killing spinor, we have $|\overline\nabla \;\overline\psi|\le \frac12|\overline\psi|$.
Therefore, $\overline \psi$ cannot blow up in finite time $\tau$ and we can choose $a=\infty$, i.e. $\overline M^{n+1}=M\times\mathbb R$.

\medskip

Since $({\overline M}^{n+1},\overline g)$ admits a timelike imaginary Killing spinor, we automatically obtain that $(\overline M^{n+1},\overline g)$ is an Einstein manifold, i.e. $(\overline M^{n+1},\overline g)$ is vacuum.
\end{proof}

We remark that the above argument also works in the null case.
In this setting we have
  \begin{align*}
        \varphi=e_0(N(\psi)-\mathbf{i}Y(\psi)+e_0X(\psi)-\frac12\mathbf{i}e_0\omega(\psi))\psi=0
    \end{align*}
and $\psi$ is extended trivially to $\overline M^{n+1}$, i.e. $\partial_\tau\overline\psi=0$.

\medskip

It would be of interest to further classify the asymptotically AdS Lorentzian manifolds admitting timelike imaginary Killing spinors. 
This will be discussed further in Section \ref{S:ultraspinning}.


\section{Spinorial 2-slicings} \label{S:main}

In this section we will prove Theorem \ref{Thm Intro minimal slicing}.
We will reuse a lot of the tools established in Section \ref{S:causality}.

\subsection{Construction of the codimension $2$ foliation}

In the asymptotically flat setting one obtains in the case of equality of the positive mass theorem a spinor $\psi$ satisfying the equation 
\begin{align}\label{eq:AF spinor}
    \nabla_i\psi=-\frac12k_{ij}e_je_0\psi.
\end{align}
Defining the vector field $X=\langle e_ie_0\psi,\psi\rangle$, it is then easy to see that $dX=0$ and that $X$ gives rise to a foliation $\Sigma_t$ for $M$.
Moreover, equation \eqref{eq:AF spinor} implies that $N^{-\frac12}\psi$ is parallel on $\Sigma_t$ which yields in combination with the abundance of spinors satisfying \eqref{eq:AF spinor} flatness of $\Sigma_t$.

\medskip

When $(M^n,g,k)$ is asymptotic to AdS at infinity the situation becomes much more complicated.
In particular, we need to use the spinor to slice down two dimensions and construct a codimension $2$ foliation.

\medskip

Throughout this section we assume that $N(\psi^\infty)=|X(\psi^\infty)|=|Y(\psi^\infty)|$.
Recall that this implies $N(\psi)=|X(\psi)|=|Y(\psi)|$ according to Proposition \ref{Prop: |X| = N=Y}.

\begin{corollary}\label{Cor: existence u1}
There exists a global foliation $\{\Sigma^1_\alpha\}$ of $M^n$ where $X$ is perpendicular to each $\Sigma^1_\alpha$.
\end{corollary}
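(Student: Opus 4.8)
The plan is to realize the leaves of the sought foliation as the integral manifolds of the orthogonal distribution $X^{\perp}$, obtained from the Frobenius theorem. The two ingredients that make this possible are already in place. First, since we are assuming $N(\psi^\infty)=|X(\psi^\infty)|=|Y(\psi^\infty)|$, Proposition~\ref{Prop: |X| = N=Y} gives $N=|X|=|Y|$ and $X\perp Y$ everywhere on $M^n$, and moreover guarantees that $X$ is nowhere vanishing. Second, combining Proposition~\ref{Prop N=X implies Nw=XY} with Proposition~\ref{Prop: zeta = 0}, the Fierz-type identity $N\omega=X\wedge Y$ holds everywhere, i.e.\ $\omega=N^{-1}\,X^{\flat}\wedge Y^{\flat}$, where $X^{\flat},Y^{\flat}$ are the metric duals of $X,Y$.

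The first step is to compute $dX^{\flat}$. Using the gradient equation $\nabla_iX_j=-k_{ij}N+\omega_{ij}$ from Theorem~\ref{Thm: N,X,Y,omega}, together with the symmetry of $k_{ij}$ and the antisymmetry of $\omega_{ij}$, the antisymmetric part of $\nabla X$ yields
\begin{align*}
    (dX^{\flat})_{ij}=\nabla_iX_j-\nabla_jX_i=2\omega_{ij},
\end{align*}
so that $dX^{\flat}=2\omega=2N^{-1}\,X^{\flat}\wedge Y^{\flat}$. Wedging with $X^{\flat}$ then gives
\begin{align*}
    X^{\flat}\wedge dX^{\flat}=2N^{-1}\,X^{\flat}\wedge X^{\flat}\wedge Y^{\flat}=0.
\end{align*}

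Since $X$ is nowhere zero, $X^{\flat}$ is a nowhere-vanishing one-form and hence $\ker X^{\flat}$, the distribution of vectors orthogonal to $X$, is a smooth corank-one distribution on all of $M^n$; the identity $X^{\flat}\wedge dX^{\flat}=0$ is exactly its Frobenius integrability condition. Its integral manifolds form the foliation $\{\Sigma^1_\alpha\}$, and $X$ is normal to each leaf by construction. I do not expect a genuine obstacle at this level: the only point requiring a word is that the corollary asks for a \emph{global} foliation, which is immediate here because $X^{\flat}$ is globally defined and globally nonvanishing. The truly delicate refinement — producing an honest first integral, i.e.\ a globally defined function $\mathbf f$ with $X^{\flat}=\lambda\,d\mathbf f$ for some positive $\lambda$, so that the $\Sigma^1_\alpha$ are its level sets — is harder, since closedness of $\lambda^{-1}X^{\flat}$ forces $d\log\lambda=a\,X^{\flat}-\tfrac{2}{N}\,Y^{\flat}$ for a suitable function $a$, a first-order linear equation whose global solvability needs further input. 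That is precisely the kind of issue the subsequent construction of the AdS-Brinkmann coordinates $u_1,\dots$ must handle, but it is not needed for the present statement.
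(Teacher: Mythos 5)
Your proposal is correct and follows essentially the same route as the paper: use the gradient identity from Theorem \ref{Thm: N,X,Y,omega} to get $dX^\flat = 2\omega$, combine with the Fierz-type identity $N\omega = X\wedge Y$ (established via Propositions \ref{Prop N=X implies Nw=XY} and \ref{Prop: zeta = 0}) to conclude $X^\flat \wedge dX^\flat = 0$, and apply Frobenius to the corank-one distribution $\ker X^\flat$. Your closing remark about the distinction between integrability of the distribution and the existence of a global first integral is a correct and perceptive observation — the paper indeed defers that harder step to the construction of the AdS-Brinkmann coordinate $u_1$ via a limiting ODE argument in Lemma \ref{u-y} — but it is not needed for this corollary.
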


\begin{proof}
Since $\zeta=0$, we have $N\omega = X\wedge Y$.
Moreover $dX=2\omega$, which implies that $dX\wedge X=0$.
Hence, the result follows from Frobenius' theorem.
More precisely, let $D$ be the rank $n-1$ distribution on $M$ that annihilates the 1-form $X$, then it also annihilates $dX$ because $dX=2N^{-1}X\wedge Y$. Therefore, $D$ is involutive, thus, it is integrable, i.e., locally, $D$ is the tangent bundle of a hypersurface. 
\end{proof}

To denote coordinates on a leaf $\Sigma^1_\alpha$ of the above foliation we use Greek indices.

\begin{proposition}
On each $\Sigma^1_\alpha$ there exists a foliation $\Sigma^2_s$ such that $Y$ is normal to each $\Sigma^2_s$.
\end{proposition}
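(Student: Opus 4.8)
The plan is to verify the Frobenius integrability condition for the rank-$(n-2)$ distribution $\mathcal{D}:=X^{\perp}\cap Y^{\perp}$ on $M$, in the same spirit as the proof of Corollary \ref{Cor: existence u1}.

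First I would record the standing algebraic facts: under the hypothesis $N(\psi^{\infty})=|X(\psi^{\infty})|=|Y(\psi^{\infty})|$, Proposition \ref{Prop: |X| = N=Y} together with Proposition \ref{Prop: zeta = 0} gives, at every point of $M$,
\begin{equation*}
    N=|X|=|Y|,\qquad X\perp Y,\qquad N\omega=X\wedge Y,
\end{equation*}
with $X$ and $Y$ nowhere vanishing. Hence $X\wedge Y$ is nowhere zero, so $\mathcal{D}=X^{\perp}\cap Y^{\perp}$ is a smooth distribution of rank $n-2$ on $M$, and $\mathcal{D}\subset X^{\perp}=T\Sigma^1_\alpha$ along each leaf of the foliation from Corollary \ref{Cor: existence u1}.

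Next I would compute $dX$ and $dY$ from the gradient identities of Theorem \ref{Thm: N,X,Y,omega}. Antisymmetrizing $\nabla_iX_j=-k_{ij}N+\omega_{ij}$ and $\nabla_iY_j=-g_{ij}N+k_{il}\omega_{jl}$, then substituting $\omega_{jl}=N^{-1}(X_jY_l-X_lY_j)$ and writing $(kZ)_i:=k_{il}Z_l$, one obtains
\begin{align*}
    (dX)_{ij}&=2N^{-1}\big(X_iY_j-X_jY_i\big),\\
    (dY)_{ij}&=N^{-1}\big((kY)_iX_j-(kY)_jX_i-(kX)_iY_j+(kX)_jY_i\big).
\end{align*}
Evaluating either form on two vector fields $V,W$ taking values in $\mathcal{D}$, so that $g(X,V)=g(X,W)=g(Y,V)=g(Y,W)=0$, every term on the right carries one of these vanishing pairings; hence $dX(V,W)=dY(V,W)=0$. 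This is exactly the Frobenius condition for the distribution $\mathcal{D}$ cut out by the two $1$-forms $X$ and $Y$, so $\mathcal{D}$ is involutive, hence integrable.

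Finally, Frobenius' theorem produces a foliation of $M$ by integral manifolds of $\mathcal{D}$; since $\mathcal{D}\subset T\Sigma^1_\alpha$, the leaves lying in a fixed $\Sigma^1_\alpha$ foliate it, and these are the desired $\Sigma^2_s$ — each satisfies $T\Sigma^2_s=\mathcal{D}$, and $Y$ is $g$-orthogonal to $\mathcal{D}$, so $Y$ is normal to each $\Sigma^2_s$ (inside $\Sigma^1_\alpha$, equivalently inside $M$ since $Y\in T\Sigma^1_\alpha$). I do not expect a genuine obstacle here: the argument is a direct application of Frobenius once Theorem \ref{Thm: N,X,Y,omega} and the identities of Section \ref{S:causality} are in hand. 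The only point requiring a little care is that $\mathcal{D}$ is defined by two $1$-forms rather than one, so the involutivity check must be carried out for both $dX$ and $dY$; and $dX=2\omega$ contributes nothing new, since $\omega=N^{-1}X\wedge Y$ visibly annihilates $\mathcal{D}\times\mathcal{D}$.
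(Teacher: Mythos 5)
Your argument is correct, and it reaches the Frobenius condition by a route that differs from the paper's in a useful way worth noting. The paper works \emph{intrinsically} on a fixed leaf $\Sigma^1_\alpha$: using the gradient identity for $Y$ together with $N\omega = X\wedge Y$, it computes $\nabla^{\Sigma^1}_\alpha Y_\beta - \nabla^{\Sigma^1}_\beta Y_\alpha$ and sees that this antisymmetrization is a rank-one combination $k_{\alpha\nu}Y_\beta - k_{\beta\nu}Y_\alpha$, hence $d_{\Sigma^1}Y\wedge Y = 0$, which is the single-form Frobenius condition inside $\Sigma^1$. You instead work \emph{ambiently} on $M$: you form the rank-$(n-2)$ distribution $\mathcal D = X^\perp\cap Y^\perp$, compute $dX = 2\omega = 2N^{-1}X\wedge Y$ and $dY$ directly from the identities in Theorem \ref{Thm: N,X,Y,omega} together with $\omega = N^{-1}X\wedge Y$, and observe that every term in both exterior derivatives carries a factor of $X$ or $Y$ contracted against a free index, so $dX$ and $dY$ annihilate $\mathcal D\times\mathcal D$; this is the two-form Frobenius condition on $M$, and the leaves lying inside each $\Sigma^1_\alpha$ give the desired $\Sigma^2_s$. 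The two routes are logically equivalent, but your ambient formulation is manifestly symmetric in $X$ and $Y$ and spares you the bookkeeping of passing to intrinsic covariant derivatives on $\Sigma^1$ (it also recovers the $\Sigma^1$ involutivity of Corollary \ref{Cor: existence u1} as a by-product), at the mild cost of needing to check two exterior derivatives rather than one and then intersect with $\Sigma^1$ at the end. Either way, the prerequisites used — $N = |X| = |Y|$, $X\perp Y$, $X,Y$ nonvanishing, and $N\omega = X\wedge Y$ from Propositions \ref{Prop: |X| = N=Y} and \ref{Prop: zeta = 0} — are the same ones the paper relies on.
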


\begin{proof}
Observe that $N\omega =XY-YX$ implies
\begin{align*}
       \nabla_\alpha Y_\beta=&-g_{\alpha\beta}N-k_{\alpha \nu}Y_\beta.
\end{align*}
Moreover, we have $\nabla^{\Sigma_1}_\alpha Y=\nabla_\alpha Y$ since $Y$ is tangential to $\Sigma_1$.
Hence,
\begin{align*}
    \nabla_\alpha Y_\beta-\nabla_\beta Y_\alpha= k_{\alpha \nu} Y_\beta-k_{\beta\nu} Y_\alpha.
\end{align*}
Hence $d_{\Sigma^1} Y\wedge Y=0$.
The desired result follows along the lines of the proof of Corollary \ref{Cor: existence u1}.
\end{proof}

To denote coordinates on $\Sigma^2_s$ we use capital indices.

\begin{lemma}\label{Lemma different spinors}
    Let $\psi,\overline\psi$ be two spinors asymptotic to $\psi_\infty,\overline\psi_\infty$ solving the PDEs $\nabla_i\psi=-\frac12k_{ij}e_je_0\psi-\frac12\mathbf{i}e_i\psi$ and $\nabla_i\overline\psi=-\frac12k_{ij}e_je_0\overline \psi-\frac12\mathbf{i}e_i\overline\psi$ such that $N_\infty, X_\infty, Y_\infty, \omega_\infty$ equal $\overline N_\infty,\overline X_\infty,\overline Y_\infty,\overline \omega_\infty$. Then
    \begin{align*}
        N=\overline N,\quad X=\overline X, \quad Y=\overline Y,\quad \omega=\overline \omega.
    \end{align*}
    In particular, the foliations $\Sigma^1_\alpha$ and $\Sigma^2_s$ agree and are independent of the choice of asymptotic spinor.
\end{lemma}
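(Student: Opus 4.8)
The plan is to show that the differences $\delta_N := N-\overline N$, $\delta_X := X-\overline X$, $\delta_Y := Y-\overline Y$, $\delta_\omega := \omega-\overline\omega$ all vanish identically by deriving a closed first-order linear system for them and using the asymptotic decay to kill the solution. First I would record that, by Theorem \ref{Thm: N,X,Y,omega}, both quadruples $(N,X,Y,\omega)$ and $(\overline N,\overline X,\overline Y,\overline\omega)$ satisfy the same linear gradient system \eqref{NXYw} (with the \emph{same} coefficients $g$ and $k$, since both spinors live on the same initial data set $(M^n,g,k)$). Subtracting, the differences $(\delta_N,\delta_X,\delta_Y,\delta_\omega)$ satisfy the identical homogeneous linear system:
\begin{align*}
    \nabla_i \delta_N=&-k_{ij}(\delta_X)_j-(\delta_Y)_i, \\
    \nabla_i (\delta_X)_j=&-k_{ij}\delta_N+(\delta_\omega)_{ij},\\
    \nabla_i (\delta_Y)_j=&-g_{ij}\delta_N+k_{il}(\delta_\omega)_{jl},\\
    \nabla_i(\delta_\omega)_{jk}=&g_{ij}(\delta_X)_k-g_{ik}(\delta_X)_j-k_{ij}(\delta_Y)_k+k_{ik}(\delta_Y)_j.
\end{align*}
Consequently, if $\Xi$ denotes the combined tensor $(\delta_N,\delta_X,\delta_Y,\delta_\omega)$ with its natural norm $|\Xi|$, there is a constant $C$ (depending on $\sup|k| = C_k$) such that $|\nabla_i|\Xi|| \le C|\Xi|$ pointwise where $\Xi \neq 0$, exactly as in the proofs of Proposition \ref{Prop: zeta = 0} and Proposition \ref{Prop: |X| = N=Y}.

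Next I would invoke the decay. Since $\psi^\infty = \overline\psi^\infty$ produce the same $N_\infty, X_\infty, Y_\infty, \omega_\infty$ by hypothesis, and since by Lemma \ref{NXY decay} (in the null type I case under consideration, using $N(\psi^\infty)=|X(\psi^\infty)|=|Y(\psi^\infty)|$) we have $N = y_n^{-1}+O_2(r^{-q}y_n^{-1})$ and similarly for $X,Y$, with the analogous estimates for $\overline\psi$, the differences satisfy $\delta_N, \delta_X, \delta_Y = O_2(r^{-q}y_n^{-1})$; a short computation from \eqref{NXYw} then gives $\delta_\omega = O(r^{-q}y_n^{-1})$ as well (alternatively use $2\omega = dX$ directly). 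Hence $|\Xi| = O(r^{-q}y_n^{-1})$, which in particular tends to $0$ at infinity. Now run the integral-curve argument from Proposition \ref{Prop: zeta = 0}: let $\gamma(s)$ be the arclength-parametrized integral curve of $\nabla y_n$, along which $y_n^{-1} = O(e^{-s})$ and $r^{-1}=O(e^{-s})$, and assume WLOG $|k|(\gamma(s)) \le 0.1$ for $s$ large. Then $\frac{d}{ds}|\Xi|(\gamma(s)) \le C'|\Xi|(\gamma(s))$ for a fixed constant $C'$, while $|\Xi|(\gamma(s)) = O(e^{-(q+1)s})$ or better from the decay estimate; integrating the differential inequality forward and comparing with the decay forces $|\Xi| \equiv 0$ along the curve, hence (since such curves reach every point by the Frobenius/foliation structure, or by the standard continuation argument: $\{|\Xi|=0\}$ is open and closed once it is nonempty) $|\Xi| \equiv 0$ on all of $M^n$. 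This gives $N=\overline N$, $X=\overline X$, $Y=\overline Y$, $\omega=\overline\omega$.

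Finally, once $X=\overline X$, the rank $n-1$ distribution annihilating $X$ coincides with that annihilating $\overline X$, so the leaves $\Sigma^1_\alpha$ constructed in Corollary \ref{Cor: existence u1} are the same; and once $Y=\overline Y$ in addition, the same reasoning applied on each leaf $\Sigma^1_\alpha$ shows the secondary foliation $\Sigma^2_s$ is unchanged. Thus the whole codimension-$2$ slicing is canonical, independent of which asymptotic spinor (among those with the prescribed $N_\infty,X_\infty,Y_\infty,\omega_\infty$) is used to generate it.

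The main obstacle I anticipate is purely bookkeeping rather than conceptual: making sure the norm $|\Xi|$ is chosen so that the Kato-type inequality $|\nabla|\Xi|| \le C|\Xi|$ genuinely follows from the homogeneous system above (the two-form component $\delta_\omega$ contributes terms that must be absorbed, exactly as the $\beta$ term is handled in Proposition \ref{Prop: zeta = 0}), and verifying that the decay rate coming out of Lemma \ref{NXY decay} for the differences is strictly better than $e^{-C's}$ along $\gamma$ so that the Gronwall comparison actually closes. Both points are handled by the same techniques already used in Section \ref{S:causality}, so no genuinely new idea is required.
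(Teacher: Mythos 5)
Your proof is correct and follows essentially the same route as the paper: both subtract the gradient systems from Theorem \ref{Thm: N,X,Y,omega} to get a homogeneous first-order differential inequality $|\nabla\Xi|\lesssim\Xi$ for the combined difference, and then kill $\Xi$ by combining the decay from Lemma \ref{NXY decay} with a Gronwall comparison along integral curves of $\nabla y_n$. The paper works with the sum of squares of the differences rather than the tensor norm, and the $\beta$-absorption you anticipate (as in Proposition \ref{Prop: zeta = 0}) does not actually arise here because the gradient of each difference component is already controlled directly by the other difference components; these are cosmetic points only.
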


\begin{proof}
    Define the function
    \begin{align*}
        \Xi=|N-\overline N|^2+|X-\overline X|^2+|Y-\overline Y|^2 +|\omega-\overline\omega|^2.
    \end{align*}
    Then we have for any point where $|k|\le 0.1$
    \begin{align*}
    \begin{split}
       \frac12 |\nabla_i \Xi|\le& |N-\overline N|(0.1|\overline X-X|+|\overline Y-Y|)+|X-\overline X|(0.1|N-\overline N|+|\omega -\overline \omega|)\\
       &+|\overline Y-Y|(|N-\overline N|+0.1|\omega -\overline\omega|)+|\omega-\overline \omega|(|X-\overline X|+0.1|Y-\overline Y|)\\
       \le& 1.1 \Xi.
       \end{split}
    \end{align*}
    where we used Theorem \ref{Thm: N,X,Y,omega} and Cauchy-Schwarz inequality repeatedly.
    Integrating this along the integral curve of $\nabla y_n$ shows $\Xi=0$ since $\Xi=O(e^{(-q-2)s})$ by Lemma \ref{NXY decay}.
\end{proof}

\subsection{Geometry and topology of the foliation}

We define $\mathbf x=X|X|^{-1}$ and $\mathbf y=Y|Y|^{-1}$.
Then $\mathbf x,\mathbf y$ are well-defined since $X,Y\ne0$.

\begin{proposition}
    The second fundamental form $h$ of each leaf $\Sigma^1_\alpha$ satisfies
    \begin{align*}
        h^{\Sigma_1}_{\alpha\beta}=-k_{\alpha\beta}
    \end{align*}
\end{proposition}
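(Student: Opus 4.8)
The plan is to compute the second fundamental form of the leaf $\Sigma^1_\alpha$ directly from the covariant derivative of its unit normal $\mathbf x = X|X|^{-1}$, using the gradient identities of Theorem \ref{Thm: N,X,Y,omega} together with the algebraic relations $N=|X|=|Y|$, $X\perp Y$, and $N\omega = X\wedge Y$ that hold everywhere by Proposition \ref{Prop: |X| = N=Y} and Proposition \ref{Prop: zeta = 0}.

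First I would recall that for a hypersurface with unit normal $\nu$, the second fundamental form is $h_{\alpha\beta} = \langle \nabla_\alpha \nu, e_\beta\rangle$ for tangent vectors $e_\alpha, e_\beta$. Here $\nu = \mathbf x = N^{-1}X$ since $|X| = N$. So I would compute $\nabla_i(N^{-1}X_j) = -N^{-2}(\nabla_i N)X_j + N^{-1}\nabla_i X_j$. Using $\nabla_i N = -k_{ij}X_j - Y_i$ and $\nabla_i X_j = -k_{ij}N + \omega_{ij}$ from Theorem \ref{Thm: N,X,Y,omega}, this becomes
\begin{align*}
    \nabla_i \mathbf x_j = -N^{-2}(-k_{il}X_l - Y_i)X_j + N^{-1}(-k_{ij}N + \omega_{ij}).
\end{align*}
Now I restrict to indices $\alpha,\beta$ tangent to $\Sigma^1_\alpha$, i.e. perpendicular to $X$; then the terms containing $X_j$ drop out when we contract the $j$-index against a tangent vector $e_\beta$, leaving $h^{\Sigma_1}_{\alpha\beta} = \langle \nabla_\alpha \mathbf x, e_\beta\rangle = -k_{\alpha\beta} + N^{-1}\omega_{\alpha\beta}$. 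The final step is to observe that $\omega_{\alpha\beta} = 0$ for $\alpha,\beta$ both perpendicular to $X$: since $N\omega = X\wedge Y$, every component of $\omega$ lies in the span of $X$, so contracting both slots of $\omega$ with vectors orthogonal to $X$ gives zero. This yields $h^{\Sigma_1}_{\alpha\beta} = -k_{\alpha\beta}$ as claimed.

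The only subtlety — and the step I would be most careful about — is bookkeeping with which index of $k$ and $\omega$ is the ``differentiation'' index versus the ``tangent'' index, and confirming that the leaf is genuinely a hypersurface so that the induced second fundamental form is well-defined; the latter is guaranteed by Corollary \ref{Cor: existence u1}. One should also note that $h^{\Sigma_1}$ is automatically symmetric in $\alpha,\beta$ even though the expression $-N^{-2}(-k_{il}X_l - Y_i)X_j$ is not manifestly symmetric: after projecting to the tangent space this asymmetric piece vanishes, and $-k_{\alpha\beta}$ is symmetric, providing a useful consistency check. No serious obstacle is expected here — this is a short computation leveraging the already-established pointwise identities.
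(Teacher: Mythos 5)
Your proposal is correct and takes essentially the same route as the paper: both compute $\nabla_\alpha(\mathbf x_\beta)$ using the gradient identities from Theorem \ref{Thm: N,X,Y,omega}, use $X_\beta=0$ on tangent indices so the term from differentiating $|X|^{-1}$ drops, and invoke $N\omega = X\wedge Y$ to kill $\omega_{\alpha\beta}$. The paper merely states this more tersely as $h^{\Sigma^1}_{\alpha\beta}=|X|^{-1}\nabla_\alpha X_\beta=-k_{\alpha\beta}$.
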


\begin{proof}
    We compute
       \begin{align*}
        h^{\Sigma^1}_{\alpha\beta}=|X|^{-1}\nabla_\alpha X_\beta=-k_{\alpha\beta}
    \end{align*}
    where we used that $\omega_{\alpha\beta}=0$, $X_\beta=0$ and $|X|=N$ from Proposition \ref{Prop: |X| = N=Y} and Proposition \ref{Prop: zeta = 0}.
\end{proof}

\begin{lemma}\label{Lemma psi1 identity on Sigma1}
   On each $\Sigma^1_\alpha$ we have
       \begin{align*}
    \begin{split}
        \nabla^{\Sigma^1}_{\alpha}\phi         =&\frac12  e_\alpha\mathbf{y} \phi +\frac12\mathbf{y}_\alpha \phi.
        \end{split}
    \end{align*}
    for $\phi=N^{-\frac12}\psi_1$ in odd dimensions and $\phi = N^{-\frac12} (1+\sigma)\psi_1$ in even dimensions where  $\sigma=\mathbf{i}^\frac{n}{2}e_1\cdots e_n$.
\end{lemma}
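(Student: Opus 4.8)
The plan is to compute the spin connection of $\Sigma^1_\alpha$ induced from $M$ via the Gauss formula for spinors, substitute the defining PDE $\nabla_i\psi=-\frac12 k_{ij}e_je_0\psi-\frac12\mathbf{i}e_i\psi$, and then carefully track how the normalization factor $N^{-1/2}$ (and, in even dimensions, the twistor-type projector $1+\sigma$) absorbs the unwanted Clifford terms. First I would set up an adapted orthonormal frame $e_1,\dots,e_n$ on (a neighborhood in) $M$ so that $e_1=\mathbf{x}$ is the unit normal to $\Sigma^1_\alpha$ and $e_2,\dots,e_n$ are tangent, with $e_2=\mathbf{y}$ if convenient. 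The second fundamental form of $\Sigma^1_\alpha$ is $h^{\Sigma_1}_{\alpha\beta}=-k_{\alpha\beta}$ by the preceding Proposition, so the spinorial Gauss formula reads
\begin{align*}
    \nabla^{\Sigma^1}_\alpha\phi = \nabla_\alpha\phi - \tfrac12 h^{\Sigma_1}_{\alpha\beta}e_\beta e_1\phi
    = \nabla_\alpha\phi + \tfrac12 k_{\alpha\beta}e_\beta\mathbf{x}\,\phi,
\end{align*}
valid after restricting the ambient spinor bundle to $\Sigma^1_\alpha$ (in even ambient dimension one must also choose which half-spinor representation of $\mathrm{Cl}(\Sigma^1)$ to identify with; this is precisely why the projector $\sigma$ enters, see below).

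Next I would plug in $\nabla_\alpha\psi=-\frac12 k_{\alpha j}e_je_0\psi-\frac12\mathbf{i}e_\alpha\psi$. Using $\mathbf{x}=X|X|^{-1}$, $|X|=N$, $X_\beta=0$ on $\Sigma^1_\alpha$ and $\omega_{\alpha\beta}=0$ (Propositions \ref{Prop: zeta = 0}, \ref{Prop: |X| = N=Y}), the term $\tfrac12 k_{\alpha\beta}e_\beta\mathbf{x}\,\psi$ should cancel the tangential part of $-\frac12 k_{\alpha j}e_je_0\psi$; the normal piece $-\frac12 k_{\alpha 1}e_1 e_0\psi$ together with $-\frac12\mathbf{i}e_\alpha\psi$ is what survives and must be rewritten. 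The key structural input is the gradient identity $\nabla_i N=-k_{ij}X_j-Y_i$ from Theorem \ref{Thm: N,X,Y,omega}, which restricted to $\Sigma^1_\alpha$ gives $\nabla_\alpha N = -k_{\alpha 1}N - Y_\alpha$, i.e. $\nabla_\alpha\log N^{-1/2} = \frac12(k_{\alpha 1}+\mathbf{y}_\alpha)$ after dividing by $N=|Y|$. Differentiating $\phi=N^{-1/2}\psi_1$ (resp. $\phi=N^{-1/2}(1+\sigma)\psi_1$) therefore produces exactly the extra $\frac12\mathbf{y}_\alpha\phi$ scalar term, and the residual Clifford term should collapse to $\frac12 e_\alpha\mathbf{y}\,\phi$ once one uses $e_0\psi_1 = \psi_2$ and the relation between the $\psi_1$-component and the full spacetime spinor; here the type~I null condition $N=|X|=|Y|$ pins down how $e_0$, $\mathbf{x}=e_1$ and $\mathbf{y}$ act on $\psi$, effectively forcing $e_1 e_0\psi_1 = \psi_1$ and reducing the two-component spacetime spinor to an honest spinor on $M$ (and then on $\Sigma^1$).

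The main obstacle I expect is the even-dimensional case: when $n$ is even, $\mathrm{Cl}(M^n)$ acts irreducibly on $\mathcal S(M)$ but $\mathrm{Cl}(\Sigma^1_\alpha)$ — an odd-dimensional Clifford algebra — does not, so the restriction $\mathcal S(M)|_{\Sigma^1}$ is not naturally the spinor bundle of $\Sigma^1$; one must split off the correct $\pm$-eigenbundle of the volume-type element. The operator $\sigma=\mathbf{i}^{n/2}e_1\cdots e_n$ is central up to sign considerations and squares to $1$, and multiplying $\psi_1$ by $(1+\sigma)$ projects onto the eigenspace that carries the induced spin structure of $\Sigma^1_\alpha$; I would need to check that $(1+\sigma)$ commutes appropriately with $e_\alpha\mathbf{y}$ and with the PDE so that the computation above goes through verbatim on the projected spinor. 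A secondary technical point is bookkeeping the signs in the spacetime Clifford conventions $e_l(\psi_1,\psi_2)=(e_l\psi_1,-e_l\psi_2)$, $e_0(\psi_1,\psi_2)=(\psi_2,\psi_1)$ when identifying $N(\psi),X(\psi),Y(\psi),\omega(\psi)$ with quantities built purely from $\psi_1$; I would handle this by working out, from $N=|X|=|Y|$, the explicit algebraic constraints $e_1e_0\psi=\psi$ and $\mathbf{i}e_2\psi=\psi$ (as in the proof of Proposition \ref{Prop N=X implies Nw=XY}) and feeding them in at the end to simplify the surviving term to the stated form.
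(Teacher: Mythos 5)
Your proposal is correct and follows essentially the same route as the paper: reduce the equation for the spacetime spinor $\psi=(\psi_1,\psi_2)$ to one for $\psi_1$ alone using the algebraic constraints $\psi_1=\mathbf{x}\psi_2$, $\psi_2=-\mathbf{x}\psi_1$ and $\mathbf{y}\psi_1=-\mathbf{i}\psi_1$ coming from the type~I null conditions $N=|X|=|Y|$, then apply the spinorial Gauss formula with $h^{\Sigma^1}_{\alpha\beta}=-k_{\alpha\beta}$ and the gradient identity $\nabla_\alpha N=-k_{\alpha 1}N-Y_\alpha$ to absorb the $N^{-1/2}$ factor, with the $(1+\sigma)$-projector supplying the correct half-spinor identification in even dimensions. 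This is precisely the structure of the paper's argument.
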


\begin{proof}
If the dimension $n$ is odd, then the spinor bundle on $M^n$ can be identified with the spinor bundle on the level sets $\Sigma_1$.
Recall that $\psi=(\psi_1,\psi_2)$ and $\nabla_i\psi=-\frac12k_{ij}e_je_0\psi-\frac12\mathbf{i}e_i\psi$. 
Since $N=|X|$ we obtain similar to the proof of Theorem 4.1 in \cite{HZ24} that $\psi_1=\mathbf {x} \psi_2$, $\psi_2=-\mathbf {x} \psi_1$. 
Furthermore, $N=|Y|$ implies that $\mathbf{y}\psi_1=-\mathbf{i}\psi_1$.
Combining both statements, we obtain $\nabla_i\psi_1=\frac12k_{ij}e_j\mathbf {x} \psi_1+\frac12 e_i\mathbf{y}\psi_1$.
Next, we compute
    \begin{align*}
        \nabla^{\Sigma^1}_\alpha \psi_1=
        \nabla_\alpha \psi_1+\frac{1}{2}h_{\alpha\beta} e_\beta \mathbf {x}  \psi _1
        =-\frac12k_{\alpha \mathbf {x} }\psi_1+\frac12 e_\alpha\mathbf{y}\psi_1.
    \end{align*}
    Hence,
    \begin{align*}
    \begin{split}
        \nabla^{\Sigma_1}_{\alpha}(N^{-\frac12}\psi_1) = &\frac12 N^{-\frac32}(k_{\alpha j}X_j+Y_\alpha)\psi_1-\frac12N^{-\frac12}(k_{\alpha \mathbf {x} }\psi_1- e_\alpha\mathbf{y}\psi)\\
        =&\frac12  e_\alpha\mathbf{y} (N^{-\frac12} \psi_1) +\frac12N^{-1}Y_\alpha (N^{-\frac12}\psi_1)
        \end{split}
    \end{align*}
    which finishes the proof in this case. 

    \medskip

 When $n$ is even, the spinor bundle over the level sets $\Sigma_1$ can be identified with an eigenspace of the involution $\sigma: \mathcal{S}(M^n)\to \mathcal{S}(M^n)$ introduced earlier (cf. \cite[Page 903]{harmonicspinor}). This operator satisfies $\sigma^2=1$, commutes with the even part $\operatorname{Cl}^0(\mathbb{R}^n)$ of the Clifford algebra, and anticommutes with its odd part.
These properties induce a decomposition of the total spinor bundle into the $\pm 1$-eigenspaces of $\sigma:
\mathcal{S}(M^n)=\mathcal{S}^+(M^n)\oplus \mathcal{S}^-(M^n)$.
The map $e_\alpha \mapsto e_\alpha e_n$ induces an isomorphism $\operatorname{Cl}(\mathbb{R}^{n-1}) \to \operatorname{Cl}^0(\mathbb{R}^n)$, which consequently gives $\mathcal{S}^+(M^n)$ the structure of a $\operatorname{Cl}(\mathbb{R}^{n-1})$-module. Furthermore, we have the identifications
\[
\mathcal{S}({\Sigma_1})\cong \mathcal{S}^+(M^n)|_{\Sigma_1}\cong \mathcal{S}^-(M^n)|_{\Sigma_1},
\]
while the odd part $\operatorname{Cl}^1(\mathbb{R}^n)$ interchanges the two subbundles, mapping $\mathcal{S}^+(M^n)$ to $\mathcal{S}^-(M^n)$.
By convention, we work with the $+1$-eigenspace $\mathcal{S}^+(M^n)$. For any $\tilde{\psi} \in \mathcal{S}(M^n)$, the projection $(1+\sigma)\tilde{\psi}$ lies in this subbundle, since $\sigma(1+\sigma)\tilde{\psi} = (1+\sigma)\tilde{\psi}$.
 Using $\nabla \sigma=\sigma \nabla $ and $e_\alpha\mathbf{y}\sigma=\sigma e_\alpha\mathbf{y}$, we have
\begin{equation*}
    \nabla_i(1+\sigma)\psi_1=\frac{1}{2}k_{ij}e_j \mathbf {x} (1+\sigma)\psi_1+\frac12 e_i\mathbf{y}(1+\sigma)\psi_1.
\end{equation*}
Hence, we may carry through the same computation as above.
\end{proof}

\begin{proposition} \label{sigma2}
    The surfaces $\Sigma^2_s$ are flat and umbilic within $\Sigma^1_\alpha$.
\end{proposition}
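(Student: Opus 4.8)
The plan is to mimic the structure of Lemma \ref{Lemma psi1 identity on Sigma1}, but now working on the codimension-$2$ leaf $\Sigma^2_s$ rather than on $\Sigma^1_\alpha$. First I would record the second fundamental form of $\Sigma^2_s$ inside $\Sigma^1_\alpha$: using Theorem \ref{Thm: N,X,Y,omega} together with the identities $N=|X|=|Y|$, $X\perp Y$, $\omega_{\alpha\beta}=0$ (here $\alpha,\beta$ are indices on $\Sigma^1_\alpha$, so these are the relations already exploited in the preceding two propositions), one finds that restricted to directions $e_A,e_B$ tangent to $\Sigma^2_s$ one has $\nabla_A Y_B = -g_{AB}N - k_{A\nu}Y_B$ with the $\omega$-term absent; hence the second fundamental form of $\Sigma^2_s\subset\Sigma^1_\alpha$ is $h^{\Sigma^2}_{AB}=|Y|^{-1}\nabla_A Y_B=-k_{AB}$, which is symmetric — in particular umbilicity will come for free once we show the spinor is parallel and the curvature vanishes, or directly by the same restriction argument showing $h^{\Sigma^2}$ is a multiple of $k$ restricted to $\Sigma^2_s$. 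Actually the cleanest route to \emph{umbilic} is to observe $h^{\Sigma^2}_{AB}=-k_{AB}$ and then note, as in the flat-space spinorial slicing argument, that the spinor constraint will force $k_{AB}$ restricted to $T\Sigma^2_s$ to be pure trace; I would flag this and treat it via the spinor identity below rather than a separate tensor computation.

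Next I would transport the spinor identity of Lemma \ref{Lemma psi1 identity on Sigma1} down one more dimension. On $\Sigma^1_\alpha$ we have $\nabla^{\Sigma^1}_\alpha \phi = \tfrac12 e_\alpha \mathbf{y}\phi + \tfrac12 \mathbf{y}_\alpha\phi$ for the appropriate (dimension-dependent) projection $\phi$ of $N^{-1/2}\psi_1$. Restricting to a leaf $\Sigma^2_s$, whose unit normal within $\Sigma^1_\alpha$ is exactly $\mathbf{y}=Y|Y|^{-1}$, and using $\mathbf{y}\phi = -\mathbf{i}\phi$ (which follows from $N=|Y|$, exactly as in the proof of Lemma \ref{Lemma psi1 identity on Sigma1}), the Gauss-type formula $\nabla^{\Sigma^2}_A = \nabla^{\Sigma^1}_A + \tfrac12 h^{\Sigma^2}_{AB} e_B \mathbf{y}$ together with $h^{\Sigma^2}_{AB}=-k_{AB}$ and $\mathbf{y}_A=0$ on $\Sigma^2_s$ should collapse the right-hand side entirely: the $\tfrac12 e_A\mathbf{y}\phi$ term combines with the normal-component contribution and the $k$-term, and one is left with $\nabla^{\Sigma^2}_A\phi = 0$, i.e. $\phi$ (suitably projected) is a parallel spinor on $\Sigma^2_s$. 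As in the odd/even dimensional dichotomy of Lemma \ref{Lemma psi1 identity on Sigma1}, when $n$ is even one must first project onto the appropriate $\sigma$-eigenbundle — but now iterated twice, since $\Sigma^2_s$ has codimension $2$; I would spell out that $\mathcal S(\Sigma^2_s)$ is identified with a $\sigma$-eigenspace of $\mathcal S(\Sigma^1_\alpha)$ and that $\nabla\sigma=\sigma\nabla$, $e_A\mathbf{y}\sigma=\sigma e_A\mathbf{y}$ still hold, so the same computation goes through after projection.

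Finally, flatness follows from the existence of \emph{enough} parallel spinors. By Lemma \ref{Lemma different spinors} the foliation $\Sigma^2_s$ does not depend on the choice of asymptotic spinor, so I can vary $\psi^\infty$ over the full family of type I null imaginary Killing spinors having the prescribed $N,X,Y,\omega$ at infinity — by the Lemma between Proposition \ref{Prop construction killing spinor} and Proposition \ref{Prop N=X implies Nw=XY}, there are $2^{\lfloor (n-2)/2\rfloor}$ linearly independent constant spinors in $\mathbb{R}^n$ with prescribed $X,Y$, hence that many linearly independent parallel spinors $\phi$ on each fixed $\Sigma^2_s$. A full-dimensional space of parallel spinors on the $(n-2)$-dimensional Riemannian manifold $\Sigma^2_s$ forces the curvature to vanish: the spinorial curvature identity $\mathcal R^{\Sigma^2}(e_A,e_B)\phi = \tfrac14 R^{\Sigma^2}_{ABCD}e_Ce_D\phi$ must annihilate a set of spinors spanning (after the $\sigma$-projection) the full spinor module, and a nonzero $2$-form acting by Clifford multiplication cannot annihilate the whole module; hence $R^{\Sigma^2}=0$ and $\Sigma^2_s$ is flat. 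Umbilicity then follows because $h^{\Sigma^2}_{AB}=-k_{AB}$, and the parallel-spinor constraint restricted to $\Sigma^2_s$ forces $k|_{T\Sigma^2_s}$ to be pure trace (the off-diagonal and trace-free diagonal parts of $k_{AB}$ would otherwise obstruct the existence of the parallel spinor via the Gauss equation), so $h^{\Sigma^2}$ is a multiple of the induced metric. The main obstacle I anticipate is bookkeeping the dimension-parity projections correctly through \emph{two} successive codimension-$1$ reductions — making sure the iterated $\sigma$-operators, the identifications of spinor bundles, and the relations $\mathbf{x}\psi=\pm\psi$, $\mathbf{y}\psi=-\mathbf{i}\psi$ remain simultaneously consistent on $\Sigma^2_s$ — and extracting the cleanest argument that a maximal family of parallel spinors forces both flatness and the pure-trace form of $k$ on the leaf.
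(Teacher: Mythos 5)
Your proposal contains a genuine error in the computation of the second fundamental form of $\Sigma^2_s$ inside $\Sigma^1_\alpha$, and this error propagates through both the umbilicity and the parallel-spinor argument. You correctly quote the identity $\nabla_A Y_B = -g_{AB}N - k_{A\nu}Y_B$ from the preceding proposition, but then conclude $h^{\Sigma^2}_{AB} = |Y|^{-1}\nabla_A Y_B = -k_{AB}$. This is wrong: for $A,B$ tangent to $\Sigma^2_s$ one has $Y_B = 0$ (since $Y$ is normal to $\Sigma^2_s$ in $\Sigma^1_\alpha$), so the $k$-term drops entirely and the formula gives $\nabla_A Y_B = -g_{AB}N$. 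Dividing by $|Y| = N$ yields $h^{\Sigma^2}_{AB} = -g^{\Sigma^2}_{AB}$, not $-k_{AB}$. You appear to have confused the second fundamental form of $\Sigma^1_\alpha$ in $M$ (which is $-k_{\alpha\beta}$, from the earlier proposition) with that of $\Sigma^2_s$ in $\Sigma^1_\alpha$. Equivalently, the $k_{Al}\omega_{Bl}$ term in the general formula $\nabla_i Y_j = -g_{ij}N + k_{il}\omega_{jl}$ vanishes because $N\omega = X\wedge Y$ and both $X$ and $Y$ are normal to $\Sigma^2_s$, so $\omega_{Bl} = 0$ whenever $B \in T\Sigma^2_s$.

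With the correct identity $h^{\Sigma^2}_{AB} = -g^{\Sigma^2}_{AB}$, umbilicity is immediate and requires no further argument; your subsequent attempt to deduce that $k$ is pure trace on $T\Sigma^2_s$ from a parallel-spinor constraint is both unnecessary and circular (you invoke the parallel spinor to prove umbilicity, but later invoke umbilicity implicitly when computing the Gauss formula that produces the parallel spinor). More seriously, with your value $h^{\Sigma^2}_{AB} = -k_{AB}$ the Gauss-type formula gives $\nabla^{\Sigma^2}_A\phi = \tfrac12 e_A\mathbf{y}\phi - \tfrac12 k_{AB}e_B\mathbf{y}\phi$, which does \emph{not} collapse to zero. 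The cancellation only works with $h^{\Sigma^2}_{AB} = -g_{AB}$, since then $\tfrac12 h^{\Sigma^2}_{AB}e_B\mathbf{y}\phi = -\tfrac12 e_A\mathbf{y}\phi$ exactly cancels the $\tfrac12 e_A\mathbf{y}\phi$ term from Lemma \ref{Lemma psi1 identity on Sigma1}. Your flatness argument via a full-dimensional family of parallel spinors, supported by Lemma \ref{Lemma different spinors} and the count $2^{\lfloor(n-2)/2\rfloor} = \dim\mathcal{S}(\mathbb{R}^{n-2})$, is correct in spirit and matches the paper's approach, but it rests on the erroneous collapse of $\nabla^{\Sigma^2}_A\phi$.
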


\begin{proof}
The umbilicity follows from
\begin{equation} \label{h2AB}
   h^{\Sigma^2}_{AB}= \nabla^{\Sigma^1}_A (Y_B |Y|^{-1}_{g_{\Sigma^1}})= -g^{\Sigma^1}_{AB}
\end{equation}
where we used that $|Y|_{g_{\Sigma^1}}=|Y|=N$, Theorem \ref{Thm: N,X,Y,omega}, and the fact that $k_{Al}\omega_{lB}=0$ which follows from Proposition \ref{Prop: zeta = 0}.
Next, we construct a single parallel spinor.
To see this, we compute
    \begin{align*}
        \nabla_A^{\Sigma^2}\phi= \nabla^{\Sigma^1}_A\phi +\frac12 h_{AB}^{\Sigma_2} e_B \mathbf y \phi=   0
    \end{align*}
     where we used Lemma \ref{Lemma psi1 identity on Sigma1}.
     Hence, we have a single parallel spinor on $\Sigma^2$ which implies Ricci flatness.
     To obtain Riemann flatness we need more parallel spinors.
     This follows along the lines of \cite{HZ24}: 
     
     \medskip
     Let $u \in \overline{\mathcal{S}}(\mathbb{R}^n)$ be the unit-length parallel spinor corresponding to $\psi^\infty$, and define $\epsilon_1=X(u)$ and $\epsilon_2=Y(u)$. Consider any other unit-length parallel spinor $\widehat{u}\in \overline{\mathcal{S}}(\mathbb{R}^n)$ that satisfies the conditions 
     \begin{equation} \label{hat u}
     \epsilon_1\epsilon_0 \widehat{u}=\widehat{u} \quad \text{and} \quad
     \mathbf{i}\epsilon_2 \widehat{u}=\widehat{u}.
     \end{equation}
     Such a spinor inherits the key invariants of $u$: $N(\widehat{u})=N(u)$, $X(\widehat{u})=X(u)$, $Y(\widehat{u})=Y(u)$, and $\omega(\widehat{u})=\omega(u)$. 
     Thus, $\widehat{u}$ gives rise a spinor $\widehat{\psi}$ on $M$ possessing the same $N$, $X$, $Y$, and $\omega$ as the original $\psi$. In light of Lemma \ref{Lemma different spinors}, this implies $\widehat{\psi}$ generates the same foliations $\Sigma^1_t$ and $\Sigma^2_s$ as $\psi$.
Moreover, we obtain $2^{\lfloor\frac{n}{2}\rfloor-1}$ linearly independent spinors solving equation \eqref{hat u} which coincides with $\dim \mathcal{S}(\mathbb{R}^{n-2})$.
Hence, $\Sigma^2_s$ are flat.
\end{proof}

\begin{theorem}\label{Sigma1}
 The vector field  $\mathbf{y}$ is integrable on $\Sigma^1$.  Moreover, the surfaces $\Sigma^1$ are hyperbolic.
\end{theorem}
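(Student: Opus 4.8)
The plan is to prove the theorem in two stages: first establish the integrability of $\mathbf{y}$ by showing that the one‑form $\mathbf{y}^\flat$ is closed on each leaf $\Sigma^1_\alpha$, and then combine this with the geometry of the subfoliation $\{\Sigma^2_s\}$ supplied by Proposition \ref{sigma2} to put the induced metric of $\Sigma^1_\alpha$ into the upper half–space (horospherical) normal form of $\mathbb{H}^{n-1}$. Note that no spinor computation is needed here beyond what is already packaged in Theorem \ref{Thm: N,X,Y,omega}, Proposition \ref{Prop: zeta = 0} and Proposition \ref{sigma2}.

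For the first stage I would restrict the gradient identities of Theorem \ref{Thm: N,X,Y,omega} to a leaf $\Sigma^1_\alpha$, using that $X=N\mathbf{x}$ is normal to $\Sigma^1_\alpha$, that $Y=N\mathbf{y}$ is tangential with $X\perp Y$ (Proposition \ref{Prop: |X| = N=Y}), and that $\omega=N^{-1}X\wedge Y$ (Proposition \ref{Prop: zeta = 0}). Writing $\kappa$ for the one‑form $e_\alpha\mapsto k(e_\alpha,\mathbf{x})$ on $\Sigma^1_\alpha$, a short computation gives
\[
\nabla^{\Sigma^1}_\alpha N=-N\,k(e_\alpha,\mathbf{x})-Y_\alpha,\qquad
\nabla^{\Sigma^1}_\alpha Y_\beta-\nabla^{\Sigma^1}_\beta Y_\alpha=Y_\alpha k(e_\beta,\mathbf{x})-Y_\beta k(e_\alpha,\mathbf{x}),
\]
that is, $d_{\Sigma^1}\log N=-\kappa-\mathbf{y}^\flat$ and $d_{\Sigma^1}Y^\flat=-\kappa\wedge Y^\flat$. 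Substituting into $d_{\Sigma^1}\mathbf{y}^\flat=d_{\Sigma^1}(N^{-1}Y^\flat)=-N^{-1}d_{\Sigma^1}\log N\wedge Y^\flat+N^{-1}d_{\Sigma^1}Y^\flat$, the $\kappa$–terms cancel and only $\mathbf{y}^\flat\wedge\mathbf{y}^\flat=0$ survives, so $\mathbf{y}^\flat$ is closed. Hence $\mathbf{y}$ locally (and, using that each leaf is simply connected near infinity by Proposition \ref{infinty}, globally) is the gradient of a function $w$ with $\mathbf{y}=\nabla^{\Sigma^1}w$ and $|\nabla^{\Sigma^1}w|\equiv 1$; this is the asserted integrability, and the level sets of $w$ are exactly the leaves $\Sigma^2_s$.

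For the second stage, since $|\nabla^{\Sigma^1}w|\equiv 1$ we have $\nabla^{\Sigma^1}_{\mathbf{y}}\mathbf{y}=\tfrac12\nabla^{\Sigma^1}|\nabla^{\Sigma^1}w|^2=0$, so the integral curves of $\mathbf{y}$ are unit‑speed geodesics and $w$ plays the role of a Fermi (Gaussian) coordinate transverse to the $\Sigma^2_s$. In these coordinates $g_{\Sigma^1}=dw^2+g_{AB}(w,x)\,dx^A dx^B$, the second fundamental form of $\{w=s\}$ is $\tfrac12\partial_w g_{AB}$, and Proposition \ref{sigma2} forces $h^{\Sigma^2}_{AB}=-g^{\Sigma^1}_{AB}$, hence $\partial_w g_{AB}=-2g_{AB}$ and $g_{AB}(w,x)=e^{-2w}g_{AB}(0,x)$. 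Each $\Sigma^2_s$ is flat by Proposition \ref{sigma2}, and being complete (inherited from $M$) and simply connected (again by the asymptotics, via Lemma \ref{NXY decay} and the leaf–independence from Lemma \ref{Lemma different spinors}) it is isometric to $\mathbb{R}^{n-2}$; choosing Euclidean coordinates $g_{AB}(0,x)=\delta_{AB}$ gives $g_{\Sigma^1}=dw^2+e^{-2w}\delta_{AB}dx^A dx^B$, which after $y_{n-1}=e^{w}$ becomes $y_{n-1}^{-2}(dy_{n-1}^2+\delta_{AB}dx^A dx^B)$. Thus $\Sigma^1_\alpha$ is hyperbolic.

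\textbf{Expected main obstacle.} The differential–form cancellation giving $d_{\Sigma^1}\mathbf{y}^\flat=0$ and the warped–product normal form are essentially algebraic once the earlier structural results are in place; the genuine difficulty is global. Upgrading "locally hyperbolic" to "isometric to $\mathbb{H}^{n-1}$" requires that $w$ be defined on all of $\Sigma^1_\alpha$, that it be surjective onto $\mathbb{R}$, and that each leaf $\Sigma^2_s$ be complete and simply connected — none of which is automatic from the pointwise identities. Pinning this down is where one must feed in the asymptotic analysis: near infinity $\Sigma^1_\alpha$ and $\Sigma^2_s$ coincide with the standard horospherical slicing of $\mathbb{H}^n$ by Proposition \ref{infinty} and Lemma \ref{NXY decay}, which fixes the topology of the leaves and the range of $w$, and hence promotes the local isometry to a global one.
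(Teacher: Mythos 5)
Your proof is correct, and it relies on the same underlying identities as the paper (Theorem \ref{Thm: N,X,Y,omega}, Proposition \ref{Prop: zeta = 0}, Proposition \ref{sigma2}), but the route is genuinely different in both stages. For integrability, you compute $d_{\Sigma^1}\mathbf{y}^\flat$ from scratch using the factorization $\mathbf{y}^\flat=N^{-1}Y^\flat$ and the exterior derivatives of $\log N$ and $Y^\flat$; the paper reads the symmetry of $\nabla^{\Sigma^1}\mathbf{y}$ straight off equation \eqref{h2AB} (which gives the manifestly symmetric $\nabla_A\mathbf{y}_B=-g^{\Sigma^2}_{AB}$) together with $\nabla_{\mathbf{y}}\mathbf{y}_A=0$ from \eqref{NXYw}, which is shorter once those are in hand. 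For hyperbolicity, you use the unit-gradient property of $\mathbf{y}$ to put $g_{\Sigma^1}$ into Fermi/warped-product form $dw^2+e^{-2w}\delta_{AB}dx^Adx^B$ and recognize the upper half-space model; the paper instead computes the full Riemann tensor of $\Sigma^1$ from Gauss--Codazzi and the radial (Riccati) curvature equation and concludes constant sectional curvature $-1$. Your version is more constructive, as it exhibits hyperbolic coordinates explicitly, but it requires completeness and simple connectedness of the leaves $\Sigma^2_s$ to conclude $\mathbb{R}^{n-2}$ and $\mathbb{H}^{n-1}$; the paper's curvature argument avoids this and deliberately only asserts ``hyperbolic space or a quotient thereof,'' postponing the global topological statement to Theorem \ref{topology} via Reeb's stability theorem. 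Your closing paragraph correctly identifies exactly this gap; if you want a self-contained proof of the theorem as stated (which only asserts hyperbolicity, not a global isometry), the intrinsic curvature computation is the cleaner endpoint.
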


\begin{proof}
Since $\nabla_A \mathbf{y}_B=-g^{\Sigma^2}_{AB}$ by equation \eqref{h2AB} and $\nabla_\mathbf{y} \mathbf{y}_A=0$ by equation \eqref{NXYw}, we have $\nabla^{\Sigma^1}_\alpha \mathbf{y}_\beta=\nabla^{\Sigma^1}_\beta \mathbf{y}_\alpha$. Thus, $\mathbf{y}$ is integrable on $\Sigma^1$. Hence, the foliation $\{\Sigma_s^2\}$ is generated by the unit normal flow from a leaf $\Sigma^2_0$.

\medskip

According to Proposition \ref{sigma2}, we can compute all components of the Riemann curvature tensor of $\Sigma^1$. By the Gauss-Codazzi equations, 
we have
\[R^{\Sigma^1}_{ABCD}=g^{\Sigma^2}_{AC}g^{\Sigma^2}_{BD}-g^{\Sigma^2}_{AD}g^{\Sigma^2}_{BC}, \qquad R^{\Sigma^1}_{\mathbf{y}ABC}=0.\]
Moreover,
\begin{equation*}
     \mathbf y ( h^{\Sigma^2}_{AB})=-R^{\Sigma^1}_{ \mathbf y  AB \mathbf y  }+
    \langle\nabla^{\Sigma^1}_A  \mathbf y ,\nabla^{\Sigma^1}_B  \mathbf y \rangle_{\Sigma^1}.
\end{equation*}
Consequently, using $ \mathbf y ( h^{\Sigma^2}_{AB})=-\mathbf y  (g^{\Sigma^2}_{AB})=-2h^{\Sigma^2}_{AB}=2g^{\Sigma^2}_{AB}$, we obtain
\begin{equation*}
   R^{\Sigma^1}_{ \mathbf y  AB \mathbf y  }=
    \langle\nabla^{\Sigma^1}_A  \mathbf y ,\nabla^{\Sigma^1}_B  \mathbf y \rangle_{\Sigma^1}- \mathbf y  (h^{\Sigma^2}_{AB})
    =g^{\Sigma^2}_{AB}-2g^{\Sigma^2}_{AB}=-g^{\Sigma^2}_{AB}.
\end{equation*}
Therefore, $\Sigma^1$ has constant sectional curvature $-1$ and we obtain that each $\Sigma^1_\alpha$ is hyperbolic space or a quotient thereof. 
\end{proof}

\begin{theorem} \label{topology}
    Each $\Sigma^1_\alpha$ is diffeomorphic to $\mathbb R^{n-1}$.
    In particular, $M$ is diffeomorphic to $\mathbb R^n$
\end{theorem}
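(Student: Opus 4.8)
The plan is to upgrade the local product structure established in the previous theorems into a global diffeomorphism by analyzing the normal flows generated by $\mathbf y$ on $\Sigma^1_\alpha$ and by $\mathbf x$ on $M$. First I would fix a leaf $\Sigma^1_\alpha$ and a leaf $\Sigma^2_0 \subset \Sigma^1_\alpha$. By Theorem \ref{Sigma1}, $\mathbf y$ is integrable and $\Sigma^1_\alpha$ carries the constant-curvature metric $-1$, while by Proposition \ref{sigma2} each $\Sigma^2_s$ is flat and umbilic with $h^{\Sigma^2}_{AB} = -g^{\Sigma^2}_{AB}$. The key point is that the flow along $\mathbf y$ rescales the induced metric on $\Sigma^2_s$ exponentially: from $\mathbf y(g^{\Sigma^2}_{AB}) = 2h^{\Sigma^2}_{AB} = -2g^{\Sigma^2}_{AB}$ we get $g^{\Sigma^2}_s = e^{-2s} g^{\Sigma^2}_0$, so the leaf metrics are complete for all $s$ (the flow never degenerates to a point in finite parameter and never blows up). Combined with the fact that $\Sigma^1_\alpha$ is a complete simply-connected (to be justified) space of constant curvature $-1$, Cartan--Hadamard gives $\Sigma^1_\alpha \cong \mathbb H^{n-1} \cong \mathbb R^{n-1}$; concretely, $(s, y^A) \mapsto$ point obtained by flowing distance $s$ along $\mathbf y$ from the point of $\Sigma^2_0$ with flat coordinates $y^A$ realizes the upper half-space model, since $g_{\Sigma^1} = ds^2 + e^{-2s}\delta_{AB}\,dy^A dy^B$.

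The second stage is to slice off the last dimension. The vector field $\mathbf x = X|X|^{-1}$ is a global unit normal to the foliation $\{\Sigma^1_\alpha\}$ of $M$ by Corollary \ref{Cor: existence u1}, and its flow carries one leaf diffeomorphically onto another as long as it is complete. Completeness of the $\mathbf x$-flow follows because $(M,g)$ is a complete Riemannian manifold (part of Definition \ref{Def:AH}) and $|\mathbf x| = 1$, so integral curves extend for all time; moreover these curves are geodesics or near-geodesics controlled by $k$ via $h^{\Sigma^1}_{\alpha\beta} = -k_{\alpha\beta}$, and the leaf parameter $\alpha$ can be taken to be arclength along an $\mathbf x$-integral curve. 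Thus $M \cong \mathbb R \times \Sigma^1_0 \cong \mathbb R \times \mathbb R^{n-1} = \mathbb R^n$, where one uses that the asymptotic end is connected and diffeomorphic to $\mathbb H^n \setminus B$, forcing the leaf space to be $\mathbb R$ rather than $S^1$ (a closed leaf parameter would be incompatible with the single asymptotically AdS end). Since $M$ has only the ends described in Definition \ref{Def:AH} and the flow construction exhibits $M$ as a product, there can be only one end.

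The main obstacle is establishing the \emph{global} completeness and simple-connectedness claims rather than the local differential-geometric identities, which are already in hand. Specifically, one must rule out that an integral curve of $\mathbf y$ inside $\Sigma^1_\alpha$ (or of $\mathbf x$ inside $M$) is incomplete or periodic, and one must show the leaves $\Sigma^1_\alpha$ are themselves connected and simply connected so that the constant-curvature identification actually yields $\mathbb H^{n-1}$ and not a nontrivial quotient. The cleanest route is: (i) use completeness of $(M,g)$ together with $|\mathbf x|=|\mathbf y|=1$ and the uniform bound on $k$ near infinity to get infinite-time existence of both flows; (ii) use the explicit asymptotic normal form near the AdS end — where, after the coordinate choice of Proposition \ref{infinty}, $X = y_n^{-2}\mathring\nabla y_1$ and $Y = y_n^{-2}\mathring\nabla y_n$ so the leaves are literally coordinate slices $\{y_1 = \text{const}\}$ and $\{y_n = \text{const}\}$ — to pin down the topology of the leaves in the end and propagate it inward along the (now complete) flows; (iii) conclude via the de Rham / Cartan--Hadamard argument. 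I would expect step (ii), matching the abstract foliation to the explicit half-space slicing at infinity and arguing that this identification extends globally, to require the most care, essentially because it is where the global hypothesis (single AdS end, completeness) does the real work.
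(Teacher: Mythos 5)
The paper's proof of this theorem is a one-liner: it invokes Reeb's stability theorem, citing the authors' own Appendix A of [HZ24] for the argument tailored to this asymptotically-modeled, noncompact setting. Your proposal attempts instead to build the diffeomorphism by hand via the flows of $\mathbf x$ and $\mathbf y$, which is a genuinely different and in principle more concrete route, but it contains a step that does not hold as stated.

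The problematic assertion is that ``the flow of $\mathbf x$ carries one leaf diffeomorphically onto another as long as it is complete.'' Completeness of $(M,g)$ plus $|\mathbf x|=1$ does give infinite-time existence of integral curves, but that is not the issue: the flow of a unit normal to a codimension-one foliation does \emph{not} in general preserve the foliation. For the $\mathbf x$-flow to map leaves to leaves one needs $\mathcal L_{\mathbf x}(X^\flat)\wedge X^\flat=0$. Computing with the identities of Theorem~\ref{Thm: N,X,Y,omega} and $N\omega = X\wedge Y$, one finds $\mathcal L_{\mathbf x}(X^\flat)=dN+2Y=-k(X,\cdot)+Y$, and $(-k(X,\cdot)+Y)\wedge X$ has no reason to vanish (indeed $Y\perp X$ with $|Y|=|X|$, so the $Y\wedge X$ piece is genuinely nonzero). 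Thus the $\mathbf x$-flow is merely transverse to the leaves, and your proposed global product structure $M\cong\mathbb R\times\Sigma^1_0$ does not follow from it. What would actually carry leaves to leaves is the flow of the vector field $\nabla u_1/|\nabla u_1|^2$ for a global first integral $u_1$ of the foliation, but the existence of a global $u_1$ is precisely what is at stake (and is only constructed later, in Section~8). The other gaps you flag — completeness and simple-connectedness of each leaf, so that constant curvature $-1$ actually forces $\mathbb H^{n-1}$ rather than a quotient, and the single-end argument ruling out circle-valued leaf space — are also left open. Reeb-type stability, combined with the asymptotic identification of the foliation with the coordinate slicing $\{y_1=\mathrm{const}\}$ near infinity, is the mechanism the paper uses to close all of these at once, and some version of it (or the later global coordinate construction) is needed; the bare flow argument does not suffice.
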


\begin{proof}
    This is a consequence of Reeb's stability theorem, see for instance \cite[Appendix A]{HZ24}.
\end{proof}


\section{Construction of global AdS-Brinkmann coordinates}

Throughout the section let $(M^n,g,k)$ be an initial data set with vanishing total charges, equipped with a type I null spinor solving $\nabla_i\psi=-\frac12k_{ij}e_je_0\psi-\frac12\mathbf{i}e_i\psi$.

\begin{theorem}
    There exists a global coordinate chart $\{u_1,\cdots,u_n\}$ on $M^n$ such that 
    \begin{equation*}
         g=(f^2+|\mathbf{w}|^2u_n^{-2})du_1^2+2 \sum_{\alpha=2}^n w_\alpha u_n^{-2}du_\alpha du_1+
u_n^{-2}(du_2^2+\cdots +du_n^2),
    \end{equation*}
    where $\mathbf{w}=(w_2,\cdots,w_n)$ is a vector field on the level sets of $u_1$ and $f=|\nabla u_1|^{-1}$. Moreover, $u_i=y_i+O_2(r^{-q}y_n)$, for $i=1,...,n$. 
\end{theorem}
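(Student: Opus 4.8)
The plan is to read off the coordinate functions directly from the two foliations constructed in Section \ref{S:main}, using the everywhere non-vanishing unit vector fields $\mathbf{x}=X|X|^{-1}$ and $\mathbf{y}=Y|Y|^{-1}$ as the basic ingredients. Recall (from Proposition \ref{Prop: |X| = N=Y}, Proposition \ref{Prop: zeta = 0}, Corollary \ref{Cor: constancy} and Theorem \ref{Thm: N,X,Y,omega}) that $N=|X|=|Y|$, $X\perp Y$, $N\omega=X\wedge Y$ and hence $dX=2\omega=2N\,\mathbf{x}\wedge\mathbf{y}$, that $M\cong\mathbb{R}^n$ is foliated by hyperbolic hypersurfaces $\Sigma^1_\alpha\cong\mathbb{R}^{n-1}$ with unit normal $\mathbf{x}$, and that each $\Sigma^1_\alpha$ is foliated by flat hypersurfaces $\Sigma^2_s\cong\mathbb{R}^{n-2}$ with unit normal $\mathbf{y}$ inside $\Sigma^1_\alpha$. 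First I would construct $u_1$: since $dX\wedge X=0$ the distribution $X^\perp$ is integrable (Corollary \ref{Cor: existence u1}), and because the leaf space of $\{\Sigma^1_\alpha\}$ is one-dimensional (Reeb stability, exactly as in the proof of Theorem \ref{topology}) there is a global submersion $u_1\colon M\to\mathbb{R}$ whose level sets are the $\Sigma^1_\alpha$, with $du_1$ proportional to $\mathbf{x}$. Setting $f=|\nabla u_1|^{-1}$ gives $du_1=f^{-1}\mathbf{x}$ after fixing orientation, hence $X=Nf\,du_1$; moreover $du_1(\partial_{u_1})=1$ forces the $\mathbf{x}$-component of $\partial_{u_1}$ to equal $f\mathbf{x}$, the remaining part being a vector field $\mathbf{w}$ tangent to the leaf $\{u_1=\mathrm{const}\}$. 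Differentiating $d(f^{-1}\mathbf{x})=0$ and projecting onto $\mathbf{y}$ and onto $\Sigma^2$ I would also record the transport identities $\mathbf{y}(\log f)=k(\mathbf{x},\mathbf{y})-1$ and $e_A(\log(fN))=0$, which together with $\mathbf{y}(\log N)=-1-k(\mathbf{x},\mathbf{y})$ from Theorem \ref{Thm: N,X,Y,omega} yield $\mathbf{y}(\log(fN))=-2$, so $fN$ decays like $e^{-2s}$ along $\mathbf{y}$-orbits and is constant on each $\Sigma^2_s$.

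Next I would build the transverse coordinates leaf by leaf and then glue. Fixing $\Sigma^1_\alpha\cong\mathbb{H}^{n-1}$, the identities $h^{\Sigma^2}_{AB}=-g^{\Sigma^1}_{AB}$ and $\mathbf{y}(g^{\Sigma^2})=-2g^{\Sigma^2}$ from the proofs of Proposition \ref{sigma2} and Theorem \ref{Sigma1}, combined with flatness of the $\Sigma^2_s$, show that the normal flow of $\mathbf{y}$ exhibits $\Sigma^1_\alpha$ as the upper half-space model $g^{\Sigma^1}=ds^2+e^{-2s}\delta$, with $s$ the arclength parameter of $\mathbf{y}$ and the $\Sigma^2_s$ the horospheres. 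I would then define $u_n$ on $\Sigma^1_\alpha$ to be constant on each $\Sigma^2_s$ with $\mathbf{y}(\log u_n)=1$, i.e.\ $u_n=e^s$ up to a multiplicative constant, and $u_2,\dots,u_{n-1}$ as Euclidean coordinates on a single $\Sigma^2_s$ Lie-dragged along $\mathbf{y}$; this already gives $g^{\Sigma^1}=u_n^{-2}(du_2^2+\cdots+du_n^2)$ on the leaf. To make $u_2,\dots,u_n$ global and smooth I would use that the $\Sigma^2_s$ form a codimension-$2$ foliation on whose two-dimensional leaf space $u_1$ descends, choose a smooth section of $u_1$ there transverse to the fibres, and fix the free constants by the ODEs above; the normalisation is pinned down by requiring $u_i\to y_i$ at infinity, which is consistent because, by Lemma \ref{NXY decay} and Proposition \ref{infinty}, $N=y_n^{-1}+O_2(r^{-q}y_n^{-1})$, $X=y_n^{-2}\mathring{\nabla}y_1+O_2(r^{-q}y_n^{-1})$, $Y=y_n^{-2}\mathring{\nabla}y_n+O_2(r^{-q}y_n^{-1})$, so $\mathbf{x},\mathbf{y}$ are asymptotic to $\partial_{y_1}$ and $y_n\partial_{y_n}$ and $fNu_n^2\to1$.

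With the coordinates in hand the metric form follows from an orthogonal decomposition. The vectors $\partial_{u_2},\dots,\partial_{u_n}$ are tangent to $\{u_1=\mathrm{const}\}=\Sigma^1_\alpha$, so $g(\partial_{u_\alpha},\partial_{u_\beta})$ equals the induced hyperbolic metric, which by the previous step is $u_n^{-2}\delta_{\alpha\beta}$; writing $\partial_{u_1}=f\mathbf{x}+\mathbf{w}$ with $\mathbf{w}=w_\alpha\partial_{u_\alpha}$ gives $g(\partial_{u_1},\partial_{u_\alpha})=u_n^{-2}w_\alpha$ and $g(\partial_{u_1},\partial_{u_1})=f^2+u_n^{-2}|\mathbf{w}|^2$, which is exactly the asserted expression, with $f=|\nabla u_1|^{-1}$ by construction. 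For $u_i=y_i+O_2(r^{-q}y_n)$ I would argue as in the proofs of Lemma \ref{decay estimate} and Lemma \ref{NXY decay}: the defining ODEs for $u_1$ (integrating $f^{-1}\mathbf{x}$ along a transversal) and for $u_2,\dots,u_n$ (integrating $\mathbf{y}$ and the transversal) have right-hand sides differing from their hyperbolic model values by $O_2(r^{-q}y_n^{-1})$ by the displayed estimates, so a Gronwall-type argument along the relevant integral curves, whose $r$-growth is exponential in the arclength, gives the stated $O_2(r^{-q}y_n)$ bound, the $C^2$ part following from elliptic estimates for $\sigma=\psi-\psi^\infty$.

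The main obstacle is the \emph{global} compatibility of the transverse coordinates $u_2,\dots,u_n$: one must show that the upper half-space structures built separately on the leaves $\Sigma^1_\alpha$ fit together into smooth functions on $M$ that put every leaf simultaneously into standard form, equivalently that the horosphere foliations and flat structures of neighbouring leaves are carried into one another by the transversal flow. This is precisely where the structural identities $N\omega=X\wedge Y$, $N^2-|X|^2-|Y|^2+\tfrac12|\omega|^2=0$ and the transport relation $\mathbf{y}(\log(fN))=-2$ do the work, and it is the step that makes this construction substantially more technical than its pp-wave counterpart in \cite{HZ24}, where $X$ is closed so that $u_1$ is obtained by a single integration and there is no second transverse direction to control.
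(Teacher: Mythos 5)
Your architectural plan — use $\mathbf{x}$ and $\mathbf{y}$ to build the double foliation, realize each $\Sigma^1_\alpha$ as an upper half-space model via the normal flow of $\mathbf{y}$, and then read off the metric components from an orthogonal splitting of $\partial_{u_1}$ — is the right one, and your transport identities $\mathbf{y}(\log N)=-1-k(\mathbf{x},\mathbf{y})$, $\mathbf{y}(\log f)=k(\mathbf{x},\mathbf{y})-1$, hence $fN=u_n^{-2}$, agree with what the paper proves (Lemma \ref{Nf}). But there is a genuine gap at exactly the point you flag as the ``main obstacle'': you never actually \emph{define} the function $u_1$ or prove that it has $C^2$ regularity with the decay $u_1-y_1=O_2(r^{-q}y_n)$, and this is where essentially all of the paper's technical work lives.

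Concretely: Reeb stability gives a topological fibration $M\to\mathbb{R}$, and Frobenius gives local first integrals of $X^\perp$, but neither provides a distinguished $C^2$ function $u_1$ whose difference from $y_1$ is $O_2(r^{-q}y_n)$. The paper does not obtain $u_1$ by ``integrating $f^{-1}\mathbf{x}$ along a transversal'' at all; as it remarks explicitly before Lemma \ref{u-y}, since $X$ is not closed, no elliptic equation produces the level-set function. Instead it observes that $\nabla^{\Sigma_1}y_1=O(r^{-q}y_n)$ and \emph{defines} $u_1$ on each leaf as the two-sided limit $u_1(p)=\lim_{s\to\pm\infty}y_1(\Phi(s,p))$ along the $N^{-1}Y$-flow (Lemma \ref{y1-u1}). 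The fact that this limit is $C^2$ in $p$ with the stated decay is Lemma \ref{u-y}, a lengthy ODE argument that differentiates the flow twice in the initial condition and runs separate barrier and Gronwall estimates in the two regimes $y_n\gtrsim\boldsymbol{\rho}$ and $y_n\lesssim\boldsymbol{\rho}$. Your ``Gronwall-type argument along the relevant integral curves'' alludes to this, but elliptic estimates for $\sigma=\psi-\psi^\infty$ (Lemmas \ref{decay estimate}, \ref{NXY decay}) give regularity for $N,X,Y$, not for the map $p\mapsto\lim_s y_1(\Phi(s,p))$; the second-variation control in \eqref{ptq2}--\eqref{ringq} is genuinely additional content with no shortcut via the spinor estimates. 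The same mechanism is also what pins down $u_2,\dots,u_{n-1}$: the paper defines them by $\Delta_{\Sigma_2}u_A=0$ with $u_A\to y_A$, then shows $\nabla_{e_n}u_A=0$, which makes $u_A$ another such two-sided limit of $y_A$ (Lemma \ref{uA-yA}), so the regularity again comes from Lemma \ref{u-y}. Your proposal to ``Lie-drag Euclidean coordinates along $\mathbf{y}$'' would, if one checked compatibility across leaves, reduce to the same limit and encounter the same ODE-regularity problem, which your write-up identifies but does not resolve.
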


Here $(y_1, \dots, y_n)$ are the upper half-space coordinates, i.e. $b=y_n^{-2}(dy_1^2+\cdots+dy_n^2)$ for the hyperbolic background metric $b$.
Furthermore, the coordinate functions $y_1,\dots, y_n$ are extended to the entire manifold $M^n$ so that $y_i\in C^2(M^n)$ for $1\le i\le n$, and $y_n>0$.

\medskip

We will later show that there exists a function $w$ such that  $w_\alpha=\partial_\alpha w$.

\medskip

The main difficulty in constructing the coordinates lies in the fact that $X$ is not closed; instead, we only have $dX\wedge X=0$. Although this condition indicates a codimension-one foliation, it does not allow us to obtain the level-set function by solving an elliptic equation as in the asymptotically flat case. To address this difficulty, we construct the coordinates $u_i(i<n)$ by taking the limit of 
$y_i$ along the integral curves of $N^{-1}Y$, as established in Lemma \ref{u-y}.

\medskip

Next, we introduce notation for three different types of decay at infinity: $O_{i,\alpha}$, $\mathcal{O}_{i,\alpha}$  and $\mathfrak{O}_{i,\alpha}$ for $M^n$, $\Sigma_1$ and $\Sigma_2$.
\begin{enumerate}
\item The notation $O_{i,\alpha}$ is already introduced in Definition \ref{Def:weighted}.
    \item 
    Let $\{e_2,...,e_n\}$ and $\{e_2,...,e_{n-1}\}$  be  orthonormal frames on $\Sigma_1$ and $\Sigma_2$, respectively. 
    \item Let $\alpha\in (0,1)$. The notion $\xi\in \mathcal{O}_{s,\alpha}(e^{-q\tau }y_n)$ indicates the existence of a constant $C>0$ such that for all multi-indices $I\subset \{e_2,...,e_n\}$:
\[
\sum_{|I|\le s}\left|\nabla^{\Sigma_1}_I \xi\right|+
\sum_{|I|=s}\sup_{\mathbf{x}_1,\mathbf{x}_2\in \Sigma_1
   } \frac{|\nabla^{\Sigma_1}_I \xi(\mathbf{x}_1)-\nabla^{\Sigma_1}_I\xi(\mathbf{x}_2)|}{d_{\Sigma_1}(\mathbf{x}_1,\mathbf{x}_2)^{\alpha}}\le Ce^{-q\tau}y_n,
\]
where $\tau$ is the distance function to a base point on $\Sigma_1$.
\item Similarly, $\xi\in \mathfrak{O}_{s,\alpha}(\widetilde{\tau}^{-q})$ indicates that there exists a constant $C$ such that for $I\subset \{e_2,...,e_{n-1}\}$, 
\[
\sum_{|I|\le s}\left|\widetilde{\tau}^{|I|+q}\nabla^{\Sigma_2}_I \xi\right|+
\sum_{|I|=s}\sup_{\mathbf{x}_1,\mathbf{x}_2\in \Sigma_2} 
    \widetilde{\tau}^{\alpha+s+q}  \frac{|\nabla^{\Sigma_2}_I \xi(\mathbf{x}_2)-\nabla^{\Sigma_2}_I\xi(\mathbf{x}_2)|}{d_{\Sigma_2}(\mathbf{x}_1,\mathbf{x}_2)^\alpha}\le C.
\]
where $\widetilde{\tau}$ is the radial distance function on $\Sigma_2$
\end{enumerate}

 We point out that the decay rate of $\mathfrak{O}_{i,\alpha}$  deteriorates when taking derivatives or difference quotients, while the other two will not. 
 This matches $M^n,\Sigma_1$ being asymptotically hyperbolic, and $\Sigma_2$ being asymptotically flat.

\medskip

To simplify the notation we write in the following $N^\infty=N(\psi^\infty)$, $X^\infty=X(\psi^\infty)$ and $Y^\infty=Y(\psi^\infty)$.
Moreover, we write $N=N(\psi)$, $X=X(\psi)$ and $Y=Y(\psi)$.
Note that $N=|X|=|Y|$ and $X\perp Y$ everywhere on $M^n$.

\begin{lemma} \label{rhorq}
    Let $\rho$ be a non-negative function on $\Sigma_1$ defined by $$\cosh\rho=\frac{y_2^2+\cdots+y_n^2+1}{2y_n}.$$
    Then $\rho $ satisfies $|\nabla^{\Sigma_1} \rho|=1+O(r^{-q})$. 
\end{lemma}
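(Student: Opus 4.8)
\textbf{Proof proposal for Lemma \ref{rhorq}.}

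The plan is to recognize $\rho$ as the intrinsic distance function on $\Sigma_1$ to a base point, up to a controlled error, by comparing with the model hyperbolic geometry. First I would recall from Lemma \ref{rhorq}'s hypotheses (the standing assumptions of this section) that $N=|X|=|Y|$ everywhere, and from Theorem \ref{Sigma1} that each leaf $\Sigma^1_\alpha$ is isometric to hyperbolic space $\mathbb H^{n-1}$. In the exact background model the function $\cosh\rho = \tfrac{y_2^2+\cdots+y_n^2+1}{2y_n}$ on the upper half-space $\{y_n>0\}$ is precisely the hyperbolic distance to the point $(0,\dots,0,1)$, so $|\mathring\nabla^{\Sigma_1}\rho|_b = 1$ identically, where $\mathring\nabla^{\Sigma_1}$ and $b$ denote the model connection and metric induced from $b = y_n^{-2}(dy_1^2 + \cdots + dy_n^2)$ restricted to the slice. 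The claim is that switching from $b$ to the actual induced metric $g_{\Sigma_1}$ only perturbs $|\nabla^{\Sigma_1}\rho|$ by $O(r^{-q})$.

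The key steps, in order: (1) Write $|\nabla^{\Sigma_1}\rho|^2_{g_{\Sigma_1}} = g_{\Sigma_1}^{\alpha\beta}\partial_\alpha\rho\,\partial_\beta\rho$ and $|\mathring\nabla^{\Sigma_1}\rho|^2_b = b^{\alpha\beta}\partial_\alpha\rho\,\partial_\beta\rho = 1$; the difference is controlled by $\|g_{\Sigma_1} - b\|$ together with a bound on $|d\rho|_b$. (2) Bound $\|g_{\Sigma_1} - b\| = O(r^{-q})$: this follows from $e = \phi^\ell_*g - b \in C^{2,a}_{-q}$ in Definition \ref{Def:AH}, since $g_{\Sigma_1}$ is the restriction of $g$ to the leaf and $b|_{\Sigma_1}$ is the model slice metric; one must check that the embedding $\Sigma_1 \hookrightarrow M$ is $C^2$-close to the model totally geodesic slice, which uses the decay estimates of Lemma \ref{NXY decay} (namely $X - y_n^{-2}\mathring\nabla y_1 = O_2(r^{-q}y_n^{-1})$ and $Y - y_n^{-2}\mathring\nabla y_n = O_2(r^{-q}y_n^{-1})$) to control how far the distribution $D = \ker X$ deviates from the model slice $\{y_1 = \text{const}\}$. (3) Confirm that $|d\rho|_b$ is bounded (indeed $=1$) and that the $r^{-q}$ weight in the error is the correct one after accounting for the conformal factor $y_n^{-2}$ — here one uses $r \asymp y_n^{-1}$ along the relevant asymptotic directions, matched against the weighted norms of Definition \ref{Def:weighted}. (4) Combine: $\bigl||\nabla^{\Sigma_1}\rho|_{g_{\Sigma_1}} - 1\bigr| \le C\|g_{\Sigma_1}-b\|\,|d\rho|_b^2 + (\text{lower order}) = O(r^{-q})$.

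The main obstacle I anticipate is step (2): carefully quantifying that the intrinsic geometry of the leaf $\Sigma^1_\alpha$ — whose existence comes only from Frobenius' theorem applied to the non-closed form $X$ with $dX\wedge X = 0$ (Corollary \ref{Cor: existence u1}) — is asymptotic to the model hyperbolic slice at rate $r^{-q}$. Unlike a coordinate slice $\{y_1 = \text{const}\}$, the leaf is defined implicitly, so one needs the decay of $X$ (Lemma \ref{NXY decay}) to show both that the leaf stays within $O(r^{-q})$-distance of a model slice and that its second fundamental form $h^{\Sigma_1}_{\alpha\beta} = -k_{\alpha\beta} = O(r^{-q})$ (from the proposition computing $h^{\Sigma_1}$ together with $\eta^\ell = \phi^\ell_* k \in C^{1,a}_{-q}$), so that the pullback of $g$ to the leaf inherits the $C^{2,a}_{-q}$ asymptotics. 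Once this comparison is set up cleanly, the gradient estimate is a routine consequence of the algebraic relation between $g_{\Sigma_1}^{\alpha\beta}$ and $b^{\alpha\beta}$.
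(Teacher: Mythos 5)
Your proposal is correct in spirit and relies on the same essential ingredients (the decay $g-b = O(r^{-q})$ from Definition \ref{Def:AH}, the decay $X - y_n^{-2}\mathring\nabla y_1 = O_2(r^{-q}y_n^{-1})$ from Lemma \ref{NXY decay}, and the identification of $\rho$ with a hyperbolic distance in the model), but it takes a genuinely different and more roundabout route than the paper. You propose to compare the induced metric $g_{\Sigma_1}$ on the leaf with the model metric $b$ restricted to a coordinate slice $\{y_1 = \text{const}\}$; since the leaf is only defined implicitly by Frobenius, this requires parametrizing it as a graph $y_1 = y_1(y_2,\dots,y_n)$ over the model slice, estimating $\partial_\alpha y_1 = -X_\alpha/X_1 = O(r^{-q})$ from Lemma \ref{NXY decay}, and then computing the pullback metric — all of which works but is extra machinery. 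The paper instead sidesteps the comparison of submanifold metrics entirely: it computes the \emph{ambient} gradient $\nabla s$ (where $s=\cosh\rho$) in closed form, establishes the algebraic identity $|\nabla s|_b^2 = s^2-1$ directly, deduces $|\nabla s|_g^2 = (s^2-1)(1+O(r^{-q}))$ from $g\approx b$, and then removes the normal component by showing $\langle \nabla s, |X|^{-1}X\rangle = (s^2-1)^{1/2}O(r^{-q})$ — which again follows from $\partial_{y_1}s=0$ and Lemma \ref{NXY decay}. This gives $|\nabla^{\Sigma_1}s|^2 = |\nabla s|_g^2 - \langle\nabla s, |X|^{-1}X\rangle^2$ immediately, with no need to parametrize the leaf. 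A minor inefficiency in your outline: the second fundamental form $h^{\Sigma_1}_{\alpha\beta}=-k_{\alpha\beta}$ and the full $C^{2,a}_{-q}$ control are not needed for a first-order (gradient) estimate; only the $C^1$ closeness of the leaf to the model slice matters here. Your route would be the natural one for higher-order estimates, but for this lemma the paper's direct ambient computation is shorter and cleaner.
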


This shows that $\rho$ can roughly be viewed  as the distance function on $\Sigma_1$ which will be made more precise in the lemma below. 

\begin{proof}
    Let $s=\frac{y_2^2+\cdots+y_n^2+1}{2y_n}$.
    We compute
    \begin{equation*}
        \nabla s=\frac{y_2}{y_n}\nabla y_2+\cdots+\frac{y_{n-1}}{y_n}\nabla y_{n-1}+(1-\frac{s}{y_n})\nabla y_n.
    \end{equation*}
Using that $g$ is asymptotic to the hyperbolic metric $b=y_n^{-2}(dy_1^2+\cdots+dy_n^2)$, we obtain
    \begin{equation*}
        |\nabla s|^2= (s^2-1)[1+O(r^{-q})],
        \qquad
        \langle \nabla s, |X|^{-1}X\rangle=
        (s^2-1)^\frac{1}{2}O(r^{-q}).
    \end{equation*}
  Therefore, $|\nabla^{\Sigma_1} s|=(s^2-1)^\frac{1}{2}[1+O(r^{-q})]$, and $|\nabla^{\Sigma_1} \rho|=1+O(r^{-q})$.
\end{proof}

Let us fix an integral curve $\Gamma$ of $X$. According to the previous section which used Reeb's stability theorem, $\Gamma$ intersects each leaf only once.  Let $\tau$ be the intrinsic distance function to $\Gamma$ on $\Sigma_1$, i.e., $\tau=d_{\Sigma_1}(\cdot, \Gamma)$.

\begin{lemma} \label{rho-tau}
    The function $|\rho-\tau|$ is bounded. 
\end{lemma}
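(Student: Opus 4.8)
The plan is to compare $\rho$ and $\tau$ by controlling their gradients along leaves $\Sigma^1_\alpha$ and exploiting that both functions behave like the hyperbolic radial distance. The key point is that by Lemma \ref{rhorq} we have $|\nabla^{\Sigma^1}\rho| = 1 + O(r^{-q})$, and $\tau$ is by definition the intrinsic distance to the curve $\Gamma$, so $|\nabla^{\Sigma^1}\tau| = 1$ wherever $\tau$ is smooth (away from $\Gamma$ and the cut locus, which on a hyperbolic leaf is empty by Theorem \ref{Sigma1} once we know $\Sigma^1_\alpha$ is genuinely $\mathbb H^{n-1}$, cf. Theorem \ref{topology}). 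Thus $\big|\,|\nabla^{\Sigma^1}\rho| - |\nabla^{\Sigma^1}\tau|\,\big| = O(r^{-q})$, and since $r$ grows exponentially in the intrinsic distance (the leaves are asymptotically hyperbolic), one should be able to integrate this discrepancy along geodesics of $\Sigma^1_\alpha$ emanating from $\Gamma$.

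More concretely, first I would fix a leaf $\Sigma^1 = \Sigma^1_\alpha$ and the point $p_0 = \Gamma \cap \Sigma^1$. Along a unit-speed geodesic $\gamma(s)$ of $\Sigma^1$ with $\gamma(0) = p_0$, we have $\tau(\gamma(s)) = s$, while $\rho(\gamma(s))$ satisfies
\begin{align*}
    \left| \frac{d}{ds}\rho(\gamma(s)) \right| = |\langle \nabla^{\Sigma^1}\rho, \gamma'(s)\rangle| \le |\nabla^{\Sigma^1}\rho| = 1 + O(r^{-q}),
\end{align*}
and also $\frac{d}{ds}\rho(\gamma(s)) \ge 1 - O(r^{-q}) > 0$ once $s$ is large, using that the angle between $\nabla^{\Sigma^1}\rho$ and the radial direction is small (again by the asymptotic control of the metric against $b$, analogous to the computation $\langle\nabla s, |X|^{-1}X\rangle = (s^2-1)^{1/2}O(r^{-q})$ in Lemma \ref{rhorq}, which shows $\nabla\rho$ is asymptotically tangent to $\Sigma^1$ and asymptotically radial within it). Since $r \asymp e^{\rho} \asymp e^{s}$ along such a geodesic (because $\rho$ itself is comparable to the $b$-distance and the metrics are asymptotic), we get $O(r^{-q}) = O(e^{-qs})$, which is integrable in $s$. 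Therefore
\begin{align*}
    |\rho(\gamma(s)) - s| \le |\rho(p_0)| + \int_0^s O(e^{-q\sigma})\, d\sigma \le C
\end{align*}
uniformly. Since every point of $\Sigma^1$ lies on such a geodesic from $p_0$, this gives $|\rho - \tau| \le C$ on each leaf; one then checks the constant $C$ can be taken uniform in $\alpha$ because all the asymptotic estimates are uniform in the end (the decay constants in Definition \ref{Def:weighted} and Lemma \ref{rhorq} do not depend on the leaf).

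The main obstacle I anticipate is making the chain "$r \asymp e^{\tau}$ along $\Sigma^1$" rigorous and uniform: this requires knowing that the induced metric on $\Sigma^1_\alpha$ is genuinely asymptotically hyperbolic with the radial coordinate comparable to $\rho$, and that the curve $\Gamma$ does not wander off to infinity along the leaf in an uncontrolled way (so that $\rho(p_0)$ stays bounded). The first is essentially Theorem \ref{Sigma1} together with the decay estimates of Section \ref{S:decay estimates}; the second needs a short argument that $\Gamma$, being an integral curve of $X = X(\psi)$ with $X$ asymptotic to $y_n^{-2}\mathring\nabla y_1$, stays in a bounded $\rho$-neighborhood of its starting point — which follows since $\mathbf x = X|X|^{-1}$ is asymptotically $\mathring\nabla y_1 / |\mathring\nabla y_1|$, a direction along which $\rho$ (a function of $y_2,\dots,y_n$) is nearly constant. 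Once these two facts are in hand, the integration argument above closes the estimate. An alternative, slightly cleaner route avoiding geodesics entirely: compare $\rho$ and $\tau$ via the eikonal-type inequality $|\nabla^{\Sigma^1}(\rho - \tau)| = O(r^{-q})$ wherever both are smooth, and integrate this directly along integral curves of $\nabla^{\Sigma^1}\rho$, using $|\nabla^{\Sigma^1}\rho|^{-1}\,dr \asymp dr/r$; this is the same estimate repackaged and may be the form best suited to the paper's conventions.
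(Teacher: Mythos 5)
Your strategy is close in spirit — compare $\rho$ and $\tau$ by integrating gradient estimates, after first arranging that $\Gamma$ stays in a bounded $\rho$-region, which you correctly identify as a necessary preliminary — but the key lower bound is not justified, and your proposed ``cleaner alternative'' is simply false.

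The claim $\frac{d}{ds}\rho(\gamma(s)) \ge 1 - O(r^{-q})$ along a geodesic $\gamma$ emanating from $p_0 = \Gamma\cap\Sigma_1$ requires that the angle between $\gamma'(s)$ and $\nabla^{\Sigma_1}\rho$ decay. The computation you cite from Lemma \ref{rhorq} only shows that $\nabla\rho$ is nearly perpendicular to $X$, i.e.\ nearly tangent to $\Sigma_1$; it says nothing about whether $\nabla^{\Sigma_1}\rho$ aligns with the radial direction from the particular base point $p_0$. Here $\nabla^{\Sigma_1}\rho$ and $\gamma'(s)$ are ``radial'' from two a priori different points, and the statement that they align asymptotically is essentially equivalent to the conclusion $|\rho-\tau|\le C$ one is trying to prove — the argument is circular. (Even granting the alignment, the decay rate is governed by the exponential convergence of hyperbolic geodesics, not by $r^{-q}$; for $q>2$ the stated rate is wrong.) Your alternative route fails for the same structural reason: from $|\nabla^{\Sigma_1}\rho|=1+O(r^{-q})$ and $|\nabla^{\Sigma_1}\tau|=1$ you control the difference of the norms of the gradients, not the norm of their difference, so $|\nabla^{\Sigma_1}(\rho-\tau)| = O(r^{-q})$ does not follow — the two unit-length gradients could point in very different directions.

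The paper sidesteps the angle issue by using two \emph{different} curves. For the bound $\tau\le\rho+C$ it integrates along an integral curve $\gamma_\rho$ of $\nabla^{\Sigma_1}\rho$: there $\partial_s\rho = |\nabla^{\Sigma_1}\rho|^2 \ge 1 - C_2 e^{-q\rho}$, which gives a lower bound on $\partial_s\rho$ with no angle information, so the length of $\gamma_\rho$ from $\mathfrak{L}=\{\rho=C_1\}$ to $p$ is at most $\rho(p)+C_3$ and hence dominates $\tau(p)$ up to a constant. For the bound $\rho\le\tau+C$ it integrates along the minimizing geodesic to $\Gamma$, i.e.\ the integral curve of $\nabla^{\Sigma_1}\tau$: there only the Cauchy--Schwarz \emph{upper} bound $\partial_s\rho = \langle\nabla^{\Sigma_1}\rho,\nabla^{\Sigma_1}\tau\rangle \le |\nabla^{\Sigma_1}\rho|\le 1+C_2e^{-q\rho}$ is used, which again needs no control on the angle. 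Your proposal asks for a lower bound on $\partial_s\rho$ along this same geodesic, and that is precisely the control that is unavailable.
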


\begin{proof}

Throughout the proof, let $C_i$ be sufficiently large positive constants.
As $X$ is asymptotic to $\partial{y_1}$ in the coordinate system $\{y_1, \dots, y_n\}$ by Lemma \ref{NXY decay},  it follows that $\Gamma\subset \{\rho\le C_1\}$.  
By the asymptotics of $\rho$ in Lemma \ref{rhorq}, $\rho-\tau$ is uniformly bounded on $\mathfrak{L}:=\{\rho=C_1\}$. 

\medskip

Let $p\in\Sigma_1$ be a point outside $\{\rho\le C_1\}$.
    Connect $\mathfrak{L}$ and $p$ with an integral curve $\gamma_\rho(s) $ of $\nabla^{\Sigma_1} \rho$ and an integral curve $\gamma_\tau(s)$ of $\nabla^{\Sigma_1} \tau$.
    Note that  $r\ge \sinh \rho$.
    Therefore, we may use the decay estimate from Lemma \ref{rhorq} to obtain
    \begin{align}
    \label{rho1}
         \partial_s \rho(\gamma_\rho(s))=& |\nabla^{\Sigma_1} \rho|^2=1+O(r^{-q})\ge 1-C_2e^{-q\rho},
         \\ \partial_s \rho(\gamma_\tau(s))=& \langle \nabla^{\Sigma_1} \rho,\nabla^{\Sigma_1} \tau\rangle\le|\nabla^{\Sigma_1} \rho|\le 1+C_2e^{-q\rho}.
         \label{tau2}
    \end{align}
The uniform bound $|\rho-\tau|$ on $\mathfrak{L}$, together with \eqref{rho1}, yields an upper bound $|\gamma_\rho|\le\rho(p)+C_3$. Similarly, \eqref{tau2} provides a lower bound $ |\gamma_\tau|\ge\rho(p)-C_3$. Thus,  $|\rho-\tau|$ is bounded.
\end{proof}

Now we are ready to construct the AdS-Brinkmann coordinates to show that $(M^n,g,k)$ is contained in a Siklos wave.
We begin with the coordinate function $u_n$.

\begin{lemma} \label{u_n}
    On each fixed leaf $\Sigma_1$, there exists a unique function $u_n$ such that $$u_n=y_n+\mathcal{O}(e^{-q\tau}y_n)$$ and $N^{-1}Y=\nabla^{\Sigma_1} \log u_n$.  Moreover, the level sets of $u_n$ are horospheres in $\Sigma_1$. 
\end{lemma}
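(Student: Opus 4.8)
The plan is to construct $u_n$ on a fixed leaf $\Sigma_1$ by a limiting procedure along the integral curves of $\nabla^{\Sigma_1}\tau$ (equivalently the flow of $\mathbf{y}$, which by Theorem \ref{Sigma1} is integrable on $\Sigma_1$), exactly as the introductory remarks before Lemma \ref{rhorq} anticipate. First I would recall from Theorem \ref{Thm: N,X,Y,omega} restricted to $\Sigma_1$ that $\nabla^{\Sigma_1}_\alpha Y_\beta = -g^{\Sigma_1}_{\alpha\beta}N - k_{\alpha\nu}Y_\beta$, and since $|Y| = N$ everywhere, the unit field $\mathbf{y} = Y/|Y|$ satisfies $\nabla^{\Sigma_1}_\mathbf{y}\mathbf{y} = 0$ and $\nabla^{\Sigma_1}_A \mathbf{y}_B = -g^{\Sigma_2}_{AB}$ on the $\Sigma_2$ directions (equation \eqref{h2AB}). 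Thus $\mathbf{y}$ is a unit geodesic field whose flow sweeps out the horosphere foliation $\{\Sigma_2^s\}$; I want $u_n$ to be a Busemann-type function for this flow, normalized to agree with $y_n$ to leading order at infinity.

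The construction proceeds as follows. Along an integral curve $\gamma(s)$ of $\mathbf{y}$ parametrized by arclength (with $s \to \infty$ going into the asymptotic region, so that $\mathbf{y}$ is asymptotic to $-\nabla^{\Sigma_1}\log y_n$ up to the $O(r^{-q})$ errors of Lemma \ref{NXY decay}), consider the quantity $e^{-s}y_n(\gamma(s))$, or more precisely $\log y_n(\gamma(s)) + s$. Using $\mathbf{y}(\log y_n) = \langle \mathbf{y}, \nabla^{\Sigma_1}\log y_n\rangle = -1 + O(r^{-q})$ together with $r \gtrsim \sinh\rho \gtrsim e^{\tau}$ from Lemma \ref{rhorq} and Lemma \ref{rho-tau}, the derivative of $\log y_n(\gamma(s)) + s$ is $O(e^{-qs})$, hence integrable; therefore the limit
\begin{align*}
    \beta(p) := \lim_{s\to\infty}\bigl(\log y_n(\gamma_p(s)) + s\bigr)
\end{align*}
exists, where $\gamma_p$ is the integral curve through $p$, and $\log y_n - \beta = \mathcal{O}(e^{-q\tau})$ along each curve. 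One then sets $u_n := e^{\beta}$, so that $u_n = y_n(1 + \mathcal{O}(e^{-q\tau})) = y_n + \mathcal{O}(e^{-q\tau}y_n)$. By construction $\mathbf{y}(\log u_n) = \mathbf{y}(\beta) = -1$ exactly (differentiating the limit, since the flow just shifts the parameter $s$), while in the $\Sigma_2$-directions $\nabla^{\Sigma_1}_A\log u_n$ must vanish because $\beta$ is constant on each $\Sigma_2^s$ leaf — this follows since $\mathbf{y}$ is orthogonal to $\Sigma_2$ and the limit defining $\beta$ is taken along the $\mathbf{y}$-flow. Combining, $\nabla^{\Sigma_1}\log u_n = -\mathbf{y} = -N^{-1}Y$; one should double-check the sign against the statement, adjusting the orientation of the flow (i.e. replacing $s$ by $-s$) so that the stated identity $N^{-1}Y = \nabla^{\Sigma_1}\log u_n$ holds. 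Uniqueness is immediate: any two functions with the same gradient differ by a constant, and the normalization $u_n - y_n = \mathcal{O}(e^{-q\tau}y_n)$ pins that constant down (the ratio must tend to $1$). Finally, the level sets of $u_n$ are the level sets of $\beta$, which are precisely the orthogonal leaves $\Sigma_2^s$ of the unit geodesic field $\mathbf{y}$; since by equation \eqref{h2AB} each such leaf is totally umbilic in $\Sigma_1$ with second fundamental form $-g^{\Sigma_2}$, and $\Sigma_1$ has constant curvature $-1$ by Theorem \ref{Sigma1}, these are exactly the horospheres of $\Sigma_1 \cong \mathbb{H}^{n-1}$.

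The main obstacle I anticipate is controlling the regularity and the decay rate of the convergence uniformly in $p$, i.e. upgrading the pointwise limit along each curve to the claimed $\mathcal{O}_{s,\alpha}(e^{-q\tau}y_n)$ estimate with derivatives. This requires differentiating the defining ODE in directions transverse to $\mathbf{y}$ and propagating the $C^{2,\alpha}$ decay of $g - b$ (hence of $Y - y_n^{-2}\nabla y_n$ from Lemma \ref{NXY decay}) along the flow; the Jacobi-field growth along $\mathbf{y}$ on a negatively curved $\Sigma_1$ is exponential in $\tau$, so one must check that the $e^{-q\tau}$ gain from the curvature estimates beats this growth — which it does precisely because $r \sim e^{\tau}$ and the error terms carry the full weight $r^{-q}$, and because the $\mathcal{O}$-notation for $\Sigma_1$ (unlike $\mathfrak{O}$ for $\Sigma_2$) was defined not to deteriorate under differentiation. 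Carefully setting up a Gronwall-type estimate for $\nabla^{\Sigma_1}_I$ of the integrand, using Lemma \ref{rho-tau} to replace $\rho$ by $\tau$ in all the weights, is the technical heart of the argument.
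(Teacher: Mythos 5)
Your route — constructing $u_n$ via a Busemann-type limit along the flow of $\mathbf{y}$ — is genuinely different from the paper's. The paper observes that $\mathbf{y}=N^{-1}Y$ is a \emph{closed} one-form on $\Sigma_1$ (this is the content of Theorem \ref{Sigma1}) and that $\Sigma_1\cong\mathbb{R}^{n-1}$ (Theorem \ref{topology}), so a global primitive $\widetilde v$ with $\nabla^{\Sigma_1}\widetilde v=N^{-1}Y$ exists at once; it is then normalized by a constant so that $\widetilde v-\log y_n\to 0$ at infinity, and the decay $u_n=y_n+\mathcal{O}(e^{-q\tau}y_n)$ follows by integrating $\nabla^{\Sigma_1}(\widetilde v-\log y_n)=\mathcal{O}(e^{-q\tau})$, where Lemma \ref{rho-tau} and $r\ge\sinh\rho$ convert $O(r^{-q})$ into $\mathcal{O}(e^{-q\tau})$. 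This avoids the flow altogether; the paper reserves the harder flow-to-infinity argument for $u_1$ and $u_A$ (Lemmas \ref{y1-u1}, \ref{u-y}), where the relevant one-form is \emph{not} closed on $\Sigma_1$ and no primitive is available.

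The gap in your proposal is the step where you conclude that $\beta$ is constant on each $\Sigma_2^s$ leaf. The reason you give — ``$\mathbf{y}$ is orthogonal to $\Sigma_2$ and the limit defining $\beta$ is taken along the $\mathbf{y}$-flow'' — does not suffice: for $p,p'$ on the same leaf, the curves $\gamma_p(s)$ and $\gamma_{p'}(s)$ do remain on a common later leaf, but they occupy different points of it, and $\log y_n$ genuinely varies over each $\Sigma_2^s$, so $\log y_n(\gamma_p(s))-\log y_n(\gamma_{p'}(s))$ is not zero for finite $s$ and its vanishing in the limit requires an argument. To close the gap you would need either (a) the exponential contraction of $g^{\Sigma_2}$ under the $\mathbf{y}$-flow (from $h^{\Sigma_2}=-g^{\Sigma_2}$, equation \eqref{h2AB}) combined with the decay of the tangential component $|\nabla^{\Sigma_2}\log y_n|=\mathcal{O}(e^{-q\tau})$, showing the two effects beat each other in the correct order; or (b) the closedness of $\mathbf{y}$ as a one-form — but then $\beta$ is exactly the paper's primitive $\widetilde v$ and the Busemann detour is unnecessary. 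You also rightly flag the sign issue (since $\mathbf{y}$ points toward increasing $y_n$, the convergent quantity is $\log y_n(\gamma_p(s))-s$, not $+s$) and the delicacy of upgrading pointwise convergence to $\mathcal{O}_2$ control with derivatives transverse to the flow; that last difficulty is precisely what the primitive construction sidesteps.
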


\begin{proof}
 According to Theorem \ref{Sigma1}, 
    $N^{-1}Y$ is integrable on $\Sigma_1$. Moreover, $N^{-1}Y=\nabla^{\Sigma_1} \log y_n+O_1(r^{-q})$.  Thus, applying Lemma \ref{rho-tau} and the inequality $r\ge \sinh \rho$, there exists a function $\widetilde{v}$ such that $$\nabla^{\Sigma_1}\widetilde{v}=\nabla^{\Sigma_1} \log y_n+\mathcal{O}(e^{-q\tau}).$$
 By possibly adding a constant, we can also ensure that $(\tilde{v}-\log y_n)\to 0$ when $y_n\to \infty$. 
   Integrating the error term $\mathcal{O}(e^{-q\tau})$, we obtain that $\tilde{v}-\log y_n\to 0$ in any direction towards infinity.  

\medskip

   Next, note that $|\nabla^{\Sigma_1} \log y_n|=1+\mathcal{O}(e^{-q\tau})$. 
   Since $y_n(\Gamma)$ is uniformly bounded, we may integrate $\nabla^\Sigma \log y_n$ from $\Gamma\cap {\Sigma_1}$ to any given point to find that $  C^{-1} e^{-\tau} \le y_n\le Ce^\tau$ on $\Sigma_1$. Let $u_n:=e^{\widetilde{v}}$.
    Integrating the new error term $\nabla^{\Sigma_1} (\log u_n-\log y_n)=\mathcal{O}(e^{-q\tau})$, it follows that $u_n=y_n+\mathcal{O}(e^{-q\tau}y_n)$. 

\medskip

    Finally, the level-sets of $u_n$ correspond to the level-sets of $\widetilde v$ which are given by the leaves $\Sigma_2$. 
    Moreover, we know from the previous section that $\Sigma_1$ is hyperbolic space foliated by the horospheres $\Sigma_2=\mathbb R^{n-2}$.
\end{proof}

Next, we construct the coordinate function $u_1$. 
Understanding the properties of $u_1$ presents the main technical challenge.

\begin{lemma} \label{y1-u1}
    On each leaf $\Sigma_1$, there exists a constant $u_1$ such that $$y_1=u_1+\mathcal{O}_3(e^{-q\tau}u_n)$$ for the coordinate function $y_1$.
\end{lemma}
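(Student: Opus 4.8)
\textbf{Proof plan for Lemma \ref{y1-u1}.}

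The plan is to show that $y_1$ is ``almost constant'' along the leaf $\Sigma_1$ by integrating its gradient against the flow generated by $N^{-1}Y$, much as in Lemma \ref{u_n}, but now paying attention to the fact that $y_1$ does \emph{not} tend to a limit in all directions at infinity — only along the horospherical direction. First I would record the asymptotics of $\nabla^{\Sigma_1} y_1$ from Lemma \ref{NXY decay}: since $X=y_n^{-2}\mathring\nabla y_1+\mathcal{O}(r^{-q}y_n^{-1})$ and $X\perp Y$, the component of $\nabla^{\Sigma_1}y_1$ in the $\mathbf{x}$-direction is $y_n^2 + \mathcal O(r^{-q}y_n^2)$ times a unit vector, while, crucially, the component of $\nabla^{\Sigma_1}y_1$ \emph{tangent to the horospheres} $\Sigma_2$ is only $\mathcal O(r^{-q}y_n^2)$. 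In the hyperbolic background $b=y_n^{-2}(dy_1^2+\cdots+dy_n^2)$ the function $y_1$ is constant along $\Sigma_2$ and is a horospherical coordinate on $\Sigma_1$; so the statement $y_1=u_1+\mathcal{O}_3(e^{-q\tau}u_n)$ really says: along a leaf $\Sigma_1$, $y_1$ converges to a single value $u_1$ as one moves out to infinity within that leaf (through the horospheres $\Sigma_2$), with an explicit rate.

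The key steps, in order, would be: (1) use the coordinate $u_n$ and the horosphere foliation $\{\Sigma_2\}$ from Lemma \ref{u_n} together with the hyperbolic structure of $\Sigma_1$ from Theorem \ref{Sigma1}, so that $(\log u_n, \text{coordinates on }\Sigma_2)$ give Busemann-type coordinates on $\Sigma_1\cong\mathbb H^{n-1}$; in these coordinates $y_1=$ const on each $\Sigma_2$ to leading order. (2) Fix the integral curve $\Gamma$ of $X$ (equivalently of $\mathbf{x}$) and let $u_1$ be the limiting value of $y_1$ along $\Gamma$ as $\tau\to\infty$ — first verifying this limit exists, using that $\mathbf{x}$ is asymptotic to $\partial_{y_1}$, so $y_1$ is bounded on $\Gamma$ and, from the gradient equation $\nabla_\mathbf{x}\mathbf{x}=0$ (equation \eqref{NXYw}) plus the decay estimates, actually Cauchy along $\Gamma$. (3) For a general point $p\in\Sigma_1$, connect $p$ to $\Gamma$ by first flowing along $-\nabla^{\Sigma_1}\log u_n$ (i.e.\ going ``down'' the horospheres, which decreases $u_n$) into the region $\{u_n \le C\}$ and then along $\Sigma_2$; estimate $|y_1(p)-u_1|$ by integrating $\nabla^{\Sigma_1}y_1$ along this path. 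On the horospherical leg the relevant component of $\nabla^{\Sigma_1}y_1$ is $\mathcal O(r^{-q}y_n^2)$ and $y_n\le C e^\tau$ (from Lemma \ref{u_n}) while $r\ge\sinh\rho$ and $|\rho-\tau|$ is bounded (Lemma \ref{rho-tau}), so the integrand is $\mathcal O(e^{(2-q)\tau}) = \mathcal O(e^{-q\tau}u_n)$ after reconciling powers of $u_n$ with the claimed weight — here I would double-check the exponent bookkeeping, since $y_1$ in the UHS model scales like $u_n^0$ intrinsically but the claim carries a weight $u_n$, reflecting that $\nabla^{\Sigma_1}$-norms are measured with respect to $g_{\Sigma_1}$ which is $\sim y_n^{-2}\delta$. (4) Upgrade to the $\mathcal{O}_3$ (three derivatives plus Hölder) statement by differentiating the equation $\nabla^{\Sigma_1}_i\nabla^{\Sigma_1}_j y_1 = -\,\Gamma^{\Sigma_1,k}_{ij}\partial_k y_1$ and the gradient identities in Theorem \ref{Thm: N,X,Y,omega}, then applying the elliptic/ODE decay machinery as in Lemma \ref{decay estimate} and Lemma \ref{NXY decay}, noting that along $\Sigma_2$ the decay does not deteriorate (it is of $\mathcal{O}$-type, not $\mathfrak{O}$-type).

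The main obstacle I expect is step (3): unlike $u_n$ (which is monotone along the flow of $Y$ and whose level sets are the horospheres, giving a clean one-dimensional integration as in Lemma \ref{u_n}), the function $y_1$ has gradient with a \emph{large} component ($\sim y_n^2$) in the $\mathbf{x}$-direction, so one cannot simply integrate radially; one must choose the connecting path carefully so that it stays tangent to the distribution spanned by $Y$ and the horospheres (i.e.\ avoids the $X$-direction) except for a final bounded-length segment near $\Gamma$, and then control how the weight $r^{-q}y_n^2$ behaves along that path. Making the path construction precise — and ensuring the bounded region $\{u_n\le C\}\cap\Sigma_1$ is reached in bounded intrinsic distance transverse to the horospheres, so that the $\mathcal{O}(e^{-q\tau})$ factors genuinely control the accumulated error — is where the real work lies; everything else is a repetition of the ODE-integration-against-decay arguments already used for $\rho$, $\tau$, and $u_n$.
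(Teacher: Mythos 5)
Your general plan — integrate the decay estimate on $\nabla^{\Sigma_1}y_1$ to show $y_1$ converges to a single constant on the leaf — is the right idea, and you correctly identify step (3) (the choice of connecting path) as where the real difficulty lies. But two of your key moves are wrong, and the replacement the paper uses is a genuinely different mechanism from what you propose.

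First, your definition of $u_1$ in step (2) does not work. $\Gamma$ is an integral curve of $X$, hence of $\mathbf{x}\sim y_n\partial_{y_1}$; along $\Gamma$ the function $y_1$ \emph{increases} at the approximately constant rate $y_n$, so it is unbounded on $\Gamma$ and has no limit. (Moreover, $\tau$ is by definition the intrinsic distance on $\Sigma_1$ to $\Gamma\cap\Sigma_1$, so ``along $\Gamma$ as $\tau\to\infty$'' is self-contradictory — $\tau\equiv 0$ on $\Gamma\cap\Sigma_1$.) What is bounded on $\Gamma$ is $y_2,\ldots,y_n$; the paper uses this only to normalize the intrinsic coordinates on $\Sigma_1$, not to define $u_1$. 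The constant $u_1$ must be defined as the limit of $y_1$ along the $Y$-flow \emph{within} the fixed leaf, i.e.\ $u_1=\lim_{s\to\pm\infty}y_1(\Phi(s,p))$ where $\Phi$ is the $|Y|^{-1}Y$-flow.

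Second, even after repairing the definition, your path in step (3) — flow inward into $\{u_n\le C\}\cap\Sigma_1$ and then along a horosphere toward $\Gamma\cap\Sigma_1$ — cannot yield the claimed decay. That path terminates in a region where $\tau$ and $u_n$ are bounded, so integrating the gradient bound along the last leg only gives $|y_1(p)-y_1(\Gamma\cap\Sigma_1)|=O(1)$, which is far weaker than $\mathcal{O}(e^{-q\tau(p)}u_n(p))$; indeed $y_1(\Gamma\cap\Sigma_1)$ is generically \emph{not} equal to the limiting value $u_1$. The paper instead does the opposite: it integrates \emph{outward to infinity}. Concretely, it (a) shows that along any $Y$-integral curve through $p_0$, $y_1$ varies by at most $C\tilde t(p_0)^{-q}u_n(p_0)$ (treating the two regimes $u_n\lessgtr\mathbf{r}$ separately with barrier arguments), and then (b) shows that for two points $p_1,p_2$ on a common sphere $\{\tilde t=\tilde t_1\}$ — which, in the intrinsic UHS coordinates $(v_2,\dots,v_{n-1},u_n)$ on $\Sigma_1$, is a Euclidean sphere of radius $\sqrt{\tilde t_1^2-1}$ — a connecting great-circle path has Euclidean length $\le\pi\tilde t_1$, so $|y_1(p_1)-y_1(p_2)|\le C\pi\tilde t_1^{1-q}\to 0$. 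This is the missing ingredient: the spherical slices $\{\tilde t=\text{const}\}$ have Euclidean length growing only linearly in $\tilde t$, while the weight decays like $\tilde t^{-q}$ with $q>1$, so the angular variation of $y_1$ is summable at infinity and the directional limits all coincide. Without this observation you have no mechanism for showing a single well-defined constant $u_1$ is approached from every direction in $\Sigma_1$.

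Your step (4) bookkeeping for the $\mathcal{O}_3$ upgrade is fine in spirit (the gradient being $\mathcal{O}_2$ plus the path-integral bound yields $\mathcal{O}_3$ for $y_1-u_1$), and matches the paper.
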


\begin{proof}
We have the asymptotics $\nabla^{\Sigma_1} y_1=O_2(r^{-q}y_n)$, and Lemma \ref{rho-tau} implies that $$\nabla^{\Sigma_1} y_1=\mathcal{O}_2(e^{-q\tau}u_n).$$
To prove the claim, it thus suffices to show that $y_1$ asymptotes to a constant on $\Sigma_1$.

\medskip

Since $\Sigma_1\cong \mathbb{H}^{n-1}$, there is a coordinate system $\{v_2,\cdots, v_{n-1},u_n\}$ on $\Sigma_1$ such that $$g_{\Sigma_1}=u_n^{-2}(dv_2^2+\cdots+dv_{n-1}^2+du_n^2).$$ 
Note that for any vector $(c_2,\dots, c_{n-1})\in \mathbb{R}^{n-2}$, the map
\[(v_{2},\dots,v_{n-1},u_n)\to (v_{2}+c_2,\cdots,v_{n-1}+c_{n-1},u_n)\]
is an isometry on $\Sigma_1$. 
Thus, we can assume $v_2=\cdots=v_{n-1}=0$ at the point $\Gamma\cap \Sigma_1$. 
Furthermore, using the asymptotic of $X$, the coordinate functions $y_2,\cdots, y_n$ are bounded on $\Gamma$. Moreover, $u_n$ is uniformly bounded
on $\Gamma$ as well.  

\medskip

Define $$\mathbf{r}:=\sqrt{v_2^2+\cdots+v_{n-1}^2+1}\quad\text{and} \quad\tilde t:=\cosh\tau= \cosh\Big(d_{\Sigma_1}(\cdot,\Gamma)\Big)=\frac{\mathbf{r}^2+u_n^2}{2u_n}.$$ 
Note that in this notation we have $\nabla^{\Sigma_1} y_1=\mathcal{O}(\tilde{t}^{-q}u_n)$.
Let $p_0\in \Sigma_1$.
First we show that $y_1$ is near constant on an integral curve $\gamma $ of $\nabla^{\Sigma_1}y_1$.

\medskip

First suppose that $u_n(p_0)\le \mathbf{r}(p_0)$. Note that  $\tilde{t}$ is decreasing with respect to $u_n$ when $u_n\in(0, \mathbf{r}]$. 
Consequently, for any $p_1$ on the integral curve of $Y$ through $p_0$ with $u_n(p_1)\in (0,  u_n(p_0)]$, we have 
\begin{equation}\label{case 1}
    |y_1(p_0)-y_1(p_1)|\le \int_{u_n(p_1)}^{u_n(p_0)} |\nabla^{\Sigma_1} y_1|u_n^{-1} du_n \le C\int_{0}^{u_n(p_0)} \tilde{t}^{-q}du_n\le C\tilde{t}^{-q}(p_0)u_n(p_0)
\end{equation}
where $C$ is a constant such that $|\nabla^{\Sigma_1}y_1|\le C\tilde{t}^{-q}u_n$. 

\medskip

Next, suppose $u_n(p_0)\ge \mathbf{r}(p_0)$, we have $\tilde{t}\le u_n$.  
Suppose that  $p_2$ is a point on the integral curve of $Y$ through $p_0$ with $u_n(p_2)\in (u_n(p_0),\infty)$.
Then we have 
\begin{align}\label{case 2}
\begin{split}
    |y_1(p_2)-y_1(p_0)|\le&
    \int_{u_n(p_0)}^{u_n(p_2)} |\nabla^{\Sigma_1} y_1|u_n^{-1} du_n
    \\ \le& C\int_{u_n(p_0)}^{u_n(p_2)} \tilde{t}^{-q}du_n
    \\ \le& 2^{q}C\int_{u_n(p_0)}^{u_n(p_2)} u_n^{-q}du_n
    \\ \le& 2^{q}C\cdot \frac{u_n(p_0)^{1-q}}{q-1}
    \\ \le & \frac{2^{q}C}{q-1} \tilde{t}^{-q}(p_0)u_n(p_0),
    \end{split}
\end{align}
where the third inequality uses $\tilde{t}\ge \frac{u_n}{2}$. 
     \medskip
     
Now  for any points $p_0$ and $p_0'$, we can connect $p_0$ with $p_1$ and $p_0'$ with $p_2$ along integral curves of  $Y$ such that $\tilde{t}(p_1)=\tilde{t}(p_2)=\tilde{t}_1$.
We will show that $|y_1(p_1)-y_1(p_2)|\le C\pi\tilde{t}^{1-q}_1$ which will imply that $u_1:=\lim_{p\to\infty}y_1(p)$ is well-defined and independent of $p\in \Sigma_1$. Moreover, combining this with the estimates \eqref{case 1} and \eqref{case 2}, we will obtain $y_1=u_1+\mathcal{O}(\tilde{t}^{-q}u_n)$.

\medskip

Recall that $\tilde{t}=\frac{v_2^2+\cdots+v_{n-1}^2+u_n^2+1}{2u_n}$.
On the sphere $\{\tilde{t}=\tilde{t}_1\}$, we note that  
\[v_2^2+\cdots+v_{n-1}^2+(u_n-\tilde{t}_1)^2=\tilde{t}_1^2-1,\]
i.e., the sphere $\{\tilde{t}=\tilde{t}_1\}$ can be considered a sphere in Euclidean space of radius $\sqrt{\tilde{t}_1^2-1}$. Let $\gamma$ be a curve on the sphere connecting $p_1$ and $p_2$, we have 
\begin{equation*}
    |y_1(p_1)-y_1(p_2)|\le \int_\gamma |\nabla^{\Sigma_1} y_1|\le C\int_\gamma \tilde{t}_1^{-q}u_n \le C|\gamma|_{g_{\mathbb R^{n}}}\tilde{t}_1^{-q}\le C\pi \tilde{t}_1^{1-q},
\end{equation*}
where $\gamma$ is chosen so that the length $|\gamma|_{g_{\mathbb R^n}}$ of $\gamma$ with respect to Euclidean metric is not greater than $\pi \tilde{t}_1$.
Therefore, the result follows.
\end{proof}

In the following let $\pmb{\rho}=(y_1^2+\cdots+y_{n-1}^2+1)^\frac{1}{2}$.
Moreover, let $\Phi(s,p)$ be the flow generated by $|Y|^{-1}Y$. Note that 
\[u_1(p)=\lim_{s\to \infty} y_1(\Phi(s,p))=\lim_{s\to -\infty} y_1(\Phi(s,p)).\]
We have shown $u_1$ is constant on the leaves $\Sigma_1$. 
Next, we prove that $u_1$ is differentiable in the normal direction (to $\Sigma_1$). 
To do so, we will analyze the differentiability of the solution to the ODE system \eqref{ODE} below with respect to the initial condition. Subsequently, the regularity for the remaining coordinates
$u_A$ $(A=2,\dots,n-1)$ can be established in a similar manner to $u_1$. It relies on the observation established in Lemma \ref{uA-yA} that $y_A$
  is asymptotic to the same constant $u_A$
  at both ends of the integral curve of $|Y|^{-1}Y$.
  
\begin{lemma} \label{u-y}
The following decay and regularity estimates hold:
\begin{align*}
    u_1-y_1=&O_{2}(r^{-q}y_n),\\
    u_n-y_n=&O_2(r^{-q}y_n) \qquad \text{for $q>2$},\\
    u_n-y_n=&O_1(r^{-q}y_n) \qquad \text{for any $q>\frac n2$.}
\end{align*}
Moreover, if there exists $\tilde{u}_A$ such that $$\tilde{u}_A=\lim_{s\to \infty} y_A(\Phi(s,p))=\lim_{s\to -\infty} y_A(\Phi(s,p)), \qquad A=2,\cdots,n-1,$$ then we have the decay estimate $$\tilde{u}_A-y_A=O_2(r^{-q}y_n).$$
\end{lemma}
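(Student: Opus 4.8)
\textbf{Proof strategy for Lemma \ref{u-y}.}

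The plan is to treat the three coordinate functions $u_1$, $u_n$, $u_A$ in turn, each time upgrading the zeroth-order information already at hand (that $u_1,u_A$ are constant on $\Sigma_1$ by Lemmas \ref{y1-u1} and \ref{uA-yA}, and that $u_n$ solves $N^{-1}Y=\nabla^{\Sigma_1}\log u_n$ by Lemma \ref{u_n}) to the weighted $C^2$ (resp.\ $C^1$) decay claimed here. First I would deal with $u_n$, which is the most direct. From Lemma \ref{u_n} we have $\nabla^{\Sigma_1}(\log u_n-\log y_n)=\mathcal O(e^{-q\tau})=\mathcal O_1(r^{-q})$ on $\Sigma_1$, and by Lemma \ref{NXY decay} together with $N^{-1}Y=y_n^{-1}\mathring\nabla y_n+O_1(r^{-q})$ (equivalently $Y=y_n^{-2}\mathring\nabla y_n+O_2(r^{-q}y_n^{-1})$) the $\Sigma_1$-gradient agrees with the ambient gradient of $\log y_n$ up to $O(r^{-q})$. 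Exponentiating and integrating the error along the $|Y|^{-1}Y$-flow (using $y_n^{-1}=O(e^{-s})$, $r^{-1}=O(e^{-s})$ as in Lemma \ref{rho-tau}) gives $u_n-y_n=O(r^{-q}y_n)$ at the $C^0$ level; the $C^1$ improvement comes for free from the gradient identity, and the $C^2$ improvement for $q>2$ follows by differentiating the defining relation $N^{-1}Y=\nabla^{\Sigma_1}\log u_n$ once more and invoking the $C^2$ decay of $N,X,Y$ from Lemma \ref{NXY decay} — the restriction $q>2$ enters because the second derivative of the quadratic $y_n$ picks up a factor that only decays if $q>2$.

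Next I would handle $u_1$. Here the zeroth-order estimate $y_1=u_1+\mathcal O_3(e^{-q\tau}u_n)$ on each leaf is already given by Lemma \ref{y1-u1}, so the content is purely the transverse (normal-to-$\Sigma_1$) regularity and decay. The idea is to view $u_1(p)=\lim_{s\to\infty}y_1(\Phi(s,p))$ where $\Phi$ is the $|Y|^{-1}Y$-flow, and to differentiate this limit with respect to the base point $p$ by controlling the variation of the ODE $\frac{d}{ds}y_1(\Phi(s,p))=\langle \mathring\nabla y_1, |Y|^{-1}Y\rangle(\Phi(s,p))$ with respect to its initial data — i.e.\ Gronwall applied to the Jacobi-type equation for $\partial_p\Phi$. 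Since the integrand is $\mathcal O(\tilde t^{-q}u_n)$ and $\tilde t\gtrsim e^\tau$, the variation of $y_1\circ\Phi$ along the flow is integrable in $s$, so the limit is differentiable and its derivative is the corresponding convergent integral, which one estimates by $O(r^{-q}y_n)$. One more differentiation, using the $C^2$ bounds on $X,Y$ from Lemma \ref{NXY decay} to control $\partial_p^2\Phi$, yields the $O_2(r^{-q}y_n)$ claim. The point $u_1-y_1=(u_1-y_1)$ is then split as $[u_1(p)-y_1(p)]$ with the leaf-estimate on one factor and the transverse estimate on the other; combining them through the triangle inequality in the mixed frame $\{$normal$\}\cup\{e_2,\dots,e_n\}$ gives the full ambient $O_2(r^{-q}y_n)$.

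The estimate for $\tilde u_A$ is essentially identical to that for $u_1$: once $\tilde u_A$ is defined as the common limit of $y_A\circ\Phi(s,\cdot)$ at $s\to\pm\infty$, the same Gronwall/variation-of-flow argument on $\frac{d}{ds}y_A(\Phi(s,p))=\langle\mathring\nabla y_A,|Y|^{-1}Y\rangle$ produces $\tilde u_A-y_A=O_2(r^{-q}y_n)$, the $C^2$ improvement again coming from the $C^2$ decay in Lemma \ref{NXY decay} and standard elliptic (ODE) regularity. I expect the main obstacle to be the transverse differentiability of $u_1$ (and $\tilde u_A$): the limit defining $u_1$ is taken along a flow whose generator $|Y|^{-1}Y$ is only known to decay, not to be explicit, so one must carefully track how the two ends $s\to\pm\infty$ of the integral curve both converge to the \emph{same} constant and how that constant depends smoothly on the transverse parameter — this is where the asymptotic geometry of $\Sigma_1\cong\mathbb H^{n-1}$ foliated by horospheres $\Sigma_2$, and the sharp comparison $|\rho-\tau|\le C$ from Lemma \ref{rho-tau}, are used to guarantee the relevant integrals converge uniformly.
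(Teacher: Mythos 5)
Your overall strategy—differentiate the flow map $\Phi(s,p)$ with respect to the base point $p$ and control the variation via a Gronwall-type argument—is exactly the paper's approach for $u_1$ and $\tilde u_A$, and you correctly identify that uniform convergence of the first and second variations as $s\to\pm\infty$ is the crux. However, there are several concrete gaps.

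First, and most importantly, you omit the two-case dichotomy that makes the Gronwall estimates actually close. The paper splits into Case~I ($y_n(p)\gtrsim\pmb\rho(p)$, flow forward $s\ge0$) and Case~II ($y_n(p)\lesssim\pmb\rho(p)$, flow backward $s\le0$), because the error $F-y_n\partial_{y_n}=O(t^{-q})$ only decays exponentially along the flow if one tracks whichever of $y_n$ and $\pmb v:=\pmb\rho^2/y_n$ is large; each case needs its own barrier argument to pin down $\mathbf p_n(s)\sim e^{s}y_n(p)$ (resp.\ $\pmb v(s)\gtrsim e^{-s}\pmb v(0)$) before one can even begin the variational estimates. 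Without this split the integrand $t^{-q}(\Phi(s,p))$ is not a priori integrable in $s$, so your Gronwall argument does not run as stated.

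Second, your explanation of the $q>2$ restriction for the $C^2$ bound on $u_n-y_n$ is wrong. It has nothing to do with ``the second derivative of the quadratic $y_n$'' ($y_n$ is a coordinate, not a quadratic). In the paper the restriction appears precisely in the Case~II bound on the second variation of the $n$-th component, namely $|e^{-s_1}\mathring{\mathbf q}_n(s_1)-e^{-s_2}\mathring{\mathbf q}_n(s_2)|\le \tilde C e^{s_1(q-2)}y_n(p)t^{-q}(p)$, which decays as $s_1\to-\infty$ only if $q>2$; the first-variation bounds survive for any $q>3/2$, which is why the $C^1$ estimate holds for all $q>\tfrac n2$.

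Third, your separate treatment of $u_n$ does not establish the claimed transverse regularity. The identity $N^{-1}Y=\nabla^{\Sigma_1}\log u_n$ is a relation on each leaf and controls only tangential derivatives of $u_n$; it gives no a priori control on $e_1(u_n)$ or on higher normal derivatives. The paper obtains $C^1$ and $C^2$ bounds on $u_n$ through the same flow analysis used for $u_1$, via the representation $u_n(p)=\lim_{s\to\pm\infty}e^{\mp s}y_n(\Phi(s,p))$, which is exactly the ingredient your sketch is missing. (Relatedly, your sentence about ``splitting $u_1-y_1$ ... with the leaf-estimate on one factor and the transverse estimate on the other'' is not a factorization and does not parse; the combination of tangential and normal estimates is done via the uniform convergence of $\mathbf p_1$, $\partial_{\mathfrak t}\mathbf p_1$, $\partial^2_{\mathfrak t}\mathbf p_1$ for curves $p(\mathfrak t)$ with controlled first, second, and third derivatives.)
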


\begin{proof}
We begin with proving the first estimate for $u_1$.
From there it will be easy to show that the others hold as well.

\medskip

    Throughout the proof, we use the  coordinate functions $(y_1,\dots, y_n)$,   $C_i$ and  $\tilde{C}_i$ denote positive constants depending only on $(M^n,g,k)$, and the subscript $a$ ranges from $1$ to $n-1$. 
    Let $p(\mathfrak{t})$ be a curve in $M^n$ with $p(0)=p$.
    Suppose that $|p'(\mathfrak{t})|$, $|p''(\mathfrak{t})|$ and $|p'''(\mathfrak{t})|\in[\frac{1}{2},1]$, i.e.,
    \begin{equation}\label{pt}
    |p'(\mathfrak{t})|_{g_{\mathbb R^n}},\;\;|p''(\mathfrak{t})|_{g_{\mathbb R^n}} ,\;\;|p'''(\mathfrak{t})|_{g_{\mathbb R^n}}\in [C_0^{-1}y_n(p),C_0y_n(p)]. 
    \end{equation}
    
\medskip

   Recall that $\Phi(s,p)$ is the flow generated by $|Y|^{-1}Y$.
    Let $\mathbf{p}(s,\mathfrak{t})= \Phi(s,p(\mathfrak{t}))$ and let $F=\frac{Y}{|Y|}$.
    Applying Lemma \ref{NXY decay}, we find that there exists a constant $C>0$ such that 
    \begin{equation} \label{Const}
        \|F-y_n\partial_{y_n}\|_{C^2(M^n)}\le Ct^{-q}.
    \end{equation}
   This implies
    \begin{equation} \label{ODE}
        \partial_s \mathbf{p}=F(\mathbf{p})=y_n(\mathbf{p}) \partial_{y_n}+O_2(t^{-q}). 
    \end{equation}
    Write $\mathbf{p}(s,0)=(\mathbf{p}_1,\dots,\mathbf{p}_n)$ where we use the coordinates $(y_1,\dots,y_n)$. 
    Thus, there exists a constant $C>0$ such that  \begin{equation}\label{yPhispi}
        |\partial_s \mathbf{p}_a|\le C\mathbf{p}_nt^{-q}(\mathbf{p}), a=1,\cdots, n-1; \quad 
        |\partial_s \mathbf{p}_n-\mathbf{p}_n|\le C\mathbf{p}_nt^{-q}(\mathbf{p}).
    \end{equation}
    
\medskip

    \textbf{Case I.} When $y_n(p)\ge \pmb{\rho}(p)$ and $t(p)$ is sufficiently large, the $y_n(p)\ge t(p)$ is also  sufficiently large. 
    Thus, using $t\ge\frac{y_n}{2}$, the second inequality in  \eqref{yPhispi} implies that
    \begin{equation} \label{pnC}
        |\partial_s\mathbf{p}_n-\mathbf{p}_n|\le 2^{q}C\mathbf{p}^{1-q}_n. 
    \end{equation}
    Assuming $y_n(p)\ge 100C^{1/q}$, we apply upper and lower barrier arguments to inequality \eqref{pnC} to obtain
    \[\frac{1}{2}(e^{s}+ e^{\frac{s}{2}})y_n(p)\le \mathbf{p}_{n}(s)\le (2e^{s}- e^{\frac{s}{2}})y_n(p).\]
  In particular, we have  
  \begin{equation} \label{pns0}
      \frac{1}{2}e^{s}y_n(p)\le \mathbf{p}_{n}(s)\le 2e^{s}y_n(p). 
  \end{equation}
 Combining the estimate $t\ge \frac{y_n}{2}$ with the inequalities in \eqref{yPhispi}, we obtain for $s\ge0$
    \begin{equation}
        |\partial_s\mathbf{p}_{a }|\le 4^{q}C e^{s(1-q)}y^{1-q}_n(p)\quad \text{and} \quad |\partial_s\mathbf{p}_{n}-\mathbf{p}_{n}|\le 4^{q}C e^{s(1-q)}y^{1-q}_n(p).
    \end{equation}
     Therefore, for $0\le s_1<s_2$,
     we have
     \begin{equation}
     \label{pas}
|\mathbf{p}_a(s_1)-\mathbf{p}_a(s_2)|\le C_1e^{s_1(1-q)}y_n^{1-q}(p) ,\quad
            |e^{-s_1}\mathbf{p}_n(s_1)-e^{-s_2}\mathbf{p}_n(s_2)|\le C_1e^{-qs_1}y_n^{1-q}(p) .  
     \end{equation}

     \medskip

    Next, let $\mathbf{q}:=\partial_\mathfrak{t}\mathbf{p}(s,\mathfrak{t})|_{\mathfrak{t}=0}$. Equation \eqref{ODE} implies that $F_j=\delta_{jn}y_n+O_{2}(y_nt^{-q})$ and
\begin{equation}\label{psq}
        \partial_s\mathbf{q}_j=\frac{\partial F_j}{\partial y_i}\mathbf{q}_i=\delta_{jn}\mathbf{q}_n+|\mathbf{q}|_{g_{\mathbb{R}^n}}O_1(t^{-q}).
    \end{equation}

Note that $|\mathbf{q}(s)|_{g_{\mathbb{R}^n}}\neq 0$ by the uniqueness of the ODE solution. 
Hence, the function $|\mathbf{q}(s)|_{g_{\mathbb{R}^n}}$ 
is differentiable. Using  the inequalities $t\ge \frac{1}{2}y_n$ and $\mathbf{p}_n(s,0)\ge \frac{1}{2}e^s y_n(p)$, and combining this with equation \eqref{Const} and equation \eqref{psq}, we obtain
\begin{equation*}
    \partial_s |\mathbf{q}(s)|_{g_{\mathbb{R}^n}}\le |\partial_s\mathbf{q}(s)|_{g_{\mathbb{R}^n}}\le (1+C_2e^{-qs}y_n^{-q}(p))|\mathbf{q}(s)|_{g_{\mathbb{R}^n}}.
\end{equation*}
Thus, $|\mathbf{q}(s)|_{g_{\mathbb{R}^n}}\le C_3|\mathbf{q}(0)|_{g_{\mathbb{R}^n}}e^s$. Moreover, for $ 0\le s_1<s_2$, equation \eqref{pns0} and equation \eqref{psq} imply that
\begin{equation}\label{qs}
    |\mathbf{q}_a(s_1)-\mathbf{q}_a(s_2)|\le C_4e^{s(1-q)}y^{1-q}_n(p), \quad 
    |e^{-s_1}\mathbf{q}_n(s_1)-e^{-s_2} \mathbf{q}_n(s_2)|\le C_4 e^{-qs_1}y^{1-q}_n(p).
\end{equation}
    
\medskip

Next, we consider the second variation by differentiating equation \eqref{psq}  and applying the estimate
$\frac{\partial^2 F_j}{\partial y_i\partial y_l}=O(y_n^{-1}t^{-q})$ from equation \eqref{Const}. 
Let $\mathring{\mathbf{q}}(s)=\partial_\mathfrak{t}^2\mathbf{p}(s,\mathfrak{t})|_{\mathfrak{t}=0}$.
Then we obtain
\begin{equation} \label{ptq2}
\begin{split}
    \partial_s\mathring{\mathbf{q}}_j
    =&\frac{\partial^2 F_j}{\partial y_i\partial y_l}\mathbf{q}_i\mathbf{q}_l+\frac{\partial F_j}{\partial y_i}\mathring{\mathbf{q}}_i
    \\=& \delta_{jn}
    \mathring{\mathbf{q}}_n
    +(|\mathbf{q}|^2_{g_{\mathbb{R}^n}}+y_n|\mathring{\mathbf{q}}|_{g_{\mathbb{R}^n}})O(y_n^{-1}t^{-q}).
\end{split}
\end{equation}
Hence, using  $|\mathbf{q}(s)|_{g_{\mathbb{R}^n}}\le C_3|\mathbf{q}(0)|_{g_{\mathbb{R}^n}}e^s$  and $|\mathbf{q}(0)|_{g_{\mathbb{R}^n}}=|p'(\mathfrak{t})|_{g_{\mathbb{R}^n}}\le C_0y_n(p)$ from equation \eqref{pt}, we have 
\begin{equation} \label{ptqi}
    \partial_s \mathring{\mathbf{q}}_j=\delta_{jn}\mathring{\mathbf{q}}_n+ O(y_n t^{-q})+|\mathring{\mathbf{q}}|_{g_{\mathbb{R}^n}} O(t^{-q}).
\end{equation}
Combining this with the estimates $t\ge \frac{1}{2}y_n$, $\mathbf{p}_n(s)\ge \frac{1}{2}e^s y_n(p)$ and equation \eqref{Const}, we obtain
\begin{equation}
    \begin{split}
        \partial_s|\mathring{\mathbf{q}}|_{g_{\mathbb{R}^n}}\le&
    (1+Ct^{-q})|\mathring{\mathbf{q}}|_{g_{\mathbb{R}^n}} 
    +C_5y_n t^{-q}
    \\ \le& 
     (1+4^qCy_n^{-q}(p)e^{-qs})|\mathring{\mathbf{q}}|_{g_{\mathbb{R}^n}} 
    +4^qC_5y_n^{1-q}(p)e^{(1-q)s}.
    \end{split}
\end{equation}
Thus, $|\mathring{\mathbf{q}}|_{g_{\mathbb{R}^n}}\le C_6y_n(p)e^{s}$. 
Therefore,  for any $0\le s_1<s_2$ equation \eqref{ptqi} yields that
\begin{equation}\label{qas}
    \begin{split}
        &|\mathring{\mathbf{q}}_a(s_1)-\mathring{\mathbf{q}}_a(s_2)|\le C_7e^{(1-q)s_1}y^{1-q}_n(p)\le C_7e^{(1-q)s_1}y_n(p)t^{-q}_n(p),
        \\ &|e^{-s_1}\mathring{\mathbf{q}}_n(s_1)-e^{-s_2} \mathring{\mathbf{q}}_n(s_2)|\le C_7e^{-qs_1}y_n(p)t^{-q}_n(p).
    \end{split} 
\end{equation}
Thus, equations \eqref{pas}, \eqref{qs} and \eqref{qas} imply that $\mathbf{p}_1(s,0)$, 
$\partial_\mathfrak{t}\mathbf{p}_1(s,0)$ and $\partial^2_\mathfrak{t}\mathbf{p}_1(s,0)$ converge uniformly for $s\to \infty$. Hence, $u_1(p)=\lim_{s\to \infty}\mathbf{p}_1(s,0)$ is $C^{2}$. Moreover, $u_1-y_1=O_{2}(t^{-q}y_n)$.

\medskip

\textbf{Case II.} Next, suppose that $y_n(p)\le \pmb{\rho}(p)+1$ and $t(p)$ is large.
In this case,  we only consider the case $s\le 0$. 
Let 
\[\pmb{v}(s)=\frac{\pmb{\rho}^2}{y_n}\Big{|}_{\mathbf{p}(s,0)}.\]
Then by assumption, $\pmb{v}(0)\ge \frac{1}{4}t(p)$ is sufficiently large. 
According to line \eqref{yPhispi}, we have
\begin{equation}
    \partial_s \pmb{v}=-\pmb{v}\mathbf{p}_n\partial_s\mathbf{p}_n+\sum_{a=1}^{n-1}\frac{\mathbf{p}_a}{\mathbf{p}_n}\partial_s \mathbf{p}_a
    \le  -\pmb{v}+nC\pmb{v}t^{-q}
    \le -\pmb{v}+nC\pmb{v}^{1-q}.
\end{equation}
Assuming $\pmb{v}(0)\ge 10nC^{1/q}$, we obtain for $s\le 0$ the lower barrier    
\begin{equation}
\label{v(s)}
    \pmb{v}(s)\ge \frac{1}{2}(e^{-s}+e^{-\frac{s}{2}})\pmb{v}(0).
\end{equation} 
Moreover, when $s\le 0$,   the second inequality in \eqref{yPhispi} implies
\begin{equation}
    |\partial_s \mathbf{p}_n-\mathbf{p}_n|\le C\mathbf{p}_n\pmb{v}^{-q}\le 2^qe^{qs}\pmb{v}^{-q}(0)C\mathbf{p}_n.
\end{equation}
Similarly,  a barrier argument for $s\le 0$ yields 
\begin{equation} \label{Cpn}
     \frac{1}{2}e^sy_n(p)\le \frac{1}{2}(e^s+e^{\frac{3}{2}s})y_n(p)\le \mathbf{p}_n\le (2e^{s}-e^{\frac{3}{2}s})y_n(p)\le 2e^{s}y_n(p).
\end{equation}
Combining this with lines \eqref{yPhispi}, \eqref{v(s)}, \eqref{Cpn}, we find that 
\begin{equation}
    |\partial_s \mathbf{p}_a|\le 2^{q+1}Ce^{(q+1)s} y_n(p) 
    \pmb{v}^{-q}(0)\quad
    \text{and}
    \quad 
    |\partial_s\mathbf{p}_n-\mathbf{p}_n|\le 2^{q+1}Ce^{(q+1)s} y_n(p) 
    \pmb{v}^{-q}(0).
\end{equation}
Hence, using $\pmb{v}(0)\ge \frac{1}{4}t(p)$, we obtain for $s_2< s_1\le0$
\begin{equation}\label{pas12}
    \begin{split}
|\mathbf{p}_a(s_1)-\mathbf{p}_a(s_2)|\le& 2^{q+1}Ce^{(q+1)s_1} y_n(p) 
    \pmb{v}^{-q}(0)\le 2^{3q+1}Ce^{(q+1)s_1} y_n(p) 
    t^{-q}(p),
\\ |e^{-s_1}\mathbf{p}_n(s_1)-e^{-s_2}\mathbf{p}_n(s_2)|\le& 2^{q+1}Ce^{qs_1} y_n(p) 
    \pmb{v}^{-q}(0)\le 2^{3q+1}Ce^{qs_1} y_n(p) 
    t^{-q}(p).
    \end{split}
\end{equation}

\medskip

Next, we combine equations \eqref{v(s)} and \eqref{psq} with the estimate $|\mathbf{q}(s)|_{g_{\mathbb{R}^n}}$ to find 
\begin{equation}
     \partial_s |\mathbf{q}(s)|_{g_{\mathbb{R}^n}}\ge -|\partial_s\mathbf{q}(s)|_{g_{\mathbb{R}^n}}\ge -(1+\tilde{C}_1 e^{qs} \pmb{v}^{-q}(0))|\mathbf{q}(s)|_{g_{\mathbb{R}^n}}.
\end{equation}
Thus, we have $|\mathbf{q}(s)|_{g_{\mathbb{R}^n}}\le \tilde{C}_2 e^{-s} |\mathbf{q}(0)|_{g_{\mathbb{R}^n}}$ when $s\le 0$. 
Next, lines \eqref{psq} and \eqref{v(s)} yield 
\begin{equation}
    \partial_s \mathbf{q}_j=\delta_{nj}\mathbf{q}_n+|\mathbf{q}(0)|_{g_{\mathbb{R}^n}} \pmb{v}^{-q}(0) O(e^{(q-1)s}).
\end{equation}
Applying the estimate $|\mathbf{q}(0)|_{g_{\mathbb{R}^n}}\le C_0y_n(p)$ from \eqref{pt} and the assumption $\pmb{v}(0)\ge \frac{1}{4}t(p)$, we find that $$|\mathbf{q}_a(s)-\mathbf{q}_a(0)|\le \tilde{C}_3y_n(p)t^{-q}(p),$$ and  $$|\mathbf{q}_n(s)-e^s\mathbf{q}_n(0)|\le \tilde{C}_3y_n(p)t^{-q}(p)e^{\frac{s}{2}}$$ which uses $q>\frac{3}{2}$. 
Therefore, we obtain $|\mathbf{q}(s)|_{g_{\mathbb{R}^n}}\le \tilde{C}_4|\mathbf{q}(0)|_{g_{\mathbb{R}^n}}$. Next, using equations \eqref{psq} and \eqref{v(s)}, we have the improved estimate
\begin{equation}
    \partial_s \mathbf{q}_j=\delta_{nj}\mathbf{q}_n+|\mathbf{q}(0)|_{g_{\mathbb{R}^n}} \pmb{v}^{-q}(0) O(e^{qs}).
\end{equation}
 Since $|\mathbf{q}(0)|_{g_{\mathbb{R}^n}}\le C_0y_n(p)$, we obtain for $s_2< s_1\le0$
\begin{equation} \label{qan12}
    \begin{split}
        &|\mathbf{q}_a(s_1)-\mathbf{q}_a(s_2)|\le \tilde{C}_5|\mathbf{q}(0)|_{g_{\mathbb{R}^n}} \pmb{v}^{-q}(0)e^{s_1 q}\le  2^{2q}\tilde{C}_5C_0e^{s_1q} y_n(p) t^{-q}(p);
    \\ &|e^{-s_1}\mathbf{q}_n(s_1)-e^{-s_2}\mathbf{q}_n(s_2)|\le \tilde{C}_6e^{s_1 (q-1)}y_n(p) t^{-q}(p).
    \end{split}
\end{equation}

\medskip

Finally, we estimate $\mathring{\mathbf{q}}$ satisfying equation \eqref{ptq2}. 
Similar to the estimates $|\mathbf{q}(s)|_{g_{\mathbb{R}^n}}\le \tilde{C}_4|\mathbf{q}(0)|_{g_{\mathbb{R}^n}}$, we obtain
$$|\mathring{\mathbf{q}}(s)|_{g_{\mathbb{R}^n}}\le \tilde{C}_7|\mathring{\mathbf{q}}(0)|_{g_{\mathbb{R}^n}}\le C_0\tilde{C}_7y_n(p).$$ 
Therefore, equation \eqref{ptq2} implies
\begin{equation*}
    \partial_s \mathring{\mathbf{q}}_j=\delta_{jn}\mathring{\mathbf{q}}_n+ O(y_n(p)t^{-q}(p)e^{(q-1)s}).
\end{equation*}
Thus, we have for $s_2<s_1\le0$
\begin{equation} \label{ringq}
    \begin{split}
        &|\mathring{\mathbf{q}}_a(s_1)-\mathring{\mathbf{q}}_a(s_2)|\le \tilde{C}_8e^{s_1(q-1)}y_n(p)t^{-q}(p);
    \\ &|e^{-s_1}\mathring{\mathbf{q}}_n(s_1)-e^{-s_2}\mathring{\mathbf{q}}_n(s_2)|\le \tilde{C}_8e^{s_1(q-2)} y_n(p)t^{-q}(p), \quad \text{if}\;\;q>2.
    \end{split}
\end{equation}
Hence, Line \eqref{pas12}, \eqref{qan12} and \eqref{ringq} yield that
$\partial_\mathfrak{t}\mathbf{p}_1(s,0)$ and $\partial^2_\mathfrak{t}\mathbf{p}_1(s,0)$ converge uniformly when $s\to \infty$. Hence, $u_1(p)=\lim_{s\to -\infty}\mathbf{p}_1(s,0)$ is $C^{2}$. Moreover, $u_1-y_1=O_{2}(r^{-q}y_n)$.

\medskip

 Lemma \ref{u_n} implies that 
\begin{equation}
     u_n(p)=\lim_{s\to \infty} e^{-s}y_n(\Phi(s,p))=\lim_{s\to -\infty}e^{-s} y_n(\Phi(s,p)).
\end{equation}
Consequently, combining the results from equations \eqref{pas}, \eqref{qs}, \eqref{qas} \eqref{pas12}, \eqref{qan12} and \eqref{ringq} we obtain the estimates for $u_n$ and $\tilde{u}_A$. 
\end{proof}

Next we define $u_{A}$, ${A}=2,\dots,n-1$ as follow: 
\begin{equation}
    \Delta_{\Sigma_2} u_{A}=0,\quad u_{A}\to y_{A}.
\end{equation}

\begin{lemma} \label{uA-yA}
The functions $\{u_2,\dots,u_n\}$ form an upper half-space coordinate system for $\Sigma_1$, i.e. the metric $g_{\Sigma_1}$ of $\Sigma_1$ can be written as
\begin{align*}
    g_{\Sigma_1}=u_n^{-2}(du_2^2+\dots+du_n^2).
\end{align*}
Moreover, we have the decay estimate $u_A-y_A= O_2(r^{-q}y_n)$.
\end{lemma}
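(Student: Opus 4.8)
The plan is to combine the hyperbolic structure of $\Sigma_1$ coming from Theorem~\ref{Sigma1} and Theorem~\ref{topology} with the horospherical function $u_n$ from Lemma~\ref{u_n} and a Liouville-type argument on the flat leaves $\Sigma_2$.

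\textbf{The ambient splitting.} By Theorem~\ref{Sigma1} and Theorem~\ref{topology}, $(\Sigma_1,g_{\Sigma_1})$ is isometric to $\mathbb H^{n-1}$. By Lemma~\ref{u_n} we have $\nabla^{\Sigma_1}\log u_n=N^{-1}Y=\mathbf y$, so $|\nabla^{\Sigma_1}\log u_n|_{g_{\Sigma_1}}=1$ (using $N=|Y|$ and $Y\in T\Sigma_1$) and $\nabla^{\Sigma_1}u_n=u_n\mathbf y$; the level sets of $u_n$ are the leaves $\Sigma_2$, each a complete horosphere in $\mathbb H^{n-1}$, hence isometric to flat $\mathbb R^{n-2}$ by Proposition~\ref{sigma2}. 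Writing $\Phi$ for the flow of $\mathbf y$ (the restriction to $\Sigma_1$ of the flow of $|Y|^{-1}Y$), the evolution identity $\mathbf y(g^{\Sigma_2}_{AB})=-2g^{\Sigma_2}_{AB}$ established in the proof of Theorem~\ref{Sigma1} shows that, in coordinates transported along $\Phi$, one has $g_{\Sigma_1}=u_n^{-2}(du_n^2+\gamma)$ for a $u_n$-independent flat metric $\gamma$ on the leaf space $\cong\mathbb R^{n-2}$. So it remains to prove that $\{u_2,\dots,u_{n-1}\}$ are $\gamma$-Cartesian coordinates that are constant along $\Phi$.

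\textbf{The leaves.} Fix a leaf $\Sigma_2=\{u_n=c\}$; its induced metric is the constant rescaling $c^{-2}\gamma$, so a function on $\Sigma_2$ is $\Delta_{\Sigma_2}$-harmonic iff it is $\gamma$-harmonic. The function $u_A-y_A$ tends to $0$ at infinity and solves $\Delta_{\Sigma_2}(u_A-y_A)=-\Delta_{\Sigma_2}y_A$, whose right-hand side decays because $g_{\Sigma_1}$ is asymptotic to $b$, for which $y_A$ restricts to a linear function on each model horosphere; interior elliptic estimates then give $\nabla^{\gamma}(u_A-y_A)\to0$. Since $\nabla^{\gamma}y_A$ is bounded and converges to a $\gamma$-parallel covector, $\nabla^{\gamma}u_A$ is a bounded $\gamma$-harmonic $1$-form on $\mathbb R^{n-2}$, hence $\gamma$-parallel by Liouville; thus each $u_A$ is $\gamma$-affine on every leaf, and $\{du_A\}$ equals the $\gamma$-orthonormal coframe given by the asymptotic value of $\{dy_A\}$, so that $\gamma=\sum_A du_A^2$.

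\textbf{Compatibility and decay.} It remains to show $\mathbf y(u_A)\equiv0$, i.e. that $u_A$ is constant along the normal geodesics $\Phi(\cdot,p)$; this is what makes $(u_2,\dots,u_n)$ an honest upper half-space chart. Since $\Phi_s$ maps $\{u_n=c\}$ onto $\{u_n=e^s c\}$ homothetically, $u_A\circ\Phi_s$ restricted to $\{u_n=c\}$ is again $\gamma$-affine; on the other hand $u_A|_{\{u_n=c\}}$ is $\gamma$-affine by the previous paragraph, and $y_A\circ\Phi_s-y_A\to0$ as one tends to infinity within $\{u_n=c\}$ (using $\|\,|Y|^{-1}Y-y_n\partial_{y_n}\|_{C^0}=O(r^{-q})$, so that $\Phi_\sigma$ for $\sigma\in[0,s]$ moves points in the asymptotic region only by a vanishing amount). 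Hence $u_A\circ\Phi_s$ and $u_A$ are $\gamma$-affine with the same asymptotics on $\{u_n=c\}$ and therefore coincide, so $u_A\circ\Phi_s=u_A$; combined with the ambient splitting this gives $g_{\Sigma_1}=u_n^{-2}(du_2^2+\dots+du_{n-1}^2+du_n^2)$, which is a genuine global chart since a local isometry from the complete space $\Sigma_1$ onto $\mathbb H^{n-1}$ is a covering, hence a diffeomorphism. Finally, flow invariance yields $u_A(p)=\lim_{s\to\pm\infty}y_A(\Phi(s,p))$, the limit existing because $|\partial_s y_A(\Phi(s,p))|$ is integrable in $s$ by the growth rates established in the proof of Lemma~\ref{u-y}; thus $u_A$ satisfies the hypothesis of the conditional clause of Lemma~\ref{u-y}, and $u_A-y_A=O_2(r^{-q}y_n)$ follows.

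The step I expect to be the main obstacle is the compatibility claim $\mathbf y(u_A)=0$: because $u_A$ is defined leaf by leaf as a harmonic function, the entire content of the lemma lies in matching these functions across leaves, which has to be extracted from the homothetic behaviour of the normal flow together with the uniqueness of $\gamma$-affine functions with prescribed asymptotics, all while carefully tracking the three decay scales ($O$ on $M^n$, $\mathcal O$ on $\Sigma_1$, and $\mathfrak O$ on $\Sigma_2$).
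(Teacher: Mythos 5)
Your route to $\mathbf y(u_A)=0$ is genuinely different from the paper's. The paper differentiates the elliptic equation $\Delta_{\Sigma_2}u_A=0$ in the $e_n$-direction, uses the umbilic structure $h^{\Sigma_2}=-g^{\Sigma_2}$ to show that $\nabla_{e_n}u_A$ is again $\Sigma_2$-harmonic, and then uses the decay of the boundary data ($u_{A,R}=y_A$ on $\{\varrho=R\}$ together with $e_n(y_A)=O(y_nr^{-q})$) to force it to vanish. You instead observe that $\Phi_s$ is a $\gamma$-isometry between leaves and therefore carries $\gamma$-affine functions to $\gamma$-affine functions, and then invoke uniqueness of $\gamma$-affine functions with prescribed asymptotics. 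This is a clean geometric alternative that makes the Liouville input (which the paper also uses, somewhat implicitly, when it passes from $\langle\nabla^{\Sigma_2}u_A,\nabla^{\Sigma_2}u_B\rangle=u_n^{-2}\delta_{AB}+\mathfrak O(\varrho^{1-2q})$ to the flat-chart conclusion) fully explicit, and it replaces the second-order PDE computation with the homothety of the normal flow. The ambient splitting $g_{\Sigma_1}=u_n^{-2}(du_n^2+\gamma)$, the Liouville step, and the compatibility argument all hold up.

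There is, however, a gap in your final paragraph. You write that ``flow invariance yields $u_A(p)=\lim_{s\to\pm\infty}y_A(\Phi(s,p))$'', but flow invariance only gives $u_A(p)=u_A(\Phi(s,p))$; to identify this with the limit you must additionally show $(u_A-y_A)(\Phi(s,p))\to0$ as $s\to\pm\infty$. That is decay along the $e_n$-orbit — fixed $\gamma$-position, $u_n\to\infty$ (or $0$) — which is not the same as decay at leaf-infinity on a fixed $\Sigma_2$. Your Liouville step only produces the latter (qualitative vanishing of $u_A-y_A$ as one goes to infinity within each leaf) and says nothing uniform across leaves. The paper bridges exactly this by first deriving the quantitative per-leaf estimate $u_A-y_A=(u_n+u_n^{-1})^{-q+1+a}\mathfrak O_2(\widetilde\tau^{-a})$ from the Dirichlet problem on $\Sigma_2$; since $\widetilde\tau\sim u_n^{-1}$ at a fixed $\gamma$-position, this expression is of order $u_n^{-q+1+2a}$, which tends to zero for $a<\tfrac{q-1}{2}$, and that is what licenses the limit formula. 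Without such a quantitative rate (or a substitute, e.g. showing directly that $\tilde u_A^{\pm}:=\lim_{s\to\pm\infty}y_A(\Phi(\cdot))$ is flow-invariant, leaf-harmonic, and asymptotic to $y_A$, and then applying your own uniqueness argument), the invocation of the conditional clause of Lemma \ref{u-y} is not yet justified.
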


\begin{proof}
  Let $e_1=N^{-1}X$ and $e_n=N^{-1}Y$.
  The decay estimates from Lemma \ref{NXY decay} yields 
    \begin{equation} \label{Delta yA}
        \Delta_{\Sigma_2} y_{A}=
        \Delta y_{A}-\nabla_{e_1 e_1} y_{A}-H_{\Sigma_1}e_1(y_{A})
        -\nabla^{\Sigma_1}_{e_n e_n}y_{A}
        -H_{\Sigma_2}e_n(y_{A})=O_1(y_nr^{-q}).
    \end{equation}
Using the asymptotics of $u_1$ and $u_n$, we obtain that $\{u_n^{-1}y_2,..., u_n^{-1}y_{n-1}\}$ form an 
asymptotically flat coordinate system on $\Sigma_2$. 
Thus, the radial distance function (on $\Sigma_2$) $\widetilde{\tau}$ satisfies $$\widetilde{\tau}^2\le Cu_n^{-2}(y_2^2+\cdots+ y_{n-1}^2)+C.$$ Hence, $Ct\ge y_n(\widetilde{\tau}^2+1)+y_n^{-1}$ and we have  $$y_nt^{-q}=O(\widetilde{\tau}^{-2-a}(u_n+u_n^{-1})^{-q+1+a})$$ for $a\in (0,q-1)$.
  This implies $$u_{A}-y_{{A}}= (u_n+u_n^{-1})^{-q+1+a}\mathfrak{O}_2(\widetilde{\tau}^{-a}).$$ 
  Using $|\nabla^{\Sigma_2} y_{A}|=O(u_n)$, we find that $|\nabla^{\Sigma_2} u_{A}|$ is bounded on $\Sigma_2$. 
    Moreover, since $y_n$ is bounded on $\Sigma_2$, we have 
    \[\langle\nabla^{\Sigma_2} y_{A},\nabla^{\Sigma_2}  y_{B}\rangle=u_n^{-2}\delta_{{AB}}+O(y_n^2r^{-q})=u_n^{-2}\delta_{{AB}}+\mathfrak{O}(\varrho^{-2q}).\]
    Hence, using again that $y_n$ is bounded on a fixed level set $\Sigma_2$, we obtain 
    \[\langle\nabla^{\Sigma_2} u_{A}, \nabla^{\Sigma_2}  u_{B}\rangle=\langle\nabla^{\Sigma_2} y_{A},\nabla^{\Sigma_2}  y_{B}\rangle+\mathfrak{O}( \varrho^{1-2q})=u_n^{-2}\delta_{{AB}}+\mathfrak{O}(\varrho^{1-2q}).\]  
    Consequently, $\{u_A\}$ form a  flat coordinate chart on $\Sigma_2$ and 
    \[g_{\Sigma_2}=u_n^{-2}\sum_{{A}=2}^{n-1}d u_{A}^2
.\]
Moreover, differentiating $\Delta_{\Sigma_2} u_A$ in the $e_n=|\nabla^{\Sigma_1} u_n|^{-1}\nabla^{\Sigma_1} u_n$ direction, we have 
\begin{equation*}
  \Delta_{\Sigma_2}(\nabla_{e_n} u_A)= \nabla_{e_n} (\Delta_{\Sigma_2} u_A)+2h^{\Sigma_2}_{AB}\nabla^{\Sigma_2}_{AB}u_A+(2\nabla^{\Sigma_2}_{C}h^{\Sigma_2}_{CB}-\nabla^{\Sigma_2}_Bh^{\Sigma_2}_{CC}) \nabla^{\Sigma_2}_Bu_A=0
\end{equation*}
where $h^{\Sigma_2}=g_{\Sigma_2}$ denotes the mean curvature of $\Sigma_2\subseteq \Sigma_1$.
Since $u_A$ is approximated by $u_{A,R}$ which satisfies $\Delta_{\Sigma_2}u_{A,R}=0$ with $u_{A,R}=y_A$ on $\{\varrho=R\}$, and since $e_n(y_A)=O(y_nr^{-q})$, we obtain $\nabla_{{\epsilon_n}} u_A=0$. 
Therefore, $\{u_2,\cdots,u_{n-1},u_n\}$ form an upper half-space coordinate system on $\Sigma_1$. 

\medskip

Next, we note that $u_A$ is invariant under the flow generated by $\frac{\nabla^{\Sigma_1} u_n}{|\nabla^{\Sigma_1} u_n|}$.
Hence, they satisfy 
\begin{equation*}
    u_{A}(p)=\lim_{s\to\infty} y_{A}(\Phi(s,p))=\lim_{s\to -\infty} y_{A}(\Phi(s,p)).
\end{equation*}
Combining this with Lemma \ref{u-y}, the decay estimate follows.
 \end{proof}


\subsection{Constructing the graph function}

Using the coordinate system $\{u_1,\dots, u_n\}$ constructed in the previous section, the metric can be expressed as
\begin{equation*}
    g=(f^2+|\mathbf{w}|^2u_n^{-2})du_1^2+2 w_\alpha u_n^{-2}du_\alpha du_1+
u_n^{-2}(du_2^2+\cdots +du_n^2)
\end{equation*}
where $\mathbf{w}=(w_2,\cdots,w_n)$.
Moreover, a short computation shows that its inverse is given by
\begin{equation*}
    g^{-1}=\begin{bmatrix}
        f^{-2}& -f^{-2} \mathbf{w}^T
        \\ -f^{-2} \mathbf{w} &
        u_n^{2}I_{n-1}+  \mathbf{w}\mathbf{w}^T
    \end{bmatrix}.
\end{equation*}

Let $e_\alpha= u_n \partial u_\alpha$, $e_1=f^{-1}(\partial u_1-w_\alpha \partial u_\alpha)$, and use the convention $\alpha,\beta\in \{2,\dots, n\}$, $A,B\in \{2,\dots,n-1\}$. 
Then $\{e_i\}$ forms an orthonormal frame. Furthermore, combining this with Lemma \ref{NXY decay} and Lemma
   \ref{u-y}, we obtain
   \begin{equation} \label{ei}
       \begin{split}
           e_1=&\frac{X}{|X|}=y_n^{-1}\nabla y_1+O_2(r^{-q}),
           \\e_A=&u_n^{-1}\nabla^{\Sigma_1} u_A=y_n^{-1}\nabla y_A+O_1(r^{-q}), \\e_n=&\frac{Y}{|Y|}=y_n^{-1}\nabla y_n+O_2(r^{-q}).
       \end{split}
   \end{equation}

\begin{lemma}\label{fwa}
We have the following decay and regularity estimates:
\[
    f=y_n^{-1}+O_1(r^{-q} y_n^{-1}),\quad
\quad    w_\alpha=O_1(r^{-q}),
\quad  w_{\alpha,n}=O_1(r^{-q}y_n^{-1}).
\]
\end{lemma}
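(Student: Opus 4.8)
The plan is to read off $f$ and $w_\alpha$ from the metric components in the coordinate system $(u_1,\dots,u_n)$ and then estimate those components by pulling everything back to the upper half-space coordinates $(y_1,\dots,y_n)$. From the form of $g$ in the theorem, the restriction of $g$ to a leaf $\{u_1=\mathrm{const}\}$ is exactly $u_n^{-2}(du_2^2+\cdots+du_n^2)$, which is precisely the content of Lemma \ref{uA-yA}; hence $g_{\alpha\beta}=u_n^{-2}\delta_{\alpha\beta}$ identically, and consequently $w_\alpha=u_n^{2}g_{1\alpha}$, $f^{2}=g_{11}-u_n^{2}\sum_\alpha g_{1\alpha}^{2}$, and $f=|\nabla u_1|_g^{-1}=|du_1|_g^{-1}$. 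So it suffices to control $g_{11}$, $g_{1\alpha}$ (in $u$-coordinates) and $u_n$.

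The key input is the coordinate-change estimate $u_i=y_i+O_2(r^{-q}y_n)$ for every $i$, which is Lemma \ref{u-y} together with Lemma \ref{uA-yA}, and the observation that differentiating these corrections against the reference directions gains a power of $y_n$: since $|\partial_{y_j}|_b=y_n^{-1}$, acting with $\partial_{y_j}$ on an $O_2(r^{-q}y_n)$ function yields an $O_1(r^{-q})$ function, so $\partial u_i/\partial y_j=\delta_{ij}+O(r^{-q})$ and, inverting (licit for large $r$), $\partial y_i/\partial u_j=\delta_{ij}+O(r^{-q})$. Combining this with $g_{ij}=y_n^{-2}(\delta_{ij}+O(r^{-q}))$ in $y$-coordinates (the tensorial bound $g-b=O_2(r^{-q})$ written out) and $u_n=y_n(1+O(r^{-q}))$, one computes
\begin{align*}
    g_{1\alpha}=\sum_{i,j}\frac{\partial y_i}{\partial u_1}\frac{\partial y_j}{\partial u_\alpha}\,g_{ij}
    =y_n^{-2}\big(\delta_{1\alpha}+O(r^{-q})\big)=O(r^{-q}y_n^{-2})
\end{align*}
(using $\alpha\geq 2$), and likewise $g_{11}=y_n^{-2}(1+O(r^{-q}))$. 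Hence $w_\alpha=u_n^{2}g_{1\alpha}=O(r^{-q})$ and $f^{2}=g_{11}-u_n^{2}\sum_\alpha g_{1\alpha}^{2}=y_n^{-2}(1+O(r^{-q}))$, i.e. $f=y_n^{-1}+O(r^{-q}y_n^{-1})$. The accompanying $C^1$ (in fact $C^2$) bounds follow by differentiating these identities and invoking the higher-order statements in Lemmas \ref{u-y}, \ref{uA-yA} and the weighted regularity of $g-b$; alternatively one can use the frame formulas \eqref{ei} directly, writing $w_\alpha=-f\,e_1(u_\alpha)$ with $e_1=X/|X|=y_n^{-1}\mathring\nabla y_1+O_2(r^{-q})$.

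For $w_{\alpha,n}=\partial_{u_n}w_\alpha$ I would use that $\partial_{u_n}$ is tangent to the leaf $\Sigma_1=\{u_1=\mathrm{const}\}$ (since $du_1(\partial_{u_n})=0$) and that on $(\Sigma_1,g_{\Sigma_1}=u_n^{-2}(du_2^2+\cdots+du_n^2))$ one has $\nabla^{\Sigma_1}u_n=u_n^{2}\partial_{u_n}$. As $e_n:=Y/|Y|=u_n^{-1}\nabla^{\Sigma_1}u_n$ by Lemma \ref{u_n}, this gives $\partial_{u_n}=u_n^{-1}e_n$, hence
\begin{align*}
    w_{\alpha,n}=u_n^{-1}e_n(w_\alpha).
\end{align*}
Since $|e_n|_g=1$, and $|\mathring\nabla e_n|_g=O(1)$ near infinity by the gradient identities of Theorem \ref{Thm: N,X,Y,omega} together with the boundedness of $N,X,Y,\omega$ and $k$, the bound $w_\alpha=O_2(r^{-q})$ from the previous step yields $w_{\alpha,n}=u_n^{-1}O(r^{-q})=O_1(r^{-q}y_n^{-1})$, using $u_n^{-1}=O(y_n^{-1})$.

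The main obstacle is the weight bookkeeping. One must keep careful track of the powers of $y_n$ carried by the coordinate vector fields and by the tensors $g-b$, $k$, $X$, $Y$, and in particular notice two cancellations: (i) differentiating the coordinate correction $u_i-y_i=O_2(r^{-q}y_n)$ against the short directions $\partial_{u_j}$ gains a factor $y_n^{-1}$, which is exactly what makes the \emph{a priori} dangerous cross term $w_\alpha$ come out as $O(r^{-q})$ rather than growing; and (ii) a further factor $y_n^{-1}$ is gained in $w_{\alpha,n}$ because $\partial_{u_n}$ is a short vector on $\Sigma_1$, of length $\sim u_n^{-1}$. Inverting the Jacobian, and propagating all estimates to first derivatives, are then routine.
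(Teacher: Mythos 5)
Your treatment of $f$ and of $w_A$ for $A<n$ agrees with the paper's first step (read the component off the metric, pull back through $u_i-y_i=O_2(r^{-q}y_n)$), and correctly yields $f=y_n^{-1}+O_1(r^{-q}y_n^{-1})$ and $w_A=O_1(r^{-q})$. But the bound $w_\alpha=O_2(r^{-q})$ that you then invoke is not available, and this is where the argument breaks. The formula $w_\alpha=u_n^2 g_{1\alpha}=-f^2\langle du_1,du_\alpha\rangle$ costs one derivative of the coordinate functions, and Lemma~\ref{u-y} and Lemma~\ref{uA-yA} only give $u_1-y_1$ and $u_A-y_A$ in $O_2(r^{-q}y_n)$, so the best you can extract even for $A<n$ is $O_1(r^{-q})$, never $O_2$. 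For $\alpha=n$ the situation is strictly worse: part (3) of Lemma~\ref{u-y} gives $u_n-y_n=O_1(r^{-q}y_n)$ only (the $O_2$ statement is restricted to $q>2$), so the direct route produces merely $w_n=O(r^{-q})$, a $C^0$ bound. The paper says exactly this: ``we only have $w_n=O(r^{-q})$'' and, for $q<2$, $w_{n,n}$ is a priori only a distribution. Your proposal does not acknowledge this asymmetry between $\alpha<n$ and $\alpha=n$, and the inference $w_\alpha=O_2(r^{-q})\Rightarrow w_{\alpha,n}=u_n^{-1}e_n(w_\alpha)=O_1(r^{-q}y_n^{-1})$ fails at the first link.

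What you are missing is the bracket-comparison mechanism that supplies the extra derivative. The paper computes $\langle[e_1,e_n],e_\alpha\rangle$ twice: from the spinorial frame asymptotics in \eqref{ei} it is $O_1(r^{-q})$, while the structure relations \eqref{nuealpha} give it explicitly as $-u_n^{-1}f^{-1}\delta_{n\alpha}w_n+f^{-1}w_{\alpha,n}$. Matching the two immediately gives $w_{A,n}=O_1(r^{-q}y_n^{-1})$ for $A<n$ without ever needing $w_A\in C^2$. For $\alpha=n$ the same identity is rewritten as $f^{-1}u_n\,\partial_n(u_n^{-1}w_n)=O_1(r^{-q})$, an ODE in the $u_n$-direction; integrating it with $w_n\to 0$ at infinity and $u_n=y_n+O_1(r^{-q}y_n)$ upgrades $w_n$ from $O(r^{-q})$ to $O_1(r^{-q})$ and simultaneously yields $w_{n,n}=O_1(r^{-q}y_n^{-1})$. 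This step is not ``weight bookkeeping'': it is a genuine regularity gain coming from the compatibility of two independent computations of the connection, and it has no counterpart in your proposal.
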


\begin{proof}
Lemma \ref{u-y} implies that
    $f=|d u_1|^{-1}=y_n^{-1}+O_1(r^{-q} y_n^{-1})$, and  Lemma \ref{uA-yA} implies that
    \[w_A=-f^2\langle du_1,du_A \rangle=-f^2\langle du_1,
    d(u_A-y_A+y_A)\rangle=O_1(r^{-q}).\]
   However, we only have $w_n=O(r^{-q})$ by part (3) of Lemma \ref{u-y}.  
According to equation \eqref{ei}, 
 $   \langle [e_1,e_n],e_\alpha\rangle=O_1(r^{-q})$. On the other hand, by equation \eqref{nuealpha}, we have 
 \begin{equation} \label{e1en}
     \langle [e_1,e_n],e_\alpha\rangle=-u_n^{-1}f^{-1}\delta_{n\alpha}w_n+f^{-1}w_{\alpha,n}=O_1(r^{-q}). 
 \end{equation}
 Therefore, $w_{A,n}=O_1(r^{-q}y_n^{-1})$. 
 When $\alpha=n$ and $q<2$, $w_{n,n}$ is only defined as a distribution.
 Thus, equation \eqref{e1en} holds in a distributional sense. 
 In this case, the regularity can be improved via the identity
 \begin{equation*}
     \langle [e_1,e_n],e_n\rangle
     =f^{-1}u_n\partial_n (u_n^{-1}w_n)=O_1(r^{-q}).
 \end{equation*}
Combining this with $u_n=y_n+O_1(y_nr^{-q})$, we conclude that $w_n=O_1(r^{-q})$ and $w_{n,n}=O_1(r^{-q}y^{-1}_n)$.
   
\end{proof}

\begin{lemma} \label{Nf}
  In the above coordinate system $\{u_1,\cdots,u_n\}$ we have  $$N=f^{-1}u_n^{-2}.$$
  Furthermore, the regularity of $f$ can be improved.
  More precisely, we obtain $$f=y_n^{-1}+\mathcal{O}_2(r^{-q}y_n^{-1}).$$
\end{lemma}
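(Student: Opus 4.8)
The plan is to compute $N=N(\psi)=|\psi|^2$ directly in the constructed coordinates using the gradient equations of Theorem \ref{Thm: N,X,Y,omega}, and then bootstrap the regularity of $f$ via an elliptic equation. First I would verify the formula $N=f^{-1}u_n^{-2}$. Recall from Proposition \ref{Prop: zeta = 0}, Corollary \ref{Cor: constancy} and Proposition \ref{Prop: |X| = N=Y} that $N=|X|=|Y|$ everywhere, $X\perp Y$, and $N\omega = X\wedge Y$; moreover by construction $e_1=X/|X|=Xu_n^2/N$ and $e_n=Y/|Y|=Yu_n^2/N$, so that $X = Nu_n^{-2}(\partial_{u_1}-w_\alpha\partial_{u_\alpha})\cdot f/(\text{norm})$—more precisely $X^\flat$, as a one-form, is $N$ times the unit covector dual to $e_1$, i.e. $X^\flat = N f\, du_1$ (using that $e_1^\flat = f\,du_1$ from the explicit form of the metric). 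Now apply $\nabla_i X_j = -k_{ij}N+\omega_{ij}$: taking the antisymmetric part recovers $dX^\flat = 2\omega$, but the key is the ``divergence-type'' information. Actually the cleanest route: from $\nabla_i N = -k_{ij}X_j - Y_i$ together with $\nabla_i X_j+\nabla_j X_i=-2k_{ij}N$, one gets $\nabla_{e_1}N = \langle \nabla N, e_1\rangle = -k(X,e_1)-\langle Y,e_1\rangle = -k(X,e_1)$ since $X\perp Y$; and $k(X,e_1) = N^{-1}k(X,X)$. Meanwhile $X^\flat = Nf\,du_1$ being such that $N=|X|$ means $|Nf\,du_1|_g = N$, i.e. $f|du_1|_g=1$, consistent with $f=|du_1|^{-1}$; this is automatic. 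To nail $N=f^{-1}u_n^{-2}$ I would instead observe that $X^\flat/N$ is the unit covector $f\,du_1$, so $X^\flat = N f\,du_1$, and compare with $X = y_n^{-2}\nabla y_1 + (\text{lower order})$ asymptotically together with the closed-ness structure: since $dX^\flat = 2\omega = 2N^{-1}X\wedge Y$ and $Y^\flat/N = e_n^\flt = u_n^{-1}du_n = d\log u_n$, the two-form $d(Nf\,du_1) = 2 N^{-1}(Nf\,du_1)\wedge (N d\log u_n) = 2Nf\, du_1\wedge d\log u_n$; writing $Nf = G$, this reads $dG\wedge du_1 = 2G\, du_1\wedge d\log u_n$, hence $dG\wedge du_1 = -2G\,d\log u_n\wedge du_1$, i.e. $d(G u_n^2)\wedge du_1 = 0$, so $Gu_n^2$ is a function of $u_1$ alone on each leaf; the asymptotics $N\to y_n^{-2}$, $f\to y_n^{-1}$, $u_n\to y_n$ force $Gu_n^2 = N f u_n^2 \to 1$, and since it is constant along the leaf and the leaves foliate $M$, with $u_1$ constant on each leaf, we conclude $Nfu_n^2\equiv 1$, i.e. $N = f^{-1}u_n^{-2}$.

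Second, for the improved regularity of $f$: we now know $f = N^{-1}u_n^{-2}$, and $u_n$ is already controlled ($u_n = y_n + O_2(r^{-q}y_n)$ by Lemma \ref{u-y}, Lemma \ref{uA-yA} giving $u_n$ as an upper-half-space coordinate with $C^2$ control), so it suffices to show $N$ has an extra derivative of regularity, i.e. $N = y_n^{-2} + \mathcal{O}_2(r^{-q}y_n^{-2})$. This follows because $N$, being a quadratic expression in $\psi$, solves a second-order elliptic equation obtained by commuting the first-order gradient system \eqref{NXYw}: differentiating $\nabla_i N = -k_{ij}X_j - Y_i$ and substituting $\nabla_i X_j = -k_{ij}N+\omega_{ij}$, $\nabla_i Y_j = -g_{ij}N + k_{il}\omega_{jl}$ yields $\Delta N = -\nabla_i(k_{ij}X_j) + nN - \text{tr}(k\cdot\omega\text{-terms})$, a Schauder-type elliptic equation for $N$ with coefficients and inhomogeneity in $C^{1,a}_{-q}$ (using $k\in C^{s-1,a}_{-q}$, $X,Y,\omega$ bounded, and the already-established $C^1$ decay $N-y_n^{-2} = \mathcal O_1(r^{-q}y_n^{-2})$ from Lemma \ref{NXY decay}). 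Comparing with the background (where $N_b = y_n^{-2}$ solves the model equation exactly), the difference $N - y_n^{-2}$ solves an elliptic equation with $\mathcal O_{0,a}(r^{-q}y_n^{-2})$ right-hand side, so interior Schauder estimates on unit balls, rescaled in the asymptotic region, upgrade $\mathcal O_{1,a}$ to $\mathcal O_{2,a}$. Then $f = N^{-1}u_n^{-2} = y_n^{-1} + \mathcal O_2(r^{-q}y_n^{-1})$ follows by the quotient and the $C^2$ control on $u_n$.

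I would present the proof in two paragraphs mirroring the two assertions: first the algebraic identity via the Frobenius/closedness argument above (this is short once one has $N\omega = X\wedge Y$, $Y^\flat = N\,d\log u_n$, and $X^\flat = Nf\,du_1$), then the bootstrap. The main obstacle I anticipate is the second part: one must be careful that the elliptic equation for $N$ genuinely has coefficients with enough decay and regularity — in particular the terms involving $\nabla k$ only give $C^{s-2,a}_{-q}$, so one should phrase the argument in the weak/$W^{2,p}$ formulation first (Calderón–Zygmund, as in the proof of Theorem \ref{pmt maerten}) and then use Schauder only where the data permit, or else work directly with the spinor $\psi$ (which has better regularity, $\psi-\psi^\infty = \mathcal O_2(r^{-q}y_n^{-1/2})$ by Lemma \ref{NXY decay} and elliptic estimates for $\widetilde{\slashed D}$) and deduce the regularity of the quadratic quantity $N=|\psi|^2$ from that of $\psi$. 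The latter is cleaner: since $\psi-\psi^\infty = \mathcal O_2(r^{-q}y_n^{-1/2})$ and $\psi^\infty$ is smooth with $|\psi^\infty|^2 = y_n^{-1}$ (here using the type I null normalization of Proposition \ref{infinty}), we get $N = |\psi|^2 = y_n^{-1} + \mathcal O_2(r^{-q}y_n^{-1})$ directly — wait, one must reconcile the normalizations: in the $(y_i)$-coordinates of Proposition \ref{infinty} one has $N(\psi^\infty) = y_n^{-1}$, whereas here $N = f^{-1}u_n^{-2}\sim y_n^{-1}\cdot y_n^{-2}$? This discrepancy must be resolved by tracking the conformal factor carefully; most likely the statement intends the normalization consistent with $N = f^{-1}u_n^{-2}$, and the correct asymptotic reference is $N \to u_n^{-1}\cdot(\text{correction})$ — I would state and prove exactly $N=f^{-1}u_n^{-2}$ with whatever normalization makes $Nfu_n^2\equiv 1$ drop out of the Frobenius argument, and then the regularity claim $f = y_n^{-1}+\mathcal O_2(r^{-q}y_n^{-1})$ is just $f = (Nu_n^2)^{-1}$ with $N$ and $u_n$ both $C^2$-controlled, which is the cleanest packaging and avoids re-deriving an elliptic equation at all.
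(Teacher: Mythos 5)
Your proof is essentially correct and takes a route equivalent to the paper's, just packaged differently. The paper computes $\partial_\alpha\log N = \partial_\alpha\log(f^{-1}u_n^{-2})$ for $\alpha\ge 2$ using the gradient equation $\nabla_iN = -k_{i1}N - N\delta_{ni}$ and the Christoffel symbols of Lemma \ref{lemma christoffel symbols}; you derive the same fact via $dX^\flat=2\omega=2N^{-1}X^\flat\wedge Y^\flat$ with $X^\flat=Nf\,du_1$, arriving at $d(Nfu_n^2)\wedge du_1=0$. Either way one gets that $Nfu_n^2$ depends only on $u_1$, and the asymptotics force $Nfu_n^2\equiv 1$. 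Your exterior-calculus formulation is arguably cleaner, and the small imprecision that $Y^\flat/N$ also has a $w_nu_n^{-1}\,du_1$ component is harmless because it dies in the wedge with $du_1$.

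The ``discrepancy'' you agonize over at the end is a sign error in your own head: $f\sim y_n^{-1}$ means $f^{-1}\sim y_n$, so $f^{-1}u_n^{-2}\sim y_n\cdot y_n^{-2}=y_n^{-1}$, which is exactly $N\sim y_n^{-1}$ from Lemma \ref{NXY decay}. There is no conformal factor to chase down; the normalizations are already consistent. Relatedly, the line ``$N\to y_n^{-2}$'' in your Frobenius argument is wrong for the same reason; the correct statement is $N\to y_n^{-1}$, $f\to y_n^{-1}$, $u_n\to y_n$, giving $Nfu_n^2\to 1$ as needed.

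For the regularity claim, the elliptic-equation detour is unnecessary and, as you yourself noticed, problematic because of the limited regularity of $\nabla k$. The argument you land on at the very end is exactly the paper's: once $N=f^{-1}u_n^{-2}$ is established, $f=(Nu_n^2)^{-1}$ inherits $C^2$ control directly from $N-y_n^{-1}=O_2(r^{-q}y_n^{-1})$ (Lemma \ref{NXY decay}) and $u_n-y_n=O_2(r^{-q}y_n)$ (Lemma \ref{u-y}). You should cut the elliptic bootstrap entirely and present this one-line quotient argument as the second half.
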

\begin{proof}
Recall that $X=Ne_1$ and $Y=Ne_n$. 
Theorem \ref{Thm: N,X,Y,omega} yields
 \begin{align*}
    \nabla_i N=&-k_{ij}X_j-Y_i
    =-k_{i1}N-N\delta_{ni}
    \\ 
    \end{align*}
    and 
    \begin{align}\label{kialpha}
    \nabla_i (N^{-1}X)_j=&-k_{ij}+\delta_{i1}\delta_{jn}-\delta_{j1}\delta_{in}+(k_{i1}+\delta_{ni})\delta_{j1}=-k_{i\alpha}\delta_{j\alpha}+\delta_{i1}\delta_{jn}.
\end{align}
With the help of Lemma \ref{lemma christoffel symbols}, we obtain
\begin{equation*}
    k_{1\alpha}=\delta_{\alpha n}-\langle\nabla_1e_1, e_\alpha\rangle= \delta_{\alpha n}-u_n\partial_\alpha(f^{-1}) f.
\end{equation*}
Moreover,
\begin{equation*}
     u_n\partial_\alpha N=\nabla_\alpha N=-k_{\alpha 1}N-N\delta_{n\alpha}=Nu_n \partial_\alpha(f^{-1})f-2N\delta_{\alpha n}.
 \end{equation*}
Therefore, $ \partial_\alpha \log N= \partial_\alpha \log( f^{-1}u_n^{-2})$. 
Since $N=y_n^{-1}+O_2(r^{-q} y_n^{-1})$ by Lemma \ref{NXY decay} and $f^{-1}u_n^{-2}=y_n^{-1}+O_1(r^{-q}y_n^{-1})$ by Lemma \ref{fwa}, the first claim follows. 
Finally, the regularity estimates for $f$ are implied by the estimates for $u_n$ and $N$ from Lemma \ref{u-y} and Lemma \ref{NXY decay}.
\end{proof}

The next important lemma allows us to extract the graph function.
Compare this to \cite[Lemma 6.5]{HZ24}.

\begin{lemma} \label{w estimate} 
     There exists a function $w$ such that  $w_\alpha=\partial_\alpha w$.
     Moreover, we have the decay and regularity estimates
     \begin{equation} \label{wrq}
         w=O_1(u_n\hat{t}^{-q}),\qquad
         w_\alpha=\mathcal{O}_2( \hat{t}^{-q}),\qquad
         e_1(w)=\mathcal{O}_2(u_n \hat{t}^{-q}).
     \end{equation}
     where $\hat{t}=\frac{u_2^2+\cdots+u_n^2+1}{2u_n}$.
\end{lemma}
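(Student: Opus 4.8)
The plan is to exploit the geometric structure established so far: the one–form $X$ satisfies $dX = 2\omega$ with $N\omega = X\wedge Y$, so $X\wedge dX = 0$ and the distribution $X^\perp$ integrates to the leaves $\Sigma_1$; on each leaf $Y$ in turn integrates to the horospheres $\Sigma_2$. In the coordinates $\{u_1,\dots,u_n\}$ of the previous section, $w_\alpha = -f^2\langle du_1, du_\alpha\rangle$ encodes the failure of $du_1$ to be proportional to $X^\flat$, i.e. the failure of $X$ to be closed \emph{as a form on the full manifold} rather than just up to the $X$–direction. The first step is to show the one–form $\mathbf{w} := w_\alpha\, du_\alpha$ (restricted to each $\{u_1 = \mathrm{const}\}$ leaf $\Sigma_1$) is closed. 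For this I would compute $dX$ in the coordinate frame: since $dX = 2\omega$ and, on $\Sigma_1$, $\omega_{\alpha\beta} = 0$ while $\omega_{1\alpha} = X_\alpha/\cdots$ — more precisely, using Theorem \ref{Thm: N,X,Y,omega} together with $N\omega = X\wedge Y$ and $N = |X| = |Y|$, one gets $\omega = N^{-1} X\wedge Y$, so $\omega$ restricted to directions tangent to $\Sigma_1$ vanishes. Translating $X = N f\, du_1 + (\text{correction involving }w_\alpha)$ into the statement $d_{\Sigma_1}\mathbf{w} = 0$, Poincaré's lemma on $\Sigma_1 \cong \mathbb{R}^{n-1}$ (Theorem \ref{topology}) then produces $w$ with $\partial_\alpha w = w_\alpha$.

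Next I would pin down the decay and regularity. The key input is Lemma \ref{fwa} ($w_\alpha = O_1(r^{-q})$, $w_{\alpha,n} = O_1(r^{-q}y_n^{-1})$), Lemma \ref{u-y} ($u_i = y_i + O_2(r^{-q}y_n)$ and the sharpened versions for $q>2$), and Lemma \ref{uA-yA}. Since $w$ is obtained by integrating $w_\alpha\,du_\alpha$ along paths in $\Sigma_1$, and $\hat t = \tfrac{u_2^2+\cdots+u_n^2+1}{2u_n}$ is comparable to the $\Sigma_1$–distance $\cosh\tau$ used in Lemma \ref{rho-tau}/Lemma \ref{y1-u1}, I would run the same barrier/integration argument as in the proof of Lemma \ref{y1-u1} and Lemma \ref{u-y}: integrate $|d_{\Sigma_1} w| \le |\mathbf{w}|_{g_{\Sigma_1}} \le C u_n \hat t^{-q}$ radially outward from the base curve $\Gamma\cap\Sigma_1$, splitting into the regimes $u_n \lessgtr \pmb{\rho}$ exactly as before, to get $w = O_1(u_n\hat t^{-q})$ after normalizing $w\to 0$ at infinity. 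The estimates $w_\alpha = \mathcal{O}_2(\hat t^{-q})$ and $e_1(w) = \mathcal{O}_2(u_n\hat t^{-q})$ follow by re-expressing the $M^n$–decay $O_i(r^{-q}\cdot)$ of $w_\alpha$ and of $e_1$ in terms of the $\Sigma_1$–radial variable via Lemma \ref{rho-tau} ($r \ge \sinh\rho \simeq \hat t$ up to bounded factors) and propagating two derivatives, using that the $\mathcal{O}$–notation does not lose decay under differentiation. For $e_1(w)$ one also uses $e_1 \approx y_n^{-1}\nabla y_1 + O_2(r^{-q})$ from \eqref{ei} and that $w$ is constant along the $\Sigma_1$ leaves in the $u_1$–direction only up to the already-controlled corrections.

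The main obstacle I anticipate is the same one flagged in the text around Lemma \ref{u-y}: the coordinate $u_n$ (hence $\hat t$) is only $C^2$, and the decay rate for $w_n$ — unlike $w_A$ — is a priori only $O_1(r^{-q})$ rather than $O_2$, because $w_{n,n}$ is a distribution when $q < 2$; getting two clean derivatives of $w$ requires the same trick used in Lemma \ref{fwa} to improve $w_n$, namely rewriting the relevant bracket $\langle[e_1,e_n],e_n\rangle = f^{-1}u_n\partial_n(u_n^{-1}w_n)$ and absorbing the bad term. A secondary technical point is making the comparison $\hat t \simeq \cosh\tau$ rigorous uniformly, so that the two-sided barrier estimates of Lemma \ref{u-y} transfer verbatim; this is routine given Lemma \ref{rho-tau} and Lemma \ref{rho-tau}'s consequence that $|\rho - \tau|$ is bounded, but it must be checked that $\mathbf{w}$ being intrinsic to $\Sigma_1$ means the integration paths stay on a fixed leaf, which is exactly what the foliation $\{\Sigma_1\}$ from Corollary \ref{Cor: existence u1} guarantees. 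Once closedness and these decay rates are in hand, the lemma is proved; it then feeds directly into writing $g$ in AdS–Brinkmann form with $L = f^2 u_n^2 + |\mathbf{w}|^2$ and verifying the Siklos equation.
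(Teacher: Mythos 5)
Your closedness argument is the step that fails, and it is the crux of the lemma. You propose to deduce $d_{\Sigma_1}\mathbf{w}=0$ from $dX^\flat=2\omega$ together with $\omega_{\alpha\beta}=0$ on $\Sigma_1$. But from the block form of $g$ one computes $X^\flat=g(X,\cdot)=u_n^{-2}\,du_1$ \emph{exactly}, with no $w_\alpha$ in it: the $w_\alpha$ live in the off-diagonal blocks $g_{1\alpha}$ and $g^{1\alpha}=-f^{-2}w_\alpha$, not in the covector dual to $X=Ne_1$. Consequently $dX^\flat=-2u_n^{-3}\,du_n\wedge du_1$ is fully determined and equals $2\omega=2N^{-1}X\wedge Y$ for \emph{any} choice of $\mathbf{w}$; the identity $dX^\flat=2\omega$ carries no information about $w_{\alpha,\beta}-w_{\beta,\alpha}$. (Your assertion that $X^\flat=Nf\,du_1+\text{(correction in }w_\alpha)$ and that $w_\alpha$ encodes the failure of $du_1$ to be proportional to $X^\flat$ is also off: $du_1=u_n^{2}X^\flat$ with no correction.)

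The mechanism the paper actually uses is different and more delicate. It is the $Y$-equation $\nabla_iY_j=-g_{ij}N+k_{il}\omega_{jl}$, applied to $e_n=N^{-1}Y$, that forces $\langle\nabla_{e_1}e_n,e_A\rangle=0$; via Lemma~\ref{lemma christoffel symbols} this yields only the single relation $w_{A,n}=w_{n,A}$. That alone does not close $\mathbf{w}$ on $\Sigma_1$. To get $w_{A,B}=w_{B,A}$, the paper first upgrades $w_n$ to $C^2(\Sigma_1)$ (this is where the distributional issue with $w_{n,n}$, which you correctly flagged, must be resolved via Lemmas~\ref{fwa} and \ref{Nf}), then runs an integration-by-parts argument against test functions to show that $w_{A,B}-w_{B,A}$ is independent of $u_n$, and finally kills it using its decay $O(y_n^{-1}r^{-q})$ at infinity. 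None of this is recoverable from $dX^\flat=2\omega$. Your outline of the decay estimates --- re-run the barrier argument of Lemma~\ref{y1-u1} on $\Sigma_1$ with $\hat t=\cosh\hat\tau$, normalize $w\to 0$ at infinity, and use the $k_{A\alpha}$, $k_{nn}$ formulas to extract a second derivative of $w$ --- does match the paper and would go through once closedness is in hand, but it cannot substitute for the missing argument.
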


\begin{proof}
We will first show that $w$ is closed.
Since $e_n=N^{-1}Y$, equation \eqref{NXYw} implies that
 \begin{equation*}
    \nabla_{e_i} e_n=\nabla _{e_i}(N^{-1}Y)
    =-e_i+k_{il}(\delta_{j1}\delta_{ln}-\delta_{l1}\delta_{jn})e_j+(k_{i1}+\delta_{in})e_n
    =k_{in}e_1+\delta_{in}e_n-e_i.
\end{equation*}
For $i=1$, this becomes in view of Lemma \ref{lemma christoffel symbols}
\begin{equation} \label{wAn}
   0=\langle \nabla_{e_1}e_n, e_A\rangle=\frac{1}{2}f^{-1}(w_{A,n}-w_{n,A}).
\end{equation}
Combining this with $w_{\alpha,n}\in C^1(M^n)$ from Lemma \ref{Nf}, we obtain $w_n\in C^2(\Sigma_1)$.  
This in particular implies $$w_{A,nB}=w_{n,AB}=w_{B,nA}.$$
Let $\pmb{\xi}$ be a smooth compact supported function.
Recall $w_A$ and $w_B\in C^1(M)$ from Lemma \ref{Nf}.
Then we obtain for $A\neq B$ 
\begin{align*}
    &\int_{\mathbb{R}^3} \pmb{\xi}w_{A,nB}du_Adu_Bdu_n= \int_{\mathbb{R}^3} (\partial_{u_n}\partial_{u_B}\pmb{\xi})w_{A}du_Adu_Bdu_n=-\int_{\mathbb{R}^3} (\partial_{u_n}\pmb{\xi})w_{A,B}du_Adu_Bdu_n,
    \\ &\int_{\mathbb{R}^3} \pmb{\xi}w_{B,nA}du_Adu_Bdu_n= \int_{\mathbb{R}^3} (\partial_{u_n}\partial_{u_A}\pmb{\xi})w_{B}du_Adu_Bdu_n=-\int_{\mathbb{R}^3} (\partial_{u_n}\pmb{\xi})w_{B,A}du_Adu_Bdu_n.
\end{align*}
 Therefore, we have for any smooth compactly supported function $\pmb{\xi}$,
 \begin{equation*}
     \int_{\mathbb{R}^3}(\partial_{u_n}\pmb{\xi})(w_{A,B}-w_{B,A})du_Adu_Bdu_n=0.
 \end{equation*}
Hence, $w_{A,B}-w_{B,A}$ does not depend on $u_n$. 
Using the asymptotic estimates from Lemma \ref{fwa} and the estimates $|\partial_A|=u_n^{-1}=O(y_n^{-1})$ which follows from Lemma \ref{u-y}(3),  we obtain
\begin{equation*}
   w_{A,B}-w_{B,A}=O(y_n^{-1}r^{-q})
\end{equation*}
which vanishes at infinity.
Therefore, $w_{A,B}=w_{B,A}$. 
Consequently, and with the help of equation \eqref{wAn}, there exists a function $w$ with $w_\alpha=\partial_{u_\alpha} w$. 
    
\medskip

Let $\hat{\tau}:=d_{\Sigma_1}((u_1,0,\cdots,0,1),\cdot)$.
Then we have $\hat{t}=\cosh \hat{\tau}$.
Recall that $u_i-y_i\to 0$ and $t=O(e^{\hat{\tau}})$.
Since $w_\alpha=O_1(r^{-q})$ by Lemma \ref{fwa}, it follows that $\nabla^{\Sigma_1}w=O_1(e^{-q\hat{\tau}}u_n)$. Replacing $y_1$ by $w$ and $\tau$ by $\hat{\tau}$ in Lemma \ref{y1-u1}, and arguing as in Lemma \ref{y1-u1}, we obtain $w=\mathcal{O}_2(e^{-q\hat{\tau}}u_n)$ after imposing $w\to 0$ at infinity on $\Sigma_1$.
    
\medskip

Let $\gamma$ be the integral curve of $\nabla^{\Sigma_1}\hat{\tau}$ connecting $p$ and infinity. Then
\[w(p)=-\int_{\gamma} \langle \nabla^{\Sigma_1}w, \nabla^{\Sigma_1}\hat{\tau}\rangle.\]
Since $|\partial u_1|=|fe_1+w_\alpha u_n^{-1} e_\alpha|=O(y_n^{-1})$, there exists a constant $C>0$ such that
\begin{equation*}
    |\partial_{u_1}w|\le \int_\gamma (|\nabla_{\partial_{u_1}}\nabla^{\Sigma_1} w|+|\nabla^{\Sigma_1}w||\nabla_{\partial_{u_1}}\nabla^{\Sigma_1} \hat{\tau}|)\le C\int_\gamma e^{-q\hat{\tau}}=\frac{C}{q}e^{-q\hat{\tau}(p)}.
\end{equation*}
Using $e_1=f^{-1}(\partial u_1-w_\alpha \partial u_\alpha)$, $w_\alpha=O_1(e^{-q\hat{\tau}})$ and the first equation in \eqref{nuealpha}, we obtain $e_1(w)=\mathcal{O}_1(u_ne^{-q\hat{\tau}})$. 
Thus, we have $w=O_1(u_ne^{-q\hat{\tau}})$.
Combining this with $\hat{t}=\cosh\hat{\tau}$ yields the first equation in \eqref{wrq}.

\medskip

Furthermore, equation \eqref{kialpha}, Lemma \ref{lemma christoffel symbols} and the decay assumption $k=O_1(r^{-q})$ imply that
\begin{align*}
    k_{A\alpha}=&-\langle \nabla_{A} e_1, e_\alpha\rangle=\frac{1}{2}f^{-1}(w_{A,\alpha}+w_{\alpha,A})-f^{-1}w_n u_n^{-1}\delta_{A\alpha}=O_1(r^{-q}),
\\ 
    k_{nn}=&-\langle \nabla_{n}e_1,e_n\rangle=f^{-1}w_{n,n}-f^{-1}u_n^{-1}w_n=O_1(r^{-q}).
\end{align*}
Therefore, using  $w_\alpha=O_1(r^{-q})$ from Lemma \ref{fwa}, we obtain $\partial_{u_\alpha}\partial_{u_\beta}w=O_1(r^{-q}u_n^{-1})$ which yields the second equation in \eqref{wrq}.
Combining this with $e_1(w)=\mathcal{O}_1(u_ne^{-q\hat{\tau}})$ gives $e_1(w)=\mathcal{O}_2(u_ne^{-q\hat{\tau}})$.
Since $\hat{t}=\cosh \hat{\tau}$, the claim follows.

\end{proof}

\begin{theorem} \label{kXL}
    $(M^n,g,k)$ isometrically embeds into a Siklos wave spacetime $(\overline M^{n+1},\overline g)$ with second fundamental form $k$ satisfying 
    $$k_{ij}=-\frac{1}{2}fu_n^{2}(\nabla_i X_j+\nabla_j X_i),\quad X=u_n^{-2}\nabla u_1$$
    where the wave profile function $L$ is given by
    $$L=u_n^2 f^2-2w_1+\sum_{\beta=2}^{n}|w_\beta|^2.$$
    Moreover, $(M^n,g,k)$ is the $(t=-w)-$graph over the $(t=0)$-slice in $(\overline M^{n+1},\overline g)$.
\end{theorem}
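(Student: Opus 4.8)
The strategy is to assemble the coordinate system $\{u_1,\dots,u_n\}$ and graph function $w$ constructed in the preceding lemmas into a bona fide Killing development, and then to verify that this development is a Siklos wave by checking the Siklos equation via the dominant energy condition. First I would record the Killing development metric. By Lemma \ref{lemma mu J X N} we have $\mu=|J|$ and $\mu X=-NJ$, so in particular $(\overline M^{n+1},\overline g)$ constructed as in \eqref{eq: Killing development} with lapse $N$ and shift $X$ contains $(M^n,g)$ isometrically with second fundamental form $k$. The point is to show this development is isometric to the explicit AdS-Brinkmann form. Using $N=f^{-1}u_n^{-2}$ from Lemma \ref{Nf} and $X=u_n^{-2}\mathring\nabla u_1 = u_n^{-2}\nabla u_1$ (which follows because $e_1=X/|X|$ and $e_1=f^{-1}(\partial_{u_1}-w_\alpha\partial_{u_\alpha})$, so $X^i\partial_i = N f^{-1}(\partial_{u_1}-w_\alpha\partial_{u_\alpha}) = u_n^{-2}(\partial_{u_1}-w_\alpha\partial_{u_\alpha})$, and lowering the index with $g$ whose inverse is displayed in the previous subsection gives $X = u_n^{-2}du_1$), I would substitute into \eqref{eq: Killing development}. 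Setting $t$ to be the time coordinate of the development and performing the change of variables $t\mapsto t - w$ (so that the $t=0$ slice becomes the $t=-w$ graph), a direct computation using $w_\alpha=\partial_\alpha w$ should produce exactly
\begin{align*}
    \overline g = \frac{1}{u_n^2}\Big(2du_1\,dt + L\,du_1^2 + \delta_{\alpha\beta}du_\alpha du_\beta\Big),
    \qquad L = u_n^2 f^2 - 2w_1 + \sum_{\beta=2}^n |w_\beta|^2,
\end{align*}
where $w_1 = e_1(w)f^{-1}u_n + w_\alpha(\cdots)$ is the appropriate combination; the decay estimates from Lemma \ref{w estimate}, Lemma \ref{fwa}, and Lemma \ref{Nf} guarantee $L = 1 + O_2(r^{-q})$, matching the hypothesis of Theorem \ref{Thm Siklos wave charges}.

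Second, having the metric in AdS-Brinkmann form, I would verify that $L$ genuinely satisfies the Siklos inequality $(\Delta_{\mathbb H^{n-1}} - 2u_n\partial_n)L \le 0$, equivalently that the stress-energy tensor of Proposition \ref{Prop: stress energy siklos wave} is that of a physically reasonable source. This is where the dominant energy condition $\mu\ge|J|$ on $(M^n,g,k)$ is used: by Proposition \ref{Prop: stress energy siklos wave}, the energy density of the $t=0$ slice of the putative Siklos wave is $L^{-1}(u_n\partial_n L - \tfrac12\Delta_{\mathbb H^{n-1}}L)$, and since $(M^n,g,k)$ embeds isometrically with the given $k$, this must coincide with the $\mu$ of $(M^n,g,k)$, which is $\ge 0$. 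Positivity of $u_n\partial_n L - \tfrac12\Delta_{\mathbb H^{n-1}}L$ is exactly the Siklos inequality (after multiplying through, as in the remark following the definition of Siklos waves). The formula $k_{ij} = -\tfrac12 f u_n^2(\nabla_i X_j + \nabla_j X_i)$ is then just a restatement of $\nabla_i X_j + \nabla_j X_i = -2k_{ij}N$ from Theorem \ref{Thm: N,X,Y,omega} together with $N = f^{-1}u_n^{-2}$.

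Third, I would confirm the graph statement: the $(t=0)$-slice of $(\overline M^{n+1},\overline g)$ in the original development coordinates is, after the substitution $t\mapsto t-w$, the locus $t = -w(u_1,\dots,u_n)$, and the induced metric and second fundamental form on this graph are precisely $g$ and $k$. This follows because the change of variables is a diffeomorphism fixing the spatial slice pointwise (it only shears the time coordinate), so the geometry of the slice is unchanged; what changes is only its description as a graph.

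The main obstacle will be the bookkeeping in the change-of-variables computation that turns \eqref{eq: Killing development} into AdS-Brinkmann form: one must track how the shift vector $X = u_n^{-2}du_1$ and the decomposition $g = (f^2 + |\mathbf w|^2 u_n^{-2})du_1^2 + 2w_\alpha u_n^{-2}du_\alpha du_1 + u_n^{-2}\sum du_\alpha^2$ interact, verify that the cross terms $du_\alpha dt$ all cancel (using $w_\alpha = \partial_\alpha w$ crucially here — this is why closedness of $\mathbf w$ was established in Lemma \ref{w estimate}), and identify the coefficient of $du_1^2$ with the claimed $L$. A secondary subtlety is checking that the coordinate $t$ of the Killing development, which a priori is only defined on $M\times\mathbb R$, matches the Brinkmann $t$ globally and that $u_n > 0$ throughout, so that $\overline M^{n+1} = \mathbb R^n\times\mathbb R_{>0}$ as required by the definition of a Siklos wave; this uses the global structure of $\Sigma_1$ as hyperbolic space (Theorem \ref{Sigma1}, Theorem \ref{topology}) together with the fact that the integral curves of $N^{-1}Y$ sweep out all of $M$.
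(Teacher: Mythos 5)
Your proposal matches the paper's argument: form the Killing development with lapse $N=f^{-1}u_n^{-2}$ and shift $X^\flat=u_n^{-2}du_1$, observe (using $|X|=N$) that it collapses to $\overline g=2u_n^{-2}\,d\tau\,du_1+g$, then substitute $t=-\tau-w$ so that closedness of $\mathbf w$ cancels the $du_\alpha\,du_1$ cross terms and the $du_1^2$ coefficient becomes $L=u_n^2f^2-2w_1+|\mathbf w|^2$ with $w_1=\partial_{u_1}w$ (your intermediate formula for $w_1$ is garbled — it is simply $\partial_{u_1}w$, which in frame terms is $f\,e_1(w)+|\mathbf w|^2$, not $e_1(w)f^{-1}u_n+\cdots$). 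The one conceptual point you add that the paper omits from this proof is the verification of the Siklos inequality $(\Delta_{\mathbb H^{n-1}}-2u_n\partial_n)L\le0$ via the dominant energy condition; the paper defers this to Section~\ref{S:asymptotic analysis} and Lemma~\ref{Je1} in Appendix A, where $-\mu=\tfrac12 f^{-2}u_n^{-2}(\Delta_{\Sigma_1}L-2u_n\partial_n L)$ is computed, but you are right that it is logically needed before one may call $(\overline M^{n+1},\overline g)$ a Siklos wave.
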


\begin{proof}
Recall that
    \begin{align*}
        g=(f^2+|\mathbf{w}|^2u_n^{-2})du_1^2+2 \sum_{\alpha=2}^n w_\alpha u_n^{-2}du_\alpha du_1+
u_n^{-2}(du_2^2+\cdots +du_n^2).
    \end{align*}
    The equations of $k$ and $X$ follow from equation  \ref{NXYw} and Lemma \ref{Nf}.

    \medskip
    
    Let $\overline M^{n+1}=M^n\times\mathbb R$.
    Define the standard Killing development metric
    \begin{equation*}
     \overline{g}=-N^2d\tau^2+g_{ij}(dx^i+X^id\tau)(dx^j+X^jd\tau)=2u_n^{-2}d\tau du_1+g
\end{equation*}
A short computation yields 
\begin{align*}
    \overline g=&2u_n^{-2}d\tau du_1+(f^2+|\mathbf{w}|^2u_n^{-2})du_1^2+2 \sum_{\alpha=2}^n w_\alpha u_n^{-2}du_\alpha du_1+
u_n^{-2}(du_2^2+\cdots +du_n^2)\\
=&\frac{1}{u_n^2}( 2du_1dt +\delta_{\alpha\beta}du_{\alpha}du_\beta+ Ldu_1^{2})
\end{align*}
where $t=-\tau-w$ and $L=u_n^2 f^2-2w_1+\sum_{\beta=2}^{n}|w_\beta|^2$. 
\end{proof}


\section{Asymptotic analysis}\label{S:asymptotic analysis}

We use the notation $w_1=\partial_{u_1}w$ on $\Sigma_1$, and recall that $\Delta_{\Sigma_1}=u_n^{2}\partial_{\alpha\alpha}-(n-3)u_n\partial_{u_n}$.
Hence, Lemma \ref{Je1} implies that
\begin{equation}
    0\ge -\mu= \frac{1}{2}f^{-2}u_n^{-2}(\Delta_{\Sigma_1} L-2u_n\partial_{u_n} L)
\end{equation}
where $L=u_n^2 f^2-2w_1+\sum_{\beta=2}^{n}|w_\beta|^2$.
Moreover, Lemma \ref{fwa} and Equation \eqref{wrq} show  $L=1+O(\hat{t}^{-q})$. 
Let $\mathcal{L}:=u_n^{-1}(L-1)$ and $h:=\Delta_{\Sigma_1}\mathcal{L}-(n-1)\mathcal{L}$. Combined with $\mu=O(r^{-q})$, the estimates of  $f$ and $u_n$ by Lemma \ref{fwa} and \ref{u-y},   then $\mathcal{L}$ and $h$ satisfy 
\begin{equation} \label{hL}
h=-2f^2u_n\mu \le 0,\quad h=O(\hat{t}^{-q}u_n^{-1}),\quad
\mathcal{L}=O(\hat{t}^{-q}u_n^{-1})
.
\end{equation}
To obtain estimates for the function $\mathcal{L}$, we first need an estimate for the corresponding Green's function:

\begin{lemma} \label{Green}
    Let $G(p,x)$ be the Green's function of the operator $Q:=\Delta_{\Sigma_1}-(n-1)$.
    Then $G(p,0)=O(\hat{t}(p)^{1-n})$. 
\end{lemma}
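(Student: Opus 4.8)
The plan is to analyze the operator $Q = \Delta_{\Sigma_1} - (n-1)$ on the leaf $\Sigma_1$, which by Theorem \ref{Sigma1} is isometric to hyperbolic space $\mathbb{H}^{n-1}$ (possibly a quotient, but the relevant leaf through the reference curve $\Gamma$ carries the upper half-space coordinates $\{u_2,\dots,u_n\}$ from Lemma \ref{uA-yA}, so it is globally $\mathbb{H}^{n-1}$). In these coordinates the metric is $g_{\Sigma_1} = u_n^{-2}(du_2^2 + \cdots + du_n^2)$, and the point $p=0$ should be understood as the fixed base point $(u_1,0,\dots,0,1)$ relative to which $\hat t = \cosh\hat\tau$ measures distance, so $\hat t(p)$ is comparable to $e^{\hat\tau(p)}$ where $\hat\tau$ is the geodesic distance in $\Sigma_1$. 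The key observation is that $n-1$ is \emph{not} an eigenvalue obstruction: on $\mathbb{H}^{m}$ with $m=n-1$, the bottom of the $L^2$-spectrum of $-\Delta$ is $(m-1)^2/4 = (n-2)^2/4$, and the operator $-\Delta + (n-1)$ is positive, so a unique positive Green's function with the right decay exists.

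First I would recall (or re-derive) the explicit radial Green's function for $-\Delta_{\mathbb{H}^{m}} + c$ with $c = n-1 > 0$. Writing it as a function of the geodesic distance $\hat\tau$ from the pole, the Green's function $G(\hat\tau)$ solves the ODE
\begin{equation*}
    G'' + (m-1)\coth(\hat\tau)\, G' - c\, G = 0 \quad \text{for } \hat\tau > 0,
\end{equation*}
with the normalization fixed by the singularity $G \sim c_m \hat\tau^{2-m}$ as $\hat\tau \to 0$ and decay as $\hat\tau \to \infty$. The decaying solution behaves like $e^{-\nu \hat\tau}$ for large $\hat\tau$, where $\nu$ is the positive root of $\nu^2 - (m-1)\nu - c = 0$ modified by lower-order effects; more precisely, substituting $G = e^{-\lambda\hat\tau}$ into the leading behavior gives $\lambda^2 - (m-1)\lambda - c = 0$ asymptotically, so $\lambda = \tfrac{m-1}{2} + \sqrt{\tfrac{(m-1)^2}{4} + c}$. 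With $m = n-1$ and $c = n-1$ one computes $\tfrac{(m-1)^2}{4} + c = \tfrac{(n-2)^2}{4} + (n-1) = \tfrac{n^2}{4}$, so $\sqrt{\cdots} = \tfrac{n}{2}$ and hence $\lambda = \tfrac{n-2}{2} + \tfrac{n}{2} = n-1$. Therefore the decaying solution satisfies $G(\hat\tau) = O(e^{-(n-1)\hat\tau})$, and since $\hat t = \cosh\hat\tau \sim \tfrac12 e^{\hat\tau}$, this translates into $G(p,0) = O(\hat t(p)^{1-n})$, as claimed.

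The concrete steps are therefore: (1) identify $\Sigma_1$ through $\Gamma$ with $\mathbb{H}^{n-1}$ via Theorem \ref{Sigma1} and Lemma \ref{uA-yA}, and set up radial coordinates centered at $(u_1,0,\dots,0,1)$; (2) write down the radial ODE for the Green's function of $Q = \Delta_{\Sigma_1} - (n-1)$ and solve it near $\hat\tau=0$ to fix the normalization matching a genuine fundamental solution (the singularity is the standard Laplacian one since $c$ is a zeroth-order term); (3) analyze the ODE at $\hat\tau \to \infty$, show the decaying solution exists and matches, and extract the exponential rate $\lambda = n-1$ from the indicial/asymptotic equation $\lambda^2 - (n-2)\lambda - (n-1) = 0$, whose positive root is exactly $n-1$ (note $(\lambda - (n-1))(\lambda+1) = \lambda^2 - (n-2)\lambda - (n-1)$); (4) convert $e^{-(n-1)\hat\tau}$ to $\hat t^{1-n}$ using $\hat t = \cosh\hat\tau$. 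The main obstacle — and it is a mild one — is step (3): one must verify that the full ODE (not just its leading symbol) has its decaying solution genuinely behaving like $e^{-(n-1)\hat\tau}$ without an extra polynomial factor, which follows because $\lambda = n-1$ is a simple root of the indicial equation at infinity and the subleading terms in $(m-1)\coth\hat\tau = (m-1) + O(e^{-2\hat\tau})$ are exponentially small, so a standard perturbation/Gronwall argument on $[\hat\tau_0, \infty)$ closes the estimate. Alternatively, one can bypass the ODE entirely and invoke the known explicit hypergeometric formula for Green's functions on $\mathbb{H}^{m}$ (e.g. via the resolvent kernel of the Laplacian), reading off the decay rate directly; either route gives the stated bound.
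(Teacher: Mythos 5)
Your proposal is correct but takes a genuinely different route from the paper's proof. The paper works directly in the variable $\hat t = \cosh\hat\tau$: it makes an explicit power-series Ansatz $G_0 = \sum_{i\ge 0} A_i \hat t^{\,-2i+1-n}$ built from the observation that $Q\hat t^{\,a} = (a-1)(a+n-1)\hat t^{\,a} - a(a-1)\hat t^{\,a-2}$, shows via the recursion that the coefficients grow like $i^{(n-5)/2}$ (so the series converges for $\hat t>1$ and the leading term gives the decay $O(\hat t^{\,1-n})$), and then uses Stirling's formula to match the Taylor coefficients of $G_0$ against those of the known local singularity $(\hat t^{\,2}-1)^{(3-n)/2}$ (respectively $\ln(\hat t^{\,2}-1)$ when $n=3$), concluding that $G_0$ is the Green's function up to scaling. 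Your argument instead works in the geodesic distance $\hat\tau$, writes the radial ODE $G'' + (n-2)\coth\hat\tau\, G' - (n-1)G=0$, and reads the decay rate off the indicial equation at infinity $\lambda^2-(n-2)\lambda-(n-1)=0=(\lambda-(n-1))(\lambda+1)$, giving the decaying exponent $\lambda=n-1$ directly — a computation that is crisper and more conceptually transparent than the coefficient asymptotics in the power-series route. The trade-off is that the asymptotic-ODE approach requires you to actually justify that the decaying solution at infinity has no polynomial correction to $e^{-(n-1)\hat\tau}$ and that it matches the solution singular at $\hat\tau=0$; you correctly flag this and sketch a Gronwall/perturbation argument (valid since the correction $\coth\hat\tau - 1 = O(e^{-2\hat\tau})$ is exponentially small and $\lambda=n-1$ is a simple root), but the paper's Frobenius-type construction handles both the behavior near the origin and at infinity in one stroke without needing a separate matching step. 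Either route is rigorous; the paper's is more self-contained at the price of a somewhat opaque coefficient computation, while yours isolates the clean algebraic fact $(\lambda-(n-1))(\lambda+1)=\lambda^2-(n-2)\lambda-(n-1)$ that makes the exponent $n-1$ inevitable.
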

\begin{proof}
  Note that $Q \hat{t}^a=(a-1)(n+a-1)\hat{t}^a-a(a-1)\hat{t}^{a-2}$. We make the following Ansatz for the Green's function:
  \begin{equation*}
      G_0=\sum_{i=0}^\infty A_i \hat{t}^{-2i+1-n},\quad A_0=1, \quad A_{i+1}=\frac{(2i+n-1)(2i+n)}{(2i+n+2)(2i+2)}A_i.
  \end{equation*}
  This implies $$A_{i+1}=[(1+i^{-1})^\frac{n-5}{2}+O(i^{-2})]A_i.$$ 
  Consequently, there exists a constant $C_0>0$ such that    $\lim_{i\to \infty} i^{-\frac{n-5}{2}}A_{i}=C_0$. 
Hence, the power series converges to $G_0$, when $\hat{t}>1$. In particular, when $\hat{t}\to \infty$, $G_0=O(\hat{t}^{1-n})$. 

\medskip

Next, we note that the Green's function around the origin satisfies $G(p,0)=O(d(p,0)^{3-n})=O((\hat{t}^2-1)^{\frac{3-n}{2}})$ when $n> 3$.
Moreover, $G(p,0)=O(\ln(d(p,0)))=O(\ln(\hat{t}^2-1))$ when $n=3$.  
The Taylor series in both cases are
  \begin{align*}
      &(1-\hat{t}^{-2})^\frac{3-n}{2}=\sum_{i=0}^\infty B_i \hat{t}^{-2i}, \quad B_i=\frac{\left(\frac{n-3}{2}\right)\left(\frac{n-3}{2}+1\right)\cdots\left(\frac{n-3}{2}+i-1\right)}{i!}, \quad \text{ when } n\ge 4,
   \\& \ln(1-\hat{t}^{-2})= \sum_{i=0}^\infty \frac{1}{i+1}\hat{t}^{-2i} \quad \text{ when } n=3.  
  \end{align*}
  Applying Stirling's formula,  $\lim_{i\to \infty}i^{\frac{5-n}{2}}B_i=[\Gamma(\frac{n-3}{2})]^{-1}$, we obtain that there exists a constant $C_1>0$ such that $\lim_{i\to \infty}B_i^{-1}A_i=C_1$. 
  Thus, the blow-up rate of $G_0$ at origin is equal to the Green's function, i.e., $G_0$ is a Green's function of $Q$ up to a scaling.
\end{proof}

Note that the convolution $G*h(p)$ satisfies $$Q(G*h)(x)=\left(\Delta_{\Sigma_1}-(n-1)\right)\int_{\Sigma_1}G(x,y)h(y)dy=h(x).$$

\begin{lemma}\label{G*h} Let $h$ be a non-positive function on $\Sigma_1$. Then the following holds:
\begin{enumerate}
    \item  If $h$ does not vanish at some point, there exists a constant $C_h$ depending on $h$ such that
   $G*h\le -C_h \hat{t}^{1-n}$.
   \item If $q\in(\frac{n}{2},n-2]$ and $h$ is compactly supported, we have $G*h=\mathcal{O}_3(\hat{t}^{1-n})$.
\end{enumerate}
 
\end{lemma}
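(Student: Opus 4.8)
\textbf{Proof proposal for Lemma \ref{G*h}.}

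The plan is to treat the two statements separately, using the Green's-function estimate $G(p,0)=O(\hat t(p)^{1-n})$ from Lemma \ref{Green} together with the fact that $G$ is a \emph{negative} kernel (since $Q=\Delta_{\Sigma_1}-(n-1)$ is a shifted Laplacian with no zero modes on $\mathbb H^{n-1}$, its Green's function has a definite sign; concretely the series $G_0=\sum_i A_i\hat t^{-2i+1-n}$ with $A_0=1$ and all $A_i>0$ is positive, and with the usual sign convention $QG=-\delta$ one has $-G>0$, so convolving a nonpositive $h$ against it produces a nonpositive function). For part (1), I would fix a point $p_0$ where $h(p_0)<0$ and a small geodesic ball $B_r(p_0)\subset\Sigma_1$ on which $h\le -c<0$. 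Then
\begin{align*}
    -(G*h)(p)=\int_{\Sigma_1}(-G(p,y))(-h(y))\,dy\ge c\int_{B_r(p_0)}(-G(p,y))\,dy.
\end{align*}
For $p$ far from $p_0$, the distance $d_{\Sigma_1}(p,y)$ is comparable to $d_{\Sigma_1}(p,p_0)$ uniformly for $y\in B_r(p_0)$, and by Lemma \ref{Green} (applied with base point $p_0$ in place of $0$, which is legitimate since $\Sigma_1$ is homogeneous) $-G(p,y)\gtrsim \hat t(p)^{1-n}$ on that ball — here one uses that $\hat t$ is comparable to $\cosh d_{\Sigma_1}(\cdot,p_0)$ up to bounded multiplicative error, cf. the discussion around Lemmas \ref{rhorq}--\ref{rho-tau}. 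This yields $(G*h)(p)\le -C_h\hat t(p)^{1-n}$ with $C_h = c\,|B_r(p_0)|\cdot(\text{const})$, giving the claim.

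For part (2), assume $h$ is supported in a fixed compact set $K$ and $q\in(\tfrac n2,\,n-2]$. Since $h\in L^\infty$ with compact support, elliptic regularity for $Q$ (which is uniformly elliptic with smooth coefficients, the metric on $\Sigma_1$ being hyperbolic by Theorem \ref{Sigma1}) gives that $G*h$ is smooth and, away from $K$, solves the homogeneous equation $Q(G*h)=0$. The pointwise bound $G*h(p)=O(\hat t(p)^{1-n})$ follows exactly as in part (1) but now with an upper bound in absolute value: $|G*h(p)|\le \|h\|_{L^\infty}\int_K|G(p,y)|\,dy\le C\hat t(p)^{1-n}$ using Lemma \ref{Green}. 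To upgrade this to the $\mathcal O_3$ statement — i.e. that derivatives up to order $3$ in the intrinsic frame $\{e_2,\dots,e_n\}$ on $\Sigma_1$ also decay like $\hat t^{1-n}$ without loss — I would use interior Schauder estimates on unit geodesic balls $B_1(p)$ for $\hat t(p)$ large: on such a ball $G*h$ is $Q$-harmonic, $\hat t$ is comparable to $\hat t(p)$ throughout, and the Schauder estimate bounds $\|G*h\|_{C^{3,\alpha}(B_{1/2}(p))}$ by $\|G*h\|_{C^0(B_1(p))}\le C\hat t(p)^{1-n}$. The key point that the decay rate is preserved under differentiation (rather than deteriorating as for $\mathfrak O_{i,\alpha}$) is precisely because $\Sigma_1$ is asymptotically hyperbolic with bounded geometry, so unit balls near infinity are uniformly controlled — this is the same mechanism recorded in the discussion of $\mathcal O_{i,\alpha}$ in Section \ref{S:main}.

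The main obstacle I anticipate is making the lower bound in part (1) genuinely quantitative and uniform: one needs that $-G(p,y)$ is bounded \emph{below} by a constant times $\hat t(p)^{1-n}$, not merely above, for all $y$ in a fixed ball and all large $p$. This requires either a matching lower bound in Lemma \ref{Green} (the series $G_0$ clearly has leading term exactly $\hat t^{1-n}$, so $G_0(p)\sim \hat t(p)^{1-n}$, but one must confirm the true Green's function $G$ equals $G_0$ up to a positive constant, which Lemma \ref{Green} asserts via the matching blow-up rates at the origin) together with a Harnack inequality for the positive $Q$-superharmonic function $-G(\cdot,y)$ to compare its values across the fixed ball $B_r(p_0)$ — and one should check the exponent constraint $q\le n-2$ is used consistently, since it guarantees $\hat t^{-q}u_n^{-1}\in L^1(\Sigma_1)$ (needed for $G*h$ to be well-defined when $h$ is merely $O(\hat t^{-q}u_n^{-1})$ rather than compactly supported, which is the regime relevant for the subsequent application). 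The remaining integral estimates and the Schauder bootstrap are routine.
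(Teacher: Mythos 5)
Your overall strategy in both parts matches the paper's: for part~(1), localize to a ball where $h\le -c<0$ and bound the convolution below/above using the definite sign of $G$; for part~(2), bound $|G*h|$ pointwise by $\|h\|_{L^\infty}$ times $\int_{\operatorname{supp}h}|G(p,\cdot)|$ and then upgrade to the $C^3$ estimate by elliptic/Schauder theory on the homogeneous equation away from the support. That is exactly what the paper does.

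However, your sign bookkeeping in part~(1) is internally inconsistent and, read literally, would give the wrong conclusion. You assert simultaneously that ``the series $G_0=\sum A_i\hat t^{-2i+1-n}$ with all $A_i>0$ is positive'' and that ``$-G>0$''; since $G$ is a positive multiple of $G_0$, these contradict each other. With $G<0$ and $h\le 0$ one has $Gh\ge 0$, hence $G*h\ge 0$, which is the opposite of the claimed $G*h\le -C_h\hat t^{1-n}$. Your displayed line $-(G*h)(p)=\int(-G)(-h)$ is also an arithmetic slip: $\int(-G)(-h)=\int Gh = (G*h)(p)$, so as written it asserts $G*h=0$. The correct convention (matching the positive power series, and matching the fact that $-Q=-\Delta_{\Sigma_1}+(n-1)$ is positive definite on $\mathbb H^{n-1}$) is $G>0$ with $QG=-\delta$; then $G*h=\int Gh\le -c\int_B G\le -C_h\hat t^{1-n}$, which is the paper's computation. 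Your two sign errors happen to cancel and land on the right inequality, but the argument as stated does not survive scrutiny.

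Your concern about needing a \emph{lower} bound on $G$ for $y$ in a fixed ball is legitimate, since Lemma~\ref{Green} only records the upper bound $G(p,0)=O(\hat t^{1-n})$; but Harnack is unnecessary machinery here. The power series itself furnishes the lower bound: since all $A_i>0$, one has $G_0(\hat t)\ge \hat t^{1-n}$ for $\hat t>1$, and $G=c G_0$ with $c>0$, so $G(p,y)\ge c\bigl(\cosh d_{\Sigma_1}(p,y)\bigr)^{1-n}\gtrsim \hat t(p)^{1-n}$ uniformly for $y$ in a fixed compact set. Your final speculation that the restriction $q\in(\tfrac n2,n-2]$ is used for integrability of $G*h$ is off the mark for this lemma, since $h$ is compactly supported in part~(2); the constraint enters only in the application in Theorem~\ref{theorem decay rates}.
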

\begin{proof}
(1) Suppose that $h\le-c<0$ on $B_{x_0}(r_0)$. Then there exists $C>0$ such that
    \begin{equation*}
        \int_{\Sigma_1} G(p,x)h(x) dx\le -c\int_{B_{x_0}(r_0)} G(p,x) dx\le -C\hat{t}(p)^{1-n},
    \end{equation*}
    which proves the first claim.

    \medskip

(2) Recall that  $\hat{\tau}(p)= d_{\Sigma_1}(p, (u_1(p),0,\cdots,0,1))$, as defined in Lemma \ref{w estimate}, and $\hat{t}=\cosh \hat{\tau}$. We also have $G(p,x)=G_0(\cosh d_{\Sigma_1}(p,x))$ and $G_0(\hat{t})=O(\hat{t}^{1-n})$ by Lemma \ref{Green}.
Since $h$ is compactly supported,  we have
\begin{equation*}
    \begin{split}
\int_{\Sigma_1}|G(p,x)h(x)|dx\le& (\sup |h|)\int_{\text{supp} h} G_0(\cosh d_{\Sigma_1}(x,p))dx
\\ \le & C(\sup |h|)\int_{\text{supp} h}[\cosh d_{\Sigma_1}(x,p)]^{1-n}dx
\\ = & O(\hat{t}^{1-n}(p)),
    \end{split}
\end{equation*}
where $C>0$ is a constant such that $G_0(\hat{t})\le C\hat{t}^{1-n}$ for $\hat{t}$ sufficiently large. Finally, the $C^3$ estimate for $G*h$ follows from applying standard elliptic estimates to the equation $Q(G*h)=h$.
\end{proof}

The following result completes the proofs of Theorem \ref{Thm Intro rigidity} and the first part of Theorem \ref{Thm Intro examples}.
 
\begin{theorem}\label{theorem decay rates}
The following holds:
    \begin{enumerate}
        \item If $q\in (n-2,n]$, there are no non-trivial asymptotically AdS  metrics with $C^{2,\alpha}_{-q}$ decay rate with $\mathcal H(N(\psi^\infty), X(\psi^\infty))=0$ for a type I null spinor $\psi^\infty$.
        \item When $q\in (\frac{n}{2},n-2]$, there exist such non-trivial examples. 
    \end{enumerate}
\end{theorem}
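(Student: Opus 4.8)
The plan is to push everything through the wave profile on each leaf of the codimension‑two foliation. By Theorem~\ref{kXL}, a data set as in~(1) isometrically embeds into a Siklos wave $(\overline M^{n+1},\overline g)$ with profile $L$, and $\mathcal L:=u_n^{-1}(L-1)$ satisfies, on each leaf $\Sigma_1\cong\mathbb H^{n-1}$, the equation $Q\mathcal L=h:=-2f^2u_n\mu\le 0$ with $Q=\Delta_{\Sigma_1}-(n-1)$, together with the weighted decay $h,\mathcal L=O(\hat t^{-q}u_n^{-1})$ from~\eqref{hL}; since $q>n/2>1$, this forces $\mathcal L$ and $h$ to vanish at both components of the conformal boundary of $\Sigma_1$, uniformly in $u_1$, and also $\mathcal L\to 0$ as $u_1\to\pm\infty$. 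Because $L\equiv 1$ turns $\overline g$ into the AdS metric (complete the square $v=2t+u_1$), it suffices to prove $\mathcal L\equiv 0$ on every leaf.

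I would first record the maximum principle for $Q$ on $\mathbb H^{n-1}$: since the zeroth‑order coefficient $-(n-1)$ is negative, a bounded solution of $Qv=0$ vanishing at the conformal boundary is $\equiv 0$, and a bounded function with $Q\mathcal L\le 0$ vanishing at the conformal boundary satisfies $\mathcal L\ge 0$; hence $\mathcal L\ge 0$. Working on a fixed leaf, I would then form $G*h$ with the positive Green's function $G$ of $Q$ constructed in Lemma~\ref{Green}. This is exactly where the hypothesis $q>n-2$ is needed: combining the asymptotics $G=O(\hat t^{1-n})$ with the $u_n$‑weighted bound on $h$, the integral $\int_{\Sigma_1}|G(p,\cdot)h|\,dV$ converges, $Q(G*h)=h$, and $G*h$ decays at the conformal boundary, the borderline being governed by the region $u_n\to\infty$ on $\Sigma_1$. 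Then $v:=\mathcal L-G*h$ solves $Qv=0$, is bounded, and $\to 0$ at the conformal boundary, so $v\equiv 0$, giving $\mathcal L=G*h\le 0$ since $G\ge 0$ and $h\le 0$. Together with $\mathcal L\ge 0$ this yields $\mathcal L\equiv 0$, so $L\equiv 1$ and $(M^n,g,k)\subset$ AdS, which together with the timelike/type~II case proves~(1) and completes Theorem~\ref{Thm Intro rigidity}.

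For~(2), take $n\ge 5$ so that $(\tfrac n2,n-2]\neq\emptyset$. Fix a nonzero $\chi\in C_c^\infty(\mathbb R)$ with $\chi\ge 0$ and a nonzero $h_0\le 0$ in $C_c^\infty(\mathbb H^{n-1})$, put $\Psi:=G*h_0$, $\mathcal L(u_1,\dots,u_n):=\chi(u_1)\Psi(u_2,\dots,u_n)$, and $L:=1+u_n\mathcal L$. Since the Siklos operator only involves transverse derivatives, $(\Delta_{\mathbb H^{n-1}}-2u_n\partial_n)L=u_n\chi(u_1)Q\Psi=u_n\chi(u_1)h_0\le 0$, so $\overline g=u_n^{-2}(2du_1dt+Ldu_1^2+\delta_{\alpha\beta}du_\alpha du_\beta)$ is a Siklos wave, carrying a null imaginary Killing spinor by Proposition~\ref{prop siklos admit spinors}. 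By Lemma~\ref{G*h}, $\Psi=\mathcal O_3(\hat t^{1-n})$ and $\Psi\le -C\hat t^{1-n}<0$, so $\mathcal L\not\equiv 0$; its $\{t=0\}$‑slice $(M^n,g,k)$ has metric $g=u_n^{-2}(Ldu_1^2+du_2^2+\cdots+du_n^2)$, hence $|g-b|_b=|L-1|=O(\hat t^{2-n})=O(r^{-(n-2)})$, so it is a $C^{2,\alpha}_{-q}$‑asymptotically AdS initial data set with $q=n-2\in(\tfrac n2,n-2]$. It satisfies the dominant energy condition with $\mu=-u_nh/(2L)\ge 0$ (and in fact $\mu=|J|$) by Proposition~\ref{Prop: stress energy siklos wave}, and $\mathcal H(N(\psi^\infty),X(\psi^\infty))=0$ for the type~I null spinor of Proposition~\ref{infinty} by Theorem~\ref{Thm Siklos wave charges} together with the identity $N(\psi^\infty)\mu+\langle J,X(\psi^\infty)\rangle=0$; since $h=\chi h_0\not\equiv 0$ it is not vacuum, hence not AdS. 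Slower rates $q\in(\tfrac n2,n-2)$ are obtained analogously by replacing $\Psi$ with a $u_n$‑weighted, non‑radial transverse profile of decay $O(\hat t^{-q}u_n^{-1})$ and verifying the Siklos inequality directly.

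The main obstacle is the borderline Green's‑function analysis in part~(1): proving that $G*h$ exists, represents $\mathcal L$ up to a $Q$‑harmonic remainder, and decays at the conformal boundary exactly when $q>n-2$. This requires carefully matching the two‑sided decay of $h$ — of size $\hat t^{1-q}$ near the $\{u_n\to 0\}$ component and $\hat t^{-q-1}$ near the $\{u_n\to\infty\}$ component of $\partial_\infty\Sigma_1$ — against the Green's‑function weight over the full leaf $\Sigma_1\cong\mathbb H^{n-1}$ (the $\{u_n\to\infty\}$ region is the one that sets the threshold), and checking uniformity in the wave parameter $u_1$ so that the vanishing $\mathcal L\equiv 0$ propagates to all of $M^n$. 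In part~(2) the analogous technical point is to confirm that the compact $u_1$‑support of $\chi$ supplies the required decay of $k$, $\mu$ and $J$ in the null direction so that Definition~\ref{Def:AH} is met.
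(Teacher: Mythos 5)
Your part~(1) argument has a genuine gap, and the clearest symptom is that it proves far too much. You derive $\mathcal L\ge 0$ from the minimum principle for $Q=\Delta_{\Sigma_1}-(n-1)$, then conclude $\mathcal L=G*h\le 0$ from ``$G\ge 0$ and $h\le 0$'', hence $\mathcal L\equiv 0$. That chain nowhere uses $q>n-2$ quantitatively: the convergence of $\int_{\Sigma_1}|G(p,\cdot)h|$ and the decay of $G*h$ at $\partial_\infty\Sigma_1$ both hold for any $q>1$, simply because $|G|\lesssim(\cosh d)^{1-n}$ beats the $e^{(n-2)s}$ volume growth of $\mathbb H^{n-1}$. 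So if your chain were correct it would give $\mathcal L\equiv 0$ for every $q\in(\tfrac n2,n]$ and flatly contradict your own part~(2). The root cause is the sign of the Green's function: the power series $G_0=\sum A_i\hat t^{1-n-2i}$ from Lemma~\ref{Green} is positive and $Q$-harmonic away from the pole, but its distributional singularity is $QG_0=-c_0\,\delta$ with $c_0>0$ (near the pole $G_0\sim C_1 d^{3-n}$ with $C_1>0$, and $\Delta d^{2-m}$ in dimension $m=n-1\ge 3$ is a \emph{negative} multiple of $\delta$, just as for the Euclidean fundamental solution $-c_m|x|^{2-m}$). The Green's function normalised by $QG=\delta$ is therefore negative, so $G*h\ge 0$ whenever $h\le 0$ --- perfectly consistent with, not opposed to, the bound $\mathcal L\ge 0$, and no triviality follows from signs alone. (The paper's Lemma~\ref{G*h} carries the same cosmetic sign slip, but its \emph{magnitude} estimate is the piece that actually does the work.)

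The hypothesis $q>n-2$ enters only at the very last step, as a decay-rate incompatibility, and this is what your proposal is missing. With the sign fixed, the maximum-principle identity $\mathcal L=G*h$ that you correctly set up, together with the Green's-function lower bound, gives: if $h\le -c<0$ on some ball then $\mathcal L(p)\ge C\hat t(p)^{1-n}$ uniformly as $p\to\partial_\infty\Sigma_1$, i.e.\ $\mathcal L$ decays \emph{no faster} than $\hat t^{1-n}$. On the other hand the a priori estimate $\mathcal L=O(\hat t^{-q}u_n^{-1})$ from~\eqref{hL} reads $\mathcal L=O(\hat t^{-q-1})$ in the $u_n\to\infty$ direction (where $\hat t\sim u_n/2$), and $-q-1<1-n$ is exactly the condition $q>n-2$. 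This is the contradiction that forces $h\equiv 0$, hence $L\equiv 1$ and the rigidity. Your part~(2) faithfully reproduces the paper's construction and is fine; one small simplification is that the single example with $L-1=O(\hat t^{2-n})$ already lies in $C^{2,\alpha}_{-q}$ for every $q\le n-2$, so the closing ``slower rate'' variant is unnecessary.
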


\begin{proof}
(1) According to Lemma \ref{G*h}, if $h\not\equiv0$ somewhere, there exists $C>0$ such that $\mathcal{L}\le -C\hat{t}^{1-n}$. 
However, when $q>n-2$, this contradicts $\mathcal{L}=\mathcal{O}(\hat{t}^{-q}u_n^{-1})$ by Equation \eqref{hL}. Therefore, $h\equiv 0$, i.e., $\mathcal{L}=0$ and $L=1$. Thus, $(M^n,g,k)$ isometrically embeds into AdS spacetime with second fundamental form $k$ which proves the first claim.

\medskip

(2) Let $\bar{u}=(u_2,\cdots, u_n)$, $\mathbf{u}=(u_1,\cdots, u_n)$ and $\bar{h}(\bar{u})$ be a nontrivial, smooth, nonpositive compactly supported function. Let $\eta(u_1)$ be a smooth compactly supported function and $h(\mathbf{u})=\eta(u_1)\bar{h}(\bar{u})$. 
In view of Lemma \ref{G*h}, we have 
$\mathcal{L}:=G* h= 
\mathcal{O}_3(\hat{t}^{1-n})$.
Since $\eta(u_1)$ is compactly supported, we obtain 
\begin{equation} \label{decay of L}
    \mathcal{L}=O_3(t^{1-n})\quad \text{and}\quad L=1+u_nL=O_3(t^{2-n}).
\end{equation}
Consider the graph function $w=0$ and define $f$ via $u_n^2f^2=L$. 
Then the metric $g$ and the second fundamental form $k$ of the initial data set $(M^n,g,k)$ contained in the Siklos wave spacetime $(\overline M^{n+1},\overline g)$ corresponding to the wave profile function $L$,  are given by Theorem \ref{kXL}
    \begin{equation*}
        g=f^2du_1^2+u_n^{-2}(du_2^2+\cdots+du_n^2)\quad \text{and}
        \quad k_{ij}=-\frac{1}{2}fu_n^{2}(\nabla_i X_j+\nabla_j X_i), 
    \end{equation*}
    where $X=u_n^{-2}\nabla u_1$.
    Hence, $(M^n,g,k)$ is asymptotically AdS of order $n-2$ by Equation \ref{decay of L}.
\end{proof}

Recall that asymptotically AdS initial data sets have a decay rate $q> \frac n2$.

\begin{corollary}
The following holds:
\begin{enumerate}
  \item  For $n\le 4$ there are no non-trivial Siklos wave spacetimes containing an asymptotically AdS initial data set $(M^n,g,k)$.
   \item  For $n>5$ there are such non-trivial examples.
   \end{enumerate}
\end{corollary}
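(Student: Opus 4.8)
The plan is to combine the dimension thresholds already established in Theorem \ref{theorem decay rates} with the fact that asymptotically AdS initial data sets, by Definition \ref{Def:AH}, always have decay rate $q>\tfrac n2$. First I would invoke Theorem \ref{Thm Intro rigidity}, Theorem \ref{Thm Intro determining causal type}, and Proposition \ref{Prop boost}: after applying a boost we may assume $\mathcal P=0$, and in dimensions $n=3,4$ every two-form is simple (for $n=3$ this is automatic; for $n=4$ one still needs the hypothesis, but recall the corollary concerns Siklos waves, which by Theorem \ref{Thm Siklos wave charges} and the surrounding remarks are precisely the null/type I case, so a type I null spinor is already present). Thus in the Siklos-wave setting we are automatically in case (1) of Theorem \ref{Thm Intro rigidity}, equipped with a type I null mass-minimizing spinor, and the machinery of Sections \ref{S:main}--\ref{S:asymptotic analysis} applies.

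For part (1), the key point is that $q>\tfrac n2$ together with $n\le 4$ forces $q>n-2$: indeed $n-2 < \tfrac n2 \iff n<4$, and for $n=4$ the threshold is $q>2=n-2$, which matches $q>\tfrac n2 = 2$ on the nose only in the closed sense, so one must check the borderline $q=2$ case carefully. Here I would argue that the decay rate in Definition \ref{Def:AH} is an open condition $q\in(\tfrac n2,n]$, so for $n=4$ we genuinely have $q>2=n-2$; for $n=3$ we have $q>\tfrac32>1=n-2$. In either case Theorem \ref{theorem decay rates}(1) applies and shows $L\equiv 1$, so by Theorem \ref{kXL} (or directly from the Example in Section \ref{SS: Siklos} with $L$ constant) the Siklos wave is the AdS spacetime, i.e. trivial. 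This proves part (1).

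For part (2), I would directly appeal to Theorem \ref{theorem decay rates}(2): for $n\ge 5$ the interval $(\tfrac n2, n-2]$ is nonempty (since $n-2>\tfrac n2 \iff n>4$), so the explicit construction there — taking $h=\eta(u_1)\bar h(\bar u)$ with $\bar h$ a nontrivial nonpositive compactly supported bump, setting $\mathcal L = G*h = \mathcal O_3(\hat t^{\,1-n})$, and then $L = 1+u_n\mathcal L$ with graph function $w=0$ — yields a nontrivial Siklos wave spacetime containing an asymptotically AdS initial data set of order $n-2 \in (\tfrac n2, n]$, satisfying the dominant energy condition by \eqref{hL}. I would remark that the statement as phrased says "$n>5$" but the sharp threshold is $n\ge5$; assuming the intended reading is $n\ge 5$, Theorem \ref{theorem decay rates}(2) is exactly the required statement and there is nothing further to prove.

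The main obstacle is the boundary bookkeeping at $n=4$, $q=2$: one must be certain that the strict inequality $q>\tfrac n2$ baked into Definition \ref{Def:AH} is genuinely strict (it is, the interval is half-open at the lower endpoint) so that $q>n-2$ holds and Theorem \ref{theorem decay rates}(1) is applicable without a gap. Everything else is a direct citation of the dimension-threshold theorem, and no new estimate is needed.
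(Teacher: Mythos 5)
Your core argument is the paper's argument: by Definition \ref{Def:AH} the decay rate satisfies $q\in(\tfrac n2,n]$, and since $\tfrac n2\ge n-2$ precisely when $n\le4$ (with equality at $n=4$, where the strict inequality $q>\tfrac n2$ is what you need), Theorem \ref{theorem decay rates}(1) applies for $n\le4$; conversely for $n\ge5$ the interval $(\tfrac n2, n-2]$ is nonempty and Theorem \ref{theorem decay rates}(2) supplies the example. You are also right that the "$n>5$" in the corollary is a typo for "$n\ge5$," consistent with Theorem \ref{Thm Intro examples} and with the arithmetic $n-2>\tfrac n2 \iff n>4$.

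That said, the opening paragraph is an unnecessary and slightly incorrect detour. The assertion "in dimensions $n=3,4$ every two-form is simple" is false for $n=4$: the two-form $e_1\wedge e_2 + e_3\wedge e_4$ on $\mathbb{R}^4$ is not simple (its square $\omega\wedge\omega$ is nonzero). You notice this yourself and patch it, but none of Theorem \ref{Thm Intro rigidity}, Theorem \ref{Thm Intro determining causal type}, or Proposition \ref{Prop boost} is needed here. The hypothesis of Theorem \ref{theorem decay rates} — the existence of a type I null spinor $\psi^\infty$ with $\mathcal H(\psi^\infty)=0$ — is verified directly: an initial data set contained in a Siklos wave inherits a type I null spinor from Proposition \ref{prop siklos admit spinors}, and the remark following Theorem \ref{Thm Siklos wave charges} shows that for the corresponding $\psi^\infty$ one has $N(\psi^\infty)\mu+\langle X(\psi^\infty),J\rangle=0$ pointwise, hence $\mathcal H(\psi^\infty)=0$. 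With that in hand, only the decay-rate comparison you correctly carry out is needed.
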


\begin{proof}
    This follows since asymptotically AdS initial data sets have a decay rate $q>\frac n2$.
\end{proof}

Remarkably, this yields the same dimension threshold as for asymptotically flat initial data sets contained in Siklos waves \cite{HZ24}.
Interestingly, although most arguments become more complicated for Siklos waves compared to their pp-wave counterparts, constructing this example is actually easier. 
The reason for this is that in the AdS setting, taking derivatives does not improve the decay rate at infinity.
This feature in the asymptotically flat setting forced us to construct a graph in the spacetime in a subtle way in \cite{HZ24}.

\begin{remark}
    The $(n+1)$-dimensional Schwarzschild-AdS metric is given by
    \begin{align*}
        \overline g=-\left(1+r^2-\frac{2m}{r^{n-2}}\right)dt^2+\left(1+r^2-\frac{2m}{r^{n-2}}\right)dr^2+r^2g_{S^{n-1}}
    \end{align*}
    has a decay rate of $q=n$.
    Hence, according to Theorem \ref{theorem decay rates}, all asymptotically AdS initial data sets contained in non-trivial Siklos wave spacetimes have less decay.
    Consequently, they can be ruled out by assuming Schwarzschild-AdS asymptotics at infinity.
    The same holds in the AF setting with vanishing cosmological constant.
    On the other hand, the extremal black hole in Section \ref{S:ultraspinning} does satisfy the stronger decay condition $q=n$ and can appear under such restrictive asymptotics.
\end{remark}

We also have a Lorentzian version of Theorem \ref{Thm Intro rigidity}:

\begin{theorem}
Let $(\overline M^{n+1},\overline g)$ be a Lorentzian manifold admitting a non-trivial imaginary Killing spinor.
Suppose that $(\overline M^{n+1},\overline g)$ contains an asymptotically AdS initial data set $(M^n,g,k)$.
Then $(\overline M^{n+1},\overline g)$ is either stationary vacuum or a Siklos wave.
\end{theorem}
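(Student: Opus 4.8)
The plan is to reduce the Lorentzian statement to the already-established Riemannian rigidity result Theorem \ref{Thm Intro rigidity}. The starting point is that $(\overline M^{n+1},\overline g)$ admits a non-trivial imaginary Killing spinor $\overline\psi$, i.e. $\overline\nabla_\alpha\overline\psi=-\frac12\mathbf{i}e_\alpha\overline\psi$ for $\alpha=0,1,\dots,n$. Restricting $\overline\psi$ to the given asymptotically AdS slice $(M^n,g,k)\subseteq(\overline M^{n+1},\overline g)$ and using the Gauss-type formula relating the spacetime connection to the induced connection on the slice (the same computation carried out in reverse in the proof of Theorem in Section \ref{S:causality}, ``The Killing development''), one sees that the restricted spinor $\psi:=\overline\psi|_M$ solves $\nabla_i\psi=-\frac12 k_{ij}e_je_0\psi-\frac12\mathbf{i}e_i\psi$ on $M$. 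In particular $\psi$ is a non-trivial solution of this PDE, and its Dirac current data $N(\psi),X(\psi),Y(\psi),\omega(\psi)$ satisfy the gradient identities of Theorem \ref{Thm: N,X,Y,omega}.

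Next I would observe that such a $\psi$ is automatically mass-minimizing with vanishing total charges. Indeed, by Lemma \ref{lemma mu J X N} the existence of a solution to $\nabla_i\psi=-\frac12k_{ij}e_je_0\psi-\frac12\mathbf{i}e_i\psi$ forces $\mu X(\psi)=-NJ$, hence $\mu=|J|$ and $\langle J,X(\psi)\rangle=-\mu N$, so the bulk integrand in the mass formula of Theorem \ref{pmt maerten},
\[
\left|\nabla_i\psi+\tfrac12k_{ij}e_je_0\psi+\tfrac12\mathbf{i}e_i\psi\right|^2+\tfrac12\mu|\psi|^2+\tfrac12\langle\psi,Je_0\psi\rangle,
\]
vanishes identically. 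Therefore $\mathcal H(\psi^\infty)=0$, where $\psi^\infty$ is the imaginary Killing spinor in $\mathbb H^n$ that $\psi$ asymptotes to. (Here one should check that $\psi$ does asymptote to some $\psi^\infty\in\overline{\mathcal S}(\mathbb H^n)$; this follows because $\overline\psi$ is an imaginary Killing spinor, the decay of $g-b$ and $k$ forces $\psi$ to approach an imaginary Killing spinor on the hyperbolic end, and the uniqueness/existence of such asymptotic spinors is exactly the content used throughout Section \ref{S:causality} together with the improved decay estimate Lemma \ref{decay estimate}.)

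With $\mathcal H(\psi^\infty)=0$ in hand, I would invoke Theorem \ref{Thm Intro cannot change causal type} to replace $\psi^\infty$ (if necessary) by $\phi^\infty$ such that the corresponding $\phi$ is strictly null or strictly timelike, still with $\mathcal H(\phi^\infty)=0$, and then apply Theorem \ref{Thm Intro rigidity}: the initial data set $(M^n,g,k)$ embeds into a spacetime $(\overline M'^{\,n+1},\overline g')$ carrying an imaginary Killing spinor which is either null — in which case $(\overline M'^{\,n+1},\overline g')$ is a Siklos wave — or timelike — in which case $(\overline M'^{\,n+1},\overline g')$ is stationary and vacuum. The final point is to identify $(\overline M'^{\,n+1},\overline g')$ with $(\overline M^{n+1},\overline g)$, or at least to conclude that $(\overline M^{n+1},\overline g)$ itself is stationary vacuum or a Siklos wave. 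This is where the one real subtlety lies: Theorem \ref{Thm Intro rigidity} produces the Killing development of $(M^n,g,k)$, and one must argue that the ambient $(\overline M^{n+1},\overline g)$ agrees with this Killing development in a neighborhood of $M$. This follows from the uniqueness of the Killing development: the vector field $X=X(\overline\psi)$ on $\overline M$ is a Killing field (being built from an imaginary Killing spinor) whose restriction to $M$ has the same lapse $N$ and shift $X^i$ as in \eqref{eq: Killing development}, so the flow of this Killing field realizes $(\overline M^{n+1},\overline g)$ (locally, and then globally by the completeness argument at the end of Section \ref{S:causality}, ``The Killing development'') as the metric \eqref{eq: Killing development}; hence the classification transfers verbatim. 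The main obstacle is thus bookkeeping: verifying that the restriction of $\overline\psi$ genuinely asymptotes to an admissible $\psi^\infty$ with the required $H^1\cap C^{2,a}$ decay so that Theorems \ref{pmt maerten}, \ref{Thm Intro cannot change causal type}, and \ref{Thm Intro rigidity} apply, and then matching the two spacetimes via uniqueness of the Killing development; both are routine given the machinery already developed but need to be stated carefully.
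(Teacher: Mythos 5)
Your approach is the natural one given the paper's machinery (the paper states this theorem without proof, presenting it as a corollary of the preceding development), and the overall structure of your argument is sound. A few points deserve attention.

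First, a minor inaccuracy in your derivation of $\mathcal H(\psi^\infty)=0$: the chain ``$\mu X=-NJ$, hence $\mu=|J|$'' requires the dominant energy condition, whereas the vanishing of the bulk integrand does \emph{not}. What makes the integrand vanish is the scalar identity $N\mu+\langle J,X\rangle=0$, which the proof of Lemma \ref{lemma mu J X N} establishes by tracing the curvature computation, independent of the DEC. From this and $\langle\psi,Je_0\psi\rangle=\langle J,X\rangle$, the integrand $\tfrac12\mu|\psi|^2+\tfrac12\langle\psi,Je_0\psi\rangle$ vanishes directly. (You do, however, need the DEC to invoke Theorems \ref{Thm Intro cannot change causal type} and \ref{Thm Intro rigidity}; the paper's theorem statement leaves the DEC implicit, and your argument inherits that gap from the statement rather than introducing it.)

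Second, and more substantively, the identification of $(\overline M^{n+1},\overline g)$ with the constructed spacetime is not settled by ``uniqueness of the Killing development'' as you phrase it. The Killing development depends on the lapse-shift pair $(N,X)$, and the replacement spinor $\phi$ constructed in the proof of Theorem \ref{Thm Intro cannot change causal type} has, in general, a \emph{different} lapse-shift than $\psi=\overline\psi|_M$ (see Proposition \ref{Prop properties v}, where $X(v)\ne X(u)$). So the Killing development associated to $\phi$ is not \emph{a priori} $\overline M$, and applying Theorem \ref{Thm Intro rigidity} to $\phi$ gives a conclusion about a potentially different spacetime. The correct resolution is to observe that the auxiliary spinor $\widetilde\psi$ used to build $\phi=\tfrac1{\sqrt2}(\psi+\widetilde\psi)$ has the \emph{same} $(N,X,Y,\omega)$ as $\psi$ (Proposition \ref{tilde u mass minimizing} and Lemma \ref{Lemma different spinors}). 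Since the Killing development in \eqref{eq: Killing development} depends only on $(N,X)$, the Killing development of $\widetilde\psi$ equals that of $\psi$, namely $\overline M$; hence both $\widetilde\psi$ and (by linearity) $\phi$ extend to imaginary Killing spinors on $\overline M$ itself via Theorem \ref{lemma construct new timelike spinor} and the construction in Section \ref{S:main}. One can then run the timelike/vacuum or null/Siklos analysis directly on $\overline M$ using $\overline\phi$, rather than on a separately constructed Killing development. With that substitution your proof is complete.
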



\section{Ultraspinning limits of Kerr-AdS}\label{S:ultraspinning}
The study of extremal black holes plays a central role in theoretical physics \cite{banados1992black, kostelecky1996solitonic, gutowski2004supersymmetric, kunduri2006supersymmetric, kunduri2007near, Kallosh1992}. \emph{Extremal} refers to black holes that maximize their conserved charges (such as angular momentum or electromagnetic charge) relative to their mass, thereby saturating the corresponding BPS bound. In particular, extremal black holes are supersymmetric in the sense that they admit an imaginary Killing spinor, possibly with respect to a modified connection as in the Majumdar–Papapetrou family. In recent decades, attention has increasingly turned to higher-dimensional examples. For instance, in the context of the AdS/CFT correspondence, a $(4+1)$-dimensional asymptotically AdS spacetime is dual to a 
$(3+1)$-dimensional conformal field theory on its boundary.

\medskip

In this section we analyze the ultraspinning black hole found in \cite{cvetivc2005supersymmetric}.
This spacetime is stationary, vacuum, and admits a timelike imaginary Killing spinor.
Moreover, it admits asymptotically AdS initial data sets $(M,g,k)$ (of the optimal decay rate $q=n$) which saturate the BPS bound of Theorem \ref{Thm Intro PMT}.
Essentially, all the results of this section are contained in \cite{cvetivc2005supersymmetric}. However, since computational details were omitted, we will provide here the details.

\subsection{Derivation of the metric}

Let $\Lambda$ be the cosmological constant (which in the other sections was fixed to $\Lambda=-\frac{n(n-1)}2$), and denote by $\ell=\sqrt{-\frac{n(n-1)}{2\Lambda}}$ the cosmological length scale.

\medskip

The $5$-dimensional Kerr--AdS metric in Boyer--Lindquist coordinates is given by:
\begin{align*}
\overline g_{\text{Kerr--AdS}} &= -\frac{\Delta}{\rho^2} \left[ dt - \frac{a \sin^2 \theta}{\Xi_a} d\phi - \frac{b \cos^2 \theta}{\Xi_b} d\psi \right]^2 
+ \frac{\Delta_\theta \sin^2 \theta}{\rho^2} \left[ a\, dt - \frac{r^2 + a^2}{\Xi_a} d\phi \right]^2  \\
&\quad + \frac{\Delta_\theta \cos^2 \theta}{\rho^2} \left[ b\, dt - \frac{r^2 + b^2}{\Xi_b} d\psi \right]^2
+ \frac{\rho^2 dr^2}{\Delta} + \frac{\rho^2 d\theta^2}{\Delta_\theta}  \\
&\quad + \frac{(1 + r^2 \ell^{-2})}{r^2 \rho^2} 
\left[ ab\, dt - \frac{b(r^2 + a^2) \sin^2 \theta}{\Xi_a} d\phi - \frac{a(r^2 + b^2) \cos^2 \theta}{\Xi_b} d\psi \right]^2
\end{align*}
where
\begin{align*}
\Delta &= \frac{1}{r^2} (r^2 + a^2)(r^2 + b^2)(1 + r^2 \ell^{-2}) - 2m,  \\
\Delta_\theta &= 1 - a^2 \ell^{-2} \cos^2 \theta - b^2 \ell^{-2} \sin^2 \theta,  \\
\rho^2 &= r^2 + a^2 \cos^2 \theta + b^2 \sin^2 \theta,  \\
\Xi_a &= 1 - a^2 \ell^{-2}, \qquad \Xi_b = 1 - b^2 \ell^{-2}. 
\end{align*}
Here $m\ge0$ is the mass and $a,b\ge0$ are the angular momentum parameters.
Note that we need $a,b<\ell$ to avoid any coordinate singularities.
As pointed out in \cite{ChruscielMaertenTod}, the restriction on the allowed range of parameters in the Kerr–AdS family is not due to an incomplete understanding of the full solution space; rather, in view of the BPS bound, Theorem \ref{Thm Intro PMT}, it is a necessary feature of asymptotically AdS spacetimes satisfying the dominant energy condition.
  Note that in dimension $3$, we have $\mathcal E \ell\ge |\mathcal A|$ and in dimension $4$ we obtain $m\ge a+b$.

  \medskip

Setting $a=b$ the above metric simplifies to
\begin{align*}
\overline g_{\text{Kerr--AdS}} &= -\frac{\Delta}{\rho^2} \left[ dt - \frac{a \sin^2 \theta}{\Xi} d\phi - \frac{a \cos^2 \theta}{\Xi} d\psi \right]^2 
+ \frac{\Xi \sin^2 \theta}{\rho^2} \left[ a\, dt - \frac{\rho^2}{\Xi} d\phi \right]^2  \\
&\quad + \frac{\Xi \cos^2 \theta}{\rho^2} \left[ a\, dt - \frac{\rho^2}{\Xi} d\psi \right]^2
+ \frac{\rho^2 dr^2}{\Delta} + \frac{\rho^2 d\theta^2}{\Xi}  \\
&\quad +\frac1{\rho^2}(\Delta +2m)
\left[ a^2\rho^{-2} dt - \frac{a\sin^2 \theta}{\Xi} d\phi - \frac{a \cos^2 \theta}{\Xi} d\psi \right]^2
\end{align*}
where
\begin{align*}
\Delta &= \frac{1}{r^2} (r^2 + a^2)^2(1 + r^2 \ell^{-2}) - 2m,  \\
\rho^2 &= r^2 + a^2 ,  \\
\Xi &= 1 - a^2 \ell^{-2}.
\end{align*}
Next, we substitute $y^2=\rho^2\Xi^{-1}$ and $M=m\Xi^{-3}$.
Then $\Delta =\frac1{r^2}\rho^4l^{-2}(y^2+\ell^2)\Xi-2m$ and
\begin{align*}
\overline g_{\text{Kerr--AdS}} &= -\frac{\ell^{-2}y^2+1-2My^{-4}r^2}{y^2r^2} \left[ \Xi y^2dt - {y^2a \sin^2 \theta} d\phi - y^2{a \cos^2 \theta} d\psi \right]^2 
\\&+ \frac{ \sin^2 \theta}{y^2} \left[ a\, dt - y^2 d\phi \right]^2  \\
&\quad + \frac{ \cos^2 \theta}{y^2} \left[ a\, dt - y^2 d\psi \right]^2
+ \frac{dy^2}{\ell^{-2}y^2+1-2y^{-4}r^2M} + {y^2 d\theta^2}  \\
&\quad +\frac{\ell^{-2}y^2+1}{y^2r^2}
\left[ a^2 dt - {ay^2\sin^2 \theta}d\phi - {a y^2\cos^2 \theta} d\psi \right]^2.
\end{align*}
Doing another substitution $\hat\phi=\phi+a\ell^{-2}t$ and $\hat\psi=\psi+a\ell^{-2}t$, we obtain
\begin{align*}
\overline g_{\text{Kerr--AdS}} &= -\frac{\ell^{-2}y^2+1-2My^{-4}r^2}{y^2r^2} \left[y^2dt - {y^2a \sin^2 \theta} d\hat \phi - y^2{a \cos^2 \theta} d\hat \psi \right]^2 
\\&+ \frac{ \sin^2 \theta}{y^2} \left[ a(1+\ell^{-2}y^2)\, dt - y^2 d\hat\phi \right]^2  \\
&\quad + \frac{ \cos^2 \theta}{y^2} \left[ a(1+\ell^{-2}y^2)\, dt - y^2 d\hat\psi \right]^2
+ \frac{dy^2}{\ell^{-2}y^2+1-2y^{-4}r^2M} + {y^2 d\theta^2}  \\
&\quad +\frac{\ell^{-2}y^2+1}{y^2r^2}
\left[ a^2 (1+\ell^{-2}y^2)dt - {ay^2\sin^2 \theta}d\hat\phi - {a y^2\cos^2 \theta} d\hat\psi \right]^2.
\end{align*}
Suppose now that $a=b=\ell-\varepsilon$ for some small $\varepsilon>0$.
Then $r^2=y^2\Xi-a^2=-\ell^2+\mathcal O(\varepsilon)$ and
\begin{align*}
\overline g_{\text{Kerr--AdS}} &= \frac{\ell^{-2}y^2+1+2My^{-4}\ell^2}{y^2\ell^2} \left[y^2dt - {y^2\ell \sin^2 \theta} d\hat \phi - y^2{\ell \cos^2 \theta} d\hat \psi \right]^2 
\\&+ \frac{ \sin^2 \theta}{y^2} \left[ \ell(1+\ell^{-2}y^2)\, dt - y^2 d\hat\phi \right]^2  \\
&\quad + \frac{ \cos^2 \theta}{y^2} \left[ \ell(1+l^{-2}y^2)\, dt - y^2 d\hat\psi \right]^2
+ \frac{dy^2}{\ell^{-2}y^2+1+2y^{-4}\ell^2M} + {y^2 d\theta^2}  \\
&\quad -\frac{\ell^{-2}y^2+1}{y^2\ell^2}
\left[ \ell^2 (1+\ell^{-2}y^2)dt - {\ell y^2\sin^2 \theta}d\hat\phi - {\ell y^2\cos^2 \theta} d\hat\psi \right]^2+\mathcal O(\varepsilon).
\end{align*}
Finally, expanding the quadratic terms gives
\begin{align*}
\overline g_{\text{Kerr--AdS}} &= -\left( \frac{y^2}{\ell^2} + 1 \right) dt^2 
+ \frac{2M}{y^2} \left( dt - \ell \sin^2 \theta\, d\hat\phi - \ell \cos^2 \theta\, d\hat\psi \right)^2 \\
&\quad + \frac{dy^2}{1 + \frac{2M \ell^2}{y^4} + \frac{y^2}{\ell^2}} 
+ y^2 \left( d\theta^2 + \sin^2 \theta\, d\hat\phi^2 + \cos^2 \theta\, d\hat\psi^2 \right)+\mathcal O(\varepsilon).
\end{align*}
Letting $\varepsilon$ go to zero, we obtain
\begin{align*}
\overline g:=\lim_{\varepsilon\to0}\overline g_{\text{Kerr--AdS}}&= -\left( \frac{y^2}{\ell^2} + 1 \right) dt^2 
+ \frac{2M}{y^2} \left( dt - \ell \sin^2 \theta\, d\hat\phi - \ell \cos^2 \theta\, d\hat\psi \right)^2 \\
&\quad + \frac{dy^2}{1 + \frac{2M \ell^2}{y^4} + \frac{y^2}{\ell^2}} 
+ y^2 \left( d\theta^2 + \sin^2 \theta\, d\hat\phi^2 + \cos^2 \theta\, d\hat\psi^2 \right).
\end{align*}
This is the ultraspinning limit of Kerr AdS as first found in \cite{cvetivc2005supersymmetric}.
Moreover, this metric is smooth outside the origin $y=0$.

\subsection{Special cases}

Setting \( M = 0 \), the metric simplifies significantly. All terms involving \( M \) drop out and the metric becomes
\begin{align*}
\overline g &= -\left( \frac{y^2}{\ell^2} + 1 \right) dt^2 
+ \frac{dy^2}{\frac{y^2}{\ell^2} + 1} 
+ y^2 \left( d\theta^2 + \sin^2 \theta\, d\hat\phi^2 + \cos^2 \theta\, d\hat\psi^2 \right).
\end{align*}
We note that the angular part
    \[
    d\theta^2 + \sin^2 \theta\, d\hat\phi^2 + \cos^2 \theta\, d\hat\psi^2
    \]
    is simply the round metric on $S^3$ in Hopf coordinates.
Hence, $\overline g$ is simply the AdS metric in this case.

\medskip

Another special case is given by $\ell=\infty$ which corresponds to a vanishing cosmological constant.
To understand this scenario, we substitute $\hat\phi=\phi+\ell^{-1}t$, $\hat\psi=\psi+\ell^{-1}t$, $\mu=M\ell$ to find
\begin{align*}
\overline g=& -dt^2 
+ \frac{2\mu}{y^2} \left( -  \sin^2 \theta\, d\phi -  \cos^2 \theta\, d\psi \right)^2 \\
&\quad + \frac{dy^2}{1 + \frac{2\mu}{y^4} } 
+ y^2 \left( d\theta^2 + \sin^2 \theta\, d\phi^2 + \cos^2 \theta\, d\psi^2 \right)+\mathcal O(\ell^{-1}).
\end{align*}
If $M$ is fixed, the metric diverges for $\ell\to\infty$, but if we fix $\mu$ instead, it converges to
\begin{align}\label{l to infinity}
& -dt^2 
+ \frac{2\mu}{y^2} \left( -  \sin^2 \theta\, d\phi -  \cos^2 \theta\, d\psi \right)^2  + \frac{dy^2}{1 + \frac{2\mu}{y^4} } 
+ y^2 \left( d\theta^2 + \sin^2 \theta\, d\phi^2 + \cos^2 \theta\, d\psi^2 \right).
\end{align}
This metric is asymptotically flat and smooth outside of $y=0$.

\medskip

On the other hand, the Myers-Perry metric 
\begin{align*}
\overline g_{\text{Myers-Perry}} &= -\frac{\Delta}{\rho^2} \left[ dt - {a \sin^2 \theta} d\phi - {b \cos^2 \theta} d\psi \right]^2 
+ \frac{ \sin^2 \theta}{\rho^2} \left[ a\, dt - (r^2 + a^2) d\phi \right]^2  \\
&\quad + \frac{ \cos^2 \theta}{\rho^2} \left[ b\, dt - (r^2 + b^2) d\psi \right]^2
+ \frac{\rho^2 dr^2}{\Delta} + {\rho^2 d\theta^2}  \\
&\quad + \frac{1}{r^2 \rho^2} 
\left[ ab\, dt - b(r^2 + a^2) \sin^2 \theta d\phi - a(r^2 + b^2) \cos^2 \theta d\psi \right]^2
\end{align*}
where
\begin{align*}
\Delta &= \frac{1}{r^2} (r^2 + a^2)(r^2 + b^2) - 2m,  \\
\rho^2 &= r^2 + a^2 \cos^2 \theta + b^2 \sin^2 \theta,  
\end{align*}
simplifies for $a=b$ to
\begin{align*}
\overline g_{\text{Myers-Perry}} &= -(\rho^2r^{-2}-2m\rho^{-2}) \left[ dt - {a \sin^2 \theta} d\phi - {a \cos^2 \theta} d\psi \right]^2 
+ \frac{ \sin^2 \theta}{\rho^2} \left[ a\, dt - \rho^2 d\phi \right]^2  \\
&\quad + \frac{ \cos^2 \theta}{\rho^2} \left[ a\, dt - \rho^2 d\psi \right]^2
+ \frac{ d\rho^2}{1-2mr^{2}\rho^{-4}} + {\rho^2 d\theta^2}  \\
&\quad + \frac{1}{r^2 \rho^2} 
\left[ a^2\, dt - a\rho^2 \sin^2 \theta d\phi - a\rho^2 \cos^2 \theta d\psi \right]^2
\end{align*}
where
\begin{align*}
\rho^2 &= r^2 + a^2 .
\end{align*}
If $a\gg1$, we roughly have $r^2=-a^2$ and we recover the metric \eqref{l to infinity} with $\mu=ma^2$.
Therefore, for $l\to\infty$, we obtain the superextremal Kerr metric where the limit $\frac ma\to0$ is taken.

\subsection{Saturating the mass inequality}

Next, we consider the $t=0$ slice $(M,g,k)$ of $(\overline M^{n+1},\overline g)$ where we recall
\begin{align*}
\overline g=& -\left( \frac{y^2}{\ell^2} + 1 \right) dt^2 
+ \frac{2M}{y^2} \left( dt - \ell \sin^2 \theta\, d\hat\phi - \ell \cos^2 \theta\, d\hat\psi \right)^2 \\
& + \frac{dy^2}{1 + \frac{2M \ell^2}{y^4} + \frac{y^2}{\ell^2}} 
+ y^2 \left( d\theta^2 + \sin^2 \theta\, d\hat\phi^2 + \cos^2 \theta\, d\hat\psi^2 \right).
\end{align*}
The following is contained in \cite{cvetivc2005supersymmetric}, also see \cite[Remark 3.3]{ChruscielMaertenTod}:

\begin{theorem}
    Let $(M,g,k)$ be as above and set $\ell=1$. Then $(M,g,k)$ satisfies the dominant energy condition, is asymptotically AdS, and we have 
    \begin{align*}
      \mathcal   E=|  \mathcal A|
    \end{align*}
    for the energy and angular momentum.
    Moreover, $(\overline M^{n+1},\overline g)$ admits an imaginary Killing spinor, and therefore $(M,g,k)$ admits a spinor $\psi$ satisfying $\nabla_i\psi=-\frac12k_{ij}e_je_0\psi-\frac12\mathbf{i}e_i\psi$.
\end{theorem}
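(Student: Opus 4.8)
The plan is to verify directly that the explicit ultraspinning metric $\overline g$ (with $\ell=1$) is a smooth asymptotically AdS vacuum spacetime, compute its energy and angular momentum charges from the asymptotic expansion, and exhibit an imaginary Killing spinor. First I would record that $\overline g$ solves the vacuum Einstein equation with $\Lambda=-\tfrac{n(n-1)}2$: since $\overline g$ is obtained as the $\varepsilon\to0$ limit of the five-dimensional Kerr--AdS family, which is vacuum for all admissible parameter values, the limit is again vacuum (the limiting metric is smooth away from $y=0$, so one may either pass to the limit in the curvature tensor or check the Einstein equation on the closed-form expression). In particular the dominant energy condition holds trivially on the $t=0$ slice, with $\mu=n(n-1)/2$ coming purely from the cosmological term and $J=0$; more precisely the induced data $(M,g,k)$ has $\mu=|J|$ after accounting for the conventions in Definition~\ref{Def:AH}, which is the form in which the hypotheses of Theorem~\ref{pmt maerten} are needed.

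Next I would establish the asymptotics. Expanding $\overline g$ for large $y$, the metric approaches the global AdS form $-(y^2+1)dt^2+(y^2+1)^{-1}dy^2+y^2 g_{S^3}$, with the $M$-dependent corrections entering at order $y^{-2}$ relative to the leading terms, i.e. at the optimal decay rate $q=n$ (here $n=4$). This requires choosing the correct asymptotically hyperbolic coordinate chart on the $t=0$ slice and checking that $e=\phi_*g-b$ and $\eta=\phi_*k$ lie in $C^{s,a}_{-q}$ with $q=4$; the Hopf-coordinate presentation of $g_{S^3}$ makes the matching to a model metric on $\mathbb H^4$ routine. With the chart fixed, the charges $\mathcal E,\mathcal P,\mathcal C,\mathcal A$ are read off from the boundary integrals in \eqref{eq: def charges}. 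By the rotational symmetry in $\hat\phi,\hat\psi$ one expects $\mathcal P=0$ and $\mathcal C=0$, the energy $\mathcal E$ proportional to $M$, and the angular momentum $\mathcal A$ a non-simple two-form (equal rotation in the two orthogonal $2$-planes) whose norm is also proportional to $M$; the normalization constants work out so that $\mathcal E=|\mathcal A|$. This is the identity asserted in the theorem, and it is exactly the boundary case of the Kerr--AdS parameter restriction $m\ge a+b$ discussed after the metric.

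For the spinor, I would invoke that $(\overline M^{n+1},\overline g)$ is the supersymmetric (BPS) member of the Kerr--AdS family identified in \cite{cvetivc2005supersymmetric}: one writes down the candidate imaginary Killing spinor in the Boyer--Lindquist-type coordinates and checks $\overline\nabla_\alpha\overline\psi=-\tfrac12\mathbf i e_\alpha\overline\psi$ by a direct computation with the spin connection, or equivalently uses the known fact that the ultraspinning limit preserves the Killing spinor. Restricting $\overline\psi$ to the $t=0$ slice and using the standard spacetime-to-initial-data reduction (the Gauss-type formula relating $\overline\nabla$ to $\nabla$ and $k$, as in the converse direction of the Killing development in Section~\ref{S:causality}) then produces a spinor $\psi\in\overline{\mathcal S}(M)$ solving $\nabla_i\psi=-\tfrac12k_{ij}e_je_0\psi-\tfrac12\mathbf i e_i\psi$.

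The main obstacle is the bookkeeping in the second and third steps: carefully identifying the asymptotically hyperbolic chart, expanding $\overline g$ to the order needed to evaluate $\mathbb U^i$ and $\mathbb V^i$, and tracking all normalization constants so that the computed $\mathcal E$ and $\mathcal A$ genuinely satisfy $\mathcal E=|\mathcal A|$ (rather than merely being proportional with some undetermined constant); and, on the spinor side, verifying the Killing spinor equation through the explicit spin connection. Since \cite{cvetivc2005supersymmetric} already performs essentially all of this, the write-up should mainly supply the omitted computational details rather than introduce new ideas.
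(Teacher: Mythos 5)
Your proposal takes essentially the same route the paper does: the paper gives no proof of its own and simply defers to \cite{cvetivc2005supersymmetric} (and \cite[Remark 3.3]{ChruscielMaertenTod}), which is exactly the plan you lay out — verify vacuum, read off the charges, and exhibit the Killing spinor from the BPS condition. One small inconsistency worth fixing: in the DEC paragraph you first write that $\mu=n(n-1)/2$ comes from the cosmological term with $J=0$, but with the paper's conventions (Definition~\ref{Def:AH}, where $\mu$ already includes the $+\tfrac12 n(n-1)$ shift) a vacuum slice of a $\Lambda=-\tfrac{n(n-1)}2$ spacetime has $\mu=0=|J|$; your subsequent sentence correcting to $\mu=|J|$ is what you actually want. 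You might also note, as the paper does just after the theorem, that the naked singularity at $y=0$ means one cannot directly feed this data into Theorem~\ref{pmt maerten}; that caveat is outside the statement itself but clarifies why this is an \emph{example saturating the bound} rather than an application of it.
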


This completes the proof of the second part of Theorem \ref{Thm Intro examples}.
We emphasize that, strictly speaking, we cannot use the BPS bound / PMT inequality from Theorem \ref{Thm Intro PMT} to the initial data set $(M^n,g,k)$ above due to the naked singularity present at the origin $y=0$.
However, we expect that non-smooth versions of Theorem \ref{Thm Intro PMT} can be proven which allow for such types of singularities.
See for instance \cite{DaiSunWang2024_PMT_spin_conical,LiMantoulidis2019_skeleton,CecchiniFrenckZeidler2024, LeeLeFloch2015_distributional} for substantial progress made in the non-smooth asymptotically flat setting.

\medskip

It would be of interest to further investigate such ultraspinning black holes both in the asymptotically AdS and in the asymptotically flat setting.
In particular, Theorem \ref{Thm Intro rigidity} and Theorem \ref{Thm Intro examples} ask the question whether such examples exist in the smooth setting, or whether only AdS occurs in part (2) of Theorem \ref{Thm Intro rigidity}. 
In either case, it would be desirable to fully classify such extremal black hole solutions.
Finally, we would like to point out that when a non-trivial Maxwell field is present, there are numerous examples of smooth supersymmetric black holes \cite{kostelecky1996solitonic, gutowski2004supersymmetric}, even occurring in dimension $2+1$ \cite{banados1992black}.


\appendix

\section{Christoffel symbols and momentum densities}

Throughout this section, let $\alpha=2,\dots,n$ and $A,B=2,\dots,n-1$.  We use the notations $\nabla_i:=\nabla_{e_i}$, $w_{i,j}=\partial_{u_j} w_i$ and $k_{ij}=k(e_i,e_j)$. For simplicity, we write $\partial_i$ as shorthand for $\partial_{u_i}$.

\medskip

 Let $g$ be a metric on $\mathbb{R}^{n-1}\times \mathbb{R}_+$ given by
\begin{equation} \label{metric}
    g=(f^2+\sum_{\alpha=2}^{n}w_\alpha^2u_n^{-2})du_1^2+2 \sum_{\alpha=2}^n w_\alpha u_n^{-2}du_\alpha du_1+
u_n^{-2}(du_2^2+\cdots +du_n^2), \quad u_n>0.
\end{equation}
Consider on $(\mathbb{R}^{n-1}\times \mathbb{R}_+,g)$ the orthonormal frame
\begin{equation*}
    e_1=f^{-1}(\partial u_1-w_\alpha\partial u_\alpha), \quad e_\alpha= u_n \partial u_\alpha.
\end{equation*}
We have 
\begin{equation} \label{nuealpha}
    [e_1, e_\alpha]=-f^{-1}w_n \partial u_\alpha-u_n (f^{-1})_\alpha fe_1 +f^{-1}u_n w_{\beta,\alpha} \partial u_\beta,
    \quad 
    [e_A, e_n]=-e_A.
\end{equation}

\begin{lemma}\label{lemma christoffel symbols}
    Let $g$ be as above. Then all non-vanishing connection coefficients $\gamma_{ij}^l:=\langle\nabla_{e_i}e_j,e_l\rangle $ (up to the symmetry $\gamma_{ij}^l=-\gamma_{il}^j$) are given by
    \begin{align*}
         \gamma_{11}^\alpha=&u_n f\partial_\alpha(f^{-1}),\\
         \gamma_{1A}^\alpha=&\frac{1}{2}f^{-1}(w_{\alpha,A}-w_{A,\alpha}),\\
         \gamma_{A1}^B=&f^{-1}w_n u_n^{-1}\delta_{AB}-\frac{1}{2}f^{-1}(w_{A,B}+w_{B,A}),\\
         \gamma_{A1}^n=& \gamma_{n1}^A=-\frac{1}{2}f^{-1}(w_{A,n}+w_{n,A}),\\
          \gamma_{AB}^n=&\delta_{AB},\\
          \gamma_{n1}^n=& f^{-1}u_n^{-1} w_n-f^{-1}w_{n,n}.
    \end{align*}
\end{lemma}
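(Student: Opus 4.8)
The plan is to read off all the connection coefficients directly from the Koszul formula, using that $\{e_1,e_2,\dots,e_n\}$ is an orthonormal frame for $g$ (a short substitution using \eqref{metric} verifies this). Because the frame has constant Gram matrix, the Koszul identity collapses to
\begin{equation*}
    2\gamma_{ij}^l=\langle[e_i,e_j],e_l\rangle-\langle[e_j,e_l],e_i\rangle+\langle[e_l,e_i],e_j\rangle,
\end{equation*}
and the same orthonormality gives the antisymmetry $\gamma_{ij}^l=-\gamma_{il}^j$, which will roughly halve the number of independent cases. Thus everything reduces to knowing the structure functions of the frame.

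First I would assemble the complete list of Lie brackets. Equation \eqref{nuealpha} already records $[e_1,e_\alpha]$ and $[e_A,e_n]$; rewriting the former in the orthonormal frame via $\partial u_1=fe_1+w_\beta u_n^{-1}e_\beta$, $\partial u_\alpha=u_n^{-1}e_\alpha$, and using $\partial_\alpha(f^{-1}w_\beta)=w_\beta\partial_\alpha(f^{-1})+f^{-1}w_{\beta,\alpha}$, gives
\begin{equation*}
    [e_1,e_\alpha]=-u_nf\,\partial_\alpha(f^{-1})\,e_1-f^{-1}u_n^{-1}w_n\,e_\alpha+f^{-1}w_{\beta,\alpha}\,e_\beta .
\end{equation*}
The remaining brackets follow from $e_\alpha=u_n\partial u_\alpha$ together with $\partial u_\alpha(u_n)=\delta_{\alpha n}$: one gets $[e_\alpha,e_\beta]=\delta_{\alpha n}e_\beta-\delta_{\beta n}e_\alpha$, so that $[e_A,e_B]=0$ and $[e_A,e_n]=-e_A$. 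These are all the nonzero structure functions.

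Next I would feed these into the Koszul identity case by case, organizing the cases by which indices lie in $\{2,\dots,n-1\}$ and which equal $n$. For example $2\gamma_{11}^\alpha=-2\langle[e_1,e_\alpha],e_1\rangle$ yields $\gamma_{11}^\alpha=u_nf\,\partial_\alpha(f^{-1})$; for $\gamma_{1A}^\alpha=\langle\nabla_{e_1}e_A,e_\alpha\rangle$ the Koszul formula reads $2\gamma_{1A}^\alpha=\langle[e_1,e_A],e_\alpha\rangle+\langle[e_\alpha,e_1],e_A\rangle$, in which the $e_1$- and $w_n$-contributions cancel and only the antisymmetric piece $f^{-1}(w_{\alpha,A}-w_{A,\alpha})$ survives, giving $\gamma_{1A}^\alpha=\tfrac12 f^{-1}(w_{\alpha,A}-w_{A,\alpha})$. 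The coefficients $\gamma_{A1}^B$, $\gamma_{A1}^n=\gamma_{n1}^A$, $\gamma_{AB}^n$ and $\gamma_{n1}^n$ are obtained by exactly the same recipe from the brackets above; note that in $\gamma_{A1}^B$ the term $f^{-1}u_n^{-1}w_n\delta_{AB}$ is precisely the surviving $w_n$-piece, which has no counterpart in $\gamma_{A1}^n$ because the would-be delta $\delta_{An}$ vanishes, while $\gamma_{AB}^n=\delta_{AB}$ comes straight from $[e_A,e_n]=-e_A$ and $[e_A,e_B]=0$, and $\gamma_{n1}^n$ from the $e_n$-component of $[e_n,e_1]$. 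Finally one checks that every coefficient type not listed in the statement vanishes.

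The main obstacle is bookkeeping rather than anything conceptual: the index $n$ is distinguished because $\partial u_n(u_n)=1$ while $\partial u_A(u_n)=0$, so brackets involving $e_n$ pick up extra terms, and the off-diagonal profile components $w_\alpha$ enter antisymmetrically through $[e_1,e_\alpha]$. Keeping the four kinds of index — $1$, the capitals $A,B\in\{2,\dots,n-1\}$, the single index $n$, and the Greek $\alpha,\beta\in\{2,\dots,n\}$ that include $n$ — straight through all the antisymmetrizations, and confirming that no unlisted coefficient is nonzero, is the only place where real care is needed.
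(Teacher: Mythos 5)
Your proposal is correct and follows essentially the same route as the paper: it applies the same Koszul formula to the same orthonormal frame and reads off the coefficients case by case from the bracket relations in \eqref{nuealpha}. The only (cosmetic) difference is that you first rewrite all the brackets in terms of the orthonormal frame before substituting into Koszul, whereas the paper simplifies on the fly; your preliminary observation that $[e_\alpha,e_\beta]=\delta_{\alpha n}e_\beta-\delta_{\beta n}e_\alpha$ and the resulting $[e_A,e_B]=0$ is exactly what makes the $\gamma_{AB}^n$ and $\gamma_{nA}^\cdot$ computations trivial in both arguments.
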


\begin{proof}
Applying the formula 
\[\gamma_{ij}^l=\frac{1}{2}(\langle[e_i,e_j],e_l\rangle+\langle[e_l,e_i],e_j\rangle- \langle[e_j,e_l],e_i\rangle),\] 
we compute
\begin{align*}
    \gamma_{11}^\alpha=& \langle[e_\alpha, e_1],e_1\rangle=u_n f\partial_\alpha(f^{-1})
    \\ 
    \gamma_{1A}^B=& \frac{1}{2} (\langle [e_1,e_A], e_B\rangle+\langle [e_B,e_1], e_A\rangle-\langle [e_A,e_B], e_1\rangle)=
    \frac{1}{2}f^{-1}(w_{B,A}-w_{A,B})
    \\ \gamma_{1A}^n=& 
    \frac{1}{2} (\langle [e_1,e_A], e_n\rangle+\langle [e_n,e_1], e_A\rangle-\langle [e_A,e_n], e_1\rangle)
    =\frac{1}{2}f^{-1}(w_{n,A}-w_{A,n})
    \\ \gamma_{A1}^B=&\frac{1}{2} (\langle [e_A,e_1], e_B\rangle+\langle [e_B,e_A], e_1\rangle-\langle [e_1,e_B], e_A\rangle)
    \\=& 
f^{-1}w_n u_n^{-1}\delta_{AB}
    -\frac{1}{2}f^{-1}(w_{A,B}+w_{B,A})
    \\ \gamma_{A1}^n=& \frac{1}{2} (\langle [e_A,e_1], e_n\rangle+\langle [e_n,e_A], e_1\rangle-\langle [e_1,e_n], e_A\rangle)=-\frac{1}{2}f^{-1}(w_{A,n}+w_{n,A}).
    \end{align*}
    Moreover,
    \begin{align*}
        \gamma_{AB}^n=&
    \frac{1}{2} (\langle [e_A,e_B], e_n\rangle+\langle [e_n,e_A], e_B\rangle-\langle [e_B,e_n], e_A\rangle)=\delta_{AB}
    \\ \gamma_{n1}^A=&  \frac{1}{2} (\langle [e_n,e_1], e_A\rangle+\langle [e_A,e_n], e_1\rangle-\langle [e_1,e_A], e_n\rangle)=-\frac{1}{2}f^{-1}(w_{A,n}+w_{n,A})  
    \\ \gamma_{n1}^n=& \langle [e_n,e_1], e_n\rangle=f^{-1}u_n^{-1} w_n-f^{-1}w_{n,n},\\
    \gamma_{nA}^n=& \langle [e_n,e_A], e_n\rangle=0
    \\ \gamma_{n A}^B=& 
    \frac{1}{2} (\langle [e_n,e_A], e_B\rangle+\langle [e_B,e_n], e_A\rangle-\langle [e_A,e_B], e_n\rangle)=0, \\
    \gamma_{AB}^C=&0.
\end{align*}
\end{proof}

\begin{lemma}\label{Je1}
    Let   
   $g$ be the metric given in Equation \eqref{metric}
and $k_{ij}=-\frac{1}{2}fu_n^{2}[\nabla_i X_j+\nabla_j X_i]$, where $X=u_n^{-2}\nabla u_1$. Denote $L= u_n^2 f^2 
         -2w_1+\sum_{\alpha=2}^n|w_\alpha|^2$. Then 
\begin{equation*}
    \langle J,e_1 \rangle=-\mu= \frac{1}{2}f^{-2}u_n^{-2}(\Delta_{\Sigma_1} L-2u_n\partial_n L) 
\end{equation*}
 
\end{lemma}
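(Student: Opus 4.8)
\textbf{Proof plan for Lemma \ref{Je1}.}

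The plan is to compute both sides of the claimed identity directly in the coordinate system $\{u_1,\dots,u_n\}$, using the explicit connection coefficients from Lemma \ref{lemma christoffel symbols}. First I would write out $X = u_n^{-2}\nabla u_1$ in the orthonormal frame $\{e_i\}$: since $e_1 = f^{-1}(\partial u_1 - w_\alpha \partial u_\alpha)$ and $e_\alpha = u_n\partial u_\alpha$, one checks that $X = f^{-1}u_n^{-2} e_1$, i.e. $N := |X| = f^{-1}u_n^{-2}$ (this is consistent with Lemma \ref{Nf}) and $X$ points along $e_1$. Then $k_{ij} = -\tfrac12 f u_n^2(\nabla_i X_j + \nabla_j X_i)$ can be expanded using $\nabla_i X_j = e_i(X_j) - \gamma_{ij}^l X_l$, which reduces to expressions in $f$, $w_\alpha$, their first derivatives, and the $\gamma$'s from Lemma \ref{lemma christoffel symbols}. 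In particular, $k_{1\alpha}$, $k_{A\alpha}$, and $k_{nn}$ have already been recorded in the computations surrounding Lemma \ref{fwa} and Lemma \ref{w estimate}, so this step is bookkeeping.

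Next I would compute the momentum density $J = \operatorname{div}(k - (\tr_g k)g)$ contracted with $e_1$. The cleanest route is to use the constraint-type identity valid when $k_{ij} = -\tfrac12 f u_n^2 (\nabla_i X_j + \nabla_j X_i)$ with $X = u_n^{-2}\nabla u_1$: here $X$ is proportional to a gradient, so $\omega = 0$ in the relevant sense and the geometry degenerates nicely. Actually, since $(M^n,g,k)$ is the $t=0$-slice of the Killing development with lapse $N = f^{-1}u_n^{-2}$ and shift $X$, and by Proposition \ref{Prop: stress energy siklos wave} the Einstein tensor of the ambient Siklos wave has only a $du_1^2$ component, one expects $\mu$ and $J$ to be related by $J = -\mu \, u_n^{-1} du_1 \cdot (\text{normalization})$, i.e. $\langle J, e_1\rangle = -\mu$ because $e_1$ is (up to sign) the normalized $du_1$ direction on the slice. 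So the identity $\langle J, e_1\rangle = -\mu$ should follow either from the general identity $\mu X = -NJ$ of Lemma \ref{lemma mu J X N} (noting $N = |X|$ here forces $\mu = |J|$ and alignment), or more concretely from a direct computation of the scalar curvature $R$ of $g$ together with $|k|^2$ and $(\tr_g k)^2$, invoking the Hamiltonian constraint $\mu = \tfrac12(R + (\tr_g k)^2 - |k|^2 + n(n-1))$.

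The remaining and most substantial step is to show that this common quantity equals $\tfrac12 f^{-2}u_n^{-2}(\Delta_{\Sigma_1} L - 2u_n\partial_n L)$, where $L = u_n^2 f^2 - 2w_1 + \sum_\alpha |w_\alpha|^2$ and $\Delta_{\Sigma_1} = u_n^2 \partial_{\alpha\alpha} - (n-3)u_n\partial_n$ (as recorded at the start of Section \ref{S:asymptotic analysis}). Here $\Sigma_1$ is the level set of $u_1$ with its induced metric $u_n^{-2}(du_2^2+\cdots+du_n^2)$, hyperbolic space by Theorem \ref{Sigma1}. The hard part will be organizing the second-derivative terms: one must express $\mu$ (or $R$) in terms of $f$, $w_\alpha$ and their first and second derivatives via Lemma \ref{lemma christoffel symbols}, then recognize the combination $\partial_{\alpha\alpha}(u_n^2 f^2 - 2w_1 + |w|^2) - (n-3)u_n\partial_n(\cdots) - 2u_n\partial_n(\cdots)$ inside it. I expect heavy cancellation: terms involving $w_{\alpha,\beta} + w_{\beta,\alpha}$ (the symmetrized derivatives that feed into $k_{\alpha\beta}$) should combine into $\partial_{\alpha\alpha}$ of the $|w|^2$ and $-2w_1$ pieces, while the $f$-dependent curvature terms assemble $\partial_{\alpha\alpha}(u_n^2 f^2)$; the antisymmetric combinations $w_{\alpha,\beta} - w_{\beta,\alpha}$, which relate to $\gamma_{1A}^B$, should drop out because $w_\alpha = \partial_\alpha w$ is a gradient by Lemma \ref{w estimate}. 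The cleanest presentation is probably to verify the identity for the ambient Siklos wave first — where Proposition \ref{Prop: stress energy siklos wave} gives $8\pi T = (u_n\partial_n L - \tfrac12 \Delta_{\mathbb H^{n-1}} L)u_n^{-2}du_1^2$ with $\Delta_{\mathbb H^{n-1}} = u_n^2\partial_{\alpha\alpha} - (n-3)u_n\partial_n$ acting on the $(n-1)$-dimensional hyperbolic slice — and then read off $\mu = T(\hat{\mathbf n}, \hat{\mathbf n})$ and $J = T(\hat{\mathbf n}, \cdot)$ using the future unit normal $\hat{\mathbf n} = u_n L^{-1/2}(L\partial_t - \partial_{u_1})$, exactly as in Proposition \ref{Prop: stress energy siklos wave}, contracting with $e_1 = f^{-1}(\partial u_1 - w_\alpha\partial u_\alpha)$. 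Matching the normalization factor $L^{-1} = f^{-2}u_n^{-2}$ (since $L = u_n^2 f^2$ modulo the $w$-terms, and more precisely $f^{-2}u_n^{-2} = L^{-1}$ when $w=0$, with the $w$-corrections accounted for by the graph shift) then yields $-\mu = \tfrac12 f^{-2}u_n^{-2}(\Delta_{\Sigma_1}L - 2u_n\partial_n L)$, completing the proof.
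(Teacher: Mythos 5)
The first part of your plan --- expanding $\langle J,e_1\rangle$ directly in the orthonormal frame with the connection coefficients of Lemma \ref{lemma christoffel symbols} --- is exactly what the paper's proof does. The middle equality $\langle J,e_1\rangle=-\mu$ is indeed supplied by Lemma \ref{lemma mu J X N} (or equivalently by the null-dust structure of $T$), so that part is consistent.

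Your alternative route via Proposition \ref{Prop: stress energy siklos wave} is a genuinely different and slicker idea, but as written it has a concrete gap. You contract the null-dust stress-energy $8\pi T=(u_n\partial_n L-\tfrac12\Delta_{\mathbb H^{n-1}}L)u_n^{-2}\,du_1^2$ against $\hat{\mathbf n}=u_nL^{-1/2}(L\partial_t-\partial_{u_1})$, which is the unit normal to the $\{t=0\}$-slice, and then attempt to ``match'' $L^{-1}=f^{-2}u_n^{-2}$. But $(M^n,g,k)$ is the graph $\{t=-w\}$, i.e.\ the $\{\tau=0\}$-slice of the Killing development, which is a different slice whenever $w\ne0$; the equality $L^{-1}=f^{-2}u_n^{-2}$ genuinely fails once $w\ne 0$ since $L=u_n^2f^2-2w_1+|w_\alpha|^2$. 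The phrase ``with the $w$-corrections accounted for by the graph shift'' is precisely the step that requires work, not an afterthought. The fix is to use the correct normal $\mathbf n=N^{-1}(\partial_\tau-X)$ with $N=f^{-1}u_n^{-2}$ and $X=f^{-2}u_n^{-2}(\partial_{u_1}-w_\alpha\partial_{u_\alpha})$; a short computation in the AdS--Brinkmann coordinates gives $du_1(\mathbf n)=-f^{-1}$, so $\mu=8\pi T(\mathbf n,\mathbf n)=f^{-2}u_n^{-2}(u_n\partial_nL-\tfrac12\Delta_{\mathbb H^{n-1}}L)$ with no normalization matching required, and $\langle J,e_1\rangle=8\pi T(\mathbf n,e_1)=-\mu$ follows because $du_1(e_1)=f^{-1}$. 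One further caveat: this ambient route presupposes Theorem \ref{kXL} (and, for that matter, the unproven higher-dimensional case of Proposition \ref{Prop: stress energy siklos wave}), whereas the paper's direct computation of $\langle J,e_1\rangle$ in $(M^n,g,k)$ is self-contained; if you want to present the ambient argument you need to respect that ordering.
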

\begin{proof}
We compute
\begin{align*}
     \langle J,e_1\rangle
        =&e_i(k_{i1})-k(\nabla_{e_i}e_i,e_1)-k(e_i,\nabla_{e_i}e_1)-e_1(k_{ii})
        \\=& u_n \partial_\alpha [-u_n \partial_\alpha(f^{-1})f]-k_{j 1}\gamma_{ii}^j-k_{ij}\gamma_{i1}^j
        -e_1(k_{\alpha\alpha})
        \\=&  u_n^{2}\partial_\alpha\partial_\alpha \log f+u_n \partial_n \log f
        -e_1(\log f+2\log u_n) (f^{-1}w_{\alpha\alpha}-(n-1)f^{-1}w_nu_n^{-1})
        \\&-(u_n \partial_\alpha \log f+\delta_{\alpha n} )((n-2)\delta_{n\alpha}-u_n\partial_\alpha \log f)
        \\ &+ (u_n\partial_\alpha\log f+\delta_{\alpha
         n})u_n\partial_\alpha \log f+ 
         |f^{-1}w_{\alpha\beta}-f^{-1}w_nu_n^{-1}\delta_{\alpha\beta}|^2
         \\& -f^{-1}(\partial_1-w_\beta\partial_\beta)(f^{-1}w_{\alpha\alpha}-(n-1)f^{-1}u_n^{-1}w_n)
\end{align*}
where we used the formulas for the Christoffel symbols from Lemma \ref{lemma christoffel symbols}.
Simplifying, we find that the above term equals
\begin{align*}
    & u_n^2 f^{-1}f_{\alpha\alpha}-u_n^{2}f^{-2}|f_\alpha|^2+u_n f^{-1}f_n
         \\&- (f^{-2}f_1-f^{-2}f_\alpha w_\alpha -2f^{-1}w_n u_n^{-1})(f^{-1}w_{\beta\beta}-(n-1)f^{-1}w_n u_n^{-1})
         \\ & +2y_n^2f^{-2}|f_\alpha|^2-(n-4)u_nf^{-1}f_n-(n-2) 
         + f^{-2} |w_{\alpha\beta}|^2-2f^{-2}w_n u_n^{-1}w_{\alpha\alpha} 
         \\& +(n-1)f^{-2}w_n^2y_n^{-2}- f^{-2}w_{1\alpha\alpha}+f^{-3}f_1w_{\alpha\alpha}+f^{-2}w_{\beta}w_{\alpha\alpha\beta}-f^{-3}f_\beta w_{\beta}w_{\alpha\alpha}
         \\& -(n-1)f^{-3}f_1 u_n^{-1}w_n+(n-1)f^{-2}u_n^{-1}w_{1n}+(n-1)f^{-3}f_\beta w_\beta u_n^{-1}w_n
         \\& -(n-1)f^{-2}u_n^{-1}w_\beta w_{n\beta}+(n-1)f^{-2}w_n^2 u_n^{-2}
         \\=& 
         u_n^{2}f^{-1}f_{\alpha\alpha}+u_n^2f^{-2}|f_\alpha|^2
         -(n-5)u_n f^{-1}f_n
         -(n-2)-f^{-2}w_{1\alpha\alpha}+(n-1)f^{-2}u_n^{-1}w_{1n}
         \\ & +f^{-2}|w_{\alpha\beta}|^2+f^{-2}w_\beta w_{\alpha\alpha\beta}-(n-1)f^{-2}w_\beta w_{n \beta}u_n^{-1}.
         \end{align*}
                Finally, we use the definition of $L$ to obtain
         \begin{align*}
          \langle J,e_1\rangle =& f^{-2}\partial_{\alpha\alpha}\left(\frac{1}{2} u_n^2 f^2 
         -w_1+\frac{1}{2}|w_\beta|^2
         \right)
         -(n-1)f^{-2}u_n^{-1}\partial_n \left(\frac{1}{2}u_n^2 f^2-w_1+\frac{1}{2}|w_\beta|^2\right)
         \\=&\frac{1}{2}\left[\sum_{\alpha=2}^n f^{-2}\partial_{\alpha\alpha}L
         -(n-1)f^{-2}u_n^{-1}\partial_n L\right]
         \\=& \frac{1}{2}f^{-2}u_n^{-2}(\Delta_{\Sigma_1} L-2u_n\partial_n L) .
\end{align*}
\end{proof}

 \section{Charges of Siklos waves}\label{A:Siklos}

 For convenience we restate Theorem \ref{Thm Siklos wave charges}:
 \begin{theorem}
     Let $(M^n,g,k)$ be the $t=0$-slice of the Siklos wave $(\overline{M}^{n+1},\overline{g})$, where
     \begin{equation*}
          \overline g=\frac{1}{u^{2}_n}\Big( 2du_1dt+ Ldu_1^{2}  +\delta_{\alpha \beta}du_{\alpha}du_{\beta}
          \Big)\quad \text{and}\quad L=1+O_2(r^{-q}).
     \end{equation*}
     Then we have for any lapse-shift pair $(N,X)$
     \begin{equation*}
       \mathcal{H}(N,X)= \frac{1}{(n-1)\omega_{n-1}}\int_{\mathbb{H}^n}[N\mu+\langle X,J\rangle_b]LdV_b.
     \end{equation*}
 \end{theorem}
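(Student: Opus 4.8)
\textbf{Proof strategy for Theorem \ref{Thm Siklos wave charges}.}
The plan is to compute the boundary integral defining $\mathcal{H}(N,X)$ directly for the Siklos-wave initial data set, and to convert it into a bulk integral using the divergence identity recalled after Proposition \ref{Prop boost}, namely
\begin{equation*}
    \operatorname{div}\left(N\mathbb{U}+\mathbb{V}(X)\right)=2N(\mu+n(n-1))+2\langle J,X\rangle+O(r^{1-2q}),
\end{equation*}
which holds for any asymptotically AdS data. First I would extract the induced metric $g$ and second fundamental form $k$ of the $\{t=0\}$-slice of $\overline g=u_n^{-2}(2du_1dt+Ldu_1^2+\delta_{\alpha\beta}du_\alpha du_\beta)$. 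Since $L=1+O_2(r^{-q})$, the slice is graphical over a totally geodesic copy of $\mathbb{H}^n$ in AdS (in fact the $t=0$ slice is simply $\{t=0\}$, not a graph), so $g=u_n^{-2}(Ldu_1^2+\delta_{\alpha\beta}du_\alpha du_\beta)$ and $k=-\tfrac12\mathcal{L}_{\hat{\mathbf n}}\overline g|_{TM}$ with $\hat{\mathbf n}$ as in Proposition \ref{Prop: stress energy siklos wave}. Then $\mu,J$ are given explicitly by that proposition, and in particular $N\mu+\langle X,J\rangle$ has a clean expression in terms of $u_n\partial_{u_n}L-\tfrac12\Delta_{\mathbb{H}^{n-1}}L$.

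The key computational step is to recognize that $g$ differs from the hyperbolic background $b$ only in the $du_1^2$-component, by the factor $L-1=O_2(r^{-q})$, and that $k=O_1(r^{-q})$ with an equally explicit form. Feeding this into the expressions $\mathbb{U}^i(N)=2(Ng^{i[k}g^{j]l}\mathring\nabla_jg_{kl}+\mathring\nabla^{[i}Ng^{j]k}(g_{jk}-b_{jk}))$ and $\mathbb{V}^i(X)=2(k_{ij}-\tr_g(k)g_{ij})X^j$, one finds that the boundary term $\int_{S_R}(\mathbb{U}^i(N)+\mathbb{V}^i(X))\nu_i\,dA$ does \emph{not} vanish as $R\to\infty$ (this is the source of the nontrivial center of mass and angular momentum), but that the \emph{difference} between $\mathcal{H}(N,X)$ and $\tfrac{1}{(n-1)\omega_{n-1}}\int_{\mathbb{H}^n}(N\mu+\langle X,J\rangle)L\,dV_b$ can be written as an exact divergence of a compactly-supported-at-infinity quantity. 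Concretely I would: (i) use the divergence identity above together with $N\mu,\langle J,X\rangle\in L^1$ to write $\mathcal{H}(N,X)=\tfrac{1}{(n-1)\omega_{n-1}}\int_{\mathbb{H}^n}(N\mu+\langle X,J\rangle)\,dV_g+(\text{correction})$; (ii) show the correction equals $\tfrac{1}{(n-1)\omega_{n-1}}\int_{\mathbb{H}^n}(N\mu+\langle X,J\rangle)(L-1)\,dV_b$ by an integration-by-parts argument exploiting that $(N\mu+\langle X,J\rangle)$ is, up to lower order, a pure multiple of $(u_n\partial_{u_n}-\tfrac12\Delta_{\mathbb{H}^{n-1}})L$ acting against the Green-type weights built from $N$ and $X$; (iii) combine $dV_g=L^{1/2}dV_b$ with the explicit $\mu,J$ to collapse everything into the single factor $L$.

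Alternatively—and this is probably the cleaner route—I would localize: pick for each lapse-shift pair $(N,X)$ the associated imaginary Killing spinor data, use the explicit null imaginary Killing spinor on the Siklos wave from Proposition \ref{prop siklos admit spinors} and Proposition \ref{infinty} to evaluate $N(\psi^\infty)$, $X(\psi^\infty)$, and note (as in the remark following Theorem \ref{Thm Siklos wave charges}) that for the distinguished UHS spinor $N(\psi^\infty)\mu+\langle J,X(\psi^\infty)\rangle=0$ pointwise, so that the mass formula in Theorem \ref{pmt maerten} reduces the whole computation to tracking the Weitzenböck-type boundary term. Since $\mathcal{H}$ is linear in $(N,X)$ and the space of lapse-shift pairs arising from spinors spans everything relevant, it suffices to verify the identity on a spanning set, where both sides can be compared against the known AdS/Kaigorodov-type special cases.

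\textbf{Main obstacle.} The hard part will be step (ii): controlling the boundary terms at infinity carefully enough to see that the naively-divergent pieces of $\int_{S_R}\mathbb{U}^i(N)\nu_i$ exactly reorganize into the weighted bulk integral $\int(N\mu+\langle X,J\rangle)L\,dV_b$ rather than contributing a genuine extra charge. This requires precise asymptotics of $N(\psi^\infty),X(\psi^\infty)$ in the upper-half-space coordinates $(u_1,\dots,u_n)$ adapted to the Siklos wave (not the Poincaré-ball ones), matching the decay $L-1=O_2(r^{-q})$ against the growth of the lapse, and handling the borderline integrability when $q$ is close to $n/2$. This is exactly the bookkeeping that the authors defer to Appendix \ref{A:Siklos}, and it is where the Christoffel-symbol computations of Appendix A (Lemma \ref{lemma christoffel symbols} and Lemma \ref{Je1}) get used.
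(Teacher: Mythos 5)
Your proposal sketches the right high-level framework (convert the boundary integral defining $\mathcal{H}$ to a bulk integral, use the explicit $\mu,J$ of Proposition \ref{Prop: stress energy siklos wave}, track Christoffel symbols), but the mechanism you give for producing the $L$ factor does not work and is not what the paper does. The issue with your route (i)--(iii): the CJL-type divergence identity you invoke, $\operatorname{div}(N\mathbb{U}+\mathbb{V}(X))=2N\mu+2\langle J,X\rangle+O(r^{1-2q})$, only holds up to an $O(r^{1-2q})$ error, while the correction you are trying to extract, $(N\mu+\langle X,J\rangle)(L-1)$, is of exactly that order $r^{1-2q}$ (since $N\sim r$, $\mu,J=O(r^{-q})$, $L-1=O(r^{-q})$). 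So the generic identity simply cannot see the $L$ factor --- you need the exact divergence of $\mathbb{U}(N)+\mathbb{V}(X)$ for this specific metric and second fundamental form, not an identity with errors at the relevant order. Moreover, even granting your steps (i)--(iii), the weights do not combine to $L$: you would get roughly $\int(N\mu+\langle X,J\rangle)\,dV_g+\int(N\mu+\langle X,J\rangle)(L-1)\,dV_b$ with $dV_g=L^{1/2}dV_b$, i.e.\ a factor $L^{1/2}+L-1$, not $L$. The paper's actual proof takes $F=L-1$, computes $\operatorname{div}_b[\mathbb{U}(N)]=-N\Delta_{\Sigma_1}F+2Ne_n(F)+\operatorname{div}_b(O_1(Nr^{-2q}))$ and $\operatorname{div}_b[\mathbb{V}(X)]=2L\langle X,J\rangle_b+\operatorname{div}_b(O_1(r^{-2q+1}))$ \emph{exactly} from the Siklos-wave Christoffel symbols, and then the $L$ factor appears because $\mu=-\tfrac12 L^{-1}(\Delta_{\mathbb{H}^{n-1}}L-2u_n\partial_n L)$, so $-N\Delta F+2Ne_n F=2NL\mu$ on the nose. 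No integration by parts against ``Green-type weights'' is involved.

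Your alternative ``spanning set'' route also has a gap: the mass formula of Theorem \ref{pmt maerten} does not give an explicit value of $\mathcal{H}$ for a general spinor-derived lapse-shift pair, because its right-hand side involves the solution $\psi$ of the Dirac-type boundary value problem, which is not explicit on a Siklos-wave slice. The only spinor for which the right-hand side collapses is the distinguished UHS null spinor of Proposition \ref{infinty}, for which $N\mu+\langle J,X\rangle\equiv 0$ --- giving one linear relation, not a spanning verification. So linearity does not let you avoid the direct computation.
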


 \begin{proof}
 Let $\partial_i:=\partial_{u_j}$, $e_i:=u_n\partial_{i}$ and $\mathring{\gamma}_{ij}^k:=\langle\mathring{\nabla}_{e_i}e_j,e_k\rangle$.
 The non-vanishing coefficients of $\mathring{\gamma}$ are given by
\begin{equation*}
    \mathring{\gamma}_{\alpha\beta}^n=\delta_{\alpha\beta},\quad \mathring{\gamma}_{\alpha n}^\beta=-\delta_{\alpha\beta}.
\end{equation*}
 Let $F:=L-1$. 
   Then the non-vanishing gradient terms are
   \begin{equation*}
       \mathring{\nabla}_j g_{11}=e_j(F),\quad 
       \mathring{\nabla}_1 g_{n1}=\mathring{\nabla}_1 g_{1n}=F.
   \end{equation*}
   Using $F=O_2(r^{-q})$, we obtain 
   \begin{align*}
       \mathbb{U}^i(N) =&2 \left( Ng^{i[k} g^{j]l} \mathring\nabla_j g_{kl} + \mathring{\nabla}^{[i}Ng^{j]k} (g_{jk} - b_{jk}) \right)
       \\=&N(\delta_{i1}e_1(F)+\delta_{in}F-e_i(F))+e_i(N)F-e_1(N)\delta_{i1}F+O_1(Nr^{-2q}).
   \end{align*}
 Denoting by $\operatorname{div}_b$ be the divergence with respect to the hyperbolic metric $b$, we compute
   \begin{align*}
       \operatorname{div}_b [\mathbb{U}(N)]=&\sum_{i=1}^n e_i[\mathbb{U}^i(N)]-(n-1)\mathbb{U}^n(N)
       \\=&\sum_{A=2}^{n-1}e_A[-Ne_A(F)+e_A(N)F]+e_n[NF-Ne_n(F)+e_n(N)F]
       \\ &-(n-1)[NF-Ne_n(F)+e_n(N)F]+\operatorname{div}_b(O_1(Nr^{-2q}))
   \end{align*}
   The identities $\nabla_{ij}N=Ng_{ij}$ and $\nabla_{\alpha\alpha}F=e_\alpha^2(F)-e_n(F)$ imply
   \begin{align*}
       \operatorname{div}_b [\mathbb{U}(N)]=&\sum_{A=2}^{n-1}\left[-Ne_A^2(F)+e_A^2(N)F\right]+e_n(N)F+Ne_n(F)-Ne_n^2(F)+e_n^2(N)F
      \\& -(n-1)[NF-Ne_n(F)+e_n(N)F]+\operatorname{div}_b(O_1(Nr^{-2q}))
      \\=&-N\Delta_{\Sigma_1}F+2Ne_n(F)+\operatorname{div}_b(O_1(Nr^{-2q})).
   \end{align*}
   Recall that $\mu=-\frac{1}{2}L^{-1}[\Delta_{\Sigma_1} L-2e_n (L)]$ and $q>\frac{n}{2}$. 
   Hence,
   \begin{equation*}
     \lim_{R\to \infty}\int_{r=R}\mathbb{U}^i(N)\nu_idA=\int_{\mathbb{H}^n} 2NL\mu dV_b
   \end{equation*}
   Note that $k_{ij}=-\frac{1}{2}L^\frac{1}{2}u_n[\nabla_i (L^{-1}\partial_{1})_j+\nabla_j(L^{-1}\partial_{1})_i]$. 
   Since
   \begin{equation*}
       \nabla_{\partial_ {i}}\partial_{1}=\sum_{k=2}^n -\frac{1}{2}u_n^2\delta_{i1}[\partial_k (u_n^{-2}L)] \partial_{k}+\frac{1}{2}u_n^2L^{-1}[\partial_i(u_n^{-2}L)]\partial_{1},
   \end{equation*}
   we have
   \begin{equation*}
       \langle\nabla_{\partial_{i}} (L^{-1}\partial_{1}), \partial_{j}\rangle=  -\frac{1}{2}L^{-1}(1-\delta_{j1})\delta_{i1}\partial_{j}(u_n^{-2}L)+\frac{1}{2}L^{-1}\delta_{j1}\partial_{i} (u_n^{-2}L)+\delta_{j1}Lu_n^{-2}\partial_{i}(L^{-1})
   \end{equation*}
   and
   \begin{equation*}
       \langle\nabla_{\partial_{i}} X, \partial_{j}\rangle+\langle\nabla_{\partial_{j}} X, \partial_{i}\rangle=\delta_{i1}\delta_{j1}L^{-1}\partial_1(u_n^{-2}L)+\delta_{j1}Lu_n^{-2}\partial_i(L^{-1})+\delta_{i1}Lu_n^{-2}\partial_j(L^{-1})
   \end{equation*}
   Consequently, $$k(e_i,e_j)=-\frac{1}{2}u_n\left(\delta_{i1}\delta_{j1}\partial_{1}L-\delta_{j1}\partial_i L-\delta_{i1}\partial_j L\right)+O_1(r^{-2q})$$ and
   \begin{equation*}
       2(\operatorname{div}_bk)_j=
       e_1e_j L+\sum_{\alpha=2}^n\delta_{j1}e_\alpha^2L-n\delta_{j1}e_nL+\delta_{jn}e_1L+\operatorname{div}_bO_1(r^{-2q}).
   \end{equation*}
   We also have
   \begin{align*}
        \operatorname{div}_b\mathbb{V}(X) =& 2\mathring{\nabla}_i\left[ \left( k_{ij} - \tr_g(k) g_{ij} \right) X^j\right]
        \\=& X^j\left( e_1e_j L+\sum_{\alpha=2}^n\delta_{j1}e_\alpha^2L-n\delta_{j1}e_nL+\delta_{jn}e_1L-e_je_1L\right)+\operatorname{div}_b(O_1(r^{-2q+1}))
        \\=& X^j\delta_{j1}\left(\sum_{\alpha=2}^n\delta_{j1}e_\alpha^2L-n\delta_{j1}e_nL\right)+\operatorname{div}_b(O_1(r^{-2q+1}))
        \\=& 2L\langle X,J\rangle_b+\operatorname{div}_b(O_1(r^{-2q+1}))
   \end{align*}
   where the second step uses $\mathring{\nabla}_i X^j+\mathring{\nabla}_j X^i=0$ and the last step uses $J=-\mu e_1=\frac{1}{2}L^{-1}(\Delta_{\Sigma_1}L-2e_n L)e_1$. 
   Moreover,
   \begin{equation*}
     \lim_{R\to \infty}\int_{r=R}\mathbb{V}^i(X)\nu_idA=\int_{\mathbb{H}^n} 2L\langle X,J\rangle_b dV_b.
   \end{equation*}
   Therefore, 
   \begin{align*}
       \mathcal{H}(N,X)=& \lim_{R \to \infty} \frac{1}{2(n-1)\omega_{n-1}} \int_{r=R} \left( \mathbb{U}^i(N) + \mathbb{V}^i(X) \right) \nu_i dA
       \\=& \frac{1}{(n-1)\omega_{n-1}}\int_{\mathbb{H}^n}[N\mu+\langle X,J\rangle_b]LdV_b
   \end{align*}
   which finishes the proof.
\end{proof}

\bibliographystyle{plain}
\bibliography{literature.bib}

\end{document}